\newtheorem*{rep@theorem}{\rep@title}
\newcommand{\newreptheorem}[2]{%
\newenvironment{rep#1}[1]{%
 \def\rep@title{#2 \ref{##1}}%
 \begin{rep@theorem}}%
 {\end{rep@theorem}}}
\theoremstyle{plain}
\newtheorem{theorem}{Theorem}
\newtheorem{lemma}[theorem]{Lemma}
\newtheorem{corollary}[theorem]{Corollary}
\newtheorem{proposition}[theorem]{Proposition}
\newtheorem{claim}[theorem]{Claim}
\theoremstyle{definition}
\newtheorem{definition}{Definition}
\newtheorem{defn}[definition]{Definition}
\newtheorem{remark}[definition]{Remark}
\newtheorem{question}[definition]{Question}
\newtheorem{problem}[definition]{Problem}
\numberwithin{theorem}{section} 
\numberwithin{definition}{section}
\newcommand{\nc}{\newcommand}
\nc{\DMO}{\DeclareMathOperator}
\nc{\todo}[1]{\ifnum\Comments=1 {\color{red}  [TODO: #1]}\fi}
\nc{\BR}{\mathbb R}
\nc{\BA}{\mathbb{A}}
\nc{\BP}{\mathbb{P}}
\nc{\BC}{\mathbb C}
\nc{\MO}{\mathcal O}
\nc{\MU}{\mathcal{U}}
\nc{\ME}{\mathcal{E}}
\nc{\MN}{\mathcal{N}}
\nc{\MK}{\mathcal{K}}
\nc{\MS}{\mathcal{S}}
\nc{\MT}{\mathcal{T}}
\nc{\BF}{\mathbb F}
\nc{\BQ}{\mathbb Q}
\nc{\MX}{\mathcal{X}}
\nc{\MA}{\mathcal{A}}
\nc{\MD}{\mathcal{D}}
\nc{\MB}{\mathcal{B}}
\nc{\MZ}{\mathcal{Z}}
\nc{\MY}{\mathcal{Y}}
\nc{\BZ}{\mathbb Z}
\nc{\BN}{\mathbb N}
\nc{\ep}{\epsilon}
\nc{\BH}{\mathbb H}
\nc{\BG}{\mathbb{G}}
\nc{\D}{\Delta}
\nc{\MF}{\mathcal{F}}
\nc{\One}{\mathbbm{1}}
\nc{\SP}{\mathsf P}
\nc{\SQ}{\mathsf Q}
\nc{\DO}{\accentset{\circ}{\D}}
\nc{\mf}{\mathfrak}
\nc{\mfp}{\mathfrak{p}}
\nc{\mfq}{\mf{q}}
\nc{\Sp}{\mbox{Spec}}
\nc{\Spm}{\mbox{Specm}}
\nc{\hookuparrow}{\mathrel{\rotatebox[origin=c]{90}{$\hookrightarrow$}}}
\nc{\hookdownarrow}{\mathrel{\rotatebox[origin=c]{-90}{$\hookrightarrow$}}}
\nc{\hra}{\hookrightarrow}
\nc{\tra}{\twoheadrightarrow}
\nc{\sgn}{{\rm sgn}}
\nc{\aut}{{\rm Aut}}
\nc{\Hom}{{\rm Hom}}
\nc{\img}{{\rm Im}}
\DMO{\id}{Id}
\DMO{\supp}{supp}
\DMO{\KL}{KL}
\DMO{\BSS}{BSS}
\DMO{\BES}{BES}
\DMO{\BGS}{BGS}
\DMO{\poly}{poly}
\nc{\indep}{\perp}
\nc{\p}{\mathbb{P}}
\nc{\E}{\mathbb{E}}
\nc{\ra}{\rightarrow}
\nc{\BPPdt}{\textsc{BPP}\textsuperscript{dt}}
\nc{\BPPcc}{\textsc{BPP}\textsuperscript{cc}}
\nc{\BQPdt}{\textsc{BQP}\textsuperscript{dt}}
\nc{\BQPcc}{\textsc{BQP}\textsuperscript{cc}}
\nc{\Pdt}{\textsc{P}\textsuperscript{dt}}
\nc{\Pcc}{\textsc{P}\textsuperscript{cc}}
\nc{\AND}{\textsc{AND}}
\nc{\OR}{\textsc{OR}}
\nc{\CP}{\textsc{CP}}
\nc{\QCP}{\textsc{QCP}}
\nc{\BCP}{\textsc{BCP}}
\nc{\Ind}{\textsc{Ind}}
\nc{\PromiseBQP}{\textsc{PromiseBQP}}
\nc{\rag}{\rangle}
\nc{\lag}{\langle}
\nc{\Forrelation}{\textsc{Forrelation}}
\nc{\Qsim}{\textsc{Qsim}}
\nc{\IC}{\textsc{IC}}
\DMO{\ext}{ext}
\DMO{\itrn}{int}
\nc{\ICint}{\IC^{\itrn}}
\nc{\ICext}{\IC^{\ext}}
\nc{\Iint}{I^{\itrn}}
\nc{\Iext}{I^{\ext}}
\DMO{\pubnoss}{pub}
\nc{\pub}{^{\pubnoss}}
\DMO{\prinoss}{pri}
\nc{\pri}{^{\prinoss}}
\DMO{\mix}{mix}
\DMO{\midd}{mid}
\nc{\mumix}{\mu_{\mix}}
\nc{\mumid}{\mu^{\midd}}
\DMO{\disj}{Disj}
\DMO{\YY}{Y}
\DMO{\NN}{N}
\DMO{\Mix}{Mix}
\nc{\disjy}{\disj^{\YY}}
\nc{\disjn}{\disj^{\NN}}
\nc{\dypv}{{D^{\YY}_{\mathrm{PV}}}}
\nc{\dnpv}{{D^{\NN}_{\mathrm{PV}}}}
\nc{\dmpv}{{D^{\Mix}_{\mathrm{PV}}}}
\nc{\Dmpv}{{\mathcal{D}^{\Mix}_{\mathrm{PV}}}}
\nc{\Dmpvp}{{\mathcal{D}^{\Mix+}_{\mathrm{PV}}}}
\DMO{\Qcoh}{Qcoh}
\DMO{\coker}{coker}
\DMO{\polylog}{polylog}
\DMO{\Gr}{Gr}
\DMO{\env}{env}
\nc{\A}{\texttt{A}}
\nc{\B}{\texttt{B}}
\nc{\pubc}{\texttt{Pub}}
\nc{\Crg}{\texttt{cr}}
\nc{\Skg}{\texttt{sk}}
\nc{\crg}{^{\Crg}}
\nc{\skg}{^{\Skg}}
\nc{\RA}{R_\A}
\nc{\RB}{R_\B}
\nc{\QA}{Q_\A}
\nc{\QB}{Q_\B}
\nc{\KA}{K_{\A}}
\nc{\KB}{K_\B}
\nc{\Rpub}{R_\pubc}
\DMO{\CC}{CC}
\DMO{\PP}{P}
\DMO{\NC}{NC}
\DMO{\DD}{D}
\nc{\pubstraight}{\text{pub}}
\DMO{\RR}{R}
\nc{\RRpub}{\RR^{\pubstraight}}
\DMO{\QQ}{Q}
\nc{\dd}{-}
\nc{\amtz}{\texttt{am}}
\nc{\FC}{\mathscr{C}}
\nc{\FCamsk}{\FC^{\amtz\texttt{-}\Skg}}
\nc{\FCamcr}{\FC^{\amtz\texttt{-}\Crg}}
\nc{\tilFCamcr}{\tilde{\FC}^{\amtz\texttt{-}\Crg}}
\nc{\FI}{\mathscr{I}}
\nc{\FF}{\mathscr{F}}
\DMO{\ttd}{d}
\nc{\MTd}{\MT^{\ttd}}
\nc{\ir}{I}
\nc{\ip}{i}
\nc{\jr}{J}
\nc{\jp}{j}
\nc{\kp}{k}
\nc{\pir}{\Sigma}
\nc{\pip}{\sigma}
\nc{\mr}{\Pi}
\DMO{\MAX}{m}
\nc{\rhom}{\rho_{\MAX}}
\title{Round Complexity of Common Randomness Generation: The Amortized Setting}
\author{Noah Golowich\thanks{Massachusetts Institute of Technology, EECS, {\tt nzg@mit.edu}. Currently supported by an MIT Akamai Fellowship and a Fannie \& John Hertz Foundation Fellowship. This work was performed while the author was a student at Harvard University.} \and 
	Madhu Sudan\thanks{Harvard John A. Paulson School of Engineering and Applied Sciences, Harvard University, 33 Oxford Street, Cambridge, MA 02138, USA. {\tt madhu@cs.harvard.edu}. Work supported in part by a Simons Investigator Award and NSF Award CCF 1715187.} 
}
\date{September 1, 2019}							
\begin{document}
\maketitle

\begin{abstract}
	In this work we study the effect of rounds of interaction on the common randomness generation (CRG) problem. In the CRG problem, two parties, Alice and Bob, receive samples $X_i$ and $Y_i$, respectively, where $(X_i, Y_i)$ are drawn jointly from a {\it source distribution} $\mu$. The two parties wish to agree on a common random key consisting of many bits of randomness, by exchanging messages that depend on each party's respective input and the previous messages. In this work we study the {\em amortized} version of the problem,\todo{may want to change the next part of this sentence since we don't study the KBIB} i.e., the number of bits of communication needed per random bit output by Alice and Bob, in the limit as the number of bits generated tends to infinity. The amortized version of the CRG problem has been extensively studied in the information theory literature, though very little was known about the effect of interaction on this problem. Recently Bafna et al. (SODA 2019) considered the {\em non-amortized} version of the problem (so here the goal of the interaction is to generate a fixed number of random bits): they gave a family of sources
	$\mu_{r,n}$ parameterized by $r,n \in \BN$, such that with $r+2$ rounds of communication one can generate $n$ bits of common randomness with this source with $O(r\log n)$ communication, whereas with roughly $r/2$ rounds the communication complexity is $\Omega(n/\poly\log n)$. Note in particular that their source is designed with the target number of bits in mind and hence the result does not apply to the amortized setting.
	
	In this work we strengthen the work of Bafna et al. in two ways: First we show that the results extend to the classical amortized setting. We also reduce the gap between the round complexity in the upper and lower bounds to an additive constant.
	Specifically we show that for every pair $r,n \in \BN$ the (amortized) communication complexity to generate $\Omega(n)$ bits of common randomness from the source $\mu_{r,n}$ using $r+2$ rounds of communication is $O(r \log n)$ whereas the amortized communication required to generate the same amount of randomness from $r$ rounds is $\Omega(\sqrt{n})$.
    Our techniques exploit known connections between information complexity and CRG, and the main novelty is our ability to analyze the information complexity of protocols getting inputs from the source $\mu_{r,n}$.
	
\end{abstract}

\newpage \pagenumbering{arabic}
\section{Introduction}
In this paper we study the problem of {\it common randomness generation} (CRG) and the companion problem of {\it secret key generation} (SKG). In each of these problems, there are two parties Alice and Bob, who are given several samples of correlated randomness: Alice is given random variables $X_1, X_2, \ldots$, and Bob is given random variables $Y_1, Y_2, \ldots$, where the pairs $(X_i, Y_i)$ are distributed i.i.d.~according to some distribution $\mu$. In the CRG problem (Figure \ref{fig:crg}), the goal of Alice and Bob is to agree, with high probability, on some shared key $K$ of high entropy by communicating as little as possible. In the SKG problem, they have the additional secrecy requirement that an eavesdropper Eve who observes their transcript of communication cannot determine much information on $K$.

The problems of CRG and SKG were introduced independently by Maurer \cite{maurer_perfect_1991,maurer_conditionally-perfect_1992,maurer1993secret} and by Ahlswede and Csisz{\'a}r \cite{ahlswede1993common,ahlswede1998common}. An important motivation for their work was from cryptography, where the posession of a shared secret key allows parties to securely transmit information using a private-key cryptosystem. Rather than generating private keys based on computational hardness assumptions, as in \cite{diffie_new_1976,rivest_method_1978}, these works suggested the study of secret key generation from an information-theoretic viewpoint, under information-theoretic assumptions such as access to a correlated source. Subsequently  techniques similar to those developed in \cite{maurer1993secret,ahlswede1993common}, such as privacy amplification, have been used in work on quantum key agreement \cite{bennett_experimental_1992,hughes_quantum_1995}. 
Shared common randomness, and the generation thereof, has also found additional applications in identification capacity \cite{ahlswede_identification_1989-1,ahlswede_identification_1989}, communication complexity \cite{canonne2017communication,ghazi_communication_2015,ghazi_power_2017,bavarian2014role}, locality-sensitive hashing, \cite{ghazi2018resource} and coding theory \cite{blackwell_capacities_1960,csiszar_capacity_1991}.

\begin{figure}
	\centering
	\includegraphics[scale=0.3]{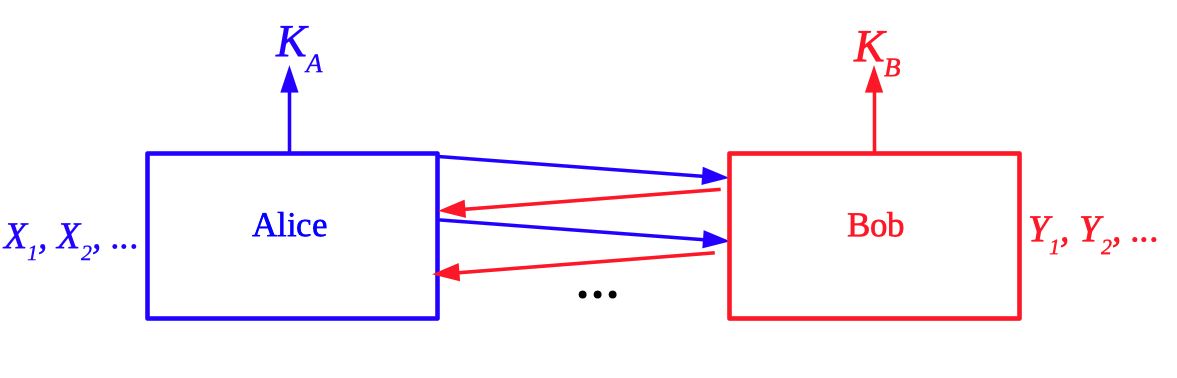}
	\caption{Common randomness generation.}
	\label{fig:crg}
\end{figure}

The initial introduction of CRG and SKG by Maurer, Ahlswede, and Csisz\'{a}r was in the {\it amortized setting}, which has since been studied in many works (such as \cite{csiszar2000common,csiszar2004secrecy,zhao2011efficiency,tyagi2013common,liu2015secret,liurate,liu2017secret,ye_information_2005,gohari_information-theoretic_2010,gohari_information-theoretic_2010-1}). In this setting, given a source of correlation $\mu$, the goal is to characterize the ``achievability region'', i.e.,  those pairs $(C,L)$ of non-negative real numbers, such that if Alice and Bob receive $N$ i.i.d.~copies of the inputs $(X,Y) \sim \mu$, by communicating roughly $C\cdot N$ bits, they can generate nearly $L \cdot N$ bits of common randomness (or secret key) with probability approaching 1 as $N \ra \infty$; a formal definition is presented in Definitions \ref{def:a_crg} and \ref{def:a_skg}.

In the theoretical computer science community the {\it non-amortized setting} of CRG has also been extensively studied. In this setting, Alice and Bob still receive some number $N$ of i.i.d.~samples from the source $\mu$, but the communication and key length do not have to grow linearly with the number of samples, and the probability of agreeing on a key need not approach 1. This problem was first studied in its zero-communication variant, where it is also known as {\it non-interactive correlation distillation}, and in the setting where Alice and Bob wish only to agree on a single bit, by Gacs and K\"{o}rner \cite{gacs1973common} and Witsenhausen \cite{witsenhausen1975sequences}, as well as later works \cite{mossel2005coin,mossel2006non,yang_impossibility_2007}. Bogdanov and Mossel \cite{bogdanov2011extracting} and Chan et al.~\cite{chan2014extracting} study the version where Alice and Bob wish to agree on many bits, again in the zero-communication setting. Finally, several more recent works \cite{canonne2017communication,guruswami2016tight,ghazi2018resource} have studied the non-amortized version of CRG where communication is allowed. These latter works generally study relatively simple sources, such as the bivariate Gaussian source (BGS) and the binary symmetric source (BSS). \footnote{For a parameter $p \in [0,1]$, the {\it binary symmetric source} $\BSS_p$ is the one where $X,Y$ are bits that are each uniformly distributed in $\{0,1\}$ and such that $\p_{\BSS_p}[X\neq Y] = p$. For $\rho \in [-1,1]$, the {\it bivariate Gaussian source} $\BGS_\rho$ is the one where $X,Y \in \BR$ are individually distributed as standard Gaussians and that $\E_{\BGS_\rho}[XY] = \rho$.} 

\subsection{Overview of main results: does interaction help?}
\label{sec:int_help}
Despite the large amount of work on CRG and SKG in the last several decades, until recently, very little was known about the role of interaction in these problems. While initial work in the area \cite{ahlswede1993common,ahlswede1998common} studied only 1-round and 2-round protocols, recent works \cite{liu2015secret,liurate,liu2017secret} have generalized those initial results to multi-round protocols; however, until our work, it was not known in the amortized setting if increasing the number of rounds of some $r$-round protocol can actually allow the parties to communicate less (and generate random keys of the same length).

This question of whether Alice and Bob can reduce the communication cost of their protocol at the expense of increasing the number of rounds is central to our work. Curiously, for the amortized setting, the answer to this question is negative in several cases: for instance, when $(X,Y)$ is distributed according to the binary symmetric source (BSS) or the bivariate Gaussian source (BGS), Liu et al.~\cite{liu2017secret} and Tyagi \cite{tyagi2013common} showed that increasing the number of rounds does not help to reduce communication cost. In terms of separation results, Tyagi \cite{tyagi2013common} presented a source on a ternary alphabet for which a 1-round protocol has smaller communication cost than any 2-round protocol by a constant factor, and this is the only known round-based separation in the amortized setting. 

Orlitsky \cite{orlitsky_worst-case_1990,orlitsky_worst-case_1991} studied a slightly different version of CRG in which the key $K$ is required to be equal to Alice's input $X$; thus the problem becomes that of Bob learning Alice's input. Orlitsky showed (in the non-amortized case) that 2-round protocols can require exponentially less communication than 1-round protocols. However, for any $r > 2$, he showed that $r$-round protocols can save on communication cost over 2-round protocols by at most a factor of 4. This version of the problem was also studied in the amortized case by Ma and Ishwar \cite{ma_distributed_2008}, who showed that interaction does not help at all; in fact, the 1-round protocol that achieves minimum communication cost is simply given by Slepian-Wolf coding \cite{cover2012elements}.

The most relevant work is that of Bafna et al.~\cite{bafna_communication-rounds_2018}, who showed the following in the non-amortized setting (see \cite[Theorems 1.1 \& 1.2]{bafna_communication-rounds_2018}): for any fixed $r$, there are sufficiently large $n$ such that for some source $\mu = \mu_{r,n}$, we have:
\begin{itemize}
	\item Alice and Bob can generate secret keys of length $n$ with $r+2$ rounds of communication and $O(\log n)$ communication cost;
	\item When restricted to $\lfloor (r+1)/2 \rfloor$ rounds, any protocol which generates common random keys of length $n$ must have communication cost $n/\log^{\omega(1)} n$.
\end{itemize}
Notice that the above result is not tight in the dependence on the number of rounds; a tight result (up to polylogarithmic factors) would state that any $(r+1)$-round protocol must have communication cost $n/\log^{\omega(1)}n$. More significantly, the result does not establish any separation in the amortized setting, which is the main target of this paper.

\paragraph{Our results} 
Our main result is an extension of the above results of Bafna et al.~\cite{bafna_communication-rounds_2018} to the amortized setting. Along the way we also get a nearly tight dependence on the number of rounds (losing a  quadratic factor in communication cost and a single additional round of communication). In particular, we show:
\begin{itemize}
	\item For the source $\mu = \mu_{r,n}$ mentioned above, any protocol with at most $r$ rounds and which generates common random keys of length $n$ must have communication cost at least $\sqrt n / \log^{\omega(1)}n$. (See Theorem~\ref{thm:nam_formal} for a formal statement.)
	\item Moreover, an identical rounds-communication tradeoff holds for the amortized case. (See Theorem~\ref{thm:am_formal}.)
\end{itemize}

We emphasize that the second result above gives the first rounds-communication tradeoff for the amortized case (apart from the constant-factor separation between 1-round and 2-round protocols given by Tyagi \cite{tyagi2013common}). 

\paragraph{Technical Challenge} At a very high level the source in \cite{bafna_communication-rounds_2018} is built around the concept of ``pointer-chasing problems'' that are well-known to lead to separations in round-complexity~\cite{NW,duris_lower_1984,papadimitriou_communication_1982}. The main contribution in their work is to show how the hardness of pointer chasing (or a variation they consider) translates to the hardness of generating common randomness in their source. 

Getting an amortized lower bound turns out to be significantly more challenging. For one thing we can no longer build a source that is crafted around a targeted length of the common random string. Indeed this ability allows Bafna et al.~\cite{bafna_communication-rounds_2018} to focus on the case where the two players get a single copy of the randomness $(X,Y) \sim \mu$, and the core of their negative result is showing that $r/2$ rounds of communication are insufficient to generate any non-trivial randomness from this single copy (``non-trivial'' meaning more than the number of bits communicated). 
In the amortized case such results are not possible: if there is a protocol with small communication and many rounds getting some amount of randomness, then we can simulate the protocol with large communication in two rounds, and then (here using the ability to amortize) we can scale back the communuication and generate proportionately less, but non-trivial amounts of randomness. Thus no matter how small the amortized communication budget is, it is always possible to get some non-trivial amounts of randomness.
So our lower bounds really need to address a ``direct product'' version of the pointer chasing question.

Indeed, the idea of our proof is to ``reduce to the non-amortized case'' by using similar types of techniques that have been applied to show direct sum and direct product results for the communication complexity of functions  \cite{chakrabarti2001informational,jain_direct_2003,harsha_communication_2007,barak2013compress,jain2012direct,braverman2011information,braverman2013direct}. However, the task of CRG is ``more flexible'' than that of computing a function as there is no prescribed output for given inputs, so implementing this reduction is nontrivial. Roughly, our results have to analyze notions such as the internal and external information complexity of {\em all} bounded round protocols (and show that these are close) whereas most of the previous use in communication complexity lower bounds only needed to work with protocols that computed a specific function. We go into further details on this in Section~\ref{sec:proof_overview} after we get more specific about the sources we consider and the kind of results we seek.

\paragraph{Organization of this paper} In Section \ref{sec:background} we formally introduce the problems of CRG and SKG (in both the non-amortized and amortized settings) and state our main results. Section \ref{sec:nam} presents the proof of our main results in the non-amortized setting, and Sections \ref{sec:am} and \ref{sec:converse_proof} present the proof of our main results in the amortized setting. Section \ref{sec:it_lemmas} collects several basic information theoretical lemmas used throughout the paper.

\section{Background and Overview of Main Results}
\label{sec:background}
\subsection{Notation}
We first describe some of the basic notational conventions we use throughout the paper. We use capital script font, such as $\MS, \MX, \MY$, to denote sets, and capital letters, such as $X,Y,Z$, to denote random variables. 
We typically use the letters $\mu, \nu, D$ to denote distributions.  $\MS_n$ denotes the set of all permutations on $[n]$.

\paragraph{Basic probability} If $\ME \subset \MX$ is some event, then we will write $\One[X \in \ME]$ to denote the random variable that is 1 if $X \in \ME$, and 0 otherwise. We will slightly abuse notation, e.g., if $(X,Y) \sim \nu$ then $\One[X=Y]$ is 1 when $X=Y$ and 0 otherwise. If $f : \MX \ra \BR$, then $\E_{\mu}[f(X)]$ denotes the expectation of $f(X)$ when $X$ is distributed according to $\mu$. For $\ME \subset \MX$, $\p_\mu[\ME] := \E_{\mu}[\One[X \in \ME]]$ is the probability that $X \in \ME$ when $X \sim \mu$. We will omit the subscript $\mu$ if the distribution is obvious. This notation extends naturally to conditional expectations.

\paragraph{Total variation distance \& KL divergence} For random variables $X, Y$ distributed according to $\mu, \nu$, respectively, on a finite set $\MX$, $\Delta(\mu, \nu) := \frac 12 \sum_{x \in \MX} | \p_\mu[X = x] - \p_{\nu}[Y = x] |$ denotes the {\it total variational distance} between $X$ and $Y$. For distributions $\mu$ and $\nu$ supported on a set $\MX$, the {\it KL divergence} between $\mu, \nu$, denoted $\KL(\mu || \nu)$, is given by, for $X \sim \mu, Y \sim \nu$, $\KL(\mu || \nu) := \sum_{x \in \MX} \p[X=x] \cdot \log \left(\frac{\p[X=x]}{\p[Y=x]} \right)$. We will often abuse notation when denoting KL divergences or total variation distances: for $X \sim \mu, Y \sim \nu$ supported on a set $\MX$, we will write $\Delta(X, Y) = \Delta(\mu, \nu)$ and $\KL(X || Y) = \KL(\mu || \nu)$.

\paragraph{Information theory}  If $X \sim \mu$, then the {\it entropy} of $X$ is given by $H_\mu(X)=H(X) = \\ \E_{x \sim \mu} [\log(1/\p_\mu[X=x])]$. Now suppose $(X,Y)$ are random variables with $X \in \MX, Y \in \MY$ jointly distributed according to some distribution $\nu$. Letting $X_y$ denote the random variable distributed as $X$, conditioned on $Y=y$, then $H(X | Y=y) := H(X_y)$. Then the {\it conditional entropy} $H_\mu(X|Y) = H(X|Y)$ is given by $H(X|Y) =: \E_{y \sim \nu}[H(X | Y=y)]$. The {\it mutual information} is given by $I_\mu(X;Y) = I(X;Y) := H(X) - H(X|Y)$; it is well-known that $I(X;Y) = H(Y) - H(Y|X)$. If $(X,Y,Z)$ are jointly distributed according to some distribution, then the {\it conditional mutual information} $I(X;Y | Z)$ is given by $I(X; Y|Z) := H(X|Z) - H(X | Y,Z)$.

\paragraph{Multiple random variables}  For random variables $(X,Y) \sim \mu$ distributed jointly, we will often use $XY \in \MX \times \MY$ to denote the pair. The {\it marginals} $X \sim \mu_X, Y \sim \mu_Y$ are the distributions on $\MX$ and $\MY$, respectively, given by $\p_{X \sim \mu_X}[X = x] := \p_{XY \sim \mu}[X=x]$, and similarly for $\mu_Y$. Then $X \otimes Y \in \MX \times \MY$ denotes the random variable distributed according to the product of the marginals $\mu_X \otimes \mu_Y$. 
  For a sequence of random variables $X_1, X_2, \ldots, X_i, \ldots$, for any $j \geq 1$, we let $X^j$ denote the tuple $(X_1, \ldots, X_j)$, and for $1 \leq j \leq j'$, let $X_j^{j'}$ denote the tuple $(X_j, X_{j+1}, \ldots, X_{j'})$. Two common usages of this notation are as follows: (1) for $N \in \BN$, and a distribution $Z \sim \mu$, the random variable distributed according to $N$ i.i.d.~copies of $\mu$ is denoted as $Z^N = (Z_1, \ldots, Z_N) \sim \mu^{\otimes N}$; (2) if $\Pi_1, \ldots, \Pi_t$ denote the first $t$ messages in a communication protocol (see Section \ref{sec:protocols}), then $\Pi^t = (\Pi_1, \Pi_2, \ldots, \Pi_t)$. 

      \subsection{Communication protocols}
      \label{sec:protocols}
  We follow the standard setup of interactive communication protocols \cite{yao_complexity_1979}, and mostly follow the notational conventions of \cite{barak2013compress,braverman2011information}. There are finite sets $\MX, \MY$, and parties Alice and Bob, who receive inputs $X \in \MX$, $Y \in \MY$, respectively. Depending on the setting, Alice and Bob may additionally have access to private coins $\RA, \RB$, respectively, and public coins $\Rpub$. Formally, $\RA, \RB, \Rpub$ may be interpreted as infinite strings of independently and uniformly distributed random bits. 

  An {\it interactive $r$-round protocol} $\Pi$ consists of a sequence of $r$ {\it messages}, $\Pi_1, \ldots, \Pi_r \in \{0,1\}^*$ that Alice and Bob alternatively send to each other, with Alice sending the first message $\Pi_1$. The messages $\Pi_1, \ldots, \Pi_r$ are also referred to as the {\it rounds} of the protocol, and each message is a deterministic function of the previous messages, one party's input, and any randomness (public and/or private) available to that party. 
  For $1 \leq t \leq r$ with $t$ odd, we will write Alice's message $\Pi_t$ as $\Pi_t = \Pi_t(X, \RA, \Rpub, \Pi^{t-1})$ if the protocol can use public and private coins (with obvious modifications if public and/or private coins are not available), and for $t$ even, Bob's message $\Pi_t$ as $\Pi_t = \Pi_t(Y, \RB, \Rpub, \Pi^{t-1})$.\footnote{It is required that for each $t$ and each instiantion of $\Pi^{t-1}$, the set of possible values of $\Pi_t$ (over all possible instantiaions of $X,Y, \RA, \RB, \Rpub$) must be prefix-free. This technical detail, which is introduced so that each party knows when to ``start speaking'' when the other finishes, will not be important for us.} 
  The {\it communication cost} of $\Pi$, denoted by $\CC(\Pi)$, is the maximum of $\sum_{t=1}^r | \Pi_t|$, taken over all inputs $X \in \MX, Y \in \MY$, and all settings of the random coins $\RA, \RB, \Rpub$ (if applicable). The tuple consisting of all the messages, i.e., $\Pi^r = (\Pi_1, \ldots, \Pi_r)$, is referred to as the {\it transcript} of the protocol $\Pi$.

  \subsection{Rate regions for amortized CRG \& SKG}
Recall that in amortized CRG, Alice and Bob receive some large number $N$ of copies $(X,Y)$ from the source, are allowed to communicate some number of bits that grows linearly with $N$, and must agree upon a key whose entropy grows linearly with $N$ with probability tending to 1 as $N \ra \infty$. The word ``amortized'' refers to the fact that the communication and key entropy both grow linearly with $N$. The parties may use private but not public coins (as with access to public randomness, there would be no need to generate a shared random string). Definition \ref{def:a_crg} below follows the exposition of Liu et al.~\cite{liu2017secret}.
   \begin{defn}[Amortized common randomness generation (CRG)]
  \label{def:a_crg}
  A tuple $(C,L)$ is {\it $r$-achievable for CRG} for a source distribution $(X,Y) \sim \nu$ if for every $N \in \BN$, there is some $\ep_N$ with $\ep_N \ra 0$ as $N \ra \infty$, a key set $\MK_N$, and a private-coin protocol 
  $\Pi = \Pi(N)$ that takes as input $(X^N, Y^N) \sim \nu^{\otimes N}$, such that if $\Pi(N)_t \in \{0,1\}^*$ denotes the message sent in the $t$-th round of $\Pi(N)$, $1 \leq t \leq r$, and $\KA = \KA(N),\KB = \KB(N) \in \MK_N$ denote the output keys of Alice and Bob for the protocol $\Pi(N)$, then:
    \begin{enumerate}
  \item $\lim \sup_{N \ra \infty} \frac{1}{N} \cdot \CC(\Pi(N)) \leq C$.
  \item $\lim \inf_{N \ra \infty} \frac{1}{N} \log |\MK_N|\geq L$.
  \item Letting $K_N$ be the random variable that is uniformly distributed on $\MK_N$, then $$\Delta((\KA(N)\KB(N)), (K_NK_N)) \leq \ep_N.$$ 
    In particular, there exists a coupling of $\KA(N)\KB(N)$ with $K_NK_N$ such that $\p[\KA(N) = \KB(N) = K_N] \geq 1-\ep_N \ra 1$ as $N \ra \infty$. (To be clear, $K_NK_N$ denotes the tuple $(K_N,K_N)$ which is distributed uniformly on the set $\{ (k,k) : k \in \MK_N \}$.)
  \end{enumerate}
  We denote the subset of pairs $(C,L) \subset \BR_{\geq 0}^2$ that are $r$-achievable from the source $(X,Y) \sim\nu$ by $\MT_r(X,Y)$; this set $\MT_r(X,Y)$ is known as the {\it achievable rate region for $r$-round CRG} (or simply {\it rate region}, with $r$ and the task of CRG implicit) for the source $\mu$. 
\end{defn}
To interpret Definition \ref{def:a_crg}, notice that $C$ denotes the communication of the protocols $\Pi = \Pi(N)$, whereas $L$ (approximately) gives the entropy of the key produced.

Corresponding to Definition \ref{def:a_crg} for CRG we have the following Definition \ref{def:a_skg} for SKG in the amortized setting:
\begin{defn}[Amortized SKG]
  \label{def:a_skg}
  A tuple $(C,L)$ is {\it $r$-achievable for SKG} for a distribution $\nu$ if there is some choice of a sequence $\ep_N \ra 0$ such that the following holds: for each $N \in \BN$ there is some choice of private coin protocol\footnote{As for CRG, the protocol $\Pi$ cannot use public coins.} $\Pi = \Pi(N)$ such that, first, items 1 and 2 of Definition \ref{def:a_crg} are satisfied for these $\ep_N, \Pi(N), N$, and, second,
  \begin{equation}
    \label{eq:delta_am_skg}
\Delta(\KA(N)\KB(N) \Pi(N)^r, K_N K_N \otimes \Pi(N)^r) \leq \ep_N.
\end{equation}
As in Definition \ref{def:a_crg}, $K_N$ denotes the random variable that is uniform on on $\MK_N$; notice that (\ref{eq:delta_am_skg}) above implies item 3 of Definition \ref{def:a_crg}.

We denote the set of pairs $(C,L)$ that are $r$-achievable for SKG from $\nu$ by $\MS_r(X,Y)$.
\end{defn}
It is clear from the definition that $r$-achievability for SKG is a stronger requirement than $r$-achievability for CRG; that is, for every source $(X,Y) \sim \nu$, we have $\MS_r(X,Y) \subset \MT_r(X,Y)$. It is also well-known \cite{liu2017secret,han_information-spectrum_2003} that both $\MT_r(X,Y)$ and $\MS_r(X,Y)$ are closed subsets of $\BR^2$.

  \subsection{Non-Amortized Setting}
  The non-amortized setting is similar to the amortized setting, in that Alice and Bob receive arbitrarily many i.i.d.~samples of $(X,Y) \sim \mu$, except the entropy of their key and their communication no longer grow linearly with the number of samples. Rather, the keys lie in some fixed set $\MK$, and the goal is to use as little communication (and rounds) as possible to generate a single key uniformly distributed in $\MK$. Moreover, whereas the agreement probability $1 - \ep_N$ in the amortized case was assumed to approach 1 asymptotically, in the non-amortized case, it is often of interest to study settings in which the parties may disagree with some probability that is bounded away from 0. In fact, this probability of disagreement may be arbitrarily close to 1. The non-amortized setting has recently received much attention among the theoretical computer science community \cite{bogdanov2011extracting,canonne2017communication,guruswami2016tight,ghazi2018resource,bafna_communication-rounds_2018}, where it is also known as the {\it agreement distillation problem}.

In the below definition we assume that $(X,Y) \sim \nu$ and $\nu$ is supported on a set $\MX \times \MY$.
  \begin{defn}[Non-amortized common randomness generation]
    \label{def:na_crg}
    For $r, C \in \BN$, and $L, \ep \in \BR_{\geq 0}$, we say that {\it the tuple $(C, L, \ep)$ is $r$-achievable from the source $\nu$ (for CRG)} if there is some $N \in \BN$ and an $r$-round protocol $\Pi$ with private randomness that takes as input $(X^N,Y^N) \sim \nu^{\otimes N}$, such that at the end of $\Pi$, Alice and Bob output keys $\KA, \KB \in \MK$ given by deterministic functions $\KA = \KA(X^N, \RA, \Pi^r)$, $\KB = \KB(Y^N, \RB, \Pi^r)$, such that:
    \begin{enumerate}
    \item $\CC(\Pi) \leq C$.
      \item $|\MK| \geq 2^L$.
    \item There is a random variable $K$ uniformly distributed on $\MK$ such that $\p_{\nu}[K = \KA = \KB] \geq 1-\ep$.
    \end{enumerate}
  \end{defn}
  As in the amortized case, for tuples $(C, L, \ep)$, observe that $C$ denotes communication and $L$ denotes entropy.
  
  Definition \ref{def:na_crg} differs slightly from the definition of achievable rates for non-amortized CRG in \cite{bogdanov2011extracting,canonne2017communication,guruswami2016tight,ghazi2018resource,bafna_communication-rounds_2018}, which do not limit the size of the key space $\MK$, but rather require a lower bound on the min-entropy of each of $\KA, \KB$. We present this latter definition in Appendix \ref{apx:alt_defns} (Definition \ref{def:na_crg_alt}) and show that it is essentially equivalent to Definition \ref{def:na_crg}.

As in the amortized setting, in the non-amortized setting secret key generation is the same as common randomness generation except the key is additionally required to be ``almost independent'' from the transcript of the protocol:
  \begin{defn}[Non-amortized secret key generation]
    For $r, C \in \BN$ and $L \in \BR_{\geq 0}$, $\ep, \delta \in [0,1)$, we say that {\it the tuple $(C, L, \ep, \delta)$ is $r$-achievable from the source $\nu$ (for SKG)} if the tuple $(C, L, \ep)$ is $r$-achievable for CRG from the source $\nu$, and if there exists a protocol $\Pi = (\Pi^1, \ldots, \Pi^r)$ achieving the tuple such that
    \begin{equation}
      \label{eq:inf_delta_skg}
I(\Pi^r; \KA \KB) \leq \delta.
    \end{equation}
  \end{defn}
  Notice that condition (\ref{eq:inf_delta_skg}) is quite strong: it implies, for instance, that $\Delta(\Pi^r \KA \KB, \Pi^r \otimes \KA \KB) \leq \sqrt{\delta/2}$, by Pinsker's inequality.

  \subsection{Main Results: Analogue of Pointer-Chasing Separations for CRG \& SKG}
  
  In this section we present our main results. We first state formally the main result of \cite{bafna_communication-rounds_2018} discussed in Section \ref{sec:int_help}, which establishes an exponential separation in communication cost between $\lfloor (r+1)/2 \rfloor$-round protocols and $(r+2)$-round protocols in the non-amortized setting:
  
  \begin{theorem}[Thms.~1.1 \& 1.2 of \cite{bafna_communication-rounds_2018}]
    \label{thm:nam_formal_bggs}
    For each $r \in \BN, \ep \in [0,1)$, there exists $\eta > 0$, $\beta < \infty$, $n_0 \in \BN$ such that for any $n \geq n_0$ and any $\ell \in \BN$, there is a source $\mu_{r,n,\ell}$ such that, in the non-amortized setting:
    \begin{enumerate}
    \item[(1)] The tuple $((r+2) \lceil \log n \rceil, \ell, 0,0)$ is $(r+2)$-achievable for SKG from $\mu_{r,n,\ell}$ (and thus $((r+2) \lceil \log n \rceil, \ell,0)$ is $(r+2)$-achievable for CRG).
    \item[(2)] For any $L \in \BN$ and $C \leq \min \{ \eta L - \beta, n/\log^\beta n \}$, the tuple $(C, L, \ep)$ is not $\lfloor (r+1)/2\rfloor$-achievable for CRG (and thus the tuple $(C, L, \ep, \delta)$ is not $\lfloor (r+1)/2\rfloor$-achievable for all $\delta \geq 0$).
    \end{enumerate}
  \end{theorem}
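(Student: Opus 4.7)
The source $\mu_{r,n,\ell}$ will be the family constructed in \cite{bafna_communication-rounds_2018}: an $r$-step pointer-chasing instance bundled with shared $\ell$-bit secrets. Informally, Alice holds pointer functions $f_1, f_3, \ldots$ on $[n]$ at odd steps and Bob holds $f_2, f_4, \ldots$ at even steps, both parties hold identical uniform strings $s_1, \ldots, s_n \in \{0,1\}^\ell$, and the ``target'' index $i^\ast \in [n]$ is defined by composing the functions in sequence starting from a fixed seed. The natural $(r+2)$-round SKG protocol for part~(1) just executes the chase: in each of $r+1$ rounds a single party sends the current pointer ($\lceil \log n\rceil$ bits), with one extra round to hand the final pointer to whichever party does not compute it last. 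Both parties then output $s_{i^\ast}$. Since the $\ell$-bit secret never appears in the transcript and the strings $s_i$ are uniform independent of the pointer data, both the agreement and secrecy parameters are $0$, yielding the $((r+2)\lceil \log n\rceil,\ell,0,0)$ point.

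For the lower bound in part~(2), the plan is to reduce to the pointer-chasing round-communication tradeoff of Nisan--Wigderson and Papadimitriou--Sipser \cite{NW,papadimitriou_communication_1982}: an $r$-step pointer chase cannot be solved in strictly fewer than $r$ rounds with communication $o(n/\polylog n)$. Given an $r'$-round CRG protocol $\Pi$ with $r' = \lfloor (r+1)/2\rfloor$, key-length $L$, communication $C$, and error $\ep < 1$, I would split into two regimes. If $C \geq \eta L - \beta$ for a suitably small absolute constant $\eta$, we are done. Otherwise the transcript is too short to encode the key directly, so the agreement requirement forces $\KA$ and $\KB$ to each be essentially a deterministic function of the transcript together with the respective party's input; combined with $\p[\KA = \KB] \geq 1-\ep$ and the high entropy of the key, this means the transcript must let both parties pin down a common index $i^\ast \in [n]$ from which to read the $\ell$-bit string, i.e., $\Pi$ implicitly solves the underlying pointer chase in $r'$ rounds, contradicting the Nisan--Wigderson/Papadimitriou--Sipser bound unless $C \geq n/\log^{\beta} n$. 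The factor of two in rounds ($\lfloor(r+1)/2\rfloor$ versus $r$) reflects the fact that a single CRG message can in principle simulate at most two consecutive pointer-chase steps, one in each direction, once one exploits the symmetry of the source.

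The main obstacle I anticipate is the implication ``transcript plus input determines a consistent high-entropy key $\Rightarrow$ transcript solves pointer chasing''. Since CRG has no prescribed output, one cannot plug $\Pi$ directly into a function-computation framework; one must instead argue that the joint distribution of $(\KA,\KB)$ conditional on a typical transcript is supported on a common value essentially determined by $i^\ast$. The natural technique is a hybrid/round-elimination argument comparing the execution of $\Pi$ on $\mu_{r,n,\ell}$ with its execution on a modified source in which the last pointer function is replaced by a uniform independent permutation: if the agreement probability survives this substitution then $\Pi$ was using only $r-1$ rounds productively, and iterating gives the claim. Making this quantitative enough to preserve the $n/\log^\beta n$ lower bound while halving the round budget to $\lfloor(r+1)/2\rfloor$ is the most delicate part of the plan.
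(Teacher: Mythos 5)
Your description of the source and the $(r+2)$-round SKG protocol for part~(1) matches the paper's. The lower bound plan in part~(2), however, diverges from what Bafna et al.\ actually do and has gaps that would not close as written.

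The most important structural difference: the paper does not reduce CRG directly to plain pointer chasing. Instead it proceeds in two stages. First, Propositions 3.3 \& 3.4 of Bafna et al.\ (quoted here as Proposition~\ref{prop:reduce_to_ind}) reduce the CRG lower bound to an \emph{indistinguishability} statement: if $(C,L,\ep)$ were $\rho$-achievable with $C \ll L$, one could build a $(\rho+1)$-round, $\approx C$-communication protocol that distinguishes $\mu_{r,n,\ell}$ from its product of marginals $\mu_X \otimes \mu_Y$. This step is precisely what handles the ``CRG has no prescribed output'' difficulty you flagged, and it is a nonobvious move. Second, the indistinguishability of $\mu$ from $\mu_X \otimes \mu_Y$ is proved by combining \emph{two} communication lower bounds: (a) a lower bound for the \emph{pointer verification} problem (Definition~\ref{defn:pv}; given both $\ir_0$ and $\jr_0$, decide whether $\pir_r\circ\cdots\circ\pir_1(\ir_0)=\jr_0$), which is not the same problem as standard Nisan--Wigderson pointer chasing; and (b) a set-disjointness lower bound \`a la Razborov. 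The disjointness ingredient is essential and is entirely missing from your plan: under $\mu_X \otimes \mu_Y$ the matching block $A_{\jr_0}=B_{\jr_0}$ is absent, so a low-round protocol could ignore the pointers altogether and simply search for \emph{any} coordinate $i$ with $A_i = B_i$; ruling that out requires disjointness, not round elimination.

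Concretely, the two gaps are: (i) you never explain how a protocol that merely produces high-entropy agreeing keys can be converted into a protocol that \emph{computes} something about the pointers --- the reduction-to-indistinguishability step is what makes this go, and a hybrid ``replace $\pi_r$ by a fresh permutation'' argument does not by itself preclude the protocol from exploiting the $A_i,B_i$ strings in ways unrelated to $i^\ast$; and (ii) the $\lfloor(r+1)/2\rfloor$ round bound comes from the fact that in \emph{pointer verification} both parties know both endpoints and can chase from both ends simultaneously, meeting in the middle after about $r/2$ rounds --- your heuristic ``one CRG message can simulate two chase steps'' does not capture this and would not, as stated, produce the correct round count. A proof following your outline would at best give something like a plain pointer-chasing bound (roughly $r$ rounds), and would not rule out the disjointness-style attack at all.

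For the record, the present paper does not reprove Theorem~\ref{thm:nam_formal_bggs} --- it is cited from \cite{bafna_communication-rounds_2018} --- but the paper's own Theorem~\ref{thm:nam_formal} (the tighter non-amortized bound) does carry out the full argument, and it follows exactly the indistinguishability-plus-pointer-verification-plus-disjointness scheme sketched above (see Proposition~\ref{prop:reduce_to_ind}, Theorems~\ref{thm:disj}, \ref{thm:pv-main}, and~\ref{thm:indist}), so that is the structure your proposal should be measured against.
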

  
 The interpretation of the parameters $n,\ell$ in Theorem \ref{thm:nam_formal_bggs} is described in detail in Definition \ref{def:PCS} of the source $\mu_{r,n,\ell}$. We remark that the proof of item (1) of the theorem is immediate once this definition is made, and so the main content of Theorem \ref{thm:nam_formal_bggs} is in the second item (i.e., the lower bound).

  To aid understanding of Theorem \ref{thm:nam_formal_bggs}, fix any $r \in \BN, \ep \in [0,1)$, and consider parameters $\ell = n \ra \infty$; the length of Alice's and Bob's inputs under $\mu_{r,n,n}$ are $O(n^2)$. The theorem gives that with only $O(\log n)$ communication, $n$ bits of entropy can be generated in $r+2$ rounds, but if we have only roughly half as many rounds (i.e., $\lfloor (r+1)/2\rfloor$ rounds) then generating $n$ bits of entropy takes at least $n/\poly \log n$ communication, which is exponentially larger than $\log n$. It follows that for some $r'$ with $\lfloor (r+1)/2 \rfloor \leq r' < r+2$, the ratio in communication cost between the best $r'$-round protocol and the best $(r'+1)$-round protocol is at least $n^{1/(1 + \lceil (r+1)/2 \rceil)}/\log^{\omega(1)}n$. Our first main result improves this ratio to  $n^{1/4}/\log^{\omega(1)}n$ and moreover shows that such an $r'$ lies in $\{r,r+1\}$:

    \begin{theorem}[Tighter round dependence than Theorem \ref{thm:nam_formal_bggs}; non-amortized setting]
      \label{thm:nam_formal}
          For each $r \in \BN, \ep \in [0,1)$, there exists $\eta > 0$, $\beta < \infty$, $n_0 \in \BN$ such that for any $n \geq n_0$ and any $\ell \in \BN$, the source $\mu_{r,n,\ell}$ of Theorem \ref{thm:nam_formal_bggs} satisfies: 
    \begin{enumerate}
          \item[(1)] The tuple $((r+2)\lceil \log n \rceil, \ell, 0,0)$ is $(r+2)$-achievable for SKG from $\mu_{r,n,\ell}$ (and thus $((r+2) \lceil \log n \rceil, \ell,0)$ is $r$-achievable for CRG).
    \item[(2)] For any $L \in \BN$, $C \leq  \min\{ \eta L - \beta, \sqrt n / \log^\beta n \}$, the tuple $(C, L, \ep)$ is not $r$-achievable for CRG from $\mu_{r,n,\ell}$ (and thus for any $\delta \geq 0$, the tuple $(C, L, \ep, \delta)$ is not $r$-achievable for SKG).
    \end{enumerate}
  \end{theorem}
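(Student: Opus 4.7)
I'll focus on part (2); part (1) is the standard ``chase the pointer'' protocol whose correctness follows immediately from the definition of $\mu_{r,n,\ell}$ once that source is described (this is essentially item (1) of Theorem \ref{thm:nam_formal_bggs}, restated for the same source). Assume toward a contradiction that some $r$-round private-coin CRG protocol $\Pi$ achieves $(C,L,\ep)$ from $\mu_{r,n,\ell}$ with $C \leq \min\{\eta L-\beta,\;\sqrt n/\log^\beta n\}$. The plan is to show that $\Pi$ can be transformed, round by round, into a $0$-round protocol that purports to generate $\Omega(L)$ bits of shared entropy from an essentially product source, which is impossible beyond $O(\log(1/\ep))$ bits.

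First I would translate the CRG guarantee into an information-cost statement. Using the standard connection between CRG and information complexity (already leveraged in \cite{bafna_communication-rounds_2018}), a CRG protocol with communication $C$ and key entropy $\geq L$ yields an $r$-round protocol whose external information cost on $\mu_{r,n,\ell}$ is at most $C$, while $I(\KA\KB;X^N Y^N) \gtrsim L$. The assumption $C \leq \eta L - \beta$ thus forces the keys to ``carry'' far more shared information about the inputs than the transcript reveals externally. I would then reduce to analyzing information cost of protocols on the source $\mu_{r,n,\ell}$, as foreshadowed in the paragraph on the technical challenge.

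The heart of the argument is a one-round elimination lemma tailored to pointer-chasing: if an $r$-round protocol starting with Alice has external information cost at most $I$ on $\mu_{r,n,\ell}$ and its keys have mutual information $\geq L$ with the inputs, then there is an $(r-1)$-round protocol starting with Bob on $\mu_{r-1,n,\ell}$ whose cost is at most $I + O(\sqrt n/\polylog n)$ and whose keys still have mutual information $\geq L - o(L)$. The idea, as in classical pointer-chasing round elimination \cite{NW,duris_lower_1984}, is that Alice's first message $\Pi_1$ has average (per-coordinate) information $O(I/n)$ about the first layer of her input; hence $\Pi_1$ is statistically close to a message Bob can sample himself from a prior depending only on his view, and conditioning on this sampled message effectively collapses the top layer of the pointer chain. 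Iterating the lemma $r$ times produces a $0$-round protocol on a source with no remaining correlation, yielding the desired contradiction once the accumulated slack $r \cdot O(\sqrt n/\polylog n)$ remains well below the budget $L$; the chosen bound $C \leq \sqrt n/\log^\beta n$ for sufficiently large $\beta$ is exactly what makes the telescoping go through.

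The main obstacle I expect is executing the elimination step for CRG rather than for a fixed-function target. In function-computation lower bounds it suffices to preserve the probability that the transcript yields the correct answer; here I must preserve the joint distribution of $(\KA,\KB,X^N,Y^N)$ closely enough that the \emph{mutual information} $I(\KA\KB;X^N Y^N)$ does not drop by more than $o(L)$ under the substitution of $\Pi_1$ by a locally sampled surrogate. Mutual information is not Lipschitz in total variation in the way probabilities of events are, so I anticipate needing a Pinsker-type bound combined with a careful accounting that separates the part of the key entropy attributable to the eliminated round (small, by low information cost on the first layer) from the rest (essentially preserved). Packaging this delicate bookkeeping into a clean one-round elimination lemma—general enough to apply to \emph{all} $r$-round protocols, not just those computing a particular function—is where I expect the bulk of the technical effort to lie, and it is precisely the novelty flagged in Section~\ref{sec:int_help}.
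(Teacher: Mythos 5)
Your proposal diverges substantially from the paper's actual argument, and several gaps would stop it from working as written.

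The most important missing ingredient is the \emph{disjointness} component of the lower bound. Even if a round-elimination argument could collapse the pointer chain, a protocol can ignore the chain entirely and instead try to find some coordinate $i$ with $A_i = B_i$; under $\mu_{r,n,\ell}$ there is one such coordinate with probability $1$, whereas under the product of marginals there is essentially none. Ruling out this strategy requires a set-disjointness lower bound (the paper uses Razborov's theorem, in the form of Corollary~\ref{cor:sqrtn_disj}), and your elimination framework never engages with it. Relatedly, your round-elimination process does not cleanly terminate in the way you describe: after stripping off the permutation layers you are left with a source whose $A$ and $B$ vectors are still correlated in one hidden coordinate, not a ``source with no remaining correlation,'' and this residual is precisely the piece that the disjointness reduction is needed to kill.

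The paper actually proceeds via indistinguishability, not round elimination. It invokes Proposition~\ref{prop:reduce_to_ind} to reduce CRG non-achievability to showing that $\mu_{r,n,\ell}$ and $\mu_X \otimes \mu_Y$ are indistinguishable to low-round low-communication protocols, introduces intermediate hybrid distributions $\hat\mu$ and $\mumid$, and then reduces separately to the pointer-verification lower bound of Bafna et al.~(Theorem~\ref{thm:pv-main}, used as a black box) and to disjointness. Crucially, the improvement from $\lfloor (r+1)/2 \rfloor$ to $r$ rounds does \emph{not} come from a sharper elimination argument; it comes from an embedding trick in Claim~\ref{clm:mumid_mu} that interprets $[n^2] \simeq [n]\times[n]$ and encodes a $(2r-1)$-round pointer-verification instance on $[n]$ as an $r$-round pointer structure on $[n^2]$ via the pairing $(\sigma,\tau)\mapsto \sigma\|\tau$. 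Your proposal offers no mechanism for this factor-of-two gain. Finally, your account of why the bound is $\sqrt n$ (``accumulated slack $r\cdot O(\sqrt n/\polylog n)$ remains well below the budget $L$'') is not right: $\sqrt n$ arises as a communication bound because the construction couples $\sqrt n$ coordinates (needed to make the diagonal of $[n]\times[n]$ look uniform), which weakens the disjointness hardness from $\Omega(n)$ to $\Omega(\sqrt n)$; it has nothing to do with comparing to $L$. Part~(1) is fine and matches the paper.
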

  
  Our second main result provides an exact analogue of Theorems \ref{thm:nam_formal_bggs} and \ref{thm:nam_formal} for the amortized setting:
  \begin{theorem}[Amortized setting]
    \label{thm:am_formal}
      For each $r \in \BN, \gamma \in (0,1)$, there is a constant $c_0 > 0$ such that for $n \geq c_0$, the source $\mu_{r,n,\ell}$ of Theorem \ref{thm:nam_formal_bggs} satisfies: 
    \begin{enumerate}
    \item[(1)] The tuple $((r+2) \lceil \log n \rceil, \ell)$ is $(r+2)$-achievable for SKG (and thus CRG) from $\mu_{r,n,\ell}$.
    \item[(2)] Set $\ell = n$. For any $C,L \in \BR$ with $C \leq n/\log^{c_0} n$ and $L >\gamma \ell = \gamma n$, the tuple $(C,L)$ is not $\lfloor (r+1)/2 \rfloor$-achievable for CRG (and thus for SKG) from $\mu_{r,n,n}$.
    \item[(3)] Again set $\ell = n$. For any $C, L \in \BR$ with $C \leq \sqrt{n} / \log^{c_0} n$ and $L > \gamma n$, the tuple $(C,L)$ is not $r$-achievable for CRG (and thus for SKG) from $\mu_{r,n,n}$.
    \end{enumerate}
  \end{theorem}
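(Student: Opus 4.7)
Item~(1) is immediate from Theorem~\ref{thm:nam_formal_bggs}(1): running the non-amortized $(r+2)$-round SKG protocol, which has communication $(r+2)\lceil\log n\rceil$ and produces a perfect key of length $\ell$, independently on each of the $N$ i.i.d.\ input copies of $\mu_{r,n,\ell}$ yields an amortized protocol of rates $((r+2)\lceil\log n\rceil,\ell)$ with zero agreement error for every $N$.

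\textbf{Strategy for the lower bounds.} For items~(2) and~(3), the plan is to reduce the amortized lower bound to a single-copy information-complexity lower bound. First, I would sharpen Theorems~\ref{thm:nam_formal_bggs} and~\ref{thm:nam_formal} from communication to information: any $r$-round (resp.\ $\lfloor(r+1)/2\rfloor$-round) single-copy protocol $\Pi$ on $(X,Y)\sim\mu_{r,n,n}$ whose outputs agree on a random variable of entropy at least $\gamma n$ must satisfy $I(\Pi^r;XY)\geq \sqrt n/\log^{c_0}n$ (resp.\ $n/\log^{c_0}n$). Such a bound should come out of the same round-elimination and pointer-chasing analysis used in the non-amortized proof of Section~\ref{sec:nam}, but tracked on the information scale: one bounds the mutual information that a low-information protocol can reveal about the hidden pointer in each round. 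Second, I would verify that on $\mu_{r,n,n}$ the internal information cost $I(\Pi;X\mid Y)+I(\Pi;Y\mid X)$ and the external cost $I(\Pi;XY)$ agree up to polylogarithmic factors for every bounded-round $\Pi$; this equivalence is needed to move between the natural scale of the single-copy lower bound (external) and the scale at which the direct-sum embedding is cleanest (internal).

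\textbf{Direct-sum embedding.} Suppose for contradiction that $(C,L)$ with $L>\gamma n$ and $C<\sqrt n/\log^{c_0}n$ is $r$-achievable in the amortized setting, via protocols $\Pi(N)$ with $\CC(\Pi(N))\leq CN(1+o(1))$ and keys of joint entropy $\geq LN(1-o(1))$. Pick a coordinate $I\in[N]$ uniformly at random; I would form a single-copy protocol $\tilde\Pi$ on $(X,Y)\sim\mu_{r,n,n}$ that injects $(X,Y)$ into position $I$, jointly samples the remaining coordinates from the correct marginals using a combination of public and private randomness, and then runs $\Pi(N)$. A standard chain-rule calculation shows that the internal information cost of $\tilde\Pi$ with respect to the embedded coordinate is at most $\CC(\Pi(N))/N\leq C+o(1)$, while a second chain-rule argument applied to the total key entropy shows that a random coordinate $I$ carries on average $\Omega(L)\geq\Omega(\gamma n)$ nats of the shared key. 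A hashing/extraction step converts this entropy into an agreed-upon key of length $\Omega(\gamma n)$, producing an $r$-round single-copy CRG protocol on $\mu_{r,n,n}$ of information cost $\leq C$; combined with Stages~1 and~2 this contradicts the assumed bound on $C$ and proves item~(3). The same reduction with $n/\log^{c_0}n$ and $\lfloor(r+1)/2\rfloor$ rounds proves item~(2).

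\textbf{Main obstacle.} I expect the hardest step to be the direct-sum embedding, for exactly the reason noted in the ``Technical Challenge'' paragraph of Section~\ref{sec:int_help}: unlike in direct-sum theorems for function computation, CRG has no canonical per-coordinate output, so there is no syntactic piece of the joint key generated by $\Pi(N)$ that ``belongs to'' coordinate $I$. The extraction of a single-copy protocol with both low information cost and non-trivial agreed entropy must be carried out purely through mutual-information quantities, and the argument must apply uniformly to every bounded-round protocol rather than only to those computing a fixed function. This uniformity is precisely what forces the internal/external equivalence of Stage~2 to be established for all bounded-round protocols on $\mu_{r,n,n}$ and what makes the amortized lower bound require genuinely new information-theoretic work beyond the non-amortized proof.
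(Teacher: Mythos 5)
Item~(1) is correct and matches the paper (run the non-amortized protocol independently on each copy). But the lower-bound strategy has a genuine gap, concentrated in your Stage~2.

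Your Stage~2 claim --- that $\ICint_\mu(\Pi)$ and $\ICext_\mu(\Pi)$ agree up to polylogarithmic factors for every bounded-round protocol $\Pi$ on $\mu_{r,n,n}$ --- is false, and recognizing why it fails is essentially the crux of the problem. One has
$\ICext_\mu(\Pi) - \ICint_\mu(\Pi) = I(X;Y) - I(X;Y\mid \Pi^r)$,
which can be as large as $I(X;Y)=\ell=n$. For a concrete counterexample, the $(r+2)$-round pointer-chasing protocol that ends by transmitting $A_{\ir_r}$ has $\ICint_\mu = O(r\log n)$ (Bob already knows $B_{\ir_r}=A_{\ir_r}$, so the last message costs him nothing internally) but $\ICext_\mu \geq \ell = n$ (the transcript reveals the hidden common block to an outside observer). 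More pointedly, the protocol produced by the single-letter characterization (Corollary~\ref{cor:det_inf}) --- the very object the argument-by-contradiction hands you --- has $\ICint_\mu\le C$ and $\ICext_\mu \ge L>\gamma n$, and thus exhibits precisely the large gap you are asserting cannot occur. If Stage~2 were true, it would give an immediate (and far too cheap) contradiction; since it is false, the plan breaks.

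The paper takes a different route precisely to avoid having to relate $\ICint$ and $\ICext$. It keeps the non-amortized lower bound on the \emph{communication} scale (Theorems~\ref{thm:indist} and~\ref{thm:indist_bggs}), uses the $\ICint$-to-communication compression of Jain et al.\ (Theorem~\ref{thm:jain_compress}, deployed in Lemma~\ref{lem:get_useful_info}) to turn a small-internal-information protocol into a small-communication protocol $\Pi'$, and then invokes the specific structure of $\mu_{r,n,\ell}$ (Lemma~\ref{lem:test_i}) to show that a protocol whose (approximate) transcript reveals $\ge L-O(\log n)$ bits about $B_{\ir_r}$ can be used to distinguish $\mu$ from $\mu_X\otimes\mu_Y$ with constant advantage; this contradicts the communication lower bound. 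Your Stage~3 also overstates what the direct-sum embedding delivers: the single-letter converse (Lemmas~\ref{lem:augment}, \ref{lem:cr_sim}, \ref{lem:simulate_pri_rand}) yields a single-copy protocol with $\ICint_\mu\le C$ and $\ICext_\mu\ge L$, \emph{not} a single-copy protocol whose outputs agree on a key of length $\Omega(\gamma n)$; the external information cost is the surrogate that replaces ``agreed key entropy,'' and bridging from that surrogate back to a distinguisher is exactly the content of Lemmas~\ref{lem:get_useful_info} and~\ref{lem:test_i} that your proposal does not supply.
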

  Notice that parts (2) and (3) of Theorem \ref{thm:am_formal} only provide a lower bound on the communication rate $C$ for protocols when the entropy rate $L$ is at least a constant factor times $n$. The problem of determining such a result for $L$ that grow sublinearly with $n$, or even those $L$ that do not grow at all (such as $L = o_n(1)$) remains open. Such a problem boils down to showing a rounds-communication tradeoff for the $r$-round {\it common random bits per interaction bit} (CBIB) of the source $\mu_{r,n,\ell}$, or equivalently, for the $r$-round {\it strong data processing constant} (SDPC) \cite{liu2017secret}; see Problem \ref{prob:cbib}. As we discuss in Section \ref{sec:mimk}, this problem seems to be quite difficult as a proof of it would immediately imply Theorem \ref{thm:am_formal}.

  \subsection{Discussion and overview of proof of Theorems \ref{thm:nam_formal} \& \ref{thm:am_formal}}
  \label{sec:proof_overview}
  The source $\mu_{r,n,\ell}$ referred to in Theorems \ref{thm:nam_formal_bggs}, \ref{thm:nam_formal} and \ref{thm:am_formal} is a variant of the well-known {\it pointer chasing distribution} from communication complexity \cite{NW,duris_lower_1984,papadimitriou_communication_1982}. This distribution was introduced to show a similar type of rounds/communication tradeoff as in the above theorems, except for the task of {\it computing functions} rather than generating a shared string. 

  Alice's and Bob's inputs from $\mu_{r,n,\ell}$ are given as follows: for an integer $n$ and odd $r$, Alice receives permutations indexed by odd integers $\pir_1, \pir_3, \ldots, \pir_r : [n] \ra [n]$, and Bob receives permutations indexed by even integers $\pir_2, \pir_4, \ldots, \pir_{r-1} : [n] \ra [n]$, as well as an integer $\ir_0 \in [n]$. Let $\jr_0 = \pir_r ( \pir_{r-1} ( \cdots \pir_1 (\ir_0))) \in [n]$. Alice and Bob also receive strings $A_1, \ldots, A_n \in \{0,1\}^\ell$ and $B_1, \ldots, B_n \in \{0,1\}^\ell$, respectively, which are distributed uniformly at random conditioned on $A_{\jr_0} = B_{\jr_0}$. If Alice and Bob have $r+2$ rounds, then the following protocol generates secret keys distributed uniformly on $\{0,1\}^\ell$: Alice sends Bob $\ir_0$, who responds with $\pir_1(\ir_0)$, Alice responds with $\pir_2(\pir_1(\ir_0))$, and so on, until both parties possess $\jr_0$, at which point they can output $A_{\jr_0} = B_{\jr_0}$.

  To prove that Alice and Bob cannot generate shared common random strings with high entropy and communication $n/\log^{\omega(1)}n$ (item (2) of Theorem \ref{thm:nam_formal_bggs}), the following approach was used: Bafna et al.~\cite{bafna_communication-rounds_2018} first reduced the problem to showing that Alice and Bob cannot succeed with high probability on a distributional version of the following communication problem: Alice receives permutations $\pir_1, \pir_3, \ldots, \pir_r : [n] \ra [n]$, an Bob receives permutations $\pir_2, \pir_4, \ldots, \pir_{r-1} : [n] \ra [n]$ and indices $\ir_0, \jr_0 \in [n]$. Their task is to determine if $\pir_r(\pir_{r-1} \cdots \pir_1(\ir_0)) = \jr_0$. This problem, called {\it pointer verification}, has a protocol with $(r+5)/2$ rounds and communication $O(\log n)$, given by Alice and Bob chasing the pointers forwards and backwards simultaneously. Bafna et al.~\cite{bafna_communication-rounds_2018} showed however that there is no protocol with $(r+3)/2$ rounds and communication $n/\log^{\omega(1)}n$, and this led to item (2) of Theorem \ref{thm:nam_formal_bggs}. We are able to prove Theorem \ref{thm:nam_formal} by employing a reduction from the CRG/SKG problem to the pointer verification problem with indices in $[n^2]$ (as opposed to in $[n]$) and with $2r$ permutations (as opposed to $r$ permutations). 

  The proof of Theorem \ref{thm:am_formal} (in particular, of the lower bounds (2) and (3) in the theorem, as (1) is immediate) is somewhat more involved. The overall goal is to reduce to the non-amortized case (Theorems \ref{thm:nam_formal_bggs} and \ref{thm:nam_formal}), and to do this, three main ingredients are needed. The first ingredient is a characterization of the achievable rate region $\MT_r(X,Y)$ for CRG in terms of the {\it internal information cost} and {\it external information cost} \cite{barak2013compress} of private-coin communication protocols, which has been referred to many times in the literature (e.g., \cite{sudan_communication_2019,ghazi2018resource}). This characterization shows that that if for $L \leq \ell$, the pair $(C,L)$ is $r$-achievable for CRG from $\mu_{r,n,\ell}$ (i.e., belongs to $\MT_r(X,Y)$), then there is an $r$-round private-coin protocol $\Pi$ with inputs $(X,Y) \sim \mu_{r,n,\ell}$ with internal information cost at most $C$ and external information cost at least $L$ (see Corollary \ref{cor:det_inf}). 

  The second ingredient of the proof is a result of Jain et al.~\cite{jain2012direct} (which is implicit in the earlier work of Braverman and Rao \cite{braverman2011information}) stating that for any $r$-round protocol $\hat \Pi$ with internal information cost $I$, there exists an $r$-round protocol $\hat \Pi'$ that simulates $\hat \Pi$ up to some accuracy loss $\ep$ and has communication cost at most $\frac{I + O(r)}{\ep}$ (see Theorem \ref{thm:jain_compress}). Applying this to $\hat \Pi = \Pi$, one might hope to show that $\Pi$ leads to a protocol with communication $O(C)$ and key length $\Omega(L)$ for non-amortized CRG. However, the error $\ep$ introduced in the information-to-communication compression result of Jain et al.~\cite{jain2012direct} makes this conclusion nontrivial, which necessitates the third ingredient: a delicate argument that makes use of the specific structure of $\mu_{r,n,\ell}$ is needed to complete the reduction (see Lemmas \ref{lem:get_useful_info} and \ref{lem:test_i}).

  \section{Proof of Theorem \ref{thm:nam_formal}; non-amortized setting}
  \label{sec:nam}
  In this section we prove Theorem \ref{thm:nam_formal}. To prove this theorem we need to introduce the pointer chasing source of \cite{bafna_communication-rounds_2018}, and also recall the notion of ``indistinguishability'' of two distributions to low-round low-communication protocols. We then state our main technical theorem (Theorem~\ref{thm:indist})  about the indistinguishability of the pointer chasing source from an ``independent source'' (where Alice and Bob get inputs that are independent of each other). Section~\ref{ssec:indist} is devoted to the proof of Theorem~\ref{thm:indist}.

  We begin by formally defining the {\it pointer-chasing source} $\mu_{r,n,\ell}$ that the theorem uses to achieve the rounds-communication tradeoff.
  \begin{defn}[The Pointer Chasing Source $\mu_{r,n,\ell}$, \cite{bafna_communication-rounds_2018}, Definition 2.1]
\label{def:PCS}
	For positive integers $r$, $n$ and $\ell$, the support of $\mu = \mu_{r,n,\ell}$ is 
	$(\MS_n^{\lceil r/2\rceil} \times \{0,1\}^{n\ell}) \times ([n]\times \MS_n^{\lfloor r/2 \rfloor}\times \{0,1\}^{n\ell})$. Denoting $X = (\pir_1,\pir_3,\ldots,\pir_{2\lceil r/2 \rceil-1},A_1,\ldots,A_n)$ and $Y = (\ir,\pir_2,\pir_4,\ldots,\pir_{2\lfloor r/2 \rfloor},B_1,\ldots,B_n)$, a sample $(X,Y) \sim \mu$ is drawn as follows:
	\begin{itemize}
		\item $\ir \in [n]$ and $\pir_1,\ldots,\pir_r \in \MS_n$ are sampled uniformly and independently. 
		\item Let $\jr = \pir_r(\pir_{r-1}(\cdots \pir_1(\ir)\cdots)) \in [n]$. 
		\item $A_\jr = B_\jr \in \{0,1\}^\ell$ is sampled uniformly and independently of $\ir$ and $\pir$'s. 
	    \item For every $\kp \ne \jr$, $A_\kp\in \{0,1\}^\ell$ and $B_\kp\in \{0,1\}^\ell$ are sampled uniformly and independently.
        \end{itemize}
      \end{defn}
      We use the following notational convention for samples $(X,Y) \sim \mu_{r,n,\ell}$. We write $\ir_0 := \ir$, and for $1 \leq t \leq r$, $\ir_t := \pir_t(\ir_{t-1})$. Similarly, we write $\jr_0 := \jr$, and for $1 \leq t \leq r$, $\jr_{t-1} = \pir_t^{-1}(\jr_t)$. Over the distribution $\mu_{r,n,\ell}$, we thus have $\ir_t = \jr_{r-t}$ for $0 \leq t \leq r$ with probability 1.

      We establish the following basic property of the pointer chasing source $\mu_{r,n,\ell}$ for future reference:
        \begin{lemma}
    \label{lem:mu_inf}
    When $(X,Y) \sim \mu_{r,n,\ell}$, $I(X; Y) = \ell$.
  \end{lemma}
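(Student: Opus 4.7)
The plan is to prove the identity by direct computation of the three entropies $H(X)$, $H(Y)$, and $H(X,Y)$ and then apply $I(X;Y) = H(X) + H(Y) - H(X,Y)$. The intuition is that the only ``coupling'' between $X$ and $Y$ in the definition of $\mu_{r,n,\ell}$ is the single shared string $A_\jr = B_\jr \in \{0,1\}^\ell$; everything else is drawn independently.

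First I would compute the marginal $H(X)$. Alice's input $X$ consists of the $\lceil r/2\rceil$ odd-indexed permutations together with $A_1,\ldots,A_n \in \{0,1\}^\ell$. The permutations are sampled independently of everything else, so they are uniform on $\MS_n^{\lceil r/2 \rceil}$. The subtle point is to verify that $(A_1,\ldots,A_n)$ is uniform on $(\{0,1\}^\ell)^n$ after marginalizing out $\ir,\pir_2,\pir_4,\ldots$: conditioned on any fixing of $(\ir,\pir_1,\ldots,\pir_r)$ the index $\jr$ is determined, and by construction $A_\jr$ is uniform and $A_\kp$ for $\kp\ne \jr$ are independent uniform, so jointly $(A_1,\ldots,A_n)$ is uniform conditioned on $(\ir,\pir_1,\ldots,\pir_r)$. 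Hence the marginal is uniform and independent of the odd permutations, giving $H(X) = \lceil r/2 \rceil \log(n!) + n\ell$. By the same argument applied on Bob's side (with the additional $\log n$ coming from the uniform $\ir \in [n]$), $H(Y) = \log n + \lfloor r/2 \rfloor \log(n!) + n\ell$.

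Next I would compute the joint entropy $H(X,Y)$ by counting the independent ``degrees of freedom'' of the sampling procedure in Definition \ref{def:PCS}: the $r$ permutations $\pir_1,\ldots,\pir_r$ are i.i.d.\ uniform (contributing $r\log(n!)$), the index $\ir$ is independent uniform (contributing $\log n$), and among the $2n$ strings $A_1,\ldots,A_n,B_1,\ldots,B_n$ exactly one constraint $A_\jr=B_\jr$ is imposed while the remaining $2n-1$ strings are independent uniform $\ell$-bit strings. Since the distribution is uniform on its support, $H(X,Y) = r\log(n!) + \log n + (2n-1)\ell$.

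Subtracting, everything cancels except one copy of $\ell$, yielding $I(X;Y) = \ell$. The only real obstacle is the verification in the first paragraph that marginalizing out $\ir$ and the even-indexed permutations leaves $(A_1,\ldots,A_n)$ uniform, but this is immediate once one notes that the conditional distribution is already uniform for every fixing of the conditioning variables.
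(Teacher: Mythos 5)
Your proof is correct and takes essentially the same approach as the paper: compute the relevant entropies explicitly and subtract. The paper uses $I(X;Y)=H(X)-H(X\mid Y)$ (where conditioning on $Y$ and Alice's permutations pins down $\jr$ and hence $A_\jr$, giving $H(X\mid Y)=\lceil r/2\rceil\log(n!)+(n-1)\ell$), while you use the symmetric form $H(X)+H(Y)-H(X,Y)$; the two are algebraically equivalent and both hinge on the same observation that the coupling between the two sides is exactly one shared $\ell$-bit string. (Incidentally, the paper's displayed formulas write $r\log(n!)$ for both $H(X)$ and $H(X\mid Y)$ where $\lceil r/2\rceil\log(n!)$ is meant; your bookkeeping with $\lceil r/2\rceil$ and $\lfloor r/2\rfloor$ is the corrected version, and in either case the permutation terms cancel.)
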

  \begin{proof}
    Notice that $H(X) = r \log(n!) + n\ell$ since $\pir_1, \pir_3, \ldots, \pir_{2\lceil r/2 \rceil - 1}$ are uniformly random in $\MS_n$ and $A_1, \ldots, A_n$ are uniformly random in $\{0,1\}^\ell$. Moreover, 
    \begin{eqnarray}
      H(X | Y) &=& H(\pir_1, \pir_3, \ldots, \pir_{2\lceil r/2 \rceil - 1}, A_1, \ldots, A_n | Y) \nonumber\\
               &=& H(\pir_1, \ldots, \pir_{2\lceil r/2 \rceil - 1} | Y) + H(A_1, \ldots, A_n | Y, \pir_1, \ldots, \pir_{2\lceil r/2 \rceil - 1})\nonumber\\
      &=& r \log(n!) + (n-1)\ell\nonumber.
    \end{eqnarray}
  \end{proof}

    It is immediate from the definition of $\mu_{r,n,\ell}$ that part (1) (i.e., the upper bound) of Theorem \ref{thm:nam_formal} holds: in particular, the parties ``chase the pointers'', i.e., alternatively send $\ir_t$, $0 \leq t \leq r$, and finally output $A_{\ir_r} = B_{\ir_r}$ as their keys.
  \if 0
  \begin{lemma}[Upper bound for Theorems \ref{thm:nam_formal_bggs} \& \ref{thm:nam_formal}]
    \label{lem:pc_ub}
    For every $r,n,\ell$, the tuple $((r+2) \lceil \log n \rceil, \ell, 0)$ is $(r+2)$-achievable for SKG (and thus $((r+2) \lceil \log n \rceil, \ell)$ is $(r+2)$-achievable for CRG) from $\mu_{r,n,\ell}$.
  \end{lemma}
  \begin{proof}
    Consider the protocol in which Alice sends Bob an arbitrary bit in the first round, and in round $t+1$, for $1 \leq t \leq r$, the next party to speak sends over $\ir_t = \pir_t(\ir_{t-1}) \in [n]$, which takes ⌈log n⌉ bits. Then Alice outputs $A_{\ir_r}$ as her key and Bob outputs $B_{\ir_r}$ (which is equal to $A_{\ir_r}$ with probability 1 over $\mu_{r,n,\ell}$) as his key. By construction of $\mu_{r,n,\ell}$, $A_{\ir_r} = B_{\ir_r}$ is independent of $\ir_1, \ldots, \ir_r$, which is the transcript of the protocol.
  \end{proof}
  \fi
  The main content of Theorems \ref{thm:nam_formal_bggs} and \ref{thm:nam_formal} is then in part (2) (i.e., the lower bound) of each; its proof, for both theorems, proceeds via arguments about indistinguishbility of inputs to protocols, which we will now define. 
    For $r, C \in \BR_+$, we say that a communication protocol $\Pi$ is an {\it $(r,C)$ protocol} if $\Pi$ has at most $\lfloor r \rfloor$ rounds and communication cost at most $\lfloor C \rfloor$.
  \begin{defn}[Indistinguishability, \cite{bafna_communication-rounds_2018}, Definition 3.1]
    Let $0 \leq \ep \leq 1$. Two distributions $\mu_1, \mu_2$ on pairs $(X,Y)$ are {\it $\ep$-distinguishable} to a protocol $\Pi$ if the distribution of the transcript $\Pi^r$ when $(X,Y) \sim \mu_1$ has total variation distance at most $\ep$ from the distribution of $\Pi^r$ when $(X,Y) \sim \mu_2$.

    Two distributions $\mu_1, \mu_2$ are {\it $(\ep, C, r)$-indistinguishable} if they are $\ep$-indistinguishable to every $(r,C)$ protocol. The distributions $\mu_1, \mu_2$ are {\it $(\ep, C, r)$-distinguishable} if they are not $(\ep, C, r)$-indistinguishable. If $\Pi$ is a protocol such that the total variation distance of the transcript between inputs $(X,Y) \sim \mu_1$ and inputs $(X,Y) \sim \mu_2$ is at least $\ep$, then we say that $\Pi$ distinguishes between $\mu_1$ and $\mu_2$ {\it with advantage $\ep$}.
  \end{defn}

  Proposition \ref{prop:reduce_to_ind} reduces the problem of showing that certain tuples $(C, L)$ are not achievable for CRG from $\mu_{r,n,\ell}$ to that of showing indistinguishability of $\mu_{r,n,\ell}$ from the product of its marginals $(\mu_{r,n,\ell})_X \otimes (\mu_{r,n,\ell})_Y$.
  \begin{proposition}[\cite{bafna_communication-rounds_2018}, Propositions 3.3 \& 3.4]
    \label{prop:reduce_to_ind}
    There are positive constants $\eta, \xi$ such that the following holds. Suppose $\rho, C, L \in \BN$ and $0 < \gamma < 1$. Suppose that $C < \eta L - 3/2 \cdot \log 1/\gamma - \xi$ and that the tuple $(C, L, 1-\gamma)$ is $\rho$-achievable for CRG from the source $\mu_{r,n,\ell}$. Then there is some $N \in \BN$ such that $\mu_{r,n,N\ell}$ and $(\mu_{r,n,N\ell})_X \otimes (\mu_{r,n,N\ell})_Y$ are $(\gamma / 10, C + \xi \log 1/\gamma, \rho + 1)$-distinguishable.
  \end{proposition}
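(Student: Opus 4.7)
The plan is to promote the CRG protocol $\Pi$ into a one-round-longer distinguisher by appending an equality test on the keys via a short random hash. Unpacking the achievability hypothesis gives a $\rho$-round private-coin $\Pi$ on some number $M$ of i.i.d.\ copies of $\mu_{r,n,\ell}$, producing keys $\KA, \KB \in \MK$ with $|\MK| \geq 2^L$ coupled to a uniform $K \sim \MK$ so that $\Pr[\KA = \KB = K] \geq \gamma$. Choosing $N$ large enough in terms of $M$, a single sample of $\mu_{r,n,N\ell}$ can be repackaged (by splitting the length-$N\ell$ strings into length-$\ell$ blocks and using private randomness to fill in the additional permutation structure expected by $\Pi$) into an input on which the CRG guarantees carry over when the source is $\mu_{r,n,N\ell}$, while on the product-marginal side $\KA$ and $\KB$ become conditionally independent given the transcript and private coins.

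The distinguisher then runs $\Pi$ and appends one round in which Alice samples a pairwise-independent hash $h : \MK \to \{0,1\}^m$ with $m = \lceil \log_2(1/\gamma) \rceil + O(1)$ from her private coins and sends a description of $h$ together with $h(\KA)$; Bob outputs ``correlated'' iff $h(\KA) = h(\KB)$. The total round count is $\rho + 1$ and communication is $C + O(\log 1/\gamma)$, matching the parameters in the conclusion. On inputs from $\mu_{r,n,N\ell}$, the CRG coupling gives $\Pr[h(\KA) = h(\KB)] \geq \Pr[\KA = \KB] \geq \gamma$. On inputs from the product source, conditional independence of the keys together with pairwise independence of $h$ yields $\Pr[h(\KA) = h(\KB)] = \Pr[\KA = \KB] + 2^{-m}(1 - \Pr[\KA = \KB])$; choosing $m$ so that $2^{-m} \leq \gamma/4$ and combining with an upper bound $\Pr[\KA = \KB] \leq \gamma/2$ gives acceptance at most $9\gamma/10$ on the product side, leaving a gap of at least $\gamma/10$.

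The main obstacle is the upper bound $\Pr[\KA = \KB] \leq \gamma/2$ on the product source. The CRG coupling only gives the per-atom lower bounds $p(k), q(k) \geq \gamma/|\MK|$ on the key marginals, and by itself this leaves room for either marginal to place most of its mass on a single heavy atom, making the collision probability $\sum_k p(k)q(k)$ close to $1$. To rule this out I would exploit the communication budget $C < \eta L - (3/2)\log(1/\gamma) - \xi$: a protocol whose transcript is so short cannot simultaneously support a uniform key component of mass $\gamma$ on $2^L$ atoms and concentrate either marginal on a tiny atom, because the transcript would then have to carry both the identity of the heavy atom and roughly $L$ uniform bits. Quantifying this trade-off, in the spirit of the information-to-communication compression arguments of \cite{braverman2011information, jain2012direct}, is what pins down the specific constants $\eta, \xi$ and the $(3/2)\log(1/\gamma)$ additive slack in the hypothesis.
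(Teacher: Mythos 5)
This proposition is cited from Bafna et al.\ and not reproved in the paper, so I will assess your argument on its own terms. Your high-level plan---repackage $\mu_{r,n,N\ell}$ as many copies of $\mu_{r,n,\ell}$, run the CRG protocol, and append one round in which a short pairwise-independent hash of the key is exchanged---is indeed the right skeleton, and the round/communication bookkeeping ($\rho+1$ rounds, $C+O(\log 1/\gamma)$ bits) matches the conclusion. You also correctly locate the crux: bounding $\Pr_{\mu_X\otimes\mu_Y}[\KA=\KB]$.

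However, the argument you propose for that crux does not work as stated, and I do not think it can be repaired along the lines you suggest. The heuristic ``a protocol whose transcript is so short cannot simultaneously support a uniform key component of mass $\gamma$ on $2^L$ atoms and concentrate either marginal on a tiny atom'' conflates properties of the keys under two \emph{different} input distributions. The near-uniformity of $\KA,\KB$ on a set of size $2^L$ is a guarantee only when the inputs come from $\mu$; the heavy-atom concentration you are trying to rule out would be a property of the key marginals \emph{under the product source}, conditioned on the transcript of $\Pi$ run on product inputs. These are distributions over the same key set, but there is no a priori relationship between them: the protocol is allowed to behave arbitrarily on product inputs, and nothing you have written constrains the conditional distributions $p_\tau,q_\tau$ of $\KA,\KB$ given the transcript under $\mu_X\otimes\mu_Y$. (Also, as a side remark, the CRG coupling does not in fact give per-atom lower bounds $p(k),q(k)\ge\gamma/|\MK|$ on the marginals even under $\mu$; the coupling guarantees $\sum_k\Pr[K=k,\KA=k,\KB=k]\ge\gamma$ with $K$ uniform, but individual atoms of $\KA$'s marginal can have probability $0$.) The entropy-versus-communication trade-off you invoke, in the style of information-to-communication compression, only controls the amount of information the transcript reveals about the inputs; it does not by itself control the shape of the key marginals under the product distribution.

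The way this gap is actually closed (both in Bafna et al.\ and in analogous arguments elsewhere) is by a contrapositive/indistinguishability-transfer argument rather than a direct upper bound: one assumes the two sources are $(\gamma/10, C+\xi\log 1/\gamma,\rho+1)$-\emph{in}distinguishable, and then uses that very assumption to ship the high-agreement and near-uniformity guarantees from $\mu$ over to $\mu_X\otimes\mu_Y$. Concretely: indistinguishability applied to the extended protocol (which outputs the accept/reject bit of the hash test as part of its transcript) forces $\Pr_{\mu_X\otimes\mu_Y}[h(\KA)=h(\KB)]\ge\gamma-\gamma/10$, hence $\Pr_{\mu_X\otimes\mu_Y}[\KA=\KB]\gtrsim\gamma$. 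Conditional independence of $\KA,\KB$ given the transcript under the product then forces the transcript alone to nearly determine the keys, i.e., there is a function $g$ of the $\le C+O(\log 1/\gamma)$-bit transcript with $\Pr_{\mu_X\otimes\mu_Y}[g(\tau)=\KA]\gtrsim\gamma$. Applying indistinguishability once more (to a protocol that appends the bit $\One[g(\tau)=\KA]$) ships this back to $\mu$, giving $\Pr_{\mu}[g(\tau)=\KA]\gtrsim\gamma$. \emph{Now}, and only now, the CRG near-uniformity of $\KA$ under $\mu$ is in play: a random variable that is $(1-\gamma)$-coupled to the uniform distribution on $2^L$ points, yet is predicted with probability $\gtrsim\gamma$ from a transcript of length $\le C+O(\log 1/\gamma)\le \eta L -\tfrac32\log(1/\gamma)+O(\log 1/\gamma)$, yields a contradiction (this is exactly what the $\eta L-\tfrac32\log(1/\gamma)-\xi$ slack in the hypothesis is calibrated against). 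Without this back-and-forth indistinguishability transfer, your direct approach has no way to reach into the product-source key distributions at all, so the gap you flagged is genuine and not merely a matter of constant-chasing.

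One further caveat, more minor: your repackaging step is glossed over, and as stated (split each $A_k,B_k\in\{0,1\}^{N\ell}$ into $N$ blocks) does \emph{not} yield $N$ i.i.d.\ samples of $\mu_{r,n,\ell}$, since all blocks share the same permutations and the same hidden index $\jr_0$. One needs an additional private re-randomization/relabeling step to decorrelate the blocks, which Bafna et al.\ carry out but which your sketch does not supply. I'd flag this as a secondary gap; the main gap is the one above.
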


Our main theorem for this section is the following indistiguishability result for $\mu = \mu_{r,n,\ell}$ versus $\mu_X \times \mu_Y$. In contrast to the analogous result in \cite[Lemma 4.5]{bafna_communication-rounds_2018}, our result shows indistinguishability for protocols with $r+1$ rounds albeit with a smaller communication budget.

    \begin{theorem}
    \label{thm:indist}
    For every $\ep > 0$ and $r \in \BN$ there exists $\beta, n_0$ such that for every $n \geq n_0$ and $\ell$, the distributions $\mu = \mu_{r,n,\ell}$ and $\mu_X \otimes \mu_Y$ are $(\ep, r+1, \sqrt{n}/\log^{\beta}n)$-indistinguishable.
  \end{theorem}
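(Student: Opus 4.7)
The plan is to prove Theorem~\ref{thm:indist} by reducing a hypothetical distinguisher to a pointer verification protocol, then invoking the round--communication lower bound for pointer verification from \cite{bafna_communication-rounds_2018}. Concretely, suppose for contradiction that $\Pi$ is an $(r+1)$-round protocol with communication $C \leq \sqrt{n}/\log^\beta n$ that $\ep$-distinguishes $\mu_{r,n,\ell}$ from $\mu_X \otimes \mu_Y$. The guiding intuition is that the only structural difference between these two distributions is the planted collision $A_{\jr_0} = B_{\jr_0}$ at the endpoint of the pointer chain $\jr_0 = \pir_r(\cdots \pir_1(\ir_0) \cdots)$, so any distinguishing advantage must be ``about'' $\jr_0$ in a quantifiable sense.

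First, I would convert $\Pi$ into a \emph{pointer locator} that, with constant probability over inputs drawn from $\mu_{r,n,\ell}$, produces a small set $S \subseteq [n]$ of candidate indices containing $\jr_0$. This step uses an averaging argument over transcripts together with the observation that a transcript of length $C$ cannot noticeably distinguish $\mu$ from $\mu_X \otimes \mu_Y$ unless it induces a nontrivial shift in the posterior over $\jr_0$; the resulting list size $|S|$ and success probability trade off against $C$. At the target communication $C \approx \sqrt n /\polylog n$, the natural trade-off gives $|S| \approx \sqrt n /\polylog n$, which is small enough that verifying all candidates against the $A,B$ strings costs only an additional $O(\sqrt n \log n)$ bits, safely absorbed into the existing $r+1$ rounds.

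Second, I would lift the locator into a genuine pointer verification protocol on $2r$ permutations over $[n^2]$ via a squaring construction that mirrors the reduction sketched in the proof overview (Section~\ref{sec:proof_overview}). Identifying $[n^2]$ with $[n]\times [n]$, I would pair the two independent chains worth of $\sigma$-permutations from a $2r$-permutation pointer verification instance into $r$ block-structured permutations on $[n^2]$ that preserve Alice/Bob ownership, and plant the $A,B$ strings at the product target so that the induced distribution on $\mu_{r,n,\ell}$-inputs is statistically close to the intended one. A single invocation of the locator from the first step then yields an $(r+1)$-round pointer verification protocol on $2r$ permutations over $[n^2]$ with communication $O(\sqrt n /\polylog n)$. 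Since $r+1 < (2r+3)/2$, the pointer verification lower bound of \cite{bafna_communication-rounds_2018} applied at ground set size $n^2$ and $2r$ permutations forces communication $\Omega(n^2 /\polylog n)$, which vastly exceeds $\sqrt n /\polylog n$ for $n$ large enough; this is the desired contradiction.

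The main obstacle I anticipate is the locator step: turning an $\ep$-advantage distinguisher into a genuine list-decoder for $\jr_0$ requires a careful information-theoretic analysis of how $\ep$ bits of distinguishing signal can be distributed across the transcript, and this is where the basic information lemmas of Section~\ref{sec:it_lemmas} are needed to relate transcript-level advantage to posterior-level concentration around $\jr_0$. A secondary difficulty is executing the squaring reduction with only private randomness: the parties must plant the collision $A_{\jr_0} = B_{\jr_0}$ at a location coupled to the verification target without either party learning that location, which I would handle by a correlated-sampling step in which the planted bits are generated from private coins in a way that depends only on auxiliary randomness shared via the pointer verification inputs themselves. Once both ingredients are in place, the parameter accounting (rounds preserved, communication $C + O(\sqrt n \log n)$) directly contradicts the pointer verification lower bound and completes the proof.
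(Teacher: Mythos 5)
Your proposal diverges substantially from the paper's argument and, as written, has two serious gaps.

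\textbf{The ``locator'' step is an unproven conjecture, not a reduction.} You claim that a transcript of length $C$ can only distinguish $\mu$ from $\mu_X \otimes \mu_Y$ if it induces a nontrivial posterior shift on $\jr_0$, and that an averaging argument then yields a small candidate list $S \ni \jr_0$. But a low-communication distinguisher need not localize $\jr_0$ at all: it could, for instance, test for the \emph{existence} of some $i$ with $A_i = B_i$ via hashing, which is exactly a set-disjointness computation. Showing that this alternative strategy fails is the content of Razborov's disjointness lower bound, and the paper invokes it explicitly (Theorem~\ref{thm:disj} and Corollary~\ref{cor:sqrtn_disj}). Your proposal never references disjointness; instead, the claim that distinguishing signal must concentrate on $\jr_0$ would require re-deriving both the disjointness and pointer-chasing hardness from scratch inside a single ``posterior concentration'' lemma, with no indication of how. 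The paper sidesteps this entirely by interpolating through two auxiliary distributions $\hat\mu$ (with $\lfloor\sqrt n\rfloor$ planted collisions, one of them at $\ir_r$) and $\mumid$ (same except $P$ is drawn independently of $\ir_r$); the $\mu \leftrightarrow \hat\mu$ and $\mumid \leftrightarrow \mu_X\otimes\mu_Y$ steps reduce to disjointness, and only the middle step $\hat\mu \leftrightarrow \mumid$ reduces to pointer verification. No locator is ever extracted.

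\textbf{The parameter accounting in the squaring step is backwards.} In the paper's Claim~\ref{clm:mumid_mu}, a protocol distinguishing $\hat\mu_{r,n^2,\ell}$ from $\mumid_{r,n^2,\ell}$ over ground set $[n^2]$ is converted into a pointer-verification protocol over ground set $[n]$ with $2r-1$ permutations, with the \emph{same} communication. The ground set shrinks by a square root, which is exactly why the final bound is $\sqrt n / \log^\beta n$ rather than $n / \log^\beta n$. Your proposal claims the opposite direction: a distinguisher over $[n]$ supposedly yields a PV protocol over $[n^2]$ with communication $O(\sqrt n / \polylog n)$, contradicting a PV lower bound of $\Omega(n^2/\polylog n)$. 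If such a reduction existed, the theorem would hold with an $n^2$ bound instead of $\sqrt n$, which is not what is being proved and is presumably false. The squaring lets you \emph{embed} a small PV instance into a larger $\mu$-instance, not the other way around. A secondary point: your concern about simulating the reduction with only private coins is misplaced, since indistinguishability is defined for protocols that may use public randomness, and the paper's reduction freely uses public coins to generate the permutations $\tau_i$ and the payload strings.

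So while the overall shape (reduce to PV after handling the planted-collision structure) is the right intuition, the hybrid decomposition through $\hat\mu$ and $\mumid$, the use of the disjointness lower bound, and the correct direction of the squaring reduction are all essential and missing.
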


  Using Proposition \ref{prop:reduce_to_ind}, the proof of Theorem \ref{thm:nam_formal} follows from Theorem \ref{thm:indist}.
  \begin{proof}[Proof of Theorem \ref{thm:nam_formal}]
  We need to show item (2).   
    Fix $\ep > 0$ and $r \in \BN$. Let $\xi, \eta$ be the constants from Proposition \ref{prop:reduce_to_ind}. Also let $\beta_0$ be the constant $\beta$ from Theorem \ref{thm:indist} with $(1-\ep)/20$ as the variational distance parameter. Also let $\beta$ be a constant such that $\beta > \max \{ \beta_0, 3/2 \cdot \log 1/(1-\ep) + \xi \}$ and $\sqrt n/\log^\beta n + \xi \log 1/(1-\ep) \leq \sqrt n / \log^{\beta_0}n$, which is possible for sufficiently large $n$. Suppose for purpose of contradiction that for some $L > 0$, the tuple $(\min \{ \eta L - \beta, \sqrt n / \log^\beta n \}, L, \ep)$ were $r$-achievable for CRG from $\mu_{r,n,\ell}$. Since $\beta > 3/2 \log 1/(1-\ep) + \xi$, it follows from Proposition \ref{prop:reduce_to_ind} that for some $N \in \BN$, $\mu_{r,n,\ell N}$ and $(\mu_{r,n,\ell N})_X \otimes (\mu_{r,n,\ell N})_Y$ are $((1-\ep)/10,\sqrt n / \log^{\beta_0} n, r+1 )$-distinguishable. But  this contradictions Theorem \ref{thm:indist}, which states that $(\mu_{r,n,\ell N})_X \otimes (\mu_{r,n,\ell N})_Y$ are $((1-\ep)/20,\sqrt n / \log^{\beta_0} n, r+1 )$-indistinguishable.
  \end{proof}

  To complete the proof of Theorem \ref{thm:nam_formal} it therefore suffices to prove Theorem \ref{thm:indist}. We do so in the following subsection.

  \subsection{Disjointness and Proof of Theorem \ref{thm:indist}}\label{ssec:indist}

  Next we work towards the proof of Theorem \ref{thm:indist}; the proof parallels that of a corresponding result of Bafna et al., which shows that the distributions $\mu = \mu_{r,n,\ell}$ and $\mu_X \otimes \mu_Y$ are $(\ep, \lfloor (r+3)/2 \rfloor,n/\log^\beta n)$-indistinguishable (see \cite[Lemma 4.5]{bafna_communication-rounds_2018}).\todo{Noah: Changed $(r-2)/2$ to $(r+3)/2$ -- I should check once more...} A central ingredient in the proof of \cite{bafna_communication-rounds_2018} is a ``pointer verification problem'' (see Definition~\ref{defn:pv} below) and an indistinguishability result they show for this problem (see Theorem~\ref{thm:pv-main}). We use the same notion and indistiguishability result, with the main difference being that we are able to reduce a ``$2r$''-round pointer verification problem to our problem whereas the proof in \cite{bafna_communication-rounds_2018} could only reduce an $r$round pointer verification problem to the same. This factor of $2$ leads to the gain in this section.

  The proof proceeds by eliminating each of two possible strategies Alice and Bob can use to distinguish $\mu_{r,n,\ell}$ and $(\mu_{r,n,\ell})_X \otimes (\mu_{r,n,\ell})_Y$: first, they can try to follow the chain of pointers, compute $\ir_r$, and check if $A_{\ir_r} = B_{\ir_r}$ (which is true with probability 1 under $\mu_{r,n,\ell}$ but only with probability $1/2^\ell$ under $(\mu_{r,n,\ell})_X \otimes (\mu_{r,n,\ell})_Y$). Computing $\ir_r$, however, with fewer than $r+2$ rounds requires communication $\Omega(n)$ by standard results for the pointer chasing problem \cite{NW}. Alternatively, Alice and Bob can ignore the chain of pointers and try to determine if there is {\it any} $i$ such that $A_i = B_i$ (under the product distribution the probability that such an $i$ exists is at most $n/2^\ell \ll 1$). As observed in \cite{bafna_communication-rounds_2018}, determining the existence of such an $i$ is no easier than solving the set disjointness problem \cite{Razborov}, which requires communcation $\Omega(n)$. However, combining the pointer chasing and set disjointness lower bounds takes some care, and ultimately leads to the fact that we are only able to lower-bound the communication cost of $r$-round (as opposed to $(r+1)$-round) protocols, and get a bound of $\tilde \Omega(\sqrt{n})$ (as opposed to $\tilde \Omega(n)$). We begin by recalling the $\Omega(n)$ lower bound on the distributional communication complexity of disjointness with respect to a particular distribution: 
  \begin{theorem}[\cite{Razborov}]
    \label{thm:disj}
  For every $\ep > 0$ there exists $\delta > 0$ such that for all $n$ the following holds. Let $\disjy = \disjy_n$ (respectively, $\disjn = \disjn_n$) denote the uniform distribution on pairs $(U, V)$ with $U, V \subseteq [n]$ and $|U| = |V| = n/4$ such that $|U \cap V| = 1$ (respectively, $|U \cap V| = 0$). Then if Alice gets $U$ and Bob gets $V$ as inputs, $\disjy$ and $\disjn$ are $(\ep, \delta n, \delta n)$-indistinguishable to Alice and Bob.
  \end{theorem}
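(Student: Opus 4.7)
The plan is to prove this via Razborov's corruption (``rectangle'') method. Since the claim concerns total variation between transcript distributions under $\disjy$ and $\disjn$, by an averaging argument over private randomness it suffices to prove the bound for deterministic protocols. Any deterministic protocol with communication at most $C$ partitions the input space $\binom{[n]}{n/4} \times \binom{[n]}{n/4}$ into at most $2^C$ combinatorial rectangles $R_\tau = U_\tau \times V_\tau$ indexed by transcripts $\tau$, and the total variation distance decomposes as $\sum_\tau (\disjn(R_\tau) - \disjy(R_\tau))^+$. Equivalently, it suffices to show that $\disjn(T) - \disjy(T) \le \ep$ for every subset $T$ that is a union of at most $2^C$ rectangles. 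The round bound $\delta n$ in the statement is implied by (and therefore redundant with) the communication bound.

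The key ingredient I would invoke is Razborov's rectangle lemma: there exist absolute constants $\alpha, \beta > 0$ such that for every rectangle $R$, $\disjn(R) \le \beta^{-1} \disjy(R) + 2^{-\alpha n}$. Given this, summing over the at most $2^C$ rectangles in a union $T$ yields
\[
\disjn(T) - \disjy(T) \;\le\; (\beta^{-1}-1)\,\disjy(T) + 2^C \cdot 2^{-\alpha n} \;\le\; (\beta^{-1}-1) + 2^C \cdot 2^{-\alpha n}.
\]
For $C \le \alpha n / 2$ the additive term is negligible in $n$. To drive the leading term below an arbitrary prescribed $\ep$, I would tensorize: apply the corruption lemma to $k$ independent copies of the problem on disjoint blocks of $n/k$ coordinates each, and use the fact that rectangles respect product structure to replace $(\beta^{-1}-1)$ by $(\beta^{-1}-1)^k$. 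Choosing $k$ a sufficiently large constant so that $(\beta^{-1}-1)^k < \ep$ yields the theorem with $\delta = \alpha/(2k)$.

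The main obstacle, and the combinatorial heart of the argument, is proving the rectangle lemma itself. The standard approach is a pigeonhole/entropy-counting argument: given a rectangle $R$ with $\disjn(R) \ge 2^{-\alpha n}$, one shows that a random pair $(U,V) \sim \disjn$ conditioned on lying in $R$ has enough residual entropy in the complement $[n] \setminus (U \cup V)$ that inserting a uniformly random common element $j$ yields a pair in $\disjy$ still lying in $R$ with probability bounded below by a constant, giving $\disjy(R) \gtrsim \disjn(R)$. Making this rigorous requires careful entropy-based counting lemmas for dense set systems, which I would adapt from Razborov's original proof or from the cleaner information-theoretic reformulation of Bar-Yossef, Jayram, Kumar, and Sivakumar; the delicate point is controlling the marginal distributions of $U$ and $V$ induced by the rectangle well enough that the swap-in step does not move the pair outside of $R$ with high probability.
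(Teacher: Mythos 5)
The paper does not prove this statement; it is a black-box citation to Razborov's paper, so there is no internal proof to compare against. Evaluating your proposal on its own terms: the framework (pass to deterministic protocols, write the transcript total-variation distance as a sum over leaves/rectangles, control each rectangle by a corruption bound) is the right one, and you are correct that the round parameter $\delta n$ is implied by the communication bound. However, the step that drives the advantage below an arbitrary $\ep$ has a genuine gap, and the proposal as written does not close it.

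The leading term after summing the corruption bound over the partition is a fixed constant, not something that can be made small by adjusting $\delta$. From $\disjn(R) \le \beta^{-1}\disjy(R) + 2^{-\alpha n}$ one gets $\disjn(R) - \disjy(R) \le (1-\beta)\disjn(R) + \beta\, 2^{-\alpha n}$, and summing over the at most $2^C$ leaves in the set $T$ gives $\disjn(T) - \disjy(T) \le (1-\beta) + 2^C\,2^{-\alpha n}$. (Your version, $(\beta^{-1}-1)\disjy(T)$, is also valid but looser; with Razborov's $\beta$ around $1/45$ it is vacuously larger than $1$, whereas $1-\beta$ is at least informative.) Either way, the universal-constant corruption lemma caps the provable advantage around $1-\beta$, which is close to $1$, not close to $0$. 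Getting "for every $\ep$" requires either a \emph{smooth} corruption lemma of the form $\disjy(R) \ge (1-\ep)\disjn(R) - 2^{-\alpha(\ep)n}$ for every $\ep>0$ (which can be extracted from Razborov's swap-in argument by tuning parameters, but is not an immediate corollary of the constant-$\beta$ statement), or an information-complexity direct-sum argument in the style of Bar-Yossef, Jayram, Kumar, and Sivakumar.

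The tensorization you propose to bridge this gap does not work. Splitting $[n]$ into $k$ blocks of size $n/k$ does not turn the problem into a product of $k$ independent copies of $\disjy_{n/k}$ versus $\disjn_{n/k}$: under $\disjy_n$ the unique intersection element lies in exactly one block, and conditioned on which block it occupies, the restrictions to the remaining $k-1$ blocks are disjoint pairs of random (non-fixed) sizes, not samples of $\disjn_{n/k}$; likewise a rectangle on $\binom{[n]}{n/4}\times\binom{[n]}{n/4}$ does not factor as a product of rectangles on the blocks. So there is nothing to which one could apply the corruption lemma coordinate-wise. Moreover, even if such a factorization existed, the quantity you write, $(\beta^{-1}-1)^k$, grows with $k$ since $\beta^{-1}-1>1$; the target of the manipulation would have to be something like $(1-\beta)^k$, but the block decomposition does not deliver that either. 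This step is the heart of the claim and needs to be replaced by one of the mechanisms mentioned above.
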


  We will use the following corollary of Theorem \ref{thm:disj}:
  \begin{corollary}
  \label{cor:sqrtn_disj}
  For every $\ep > 0$ there exists $\delta > 0$ such that for all $n$ the following holds. Let $\disjy_{n,\sqrt n}$ (respectively, $\disjn_{n,\sqrt n}$) denote the uniform distribution on pairs $(U, V)$ with $U, V \subseteq [n]$ and $|U| = |V| = n/4$ such that $|U \cap V| =\lfloor \sqrt n\rfloor$ (respectively, $|U \cap V| = 0$). Then if Alice gets $U$ and Bob gets $V$ as inputs, $\disjy_{n,\sqrt n}$ and $\disjn_{n,\sqrt n}$ are $(\ep, \delta \sqrt n, \delta \sqrt n)$-indistinguishable to Alice and Bob.
\end{corollary}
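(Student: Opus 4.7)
My plan is to prove the corollary by a public-coin reduction from the standard one-element intersection problem of Theorem~\ref{thm:disj}, applied with parameter $m := \lfloor \sqrt{n} \rfloor$. Given $\ep > 0$, I take $\delta_\ep > 0$ to be the constant produced by that theorem and set $\delta := \delta_\ep / 2$. I will suppose for contradiction that there is some $(r,C)$-protocol $\Pi$ with $r, C \leq \delta \sqrt{n}$ that $\ep$-distinguishes $\disjy_{n,\sqrt{n}}$ from $\disjn_{n,\sqrt{n}}$, and build from it a protocol $\Pi'$ of the same round and communication cost that $\ep$-distinguishes $\disjy_m$ from $\disjn_m$, contradicting Theorem~\ref{thm:disj}.

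The reduction I have in mind goes as follows. Identify $[m]^2$ with a subset of $[n]$ (using $m^2 \leq n$). Given inputs $(U_0, V_0) \subseteq [m]$ with $|U_0| = |V_0| = m/4$, Alice and Bob use public coins to sample a uniformly random permutation $\sigma \in \MS_n$ together with two uniformly random disjoint sets $S_A, S_B \subseteq [n] \setminus [m]^2$ of size $n/4 - m^2/4$ each (which fits, since $n - m^2 \geq 2(n/4 - m^2/4)$). Alice then forms $U := \sigma\bigl((U_0 \times [m]) \cup S_A\bigr)$, Bob forms $V := \sigma\bigl((V_0 \times [m]) \cup S_B\bigr)$, and they invoke $\Pi$ on $(U,V)$; the protocol $\Pi'$ outputs whatever $\Pi$ outputs. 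By construction $|U| = |V| = n/4$ and $|U \cap V| = m \cdot |U_0 \cap V_0|$, so the intersection size is $m = \lfloor \sqrt{n} \rfloor$ in the YES case and $0$ in the NO case; since public randomness incurs no communication or rounds, $\Pi'$ has the same resources as $\Pi$.

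The crucial step is to verify that the induced distribution of $(U,V)$ is exactly $\disjy_{n,\sqrt{n}}$ (resp.~$\disjn_{n,\sqrt{n}}$) when $(U_0, V_0) \sim \disjy_m$ (resp.~$\disjn_m$). I handle this via a standard symmetry argument: for any source pair with a fixed cardinality profile $(|U|,|V|,|U \cap V|)$, the action of a uniformly random $\sigma \in \MS_n$ yields a uniform distribution over all pairs with the same profile (all such pairs are in one $\MS_n$-orbit, and the stabilizer sizes are equal). Consequently the TV distance between transcripts of $\Pi$ on $\disjy_{n,\sqrt{n}}$ vs $\disjn_{n,\sqrt{n}}$ transfers precisely to the distinguishing advantage of $\Pi'$, which is therefore at least $\ep$. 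Applying Theorem~\ref{thm:disj} with parameter $m$ then forces the resource lower bound $\delta_\ep m$, whereas $\Pi'$ uses at most $\delta \sqrt{n} = \delta_\ep \sqrt{n}/2 \leq \delta_\ep m$ for $n$ large enough that $m \geq \sqrt{n}/2$, which yields the contradiction.

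I anticipate no serious conceptual obstacle. The bookkeeping to watch will be divisibility and boundary conditions (ensuring $m/4$, $n/4$, and $n/4 - m^2/4$ are treated as integers via appropriate rounding, and confirming that the padding region $[n]\setminus [m]^2$ accommodates $S_A$ and $S_B$); any small mismatches can be absorbed into a slightly smaller choice of $\delta$ for sufficiently large $n$. The symmetrization under $\sigma$ is the clean kernel of the argument, and the rest is routine.
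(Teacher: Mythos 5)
Your proposal is correct and follows essentially the same blow-up-and-symmetrize reduction as the paper. The paper's proof runs the reduction in the same direction (given a protocol distinguishing the $\sqrt{\cdot}$-intersection variant, build one distinguishing the standard one-element variant by replacing each input element with a block), but parameterizes by a perfect square: it first proves the claim for $n^2$ by mapping a $\disjy_n/\disjn_n$ instance into $[n^2]$ via blocks of size $n$ --- which, since $|U|=n/4$ blows up to exactly $n^2/4$, requires no padding --- and then handles non-perfect-square $n'$ by a brief remark. You instead start from $m=\lfloor\sqrt n\rfloor$ and pad with disjoint filler sets $S_A, S_B$ to hit the target cardinality $n/4$, then invoke the same $\MS_n$-symmetrization; this is equivalent but front-loads the integrality bookkeeping that the paper defers. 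Your explicit appeal to transitivity of the $\MS_n$-action on pairs with a fixed cardinality profile is the argument the paper uses implicitly in the line "using public randomness, they randomly permute"; it is nice to have it stated. The one small thing to tidy is that your $\delta=\delta_\ep/2$ together with "$n$ large enough" needs a short remark for small $n$ (shrink $\delta$ so that $\delta\sqrt n < 1$, making the claim vacuous), but that is exactly the kind of boundary issue you already flagged.
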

\begin{proof}
  A protocol $\Pi$ that distinguishes $\disjy_{n^2,n}$ and $\disjn_{n^2, n}$ with communication $C$ may be converted into a protocol $\Pi'$ with communication $C$ that distinguishes $\disjy_{n}$ and $\disjn_n$ with advantage $\ep$. In particular, the protocol $\Pi'$ proceeds as follows: given inputs $(U,V)$, $|U| = n/4, |V| = n/4$, Alice and Bob construct an instance $(U',V')$, that is distributed according to $\disjy_{n^2,n}$ if $(U,V) \sim \disjy_n$ and that is distributed according to $\disjn_{n^2,n}$ if $(U,V) \sim \disjn_n$. In particular, Alice and Bob first construct sets $(\tilde U, \tilde V)$ as follows: for each $u \in U \subset [n]$, Alice places the elements $(u-1)n + j$, for $1 \leq j \leq n\}$ in $\tilde U$, and Bob constructs $\tilde V$ in an analogous fashion. Then, using public randomness, they randomly permute the elements of $\tilde U, \tilde V$ (according to the same permutation) to obtain sets $U', V'$. It is clear that $|\tilde U| = | \tilde V| = |U'| = |V'| = n \cdot |U| = n^2/4$. Moreover, if $|U \cap V| = 0$, then $|\tilde U \cap \tilde V| = |U' \cap V'| = 0$, and if $|U \cap V| = 1$, then $|U' \cap V'| = n = \sqrt{n^2}$.

  By Theorem \ref{thm:disj}, for any $\ep > 0$, there is $\delta > 0$ such that the protocol $\Pi'$ must have communication at least $\delta n$. Thus the protocol $\Pi$ must have communication at least $\delta n = \sqrt{\delta^2 n^2}$.

  It follows in a similar manner as the above argument that any protocol $\Pi$ distinguishing $\disjy_{n',n}$ and $\disjn_{n',n}$ with $n^2 \leq n' < (n+1)^2$ with communication $C$ may be converted into a protocol $\Pi'$ with communication $C$ that distinguishes $\disjy_n$ and $\disjn_n$ with advantage $\ep$. This completes the proof of the corollary even for non-perfect squares $n$.
\end{proof}

Next we state the second main ingredient in the proof of Theorem \ref{thm:indist}, which is a hardness result for the {\it pointer verification} problem introduced in \cite[Definition 4.1]{bafna_communication-rounds_2018}. The inputs to pointer verification are similar to those of the standard pointer chasing problem, except that Alice and Bob receive as inputs a final pointer $\jr_0$ in addition to the initial pointer $\ir_0$, and the goal is to determine if $\pir_r \circ \cdots \circ \pir_1(\ir_0) = \jr_0$: 
\begin{defn}[\cite{bafna_communication-rounds_2018}, Definition 4.1]\label{defn:pv}
  Let  $r, n \in \BN$ with $r$ odd. Then the distributions $\dypv = \dypv(r,n)$ and $\dnpv = \dnpv(r,n)$ are both supported on $((\MS_n^{\lceil r/2}) \times ([n]^2 \times S_{n}^{\lfloor r/2 \rfloor})$, and are defined as follows:
  \begin{itemize}
  \item $\dnpv$ is the uniform distribution on $((\MS_n^{\lceil r/2}) \times ([n]^2 \times S_{n}^{\lfloor r/2 \rfloor})$.
  \item $(X,Y) \sim \dypv$, with $X = (\pir_1, \pir_3, \ldots, \pir_r), Y = (\ir_0, \jr_0, \pir_2, \pir_4, \ldots, \pir_{r-1})$ is sampled by letting $\pir_1, \pir_2, \ldots, \pir_r$ be independent and uniform over $\MS_n$, letting $\ir_0 \in [n]$ be uniform and independent of the $\pir_t$, and setting $\jr_0 = \pir_r \circ \cdots \circ \pir_1(\ir_0)$.
  \end{itemize}
\end{defn}

Notice that with $(r+5)/2$ rounds of communication, by communicating at most $1 + (r+1) \lceil \log n \rceil$ bits, Alice and Bob can distinguish between $\dypv(r,n)$ and $\dnpv(r,n)$ with advantage $1-1/n$. In particular, Alice sends Bob an arbitrary bit in the first round, Bob sends $\ir_0, \jr_0$ in the second round, Alice responds with $\ir_1 = \pir_1(\ir_0)$ and $\jr_1 = \pir_r^{-1}(\jr_0)$, Bob responds with $\ir_2$ and $\jr_2$, and so on. After $(r+3)/2$ rounds either Alice or Bob will know both $\ir_{(r-1)/2}$ and $\jr_{(r-1)/2}$, and this person sends $\One[\pir_{(r+1)/2}(\ir_{(r-1)/2}) = \jr_{(r-1)/2}]$ (which is 1 with probability 1 under $\dypv$ and only with probability $1/n$ under $\dnpv$) as the final bit. 

Theorem \ref{thm:pv-main} states that if Alice and Bob are only allowed $1$ fewer round, then they must communicate exponentially more bits to distinguish $\dypv$ and $\dnpv$:
\begin{theorem}[\cite{bafna_communication-rounds_2018}, Theorem 4.2]
	\label{thm:pv-main}
	For every $\epsilon > 0$ and odd $r$ there exists $\beta, n_0$ such for every $n \geq n_0$, $\dypv(r,n)$ and $\dnpv(r,n)$ are $(\epsilon,(r+3)/2,n/\log^\beta n)$-indistinguishable.
  \end{theorem}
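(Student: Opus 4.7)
The plan is to lower bound the distinguishing advantage via an information-theoretic round-elimination argument in the spirit of Nisan--Wigderson \cite{NW}. The ``honest'' $(r{+}5)/2$-round protocol described immediately after Definition~\ref{defn:pv} simultaneously chases the pointer forward from $\ir_0$ (using Alice's odd-indexed $\pir_t$'s) and backward from $\jr_0$ (using Bob's even-indexed $\pir_t$'s) in lockstep, meeting in the middle after each side has taken $(r-1)/2$ steps. With one fewer round, I would argue that neither party can simultaneously hold both $\ir_{(r-1)/2}$ and $\jr_{(r+1)/2}$ and so neither can apply the ``middle'' permutation to verify the chain, unless a transcript message has already leaked substantial information about one of these endpoints.

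I would proceed by induction on odd $r$. Suppose a $((r+3)/2, C)$ protocol $\Pi$ distinguishes $\dypv(r,n)$ from $\dnpv(r,n)$ with advantage $\ep$. Under $\dnpv$ the inputs are uniform and independent across Alice and Bob, so by Pinsker's inequality some message $\Pi_t$ of the transcript must carry $\Omega(\ep^2/r)$ bits of information about the event $\jr_0 = \pir_r\circ\cdots\circ\pir_1(\ir_0)$ relative to the product distribution. A standard message-compression / input-restriction step would then condition on a typical value of $\Pi_1$ (Alice's message): since $|\Pi_1| \le C$, the posterior on Alice's permutations remains close to uniform on most coordinates of interest, so after ``erasing'' the first round we obtain an $((r+1)/2, C - |\Pi_1|)$ protocol that distinguishes $\dypv(r-2,n)$ from $\dnpv(r-2,n)$ with only slightly degraded advantage. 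Iterating until only the base case remains yields the claimed lower bound on $C$.

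The base case $r=1$ asks whether with $2$ rounds and $o(n/\log^\beta n)$ communication, Alice (holding uniform $\pir_1 \in \MS_n$) and Bob (holding uniform $\ir_0,\jr_0 \in [n]$) can decide whether $\jr_0 = \pir_1(\ir_0)$; a reduction from set disjointness (Theorem~\ref{thm:disj}) applied to the bipartite graph of $\pir_1$ shows $\Omega(n)$ communication is required, far exceeding the budget. The main obstacle is controlling the error accumulation through the $(r+1)/2$ round-elimination steps: each conditioning slightly distorts the uniform distribution on the permutations Alice and Bob hold, and one must show (using concentration for random permutations and a careful ``typical message'' truncation) that no short Alice-message can substantially localize $\pir_t$ on the small collection of coordinates actually touched by the chase. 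Quantifying this loss per round and propagating it cleanly is the delicate technical step, and it is precisely this loss that forces the $n/\log^\beta n$ lower bound rather than the ideal $\Omega(n)$.
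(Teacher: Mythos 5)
This theorem is imported from Bafna et al.\ (\cite{bafna_communication-rounds_2018}, Theorem~4.2); the present paper invokes it as a black box in the proof of Theorem~\ref{thm:indist} and does not reprove it, so there is no in-paper proof to compare against. Evaluating your sketch on its own merits, the high-level round-elimination framing is in the right family, but the base case contains an error that reveals a misreading of what the theorem asserts.

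You claim that for $r=1$ a reduction from set disjointness shows that deciding $\jr_0 = \pir_1(\ir_0)$ requires $\Omega(n)$ communication. This is false: Bob's entire input $(\ir_0,\jr_0)$ is only $2\lceil\log n\rceil$ bits, so Bob can send it to Alice and Alice decides with $O(\log n)$ communication, far below the budget $n/\log^\beta n$. The theorem is not about decisional hardness in an unconstrained sense; it asserts closeness of \emph{transcript distributions} for \emph{round-limited} protocols. At $r=1$ only $2$ rounds are allowed: Alice speaks first (her message depends only on $\pir_1$), Bob speaks last, and there is no third round in which Alice can react to $(\ir_0,\jr_0)$. If Bob simply sends $(\ir_0,\jr_0)$, the transcript is \emph{identically} distributed under $\dypv(1,n)$ and $\dnpv(1,n)$, since $(\ir_0,\jr_0)$ is marginally uniform on $[n]^2$ in both and Alice's first message is independent of it in both. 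The only possible route to a nonzero advantage is for Alice's round-$1$ message to leak enough about $\pir_1$ that the conditional law of $(\ir_0,\jr_0)$ given the first message differs between $\dypv$ and $\dnpv$; ruling this out needs a direct information-theoretic estimate on how much a $\le C$-bit message can concentrate the posterior of a uniform permutation, not a reduction to disjointness, a problem with linear-size inputs on both sides whose hardness is round-independent. (You may be conflating this with the genuine disjointness reductions in this paper, Corollary~\ref{cor:sqrtn_disj} and Claims~\ref{clm:muhatmu}, \ref{clm:mumid_prod}; those serve a different purpose in a different proof.) This confusion between decisional hardness and round-limited transcript indistinguishability also propagates into your inductive step, where the potential you must control is how much a short message perturbs the conditional law of the chase endpoints, not a decisional ``distinguishing advantage.''
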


  Using Theorem \ref{thm:pv-main} and Corollary \ref{cor:sqrtn_disj}, we now prove Theorem \ref{thm:indist}.
  \begin{proof}[Proof of Theorem \ref{thm:indist}]
    We introduce a new distribution, which we denote by $\hat \mu$ (or $\hat \mu_{r,n,\ell}$ when we want to emphasize dependence on $r,n,\ell$); $\hat \mu$ is a distribution supported on $(\MS_n^{\lceil r/2 \rceil} \times (\{0,1\}^\ell)^n \times (\MS_n^{\lfloor r/2 \rfloor} \times [n] \times (\{0,1\}^\ell)^n)$. We denote a sample from $\hat \mu$ by $(X, Y)$, with $$X = (\pir_1, \pir_3, \ldots, \pir_{2\lceil r/2 \rceil - 1}, A_1, \ldots, A_n),\ \  Y = (i, \pir_2, \pir_4, \ldots, \pir_{2\lfloor r/2 \rfloor}, B_1, \ldots, B_n),$$ which is distributed as follows:
\begin{itemize}
\item $\ir_0 \in [n]$ and $\pir_1, \ldots, \pir_r \in \MS_n$ are sampled uniformly and independently. Let $\ir_r = \pir_r \circ \cdots \circ \pir_1(\ir_0)$. 
\item Let $P \subset [n]$ be a uniformly random subset of size $\lfloor\sqrt n\rfloor$, conditioned on the event that it contains $\ir_r$.
\item For every $j \in P$, $A_j = B_j \in \{0,1\}^L$ is sampled uniformly and independently of $i$, $\pir$'s, and $P$.
\item For every $j \not \in P$, $A_j, B_j \in \{0,1\}^L$ are sampled uniformly and independently (and independently of all $\pir$'s, $j$, and $P$).
\end{itemize}
\begin{claim}
  \label{clm:muhatmu}
  For every $\ep > 0$, there exists $\delta > 0$ such that the distributions $\mu_{r,n,\ell}$ and $\hat \mu_{r,n,\ell}$ are $(\ep, \delta \sqrt n, \delta \sqrt n)$-indistinguishable.
\end{claim}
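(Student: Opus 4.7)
My plan is to reduce distinguishing $\mu = \mu_{r,n,\ell}$ from $\hat\mu = \hat\mu_{r,n,\ell}$ to the distributional disjointness hardness of Corollary~\ref{cor:sqrtn_disj}. The intuition is that under $\mu$ there is a single ``planted'' coordinate $\ir_r$ with $A_{\ir_r}=B_{\ir_r}$, whereas under $\hat\mu$ there are $\lfloor\sqrt n\rfloor$ planted coordinates (a uniformly random subset $P\ni \ir_r$), so distinguishing them amounts to detecting whether the number of \emph{extra} forced agreements beyond $\ir_r$ is $0$ or $\lfloor\sqrt n\rfloor-1$. Assume toward contradiction that a protocol $\Pi$ with at most $\lfloor\delta\sqrt n\rfloor$ rounds and at most $\lfloor\delta\sqrt n\rfloor$ communication $\ep$-distinguishes $\mu$ and $\hat\mu$. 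I would build a distinguisher $\Pi'$ for $\disjy_{n,\lfloor\sqrt n\rfloor-1}$ versus $\disjn_{n,\lfloor\sqrt n\rfloor-1}$ (the shift by $1$ in the intersection size is harmless and follows from the same universe-padding argument used in the proof of Corollary~\ref{cor:sqrtn_disj}). On inputs $(U,V)$ with $|U|=|V|=n/4$, Alice and Bob use public randomness to sample $\ir_0\in[n]$, permutations $\pir_1,\ldots,\pir_r\in\MS_n$, a string $Z^*\in\{0,1\}^\ell$, and i.i.d.\ uniform strings $\{Z_j\}_{j\in[n]}\subset\{0,1\}^\ell$; both parties compute $\ir_r:=\pir_r\circ\cdots\circ\pir_1(\ir_0)$. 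Alice then sets $A_{\ir_r}:=Z^*$, sets $A_j:=Z_j$ for $j\in U\setminus\{\ir_r\}$, and draws $A_j$ independently uniform from her private randomness for all other $j$, assembling these together with her odd-indexed permutations into $X$. Bob constructs $Y$ symmetrically using $V$. They then run $\Pi$ on $(X,Y)$ and output its decision; since the reduction is non-interactive, $\Pi'$ inherits the round and communication counts of $\Pi$.

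The central verification is that this reduction maps $\disjn_{n,\lfloor\sqrt n\rfloor-1}$ exactly to $\mu$ and maps $\disjy_{n,\lfloor\sqrt n\rfloor-1}$, conditioned on the high-probability event $\ir_r\notin U\cap V$, exactly to $\hat\mu$. In both simulations the $\pir$'s are uniform and independent of the strings, each $A_j$ (resp.\ $B_j$) is marginally uniform, the $A_j$'s and $B_j$'s within each party's input are mutually independent, and the joint law of $(A,B)$ is determined by the ``planted'' set $\Omega := (U\cap V)\cup\{\ir_r\}$ on which both parties use the same shared randomness. Under $\disjn$, $\Omega=\{\ir_r\}$ and we recover exactly $\mu$; under $\disjy$, conditioned on $\ir_r\notin U\cap V$ (probability $1-O(1/\sqrt n)$), $U\cap V$ is uniform over $(\lfloor\sqrt n\rfloor-1)$-subsets of $[n]\setminus\{\ir_r\}$, so $\Omega$ is uniform over $\lfloor\sqrt n\rfloor$-subsets of $[n]$ containing $\ir_r$ --- exactly the law of $P$ in $\hat\mu$. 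Hence $\Pi'$ distinguishes the two disjointness distributions with advantage $\ep-O(1/\sqrt n)\geq\ep/2$ for large $n$, and Corollary~\ref{cor:sqrtn_disj} then forces its communication and round count to be at least $\delta'\sqrt n$ for some $\delta'=\delta'(\ep)>0$, contradicting our hypothesis once $\delta<\delta'$.

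The main obstacle I anticipate is careful bookkeeping of the induced distributions --- in particular verifying that on non-planted coordinates the ``random coincidence'' events $A_j=B_j$ occur with exactly the same probability $2^{-\ell}$ (and independently across $j$) under both the target distributions and the simulations. This works out because for each non-planted $j$, at least one of $A_j,B_j$ is drawn from independent uniform private randomness, so its marginal given the other is still uniform. The remaining delicate point is the off-by-one between $|P|=\lfloor\sqrt n\rfloor$ in $\hat\mu$ and the natural planted-set sizes produced by the reduction, which is precisely why I would invoke a disjointness variant with intersection $\lfloor\sqrt n\rfloor-1$ rather than $\lfloor\sqrt n\rfloor$ --- a modification whose proof is identical to that of Corollary~\ref{cor:sqrtn_disj}.
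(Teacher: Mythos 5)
Your reduction is correct, and the high-level plan (reduce to the sparse distributional disjointness hardness of Corollary~\ref{cor:sqrtn_disj}) is the same as the paper's, but the implementation differs in a genuine way. You keep the disjointness universe at $[n]$ and explicitly force the target agreement $A_{\ir_r}=B_{\ir_r}=Z^*$ independently of $U,V$, so the planted set is $(U\cap V)\cup\{\ir_r\}$; achieving size exactly $\lfloor\sqrt n\rfloor$ is precisely why you need the intersection-$(\lfloor\sqrt n\rfloor-1)$ variant. The paper instead pads the universe to $[m]$ with $m=(\lfloor\sqrt n\rfloor+1)^2$ and places the planted index $\jr_0$ outside the image of the random injection $\tau$; since $\lfloor\sqrt m\rfloor=\lfloor\sqrt n\rfloor+1$, the off-by-one is absorbed into the universe size and Corollary~\ref{cor:sqrtn_disj} is invoked verbatim (with the claim first established for $\mu_{r,m,\ell}$ versus $\hat\mu_{r,m,\ell}$ and then transferred using $\sqrt m-\sqrt n=O(1)$). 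Your treatment of the planted coordinate is the cleaner of the two: the paper's printed assignment has $\jr_0\notin U'\cup V'$, so the stated rule gives $A'_{\jr_0}$ and $B'_{\jr_0}$ as two \emph{distinct} public strings, and the identification $A'_{\jr_0}=B'_{\jr_0}$ --- required for the NO case to land on $\mu_{r,m,\ell}$ --- is left implicit, whereas you force it explicitly via $Z^*$. The one step you should not wave away is the shifted-intersection disjointness corollary: it does follow from essentially the same blow-up-and-pad reduction (blow up a size-$\Theta(\sqrt n)$ disjointness instance by a factor $\lfloor\sqrt n\rfloor-1$ and pad into $[n]$), but the blow-up factor and the padding bookkeeping differ from the stated Corollary~\ref{cor:sqrtn_disj}, so it is a modification rather than literally ``identical'' and merits a short proof; note also the resulting $O(1/\sqrt n)$ loss in advantage from conditioning on $\ir_r\notin U\cap V$ is harmless only once $n$ exceeds a threshold depending on $\ep$, which you should fold into the choice of $\delta$.
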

\begin{proof}[Proof of Claim \ref{clm:muhatmu}]
  We show that any protocol $\Pi$ with $\CC(\Pi) \leq C$ distinguishing $\mu = \mu_{r,n,\ell}$ and $\hat \mu = \hat \mu_{r,n,\ell}$ with advantage $\ep$ can be converted into a protocol $\Pi'$ with $\CC(\Pi') \leq C$ and which distinguishes $\disjy_{n,\sqrt n}$ and $\disjn_{n,\sqrt n}$ (as in Corollary \ref{cor:sqrtn_disj}) with advantage $\ep$.

  The protocol $\Pi'$ proceeds as follows: suppose Alice and Bob are given sets $U,V$, respectively, with $U,V \subseteq [n]$. Let $m = (\lfloor \sqrt n \rfloor + 1)^2$. Using public randomness, Alice and Bob sample a random injective function $\tau : [n] \ra [m]$, and set $U' = \{ \tau(u) : u \in U\} , V'  = \{ \tau(v) : v \in V \}$. Let $\jr_0 \in [m]$ denote the sole index not in the image of $\tau$. 
  Using public randomness, Alice and Bob sample $r$ permutations $\pir_1, \ldots, \pir_r\in S_{m}$ uniformly and independently, and let $\ir_0 = (\pir_r \circ \cdots \circ \pi_1)^{-1}(\jr_0)$. They also sample $3m$ strings $A_1, \ldots, A_{m}, B_1, \ldots, B_{m}, C_1, \ldots, C_{m} \in \{0,1\}^\ell$ uniformly and independently using public randomness. Then for $1 \le u \leq {m}$, Alice sets:
  $$
   A_u' := \begin{cases}
    A_u &: u \not \in U' \\
    C_u &: u \in U',
  \end{cases}
  $$
  and Bob sets:
    $$
  B_u' := \begin{cases}
    B_u &: u \not \in U' \\
    C_u &: u \in U'.
  \end{cases}
  $$
  It is now clear that the tuple
  \begin{equation}
    \label{eq:disj_hatmu}
((\pir_1, \pir_3, \ldots, \pir_r, A_1', A_2', \ldots, A_m'), (\ir_0, \pir_2, \pir_4, \ldots, \pir_{r-1}, B_1', B_2', \ldots, B_m'))
\end{equation}
is distributed according to $\mu_{r,m,\ell}$ if $(U,V) \sim \disjn_{n,\sqrt n}$ and is distributed according to $\hat \mu_{r,m,\ell}$ if $(U,V) \sim \disjy_{n,\sqrt n}$. Now Alice and Bob run the protocol $\Pi$ with their inputs as in (\ref{eq:disj_hatmu}).

By Corollary \ref{cor:sqrtn_disj}, for each $\ep > 0$, there exists $\delta > 0$ such that $\mu_{r,m,\ell}$ and $\hat \mu_{r,m,\ell}$ are $(\ep, \delta \sqrt n, \delta \sqrt n)$-indistinguishable. Using that $\sqrt{m} - \sqrt n = O(1)$, the lemma statement follows.
\end{proof}

Next, notice that the two distributions $(\mu_{r,n,\ell})_X \otimes (\mu_{r,n,\ell})_Y$ and $(\hat \mu_{r,n,\ell})_X \otimes (\hat \mu_{r,n,\ell})_Y$ are identical. Thus by Claim \ref{clm:muhatmu} and the triangle inequality for total variation distance, Theorem \ref{thm:indist} will follow from the following claim:
\begin{claim}
  \label{clm:hatmu_indist}
     For every $\ep > 0$ and $r \in \BN$ there exists $\beta, n_0$ such that for every $n \geq n_0$ and $\ell$, the distributions $\hat \mu = \hat\mu_{r,n,\ell}$ and $\hat \mu_X \otimes \hat \mu_Y$ are $(2\ep, r+1, \sqrt{n}/\log^{\beta}n)$-indistinguishable.
\end{claim}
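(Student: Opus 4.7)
The plan is to prove the claim via a two-step hybrid argument through an auxiliary distribution $\tilde\mu$ that sits between $\hat\mu$ and $\hat\mu_X\otimes\hat\mu_Y$. Define $\tilde\mu$ exactly like $\hat\mu$, except that the collision set $P$ is now a uniformly random subset of $[n]$ of size $\lfloor\sqrt n\rfloor$ drawn independently of the pointer chain (in particular, $P$ is no longer required to contain $\ir_r$). Then the triangle inequality reduces the claim to two sub-claims: (a) $(\hat\mu,\tilde\mu)$ are $(\ep,r+1,\sqrt n/\log^\beta n)$-indistinguishable, and (b) $(\tilde\mu,\hat\mu_X\otimes\hat\mu_Y)$ are $(\ep,r+1,\sqrt n/\log^\beta n)$-indistinguishable.

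\emph{Sub-claim (b)} follows from essentially the same argument as Claim \ref{clm:muhatmu}. Indeed, $\tilde\mu$ and $\hat\mu_X\otimes\hat\mu_Y$ differ only in that the former plants $\sqrt n$ coordinated collisions $A_j=B_j$ on a uniformly random subset $P\subseteq[n]$ of size $\sqrt n$, while the product distribution has essentially no such collisions (for $\ell$ large enough). As in Claim \ref{clm:muhatmu}, a protocol distinguishing these two distributions can be converted, via a public-coin embedding that replaces the nontrivial coordinate-coincidences with the indicator of set-intersection, into a protocol distinguishing $\disjy_{n,\sqrt n}$ from $\disjn_{n,\sqrt n}$, so Corollary \ref{cor:sqrtn_disj} yields the required $\sqrt n/\log^\beta n$ communication lower bound for any number of rounds.

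\emph{Sub-claim (a)} is the heart of the argument: distinguishing $\hat\mu$ from $\tilde\mu$ amounts to deciding whether the planted set $P$ contains the pointer-chase endpoint $\ir_r$, which is naturally a pointer-verification-type task. I would reduce from $\dypv(2r-1,n)$ versus $\dnpv(2r-1,n)$: given a PV instance $(\pir_1,\ldots,\pir_{2r-1},\ir_0,\jr_0)$, Alice and Bob identify the first $r$ permutations with $\hat\mu$'s permutations (the odd/even split matches the party assignments in both problems), and use the remaining $r-1$ permutations together with $\jr_0$ to sample, via public randomness, the strings $(A_j,B_j)_{j\in[n]}$ so that the induced collision set $P$ is a uniformly random size-$\sqrt n$ subset containing the backward-chased point $\gamma=\pir_{r+1}^{-1}\circ\cdots\circ\pir_{2r-1}^{-1}(\jr_0)$. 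In the PV-yes case, $\gamma=\alpha:=\pir_r\circ\cdots\circ\pir_1(\ir_0)=\ir_r$, so $P\ni\ir_r$ and the simulated inputs have the law of $\hat\mu$; in the PV-no case, $\gamma$ is uniform and independent of $\alpha$, so $\alpha\in P$ only with probability $O(1/\sqrt n)$ and the simulated inputs are statistically close to $\tilde\mu$. Applying Theorem \ref{thm:pv-main} with $r'=2r-1$, whose threshold $(r'+3)/2=r+1$ matches our round budget, then yields a lower bound of $n/\log^\beta n$ on the communication, which is stronger than the claimed $\sqrt n/\log^\beta n$.

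\emph{Main obstacle.} The delicate step is implementing the sampling of $P$ within the $(r+1)$-round budget without letting either party compute $\gamma$ explicitly, since that would require $r-1$ extra rounds of backward chasing through $\pir_{r+1},\ldots,\pir_{2r-1}$. My plan is to have Alice and Bob sample $P$ uniformly of size $\sqrt n$ using shared public coins and then absorb the event $\gamma\in P$ (which occurs with probability $1/\sqrt n$) into the analysis by a conditioning argument; the resulting loss in distinguishing advantage is tolerable because the PV bound from Theorem \ref{thm:pv-main} is quantitatively very strong. Alternatively, Alice and Bob can coordinate an implicit backward chase by interleaving short $O(\log n)$-bit exchanges of backward-chase partial computations with the rounds of $\Pi$, keeping the total round count at $r+1$ while adding only $O(r\log n)$ overhead in communication.
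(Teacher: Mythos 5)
Your high-level decomposition is exactly the paper's: you introduce an intermediate distribution (your $\tilde\mu$ is the paper's $\mumid$, drawn identically but with $P$ chosen uniformly and independent of the pointer chain), reduce sub-claim (b) to disjointness via Corollary \ref{cor:sqrtn_disj}, and plan to reduce sub-claim (a) to pointer verification via Theorem \ref{thm:pv-main}. Sub-claim (b) is fine. The problem is sub-claim (a), and you have correctly identified the obstacle — the planted set $P$ has to contain the backward-chased point $\gamma=\pir_{r+1}^{-1}\circ\cdots\circ\pir_{2r-1}^{-1}(\jr_0)$, but neither party can compute $\gamma$ without $r-1$ extra rounds — yet neither of your two proposed workarounds actually overcomes it.

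Workaround 1 (sample $P$ uniformly and absorb $\gamma\in P$ into the analysis) does not go through. If Alice and Bob draw $P$ uniformly at random, independent of the PV instance, then in \emph{both} the PV-yes and PV-no cases the resulting planted input is distributed as $\tilde\mu$ up to a $1/\sqrt n$-probability event $\{\ir_r\in P\}$. So the two cases are themselves $O(1/\sqrt n)$-close, and any distinguisher for $\hat\mu$ vs.\ $\tilde\mu$ would only yield a PV distinguisher with advantage $O(\ep/\sqrt n)$, which is vacuous. Conditioning on $\gamma\in P$ is not something the protocol can do — it is not an event either party can observe, and post-hoc conditioning in the analysis only rescales, it doesn't recover the lost coupling. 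Workaround 2 (interleave backward-chasing rounds) breaks for the simpler reason that $P$ must be determined \emph{before} the protocol $\Pi$ runs, since it is part of the input $(A_j, B_j)_j$; you can't generate the input midway through the simulated rounds, and a separate preliminary backward chase would cost $r-1$ rounds, blowing the $r+1$ budget.

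What you are missing is the paper's trick of squaring the universe. The proof of the paper's Claim \ref{clm:mumid_mu} reduces $\dypv(2r-1,n)$ to distinguishing $\hat\mu_{r,n^2,\ell}$ from $\mumid_{r,n^2,\ell}$ (note the index set has size $n^2$). It identifies $[n^2]\cong[n]\times[n]$, splits $\pir_r$ into a composition $\pir_{r+1}'\circ\pir_r'$ to get $2r$ permutations from $2r-1$, and then defines new permutations $\hat\pir_t := \tau_t\circ(\pir_t'\,\|\,(\pir_{2r+1-t}')^{-1})\circ\tau_{t-1}^{-1}$ acting componentwise on $[n]\times[n]$, with uniformly random $\tau_0,\ldots,\tau_r\in\MS_{n^2}$ for re-randomization. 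The $r$-step chase of $\hat\pir_r\circ\cdots\circ\hat\pir_1$ starting at $\hat\ir_0=\tau_0((\ir_0,\jr_0))$ then computes, in its two coordinates, the forward chase from $\ir_0$ and the backward chase from $\jr_0$ \emph{simultaneously}, landing on the (re-randomized) diagonal $P:=\{\tau_r((i,i)):i\in[n]\}$ exactly in the PV-yes case. The diagonal is a uniformly random size-$n=\sqrt{n^2}$ subset of $[n^2]$, which is precisely the $P$ of $\hat\mu_{r,n^2,\ell}$, and the whole construction is carried out with public coins before $\Pi$ runs, so it costs no rounds or communication. This is also why the final bound is $\sqrt n/\log^\beta n$ and not $n/\log^\beta n$ as you asserted: the Theorem \ref{thm:pv-main} bound of $n/\log^\beta n$ is for the PV instance on $[n]$, while the CRG source lives on $[n^2]$; writing $m=n^2$ gives $\sqrt m/\log^\beta m$.
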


We next introduce a distribution $\mumid = \mumid_{r,n,\ell}$, which is the same as $\hat \mu_{r,n,\ell}$, except the distribution of the uniformly random subset $P \subset [n]$ with $|P| = \lfloor \sqrt n \rfloor$ is not conditioned on the event that it contains $\ir_r$ (i.e.~it is drawn uniformly at random from the set of all $\sqrt n$-element sets, independent of $\ir_0, \pir_1, \ldots, \pir_r$). Thus, with probability at least $1 - 1/\sqrt n$, $\ir_r \not \in P$ under $\mumid$. Now Claim \ref{clm:hatmu_indist} follows directly from the triangle inequality and Claims \ref{clm:mumid_mu} and \ref{clm:mumid_prod} below.
\begin{claim}
  \label{clm:mumid_mu}
For every $\ep > 0$ and $r \in \BN$ there exists $\beta, n_0 \in \BR_+$ such that for all integers $n \geq n_0$ and $\ell$, the distributions $\hat \mu_{r,n,\ell}$ and $\mumid_{r,n,\ell}$ are $(\ep, r+1, \sqrt n /\log^\beta n)$-indistinguishable.
\end{claim}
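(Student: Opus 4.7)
The plan is to prove Claim~\ref{clm:mumid_mu} by a reduction from the pointer verification problem of Definition~\ref{defn:pv} to the task of distinguishing $\hat\mu_{r,n,\ell}$ from $\mumid_{r,n,\ell}$, then invoking Theorem~\ref{thm:pv-main}. The key observation is that the two distributions differ only in the joint distribution of the random size-$\lfloor\sqrt n\rfloor$ subset $P = \{j : A_j = B_j\} \subseteq [n]$ relative to the chain endpoint $\ir_r := \pir_r \circ \cdots \circ \pir_1(\ir_0)$: under $\hat\mu$ one always has $\ir_r \in P$, while under $\mumid$ the set $P$ is drawn independently, so $\Pr[\ir_r \in P] = \lfloor\sqrt n\rfloor/n = O(1/\sqrt n)$. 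Thus any protocol that tells these apart is morally performing a planted-target pointer verification.

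To realize this reduction, given a sample $(X',Y')$ from $\dypv$ or $\dnpv$ (with parameters chosen below), Alice and Bob reinterpret the PV permutations as the PC permutations, take $\ir_0$ to encode Bob's PV inputs, and use public randomness to produce $(A_j,B_j)$ such that $P = \{j : A_j = B_j\}$ is a uniformly random size-$\lfloor\sqrt n\rfloor$ subset forced to contain a Bob-chosen ``planted'' index. Concretely, a size-$(\lfloor\sqrt n\rfloor - 1)$ set $Q$ and strings $C_j \in \{0,1\}^\ell$ are sampled publicly; Alice sets $A_j := C_j$ for every $j$; Bob sets $B_j := C_j$ for $j \in Q$ and for his planted index, and $B_j$ to fresh private uniform randomness otherwise. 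With probability $1 - O(1/\sqrt n)$ this yields $(X,Y) \sim \hat\mu$ in the Yes case (since the planted index is $\ir_r$) and $(X,Y) \sim \mumid$ in the No case (since the planted index is uniform and independent of $\ir_r$). The reduction uses no communication beyond public coins, so a $(r+1, \sqrt n/\log^\beta n, \ep)$ distinguisher for $\hat\mu$ vs $\mumid$ produces a PV distinguisher with the same round and communication budget and advantage $\ep - O(1/\sqrt n)$.

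The key novelty over \cite{bafna_communication-rounds_2018} is in choosing the PV parameters to match the target $(r+1)$-round lower bound. Applying Theorem~\ref{thm:pv-main} with $r$ PV permutations on $[n]$ only yields a $(r+3)/2$-round bound, weaker than $r+1$ once $r \geq 2$. Instead, I would use PV with $r' = 2r-1$ permutations over an alphabet of size $\sqrt n$, so that $(r'+3)/2 = r+1$ and the communication bound $n'/\log^{\beta_0} n'$ becomes $\sqrt n /\log^{\beta_0}\sqrt n$, matching the target after adjusting $\beta$. The embedding of $2r-1$ small-alphabet PV permutations into $r$ large-alphabet PC permutations is the main technical step: identify $[n]$ with $[\sqrt n]\times[\sqrt n]$ and set $\pir_t(x,y) := (\sigma_t(x), \sigma_{2r-t}^{-1}(y))$ for $t=1,\ldots,r$. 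Since $t$ and $2r-t$ share parity, each player assembles its own $\pir_t$ from only its own PV permutations, and a direct composition computation with $\ir_0 := (\ir_0^{\mathrm{PV}}, \jr_0^{\mathrm{PV}})$ shows that Yes-PV corresponds precisely to $\ir_r$ landing on the size-$\sqrt n$ ``diagonal'' $\{(\sigma_r(v),v) : v \in [\sqrt n]\} \subseteq [n]$.

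The hard part will be producing the publicly-sampled $P$ so that it matches this diagonal (up to $O(1/\sqrt n)$ total variation) in the Yes case and is uniformly random and independent of $\ir_r$ in the No case, despite the diagonal depending on the private permutation $\sigma_r$ held by only one party. The plan is to exploit the fact that $\sigma_r$ is marginally uniform in $\MS_{\sqrt n}$, so the distribution over diagonals is the same as the distribution over uniformly random bijective relations, and this can be simulated by a public-coin choice of $Q$ combined with Bob's planted index up to the stated $O(1/\sqrt n)$ TV error. Once carried out, the resulting $(r+1)$-round PV distinguisher contradicts Theorem~\ref{thm:pv-main} for suitably large $\beta$ and $n$, completing the proof of the claim.
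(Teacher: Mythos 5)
Your overall reduction framework is the same as the paper's: reduce from pointer verification with $2r-1$ permutations over a smaller alphabet, interpret the larger index set $[n]$ as a product $[\sqrt n]\times[\sqrt n]$, and build each pointer-chasing permutation $\pir_t$ as a product of two PV permutations of the same parity chased ``forwards'' in one coordinate and ``backwards'' in the other, so that the endpoint lands on a diagonal-like set of size $\sqrt n$ in the Yes case. This is exactly what the paper does, up to the cosmetic choice of reducing $\dypv(2r-1,\sqrt n)\mapsto\hat\mu_{r,n,\ell}$ rather than $\dypv(2r-1,n)\mapsto\hat\mu_{r,n^2,\ell}$.

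However, there is a genuine gap in the step you flag as ``the hard part,'' and your proposed fix does not close it. With $\pir_t(x,y) := (\sigma_t(x),\sigma_{2r-t}^{-1}(y))$, the middle permutation $\sigma_r$ appears in both coordinates of $\pir_r$, and the endpoint of the chase lands on the set $\{(\sigma_r(v),v) : v\in[\sqrt n]\}$, which depends on $\sigma_r$. Since $\sigma_r$ is odd-indexed it is part of Alice's private PV input, so Bob has no way to know which size-$\sqrt n$ set $P$ to plant his agreeing strings $B_j = C_j$ on. Public coins cannot remedy this: marginal uniformity of $\sigma_r$ does not let Bob sample a set that is jointly consistent with the actual $\sigma_r$ that Alice's $\pir_r$ uses, and the whole construction of $\hat\mu$ requires $A_j=B_j$ on \emph{exactly} the set $P$ that the chase endpoint lands in. (Having Alice send $\sigma_r$ or its diagonal would cost $\tilde\Theta(\sqrt n)$ bits of communication, which is precisely the budget we cannot afford.)

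The paper avoids this by first splitting the middle PV permutation into two uniform halves $\pir_{r+1}'\circ\pir_r'=\pir_r$ (both computable by Alice given shared randomness), so that after $r$ rounds the two coordinates of the chase both arrive at the same index $\ir_r'$. The endpoint then lands on the \emph{fixed} diagonal $\{(i,i):i\in[\sqrt n]\}$, randomized only by a public permutation $\tau_r$, making $P=\{\tau_r((i,i))\}$ computable by both parties with no extra communication. This splitting is the one device your proposal is missing; with it, the reduction goes through exactly as you intend.

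Two smaller points: first, your parity observation that ``$t$ and $2r-t$ share parity'' breaks at $t=r$ (where both coordinates use the \emph{same} permutation, not two of the same parity), which is another symptom of the missing split; the paper's shift to $2r+1-t$ after splitting is what actually makes the parity claim hold cleanly for all $t$. Second, you do not need the ``planted index'' mechanism or the $O(1/\sqrt n)$ TV slack you budget for — once $P$ is public, the Yes/No distinction between $\hat\mu$ and $\mumid$ is obtained \emph{exactly} (not up to $O(1/\sqrt n)$), with $\hat\ir_0$ uniform in $P$ under $\dypv$ and uniform in $[n]$ independent of $P$ under $\dnpv$.
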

\begin{claim}
  \label{clm:mumid_prod}
  For every $\ep > 0$, there exists $\delta > 0$ such that $\mumid_{r,n,\ell}$ and $(\hat \mu_{r,n,\ell})_X \otimes (\hat \mu_{r,n,\ell})_Y$ are $(\ep, \delta \sqrt n, \delta \sqrt n)$-indistinguishable for all $n \in \BN$.
\end{claim}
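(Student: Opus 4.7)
The plan is to reduce distinguishing $\mumid_{r,n,\ell}$ from $(\hat\mu_{r,n,\ell})_X \otimes (\hat\mu_{r,n,\ell})_Y$ directly to distinguishing $\disjy_{n,\sqrt n}$ from $\disjn_{n,\sqrt n}$, and then invoke Corollary \ref{cor:sqrtn_disj}. The starting observation is that, after marginalizing over the random set $P$, the marginals $(\hat\mu_{r,n,\ell})_X$ and $(\hat\mu_{r,n,\ell})_Y$ each consist of independent uniform $\pir$'s (plus an independent uniform $\ir_0$ for Bob) together with independent uniform strings $A_1, \ldots, A_n$ (respectively $B_1, \ldots, B_n$), with everything jointly independent. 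Hence $(\hat\mu_{r,n,\ell})_X \otimes (\hat\mu_{r,n,\ell})_Y$ is simply the ``all-uniform, all-independent'' distribution on the support of $\hat\mu_{r,n,\ell}$.

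Given any protocol $\Pi$ that distinguishes $\mumid_{r,n,\ell}$ from this product with advantage $\ep$, I will build a protocol $\Pi'$ with the same round and communication complexity that distinguishes $\disjy_{n,\sqrt n}$ from $\disjn_{n,\sqrt n}$ with advantage $\ep$. On input $(U,V)$ with $|U|=|V|=n/4$, Alice and Bob use public randomness to sample $\ir_0 \in [n]$ and $\pir_1, \ldots, \pir_r \in \MS_n$ uniformly and independently, and to sample strings $C_1, \ldots, C_n$, $A^*_1, \ldots, A^*_n$, $B^*_1, \ldots, B^*_n \in \{0,1\}^\ell$ uniformly and independently. Alice then sets $A_j := C_j$ if $j \in U$ and $A_j := A^*_j$ otherwise, and symmetrically Bob sets $B_j := C_j$ if $j \in V$ and $B_j := B^*_j$ otherwise. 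They simulate $\Pi$ on the resulting inputs and output whatever $\Pi$ outputs.

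The key distributional check is as follows. When $(U,V) \sim \disjn_{n,\sqrt n}$ we have $U \cap V = \emptyset$, so every pair $(A_j, B_j)$ is independent and uniform, and the joint input exactly matches $(\hat\mu_{r,n,\ell})_X \otimes (\hat\mu_{r,n,\ell})_Y$. When $(U,V) \sim \disjy_{n,\sqrt n}$, for $j \in U \cap V$ we have $A_j = B_j = C_j$ uniform, whereas for $j \notin U \cap V$ the pair $(A_j, B_j)$ is independent and uniform; moreover, by the symmetry of the uniform distribution over pairs $(U,V)$ satisfying $|U|=|V|=n/4$ and $|U \cap V|=\lfloor\sqrt n\rfloor$, the intersection $U \cap V$ is uniformly distributed over $\lfloor \sqrt n\rfloor$-subsets of $[n]$ and is independent of the publicly sampled $\ir_0, \pir_1, \ldots, \pir_r$. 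Thus the joint input matches $\mumid_{r,n,\ell}$ with $P := U \cap V$, and $\Pi'$ distinguishes $\disjy_{n,\sqrt n}$ from $\disjn_{n,\sqrt n}$ with advantage exactly $\ep$.

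Since $\Pi'$ inherits the round count and communication cost of $\Pi$ (the extra work is purely local and driven by public coins), Corollary \ref{cor:sqrtn_disj} forces one of those parameters of $\Pi$ to exceed $\delta\sqrt n$ for some $\delta = \delta(\ep) > 0$, which is precisely the claimed $(\ep, \delta\sqrt n, \delta\sqrt n)$-indistinguishability. The main ``obstacle'' is simply the bookkeeping required to verify the two distributional matches above---in particular, the claim that $(\hat\mu_{r,n,\ell})_X \otimes (\hat\mu_{r,n,\ell})_Y$ is all-uniform, since conditional on $P$ the pairs $(A_j, B_j)$ with $j \in P$ are perfectly correlated but this correlation washes out after marginalization. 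No substantive technical difficulty is anticipated.
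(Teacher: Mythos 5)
Your proof is correct and takes essentially the same approach as the paper's: reduce to $\disjy_{n,\sqrt n}$ versus $\disjn_{n,\sqrt n}$ by using shared public strings $C_j$ (the paper calls them $Z_j$) to plant equality $A_j = B_j$ precisely on the indices in $U$ (resp.\ $V$), filling the remaining coordinates with independent uniform strings, and sampling the permutations and $\ir_0$ independently. The only superficial differences are that you sample the permutations and the filler strings via public rather than private coins (immaterial, since the resulting joint input distribution is identical) and you include unused shared randomness $B^*_j$; these do not affect the reduction, and Corollary~\ref{cor:sqrtn_disj} then yields the claim exactly as in the paper.
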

Now we prove each of Claims \ref{clm:mumid_mu} and \ref{clm:mumid_prod} in turn.
\begin{proof}[Proof of Claim \ref{clm:mumid_mu}]
  We first prove the statement of the claim for the case that $n$ is a perfect square.
    Fix $r,n,\ell$, and suppose that $\Pi$ is a $\rho$-round protocol ($\rho \in \BN$) with communication at most $C$ that distinguishes between $\hat \mu_{r,n^2,\ell}$ from $\mumid_{r,n^2,\ell}$ with advantage $\ep$. (Notice that we are replacing $n$ with $n^2$ in the notation.)

    We now construct a protocol $\Pi'$ with the same number of rounds and communication as $\Pi$ and which distinguishes between $\dypv(2r-1, n)$ and $\dnpv(2r-1,n)$ with advantage at least $\ep$. Suppose Alice and Bob are given inputs $X = (\pir_1, \pir_3, \ldots, \pir_{2r-1})$ and $Y = (\ir_0, \jr_0, \pir_2, \pir_4,  \ldots, \pir_{2r-2}),$ respectively, which are distributed according to $\dypv(2r-1,n)$ or $\dnpv(2r-1,n)$. Next, for $1 \leq t \leq r-1$, let $\pir_t' = \pir_t$, and for $r+2 \leq t \leq 2r$, let $\pir_t' = \pir_{t-1}$. Finally let $\pir_r', \pir_{r+1}' \in \MS_n$ be uniformly random conditioned on $\pir_{r+1}' \circ \pir_r' = \pir_r$. Notice that each $\pir_t'$, $1 \leq t \leq 2r$ may be computed by either Alice or Bob.
    Next, interpret $[n^2] \simeq [n] \times [n]$, so that any pair $\sigma, \tau \in \MS_n$ of permutations on $[n]$ determines a permutation on $[n^2]$, which we denote by $\sigma || \tau$, so that $(\sigma || \tau) ((i,j)) = (\sigma(i), \tau(j))$. (Note that the vast majority of permutations on $[n^2]$ cannot be obtained in this manner, however.) The protocol $\Pi'$ proceeds as follows:
  \begin{enumerate}
  \item Alice and Bob use their common randomness to generate uniformly random permutations $\tau_0, \tau_1, \ldots, \tau_r \in S_{n^2}$ and uniformly random strings $A_1, \ldots, A_{n^2 - n}, B_1, \ldots, B_{n^2 - n}, C_1, \ldots, C_n\in \{0,1\}^\ell$. 
  \item Bob computes $\hat \ir_0 := \tau_1((\ir_0, \jr_0)) \in [n] \times [n] \simeq [n^2]$.
  \item For $t = 1, 3, \ldots, 2\lfloor (r+1)/2 \rfloor$, Alice computes $\hat \pir_t := \tau_t \circ (\pir'_t || (\pir'_{2r + 1-t})^{-1} ) \circ \tau_{t-1}^{-1} \in S_{n^2}$.
  \item For $t = 2, 4, \ldots, 2\lfloor r/2 \rfloor$, Bob computes $\hat \pir_t := \tau_t \circ (\pir'_t || (\pir'_{2r + 1-t})^{-1} ) \circ \tau_{t-1}^{-1} \in S_{n^2}$.
  \item For $1 \leq i \leq n$, Alice and Bob set $\hat A_{\tau_r((i, i))} = \hat B_{\tau_r((i,i))} = C_i$.
  \item For the $n^2 - n$ pairs $(i,j) \in [n] \times [n]$ with $i \neq j$, Alice sets $\hat A_{(i,j)}$ to be equal to one of the $A_k$, $1 \leq k \leq n^2 - n$ so that each $A_k$ is used once. Bob does the same with $\hat B_{(i,j)}$ with respect to the $B_k$.
  \item Alice and Bob now run the protocol $\Pi$ on the inputs $\hat X := (\hat \pir_1, \hat \pir_3, \ldots, \hat \pir_{2\lfloor (r+1)/2 \rfloor}, \hat A_1, \ldots, \hat A_{n^2})$ and $\hat Y := (\hat \ir_0, \hat \pir_2, \hat \pir_4, \ldots, \hat \pir_{2\lfloor r/2 \rfloor}, \hat B_1, \ldots, \hat B_{n^2})$. 
  \end{enumerate}
  Certainly the communication cost and number of rounds of $\Pi'$ are both the same as the communication cost and number of rounds, respectively, of $\Pi$.

  We will show that (1) if $(X,Y) \sim \dypv(2r-1, n)$, then $(\hat X, \hat Y) \sim \hat \mu_{r,n^2,\ell}$, and (2) if $(X,Y) \sim \dnpv(2r-1,n)$, then $(\hat X, \hat Y) \sim \mumid_{r,n^2,\ell}$.

  We first prove (1). Suppose $(X,Y) \sim \dypv(2r-1,n)$. That is, $X,Y$ are uniformly random conditioned on $\pir_{2r-1} \circ \cdots \circ \pir_1(\ir_0) = \jr_0$; therefore,
  $$
\pir_1', \pir_2', \ldots, \pir_{2r}', \ir_0, \jr_0
$$
are uniformly random conditioned on $\pir_{2r}' \circ \cdots \circ \pir_1'(\ir_0) = \jr_0$. For $0 \leq t \leq 2r$, set $\ir_t' = \pir'_t \circ \pir'_{t-1} \circ \cdots \circ \pir'_1(\ir_0)$ (so that, in particular, $\ir_0' = \ir_0$).  Then the distribution of $\pir_1', \ldots, \pir_{2r}', \ir_0, \jr_0$ may be expressed equivalently as follows: $X,Y$ are chosen as follows: $\pir'_1, \ldots, \pir'_{2r}$ are first drawn uniformly and independently form $S_{n}$, an index $\ir'_r \in [n]$ is chosen uniformly in $[n]$ independent of $\pir'_1, \ldots, \pir'_{2r}$, and then we set $\jr_0 = \pir'_{2r} \circ \cdots \circ \pir'_{r+1}(\ir'_r)$ and $\ir_0 = (\pir'_1)^{-1} \circ \cdots \circ (\pir'_r)^{-1}(\ir'_r)$. 

  Notice that the set $P:= \{ \tau_r ((i,i)) : i \in [n] \}$ is a uniformly random set of size $n$ in $[n^2] \simeq [n] \times [n]$. Next, note that if $\pi$ is any distribution on $S_{n^2}$ and $\tau$ is distributed uniformly on $S_{n^2}$, then $\pi \circ \tau$ is distributed uniformly on $S_{n^2}$.  It follows from this fact $\hat \pir_1, \ldots, \hat \pir_r$ are distributed uniformly and independently in $S_{n^2}$, all independent of the set $P = \{ \tau_r ((i,i)) : i \in [n]\}$. Next, we have that
  \begin{eqnarray}
    \hat \pir_r \circ \cdots \circ \hat \pir_1(\hat \ir_0) &=& \tau_r \circ (\pir'_r || (\pir'_{r+1})^{-1}) \circ \tau_{r-1}^{-1} \circ \tau_{r-1} \circ \cdots \circ \tau_1^{-1} \circ \tau_1 \circ (\pir'_1 || (\pir'_{2r})^{-1}) \circ \tau_0^{-1} \circ \tau_0((\ir_0, \jr_0))\nonumber\\
                                                       &=& \tau_r \circ (\pir'_r || (\pir'_{r+1})^{-1}) \circ \cdots \circ ((\pir'_1) || (\pir_{2r}')^{-1}) ((\ir_0, \jr_0))\nonumber\\
                                                       &=& \tau_r((\pir'_r \circ \cdots \circ \pir'_1(\ir_0), (\pir'_{r+1})^{-1} \circ \cdots \circ (\pir'_{2r})^{-1}(\jr_0)))\nonumber\\
    &=& \tau_r((\ir'_r, \ir'_r))\nonumber,
  \end{eqnarray}
  where we have used the fact that $(X,Y) \sim  \dypv(2r-1,n)$ in the last line. Recall from the discussion above that $\ir'_r$ is independent of $\pir'_1, \ldots, \pir'_{2r}, \tau_0, \ldots, \tau_r$, and therefore $\tau_r((\ir'_r, \ir'_r))$ is a uniformly random element of the set $P = \{ \tau_r((i,i)) : i \in [n]\}$, independent of $\hat \pir_1, \ldots, \hat \pir_r, P$. Therefore, $\hat \ir_0$ is a uniformly random element of $[2n]$, independent of $P, \hat \pir_1, \ldots, \hat \pir_r$, conditioned on the event $\hat \pir_r \circ \cdots \circ \hat \pir_1(\hat \ir_0) \in P$. This establishes that $( \hat X, \hat Y) \sim \hat \mu_{r,n^2,\ell}$, finishing the proof of point (1).

  We next prove (2); suppose that $(X,Y) \sim \dnpv(2r-1,n)$. Then all of the random variables $\pir'_1, \ldots, \pir'_{2r} \in S_{n^2}$, and $\ir_0, \jr_0 \in [n]$ are uniform and independent on their respective domains. Moreover, the set $P := \{ \tau_r((i,i)) : i \in [n]\}$ is a uniformly random set of size $n$ in $[n^2] \simeq [n] \times [n]$. Thus $\hat \pir_1, \ldots, \hat \pir_r \in S_{n^2}$ are uniform and independent in $S_{n^2}$, independent of $P$, and $\hat \ir_0 \in [n^2]$ is uniform, independent of $P, \hat \pir_1, \ldots, \hat \pir_r$. This establishes that in this case $(\hat X, \hat Y) \sim \mumid_{r,n^2,\ell}$.

  Thus the distribution of the transcript of $\Pi'$ (excluding the additional public randomness used by $\Pi'$ in the simulation above) when run on $\dypv$ (respectively, $\dnpv$) is the same as the distribution of the transcript of $\Pi$ when run on $\hat \mu_{r,n^2, \ell}$ (respectively, $\mumid_{r,n^2,\ell}$). It then follows from Theorem \ref{thm:pv-main} and the fact that $((2r-1) + 3)/2 = r+1$ that for every $\ep > 0$, there exists $\beta, n_0 \in \BR_+$ such that for all $\ell \in \BN$ and perfect squares $n \geq n_0$, the distributions $\hat \mu_{r,n,\ell}$ and $\mumid_{r,n,\ell}$ are $(\ep, r+1, \sqrt n / \log^\beta n)$-indistinguishable.

  The case that $n$ is not a perfect square follows immediately: in particular, given a sample $(X,Y)$ from either $\hat \mu_{r,n,\ell}$ or $\mumid_{r,n,\ell}$, let $m$ denote the smallest perfect square greater than $n$. Notice that by viewing $[n]$ as a subset of $[m]$ and using public randomness Alice and Bob can create a sample $(X',Y')$ that is sampled from $\hat \mu_{r,m,\ell}$ if $(X,Y) \sim \hat \mu_{r,n,\ell}$ and that is sampled from $\mumid_{r,m,\ell}$ if $(X,Y) \sim \mumid_{r,n,\ell}$ with no communication.
\end{proof}
Next, Claim \ref{clm:mumid_prod} follows as a simple corollary of Corollary \ref{cor:sqrtn_disj}.
\begin{proof}[Proof of Claim \ref{clm:mumid_prod}]
  The proof is similar to that of Claim \ref{clm:muhatmu}. We reduce the task of distinguishing $\mumid_{r,n,\ell}$ and $(\hat \mu_{r,n,\ell})_X \otimes (\hat \mu_{r,n,\ell})_Y$ to the task of distinguishing $\disjy_{n,\sqrt n}$ and $\disjn_{n,\sqrt n}$ (See Corollary \ref{cor:sqrtn_disj}).

  In particular, suppose Alice and Bob are given $U, V \subseteq [n]$. Alice and Bob share common random uniform strings $Z_1, \ldots, Z_n \in \{0,1\}^\ell$. Given $U \subset [n]$, Alice sets $A_u = Z_u$ for $u \in U$ and samples $A_u \in \{0,1\}^\ell$ uniformly and independently for all $u \in [n] \backslash U$. Similarly, for $V \subset [n]$, Bob sets $B_v = Z_v$ for $v \in V$, and samples $B_v \in \{0,1\}^\ell$ uniformly and independently for all $v \in [n] \backslash V$. Alice also samples $\pir_1, \pir_3, \ldots, \pir_r \in \MS_n$ uniformly and independently and Bob samples $\pir_2, \pir_4, \ldots, \pir_{r-1} \in \MS_n$, $\ir, \jr \in [n]$ uniformly and independently. Letting $X = (\pir_1, \pir_3, \ldots, \pir_r, A_1, \ldots, A_n)$ and $Y = (\ir, \jr, \pir_2, \pir_4, \ldots, \pir_{r-1}, B_1, \ldots, B_n)$, it is easy to see that $(X,Y) \sim \mumid_{r,n,\ell}$ if $(U,V) \sim \disjy_{n,\sqrt n}$ and that $(X,Y) \sim (\hat \mu_{r,n,\ell})_X \otimes (\hat \mu_{r,n,\ell})_Y$ if $(U,V) \sim \disjn_{n,\sqrt n}$. It follows from Corollary \ref{cor:sqrtn_disj} that for any $\ep > 0$ there exists $\delta > 0$ such that $\mumid_{r,n,\ell}$ and $(\hat \mu_{r,n,\ell})_X \otimes (\hat \mu_{r,n,\ell})_Y$ are $(\ep, \delta \sqrt n, \delta \sqrt n)$-indistinguishble.
\end{proof}
We have now verified Claims \ref{clm:mumid_mu}, \ref{clm:mumid_prod}, which establishes Claim \ref{clm:hatmu_indist}, which completes the proof of Theorem \ref{thm:indist}, and thus of Theorem \ref{thm:nam_formal}.
\end{proof}


\todo{Madhu needs to read the rest}

\section{Proof of Theorem \ref{thm:am_formal}; amortized setting}
\label{sec:am}
In this section we work towards the proof of Theorem \ref{thm:am_formal}; recall that part (1) is immediate, so the main work is in proving parts (2) and (3). As discussed in Section \ref{sec:proof_overview}, there are 3 main steps in the proof, which proceeds by initially assuming that the tuple $(C,L)$ is $r$-achievable for appropriate values of $C,L$ and eventually deriving a contradiction. The first step is to establish a {\it single-letter characterization}\footnote{The term ``single-letter characterization'' is used relatively loosely in the literature. Following \cite{csiszar_information_1981}, for any $k \in \BN$ and a closed subset $\MS \subset \BR^k$, we call a characterization of $\MS$ a {\it single-letter characterization} if it implies, for any $\eta > 0$, the existence of an algorithm that decides whether a point $x \in \BR^k$ is of Euclidean distance at most $\eta$ to $\MS$. Moreover, this algorithm must run in time at most $T_\MS(\eta)$, for some function $T_\MS : \BR_+ \ra \BN$. This is related, for instance, to ideas on the computability of subsets of $\BR^k$ considered in \cite{braverman_complexity_2005}. 
} of the achievable rate region $\MT_r(X,Y)$ for amortized CRG, which we explain in Section \ref{sec:single_letter}. This single-letter characterization will show that if the tuple $(C,L)$ is $r$-achievable for CRG from any source $\mu$, then there is an $r$-round protocol with internal information cost at most $C$ and external information cost at least $L$. In Section \ref{sec:compress_ic_cc}, we show how to convert this protocol into a nearly equivalent protocol whose {\it communication cost} is at most $C$ (recall that in general, $\CC(\Pi) \geq \ICint_\mu(\Pi)$, so upper bounding communication cost is more difficult). Finally, in Section \ref{sec:am_formal_proof} we show how to use the fact that the external information cost is at least $L$ to obtain a protocol that can distinguish between the pointer-chasing distribution $\mu_{r,n,\ell}$ and the product of the marginals $(\mu_{r,n,\ell})_X \otimes (\mu_{r,n,\ell})_Y$. At this point we will obtain a contradiction for appropriate values of $C,L$ by Theorems \ref{thm:indist} and \ref{thm:indist_bggs}, which were the key ingredients in the proof for the corresponding lower bounds in the non-amortized setting (i.e., item (2) of Theorems \ref{thm:nam_formal_bggs} and \ref{thm:nam_formal}).

\subsection{Single-letter characterization of $\MT_r(X,Y)$}
\label{sec:single_letter}
It follows immediately from Definitions \ref{def:a_crg} and \ref{def:a_skg} that the $r$-round rate region for amortized CRG and SKG is completely characterized by, for each communication rate $C$, the maximum real number $L$, known as the {\it capacity}, such that $(C,L)$ is $r$-achievable for CRG or SKG:
  \begin{defn}[CR \& SK capacity]
    Suppose a source $(X,Y) \sim \mu$ is fixed. Then for $r \in \BN, C \in \BR_+$, define the {\it CR capacity with communication $C$} to be
    $$
    \FCamcr_r(C) := \sup_{(C,L) \in \MT_r(X,Y)} L,
    $$
    and the {\it SK capacity with communication $C$} to be
    $$
\FCamsk_r(C) := \sup_{(C,L) \in \MS_r(X,Y)} L.
$$
  \end{defn}

  The single-letter characterization of $\MT_r(X,Y)$ relies on the concepts of {\it internal information cost} and {\it external information cost} of a protocol $\Pi$ \cite{barak2013compress,braverman2011information,braverman2013direct,braverman_information_2013,braverman_interactive_2012}. 
  The external information cost of a (multiple-round) protocol $\Pi$ describes how much information $\Pi$ reveals about the inputs $X,Y$ to an external observer who only sees the transcript of the protocol, while the internal information cost describes how much information Alice and Bob reveal to {\it each other} about their own inputs:
    \begin{defn}[External and internal information costs]
      Given any communication protocol $\Pi$ with a maximum of $r$ rounds, public randomness $\Rpub$, and a distribution $(X,Y) \sim \mu$ of inputs, the {\it external information cost} $\ICext_\mu(\Pi)$ is given by:
      $$
\ICext_\mu(\Pi) := I(\Pi^r, \Rpub ; X, Y).
$$
If $\Pi$ does not use public randomness, then $\ICext_\mu(\Pi) := I(\Pi^r; X,Y)$.

The {\it internal information cost} $\ICint_\mu(\Pi)$ is given by
$$
\ICint_\mu(\Pi) := I(\Pi^r, \Rpub; X | Y) + I(\Pi^r, \Rpub; Y | X).
$$
If $\Pi$ does not use public randomness, then $\ICint_\mu(\Pi):= I(\Pi^r; X|Y) + I(\Pi^r; Y|X)$.
\end{defn}
\begin{remark} It is well-known that for any distribution $\mu$, and any protocol $\Pi$, $\ICint_\mu(\Pi) \leq \ICext_\mu(\Pi) \leq \CC(\Pi)$.
  \end{remark}

    An original motivation behind the introduction of internal and external information costs was to understand the possibility of proving {\it direct sum} results for communication complexity \cite{chakrabarti2001informational,jain_direct_2003,harsha_communication_2007,barak2013compress}. 
    In light of the connection with direct sum results, the fact that internal and external information costs appear in characterizations for amortized CRG and SKG is not surprising. In particular, the amortized CRG and SKG problems can be viewed as the task of solving $N$ independent instances of CRG or SKG from a source $\mu$, with an additional requirement that each of Alice's $N$ output strings must agree with each of Bob's $N$ output strings {\it simultaneously} with high probability. 

    An additional ingredient in the single-letter chararcterization of $\MT_r(X,Y)$ is the {\it minimum $r$-round interaction for maximum key rate} (i.e., the {\it $r$-round MIMK}). Ahlswede and Csisz\'{a}r showed in their seminal work \cite{ahlswede1993common} that the maximum key rate $L$ that Alice and Bob can generate from a source $(X,Y) \sim \mu$, without restricting communication, is $I_\mu(X;Y)$. In other words, we have: $\sup_{C \geq 0} \FCamsk_r(C) = I(X;Y)$. The $r$-round MIMK describes the minimum amount of communication needed to obtain this key rate of $I(X;Y)$:
    \begin{defn}
      \label{def:mimk}
      If $(X,Y) \sim \mu$ is a source and $r \geq 1$, Then the $r$-round MIMK is defined as
      $$
      \FI_r(X;Y) = \inf_{C \geq 0 : \FCamsk_r(C) = I(X;Y)} \{ C \}.
      $$
    \end{defn}
    Tyagi \cite{tyagi2013common} proved the following single-letter characterization of  the $r$-round MIMK $\FI_r(X;Y)$:
\begin{theorem}[\cite{tyagi2013common}, Theorem 4]
  \label{thm:tyagi_sl}
  For a source $(X,Y) \sim \mu$, the $r$-round MIMK is the infimum of all $C \geq 0$ such that there exists an $r$-round private-coin protocol $\Pi$ such that $\ICint_\mu(\Pi) \leq C$ and $\ICext_\mu(\Pi) \geq C + I(X;Y)$. 
\end{theorem}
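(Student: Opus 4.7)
My plan is to prove both directions of the characterization. The key preliminary observation is that for any protocol $\Pi$,
$$\ICext_\mu(\Pi) - \ICint_\mu(\Pi) \;=\; I(X;Y) - I(X;Y \mid \Pi),$$
as one sees by applying the chain rule of mutual information twice (expanding $I(\Pi; X,Y) = I(\Pi; X) + I(\Pi; Y\mid X)$ and then $I(\Pi; X) - I(\Pi; X \mid Y) = I(X;Y) - I(X;Y\mid\Pi)$). Since $I(X;Y \mid \Pi) \ge 0$, the two hypotheses $\ICint_\mu(\Pi) \le C$ and $\ICext_\mu(\Pi) \ge C + I(X;Y)$ are together equivalent to $\ICint_\mu(\Pi) \le C$ combined with $I(X;Y \mid \Pi) = 0$: the transcript of $\Pi$ ``absorbs'' all correlation between the inputs. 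Hence the infimum in the theorem equals $\inf\{\ICint_\mu(\Pi) : \Pi \text{ is an } r\text{-round private-coin protocol with } I(X;Y \mid \Pi) = 0\}$.

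For achievability, given such a single-letter $\Pi$ I would run $\Pi$ independently on each of $N$ i.i.d.\ copies of $\mu$ and apply the Braverman--Rao information-to-communication compression round-by-round (which preserves the round count; see Theorem~\ref{thm:jain_compress}), reducing the total communication to $N(C+o(1))$ bits while leaving both parties in possession of the conceptual transcripts $(\Pi_1,\ldots,\Pi_N)$. The identity above together with $I(X;Y\mid\Pi)=0$ shows that these transcripts contain $\approx N \cdot I(X;Y)$ bits of entropy beyond what an eavesdropper can extract from the compressed communication. A standard privacy-amplification step (two-universal hashing of the conceptual transcripts, seeded by a negligibly small additional communication block) then distills a key of rate $\approx I(X;Y)$ that is nearly independent of the transmitted bits, verifying $\FCamsk_r(C) = I(X;Y)$ with communication rate $C$.

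For the converse, suppose $r$-round private-coin protocols $\Pi^{(N)}$ achieve SK rate $I(X;Y)$ with amortized communication rate $C$. I would single-letterize via a time-sharing index $T$ uniform on $[N]$ independent of all inputs: on input $(X,Y)\sim\mu$, the single-letter protocol $\tilde\Pi$ treats $(X,Y)$ as $(X_T,Y_T)$, samples the remaining coordinates from suitable marginals of $\mu$, runs $\Pi^{(N)}$, and additionally records $(T, X^{T-1}, Y^{T-1})$ as part of the transcript. A chain-rule computation gives $\ICint_\mu(\tilde\Pi) = \ICint_{\mu^{\otimes N}}(\Pi^{(N)})/N \le C + o(1)$. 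For the complementary inequality $\ICext_\mu(\tilde\Pi) \ge C + I(X;Y)$---equivalently, $I(X;Y\mid\tilde\Pi) = o(1)$ combined with the matching of $\ICint$---I would augment $\tilde\Pi$ to additionally transmit an appropriate portion of the generated key $K$ and exploit that $H(K) \approx N I(X;Y)$, that $I(K;\Pi^{(N)}) \approx 0$ by secrecy, and that $K$ is a near-common function of $(X^N, \Pi^{(N)})$ and $(Y^N, \Pi^{(N)})$. The G\'acs--K\"orner common-information inequality $H(K\mid\Pi^{(N)}) \le I(X^N; Y^N \mid \Pi^{(N)})$ then, via the identity applied to $\Pi^{(N)}$, pins down the required bound on $\ICext$.

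The main obstacle I expect is the coupling step in the converse: naively the single-letterization needs shared randomness between Alice and Bob for sampling the ``other'' coordinates $(X_t, Y_t)_{t \ne T}$ from the joint distribution $\mu$, but the definition of amortized SKG mandates private coins only. This is resolved by a standard public-to-private-coin reduction that preserves both $\ICint$ and $\ICext$ (folding the shared auxiliary variables into the conditioning rather than the communicated transcript), combined with an averaging argument over $T$ to single-letterize cleanly. A minor subsidiary concern on the achievability side is absorbing the compression error from Theorem~\ref{thm:jain_compress} into the vanishing agreement parameter $\ep_N$ of Definition~\ref{def:a_skg}, which is handled by a routine coupling argument.
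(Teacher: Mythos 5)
This theorem is \emph{not} proved in the paper: it is imported verbatim from Tyagi \cite{tyagi2013common} (his Theorem 4), so there is no in-paper proof to compare yours against. Your opening identity
\[
\ICext_\mu(\Pi) - \ICint_\mu(\Pi) \;=\; I(X;Y) - I(X;Y\mid\Pi^r)
\]
is correct, and it is in fact the same identity the present paper uses later, in the proof of Lemma~\ref{lem:get_useful_info} (the chain of equalities culminating in~(\ref{eq:int_m_ext})). The consequent reformulation of the pair of constraints $\ICint_\mu(\Pi)\le C$, $\ICext_\mu(\Pi)\ge C+I(X;Y)$ as ``$I(X;Y\mid\Pi^r)=0$ and $C=\ICint_\mu(\Pi)$'' is also correct and is a clean way to read the theorem. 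That part is solid.

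Both halves of your sketch have real gaps, however. On achievability, you invoke Theorem~\ref{thm:jain_compress} and claim it gives amortized communication $N(C+o(1))$; it does not. That theorem is a \emph{one-shot} compression with communication $\tfrac{\ICint_\mu(\Pi)+O(r)}{\ep}+O(r\log(1/\ep))$ and simulation error $O(\ep r)$ --- as $\ep\to 0$ (needed for $\ep_N\to 0$) the $1/\ep$ factor blows up, and tensorizing $N$ independent runs does not make it go away. To get rate $\ICint+o(1)$ per copy you need the genuinely \emph{amortized} round-preserving compression (Braverman--Rao, ``information equals amortized communication''), which is a materially different and harder theorem than the one you cite; Tyagi's own achievability proof instead goes through the Ahlswede--Csisz\'ar machinery (random binning / typicality, and in the multi-round literature the Song--Cuff--Poor likelihood encoder), not information-to-communication compression at all. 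On the converse, the coupling obstacle you correctly identify is the heart of the matter, and ``folding the shared auxiliary variables into the conditioning'' does not by itself produce a valid $r$-round \emph{private-coin} protocol: if Alice and Bob cannot jointly sample the off-index coordinates, there is no concrete protocol $\tilde\Pi$ whose $\ICint,\ICext$ you are bounding. The standard fix --- cf.\ the tensorization lemma the paper imports as Lemma~\ref{lem:cr_sim} from Ghazi et al., or Tyagi's own Csisz\'ar-sum-identity manipulations --- is to perform the single-letterization directly on the information functionals, exhibiting a genuine single-letter protocol by reindexing rounds and conditioning on prefixes $X^{t-1},Y^{t-1}$, rather than having the parties literally resample. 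As written, your argument would not close at either of these two points, even though the governing identity and the overall two-directional architecture are the right ones.
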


    Using Theorem \ref{thm:tyagi_sl}, we finally can state the single-letter characterization of achievable rates for $r$-round CRG. It is stated most precisely in \cite{sudan_communication_2019}, but similar results are shown in \cite{liu2017secret,ghazi2018resource,liurate,ye_information_2005,gohari_information-theoretic_2010,gohari_information-theoretic_2010-1}.
    \begin{theorem}[\cite{sudan_communication_2019}, Theorem III.2]
      \label{thm:achievability_main}For $C \geq 0$, define
      \begin{equation}
        \label{eq:tilde_fc}
        \tilFCamcr_r(C) := 
         \begin{cases}
    \sup_{\Pi = (\Pi_1, \ldots, \Pi_r) : \ICint_\mu(\Pi) \leq C} \{\ICext_\mu(\Pi)\} \quad &: \quad C \leq \FI_r(X;Y) \\ 
    I(X;Y) + C \quad &: \quad C > \FI_r(X;Y),
  \end{cases}
\end{equation}
  where the supremum is ocver all $r$-round private-coin protocols $\Pi$ with $\ICint_\mu(\Pi) \leq C$.

  Then for a source $(X,Y) \sim \mu$, the $r$-round CR capacity is given by
        \begin{equation}
          \label{eq:ach_main}
\FCamcr_r(C) = \tilFCamcr_r(C).
  \end{equation}
\end{theorem}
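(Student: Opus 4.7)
The plan is to prove both directions of \eqref{eq:ach_main} separately, namely the achievability $\FCamcr_r(C) \geq \tilFCamcr_r(C)$ and the converse $\FCamcr_r(C) \leq \tilFCamcr_r(C)$. Throughout I will rely on Theorem \ref{thm:tyagi_sl} to handle the boundary behaviour at $C = \FI_r(X;Y)$, so that it suffices to establish the characterization in the regime $C \leq \FI_r(X;Y)$ (for $C > \FI_r(X;Y)$ one gets key rate $I(X;Y)$ from the $\FI_r$-witness protocol and then ``pads'' the key with $C - \FI_r(X;Y)$ additional communicated fresh random bits, each of which is trivially both communication and common randomness).

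For achievability, fix any $r$-round private-coin protocol $\Pi$ on a single copy $(X,Y) \sim \mu$ with $\ICint_\mu(\Pi) \leq C$ and $\ICext_\mu(\Pi) = L$. The strategy is to run $\Pi$ in parallel on $N$ independent copies $(X^N, Y^N) \sim \mu^{\otimes N}$ and then to (i) compress the total communication down to roughly $NC$ and (ii) extract roughly $NL$ bits of agreed-upon randomness from the compressed transcripts together with each party's inputs. For (i), one invokes the interactive-compression result of Braverman--Rao / Jain et al.\ (i.e.\ Theorem \ref{thm:jain_compress}), which compresses each copy from communication $\CC(\Pi)$ down to $\ICint_\mu(\Pi) + o(N)/N$ per copy while preserving the joint distribution of $(X^N, Y^N, \Pi^N)$ up to vanishing total variation. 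For (ii), observe that after compression the pair $(X^N, \Pi^N)$ and $(Y^N, \Pi^N)$ held by Alice and Bob respectively are jointly typical, and the common information between their views is essentially the external information cost per copy, namely $L$. Using a Slepian--Wolf-style reconciliation followed by privacy-amplification-style hashing (here just a public random hash is replaced by a private random hash drawn from the parties' private coins and communicated with negligible overhead), one obtains an agreed-upon key of length $(L - o(1))N$ with vanishing disagreement probability.

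For the converse, suppose $(C,L) \in \MT_r(X,Y)$ is achieved by an $r$-round private-coin protocol $\Pi^{(N)}$ on $N$ copies with keys $\KA(N), \KB(N)$. The goal is to extract a single-copy $r$-round protocol $\Pi'$ on $(X,Y) \sim \mu$ with $\ICint_\mu(\Pi') \lesssim C$ and $\ICext_\mu(\Pi') \gtrsim L$. The standard direct-sum trick is to pick a uniformly random coordinate $J \in [N]$ and to define $\Pi'$ as the restriction of $\Pi^{(N)}$ to coordinate $J$, where Alice and Bob fill in the other coordinates using a combination of private randomness and public randomness that has been made private via a shared prior (sampling from $\mu^{\otimes (N-1)}$ on the appropriate sides). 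By the chain rule, $\ICint_\mu(\Pi') = \frac{1}{N} \cdot \ICint_{\mu^{\otimes N}}(\Pi^{(N)}) \leq \frac{1}{N} \CC(\Pi^{(N)}) \to C$. For the external information inequality, I would use that the key $\KA(N) = \KB(N)$ has entropy close to $NL$ and is (approximately) a function of both $(X^N, \Pi^{(N)})$ and $(Y^N, \Pi^{(N)})$; by Fano's inequality plus a chain-rule decomposition across coordinates one concludes $I(\Pi^{(N)}; X^N Y^N) \gtrsim NL$, which averaged over the random coordinate $J$ gives $\ICext_\mu(\Pi') \gtrsim L$. Taking $N \to \infty$ and passing to a subsequential limit of protocols (using that the set of $r$-round protocols with bounded communication is compact in an appropriate topology, or alternatively working with $\ep$-approximate witnesses) yields the required single-letter protocol.

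The main obstacle is the passage from ``key entropy of order $NL$'' to ``external information of order $NL$'' in the converse. The $r$-round restriction is essential here: one cannot freely replace the key by the transcript, because the key agreement may extract ``hidden'' correlations from $(X,Y)$ that the transcript only weakly reveals, and the only reason the transcript must still reveal a large amount about $(X^N, Y^N)$ is that the keys on both sides are (nearly) equal almost surely \emph{and} each is a deterministic function of one party's view. Carefully quantifying this — essentially, showing that near-perfect agreement forces near-equality $H(\KA | X^N, \Pi^{(N)}) \approx 0$ and $H(\KA | Y^N, \Pi^{(N)}) \approx 0$, which together with $H(\KA) \approx NL$ gives $I(\Pi^{(N)}; X^N Y^N) \gtrsim NL$ via the mutual-information identity $H(\KA) \leq I(X^N ; Y^N \Pi^{(N)})$ — is the heart of the converse. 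The remaining delicate point is controlling the $o(1)$ error terms from approximate agreement and from Fano's inequality carefully enough that the resulting inequality $\FCamcr_r(C) \leq \tilFCamcr_r(C)$ is tight up to the continuity of $\tilFCamcr_r$ in $C$, which one verifies using concavity of $\tilFCamcr_r$ arising from time-sharing protocols.
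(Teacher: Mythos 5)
Your converse argument has a concrete gap. You claim that near-perfect agreement plus $H(\KA) \approx NL$ forces $I(\Pi^{(N)}; X^N Y^N) \gtrsim NL$, but this is impossible whenever $L > C$: the external information of a transcript is always bounded by the communication, $I(\Pi^{(N)}; X^N Y^N) = \ICext_{\mu^{\otimes N}}(\Pi^{(N)}) \leq \CC(\Pi^{(N)}) \leq (C+o(1))N$. Since the whole point of the theorem is that one can have $L$ vastly larger than $C$ (e.g.\ $L$ close to $I(X;Y)$ with $C$ close to $\FI_r(X;Y)$), your stated inequality simply cannot hold. You also conflate $I(\Pi^{(N)}; X^N Y^N)$ with $I(X^N; Y^N \Pi^{(N)})$ in the final sentence of the converse; these are different quantities, and it is the latter that is large. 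The missing idea is to \emph{augment} the protocol: let the last speaker append their output key to the final message. The augmented transcript $(\Pi^{(N)}, \KA)$ then has external information $I(\Pi^{(N)}, \KA; X^N Y^N) \geq H(\KA) \gtrsim NL$ (because for a deterministic protocol the joint $(\Pi^{(N)}, \KA)$ is a function of $(X^N,Y^N)$), while Fano's inequality controls the \emph{internal} information increase, since $\KA$ is nearly a deterministic function of $(Y^N, \Pi^{(N)})$. This is exactly the role of Lemma \ref{lem:augment} in the paper.

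Two further points. First, filling in the other coordinates ``using a combination of private randomness and public randomness that has been made private via a shared prior'' does not directly give a private-coin single-letter protocol; the tensorization step must itself produce a private-coin protocol (this is the content of Lemma \ref{lem:cr_sim}, quoted from \cite{ghazi2018resource}), and because you must first deterministicize the original protocol by absorbing Alice's and Bob's private coins into their inputs, you end up with a protocol on $(X\RA, Y\RB)$ rather than on $(X,Y)$; the paper needs a separate Lemma \ref{lem:simulate_pri_rand} to strip off the private-coin part, reducing both $\ICint$ and $\ICext$ by the same amount $\alpha$, together with concavity of $\tilFCamcr_r$ (Lemma \ref{lem:fc_concave}) to conclude. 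Second, note that the paper only proves the converse direction $\FCamcr_r(C) \leq \tilFCamcr_r(C)$ and cites the literature for achievability; your achievability sketch via interactive compression and hashing of the transcript is a plausible alternative route (hashing the $N$-fold transcript, of which both parties hold a copy, to roughly $H(\Pi^{(N)}) \geq N \cdot \ICext_\mu(\Pi)$ bits), but as written it is too informal to assess, and it is orthogonal to the part of the theorem the paper actually establishes.
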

\begin{remark}
  We briefly explain how Theorem \ref{thm:achievability_main} does in fact provide a single-letter characterization for $\MT_r(X,Y) = \{ (C,L) : C \geq 0, L \leq \FCamcr_r(C) \}$. It follows from the support lemma \cite[Lemma 15.4]{csiszar_information_1981} that the protocols $\Pi = (\Pi_1, \ldots, \Pi_r)$ in the definition of $\tilFCamcr_r(C)$ can be restricted to the class of protocols where $\Pi_t$, $1 \leq t \leq r$, falls in a finite set of size $\MU_t$ at most $|\MX| |\MY| \prod_{t'=1}^{t-1} |\MU_{t'}| + 1$. Then by iterating through all possible distributions of $\Pi_t | \Pi^{t-1}X$, for $t$ odd, and $\Pi_t | \Pi^{t-1} Y$, for $t$ even, at a sufficiently small granularity, we can approximate $\tilFCamcr_r(C)$ to any given precision. 
\end{remark}
For the purpose of proving Theorem \ref{thm:am_formal}, we will only need the inequality $\FCamcr_r(C) \leq \tilFCamcr_r(C)$ (which is often called the {\it converse direction} of the equality in Theorem \ref{thm:achievability_main}). As a full proof of Theorem \ref{thm:achievability_main} (and in particular, of this inequality) does not appear to have been collected in the literature, we provide one in Section \ref{sec:converse_proof}. The following is an immediate consequence of this inequality:
\if 0
When $C \leq \FI_r(X;Y)$, $\FCamcr_r(C)$ may equivalently be written as:
    $$
\sup_{\Pi = (\Pi_1, \ldots, \Pi_r) : \ICint_\mu(\Pi) \leq C} \{\ICext_\mu(\Pi)\},
$$
where in the supremum $\Pi = (\Pi_1, \ldots, \Pi_r)$ represents any $r$-round private-coin protocol.

Let $\tilFCamcr_r(C)$ be the right-hand side of (\ref{eq:ach_main}), so that part (1) of Theorem \ref{thm:achievability_main} states that $\FCamcr_r(C) = \tilFCamcr_r(C)$. The proof of part (1) of the theorem consists of two parts: first, the proof of {\it achievability}, namely that $\FCamcr_r(C) \geq \tilFCamcr_r(C)$, which states that for each pair $(C,L)$ with $L < \tilFCamcr_r(C)$, there is some $r$-round protocol achieving the rate $(C,L)$. Second, one must prove the {\it converse} direction, that $\FCamcr_r(C) \leq \tilFCamcr_r(C)$, which states that for each pair $(C,L)$ with $L > \tilFCamcr_r(C)$, there is no $r$-round protocol achieving the rate $(C,L)$. We prove the converse direction in Section \ref{sec:converse_proof}; notice that this is the only direction needed to establish Corollary \ref{cor:det_inf} below, which is in turn the only consequence of Theorem \ref{thm:achievability_main} we use in the proofs of our results. The proof of achievability uses the likelihood encoder of Song et al.~\cite{song_likelihood_2016} and can be  found in \cite{liu2017secret} (The proof of achievability in the special case for 1-round communication, Theorem \ref{thm:ahlswede_1998}, can also proceed by using standard machinery of jointly typical sequences \cite{cover2012elements,ahlswede1998common}.)
\fi 
\begin{corollary}
  \label{cor:det_inf}
For each tuple $(C, L)\in \MT_r(X,Y)$ with $L < I(X;Y)$, there is some protocol $\Pi = (\Pi_1, \ldots, \Pi_r)$ such that $\ICint_\mu(\Pi) \leq C$ and $\ICext_\mu(\Pi) \geq L$. 
\end{corollary}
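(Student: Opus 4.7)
The plan is to derive the corollary directly from the converse direction of Theorem~\ref{thm:achievability_main}, namely the inequality $\FCamcr_r(C) \leq \tilFCamcr_r(C)$, which is the only direction needed here and is proved in Section~\ref{sec:converse_proof}. By hypothesis $(C,L) \in \MT_r(X,Y)$, so by the definition of the rate region and of $\FCamcr_r(C)$ we have $L \leq \FCamcr_r(C) \leq \tilFCamcr_r(C)$. From here I would simply unpack the two-case definition of $\tilFCamcr_r(C)$ given in~\eqref{eq:tilde_fc}.

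In the first case, $C \leq \FI_r(X;Y)$, so $L \leq \sup_{\Pi : \ICint_\mu(\Pi) \leq C} \ICext_\mu(\Pi)$, where the supremum ranges over $r$-round private-coin protocols. To turn this supremum into an actual protocol, I would invoke the support-lemma restriction noted in the remark after Theorem~\ref{thm:achievability_main}: one may restrict $\Pi$ to take messages in a fixed finite alphabet whose size depends only on $|\MX|, |\MY|, r$. On this bounded family the conditional distributions defining $\Pi$ live in a compact set, and both $\ICint_\mu(\Pi)$ and $\ICext_\mu(\Pi)$ are continuous functionals of those conditional distributions. The feasibility set $\{\Pi : \ICint_\mu(\Pi) \leq C\}$ is closed, so the supremum is attained and we obtain an actual protocol $\Pi$ with $\ICint_\mu(\Pi) \leq C$ and $\ICext_\mu(\Pi) \geq L$.

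In the second case, $C > \FI_r(X;Y)$, so by Theorem~\ref{thm:tyagi_sl} (applied to any $C'$ with $\FI_r(X;Y) \leq C' < C$ close enough to $\FI_r(X;Y)$) there is an $r$-round private-coin protocol $\Pi$ with $\ICint_\mu(\Pi) \leq C' \leq C$ and $\ICext_\mu(\Pi) \geq C' + I(X;Y) \geq I(X;Y) > L$, where the strict inequality uses the hypothesis $L < I(X;Y)$. This $\Pi$ witnesses the corollary. (Alternatively, the same compactness argument as in the first case, combined with the characterization $\tilFCamcr_r(C) = I(X;Y) + C$, gives the same conclusion.)

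The only subtle step is the passage from a supremum to an attained maximum in the first case; everything else is a direct unpacking of the definitions once the converse direction of Theorem~\ref{thm:achievability_main} is in hand. Since the support-lemma reduction is already invoked in the paper's remark following that theorem, this step is essentially routine, and the corollary is a one-paragraph consequence of the theorem together with Theorem~\ref{thm:tyagi_sl}.
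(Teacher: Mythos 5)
Your proposal is correct and follows the paper's own overall scheme: invoke the converse inequality $\FCamcr_r(C) \leq \tilFCamcr_r(C)$ and unpack the two-case definition of $\tilFCamcr_r$. In the regime $C \leq \FI_r(X;Y)$ your argument and the paper's coincide, but you correctly flag the step the paper leaves implicit in the corollary's proof (it cites only~(\ref{eq:ach_main})): turning the supremum over protocols into an attained maximum requires compactness, which the paper actually records separately as Lemma~\ref{lem:trd_closed} via the support lemma. In the regime $C > \FI_r(X;Y)$ you diverge mildly from the paper: the paper observes that $(\FI_r(X;Y), I(X;Y)) \in \MT_r(X,Y)$ and re-applies the first case at that boundary point, which relies on $\tilFCamcr_r(\FI_r(X;Y)) = \FI_r(X;Y) + I(X;Y)$ (established inside Lemma~\ref{lem:fc_concave}) and formally applies the corollary at $L = I(X;Y)$, outside the stated hypothesis $L < I(X;Y)$. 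Your call to Theorem~\ref{thm:tyagi_sl} accomplishes the same thing more directly and cleanly. One small imprecision: phrasing this as ``applied to any $C'$ with $\FI_r(X;Y) \leq C' < C$'' is looser than what the infimum characterization yields; you need $C'$ strictly above $\FI_r(X;Y)$, from which Theorem~\ref{thm:tyagi_sl} gives some $C'' < C'$ with a witnessing protocol satisfying $\ICint_\mu(\Pi) \leq C'' < C$ and $\ICext_\mu(\Pi) \geq C'' + I(X;Y) \geq I(X;Y) > L$. This is cosmetic, not a gap.
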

\begin{proof}
  First suppose that $C \leq \FI_r(X;Y)$. Then the  existence of the $r$-round protocol $\Pi$ follows from (\ref{eq:ach_main}).
  
  Next suppose $C > \FI_r(X;Y)$. Notice that $(\FI_r(X;Y), I(X;Y)) \in \MT_r(X,Y)$, since $\FCamcr_r(\FI_r(X;Y)) = I(X;Y) + \FI_r(X;Y)$. Therefore, the case $C \leq \FI_r(X;Y)$ gives that there is an $r$-round protocol $\Pi$ such that $\ICint_\mu(\Pi) \leq \FI_r(X;Y) < C$ and $\ICext_\mu(\Pi) \geq I(X;Y) > L$, as desired.
\end{proof}

\subsection{Using the compression of internal information to communication}
\label{sec:compress_ic_cc}
A crucial technical ingredient in doing so is the use of an ``compression of internal information cost to communication'' result for bounded round protocols, saying that for any protocol with a fixed number $r$ of rounds and internal information cost $I$, there is another protocol with the same number $r$ of rounds and communication cost not much larger than $I$. As we discussed in Section \ref{sec:background}, these types of theorems were originally proved in order to establish direct sum and direct product results for communication complexity. Our use of these compression results may be interpreted as a roughly analogous approach for the setting of amortized CRG and SKG, which can be thought of as the ``direct sum version of non-amortized CRG and SKG''.

    \begin{theorem}[Lemma 3.4, \cite{jain2012direct}]
    \label{thm:jain_compress}
    Suppose that $(X,Y) \sim \nu$ are inputs to an $r$-round communication protocol $\Pi$ with public randomness $\Rpub$ (and which may use private coins as well. Then for every $\ep > 0$, there is a public coin protocol $L$ with $r$ rounds and communication at most $\frac{\ICint_\mu(\Pi) + 5r}{\ep} + O(r \log(1/\ep))$ such that at the end of the protocol each party possesses a random variable $(\hat \Pi_1, \ldots, \hat \Pi_r)$ representing a transcript for $\Pi$, which satisfies
    $$
\Delta((\Rpub, X, Y, \Pi_1, \ldots, \Pi_r), (\Rpub, X, Y, \hat \Pi_1, \ldots, \hat \Pi_r)) \leq 6 \ep r.
    $$
  \end{theorem}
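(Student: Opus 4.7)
The plan is to simulate $\Pi$ round by round using a correlated-sampling primitive that charges each round only as many bits as the listener learns in that round about the speaker's input. Applying the chain rule to $\ICint_\mu(\Pi) = I(\Pi^r,\Rpub; X\mid Y) + I(\Pi^r,\Rpub; Y\mid X)$, and using that $\Pi_t$ is a deterministic function of the speaker's view (so the speaker's own round contributes nothing to the mutual information with their own input conditional on that view),
\[
\ICint_\mu(\Pi) \;=\; \sum_{t=1}^r I_t, \qquad I_t := \begin{cases} I(\Pi_t; X \mid Y, \Pi^{t-1}, \Rpub) & t \text{ odd},\\ I(\Pi_t; Y \mid X, \Pi^{t-1}, \Rpub) & t \text{ even}.\end{cases}
\]
Each $I_t$ equals $\E[\KL(P_t \,\|\, Q_t)]$, where $P_t$ is the speaker's conditional distribution of $\Pi_t$ and $Q_t$ is the listener's belief, both computable from the respective party's input together with the public prefix $(\Pi^{t-1}, \Rpub)$.

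The workhorse is a Braverman--Rao-style correlated-sampling lemma: when Alice holds $P$ and Bob holds $Q$ on a common countable alphabet and they share a public stream of i.i.d.\ draws from a fixed reference measure, Alice can identify a sample $\hat U$ via a rejection-style acceptance test (keyed to a threshold depending on the likelihood ratio $P/Q$) and send a short prefix-free hash so that Bob uniquely decodes $\hat U$ from his own candidate list. This produces a sample of total-variation distance at most $\ep$ from $P$, using $(\KL(P\,\|\,Q) + O(1))/\ep + O(\log 1/\ep)$ bits in expectation. A per-round Markov truncation -- aborting a round whose communication would exceed a preset budget of $(I_t + 5)/\ep + O(\log 1/\ep)$ bits -- converts the expected bound into a worst-case bound at the cost of only an extra $O(\ep)$ of TV-error.

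Running this simulator round by round and summing yields worst-case communication $(\ICint_\mu(\Pi) + 5r)/\ep + O(r\log 1/\ep)$, as required. A standard hybrid argument, swapping one round at a time from the real $\Pi$ to the simulator $\hat\Pi$, promotes the per-round $O(\ep)$ TV-error (from sampling error, truncation-abort probability, and any decoding error) into the global bound $\Delta\bigl((\Rpub, X, Y, \Pi_1,\ldots,\Pi_r), (\Rpub, X, Y, \hat\Pi_1,\ldots,\hat\Pi_r)\bigr) \leq 6\ep r$. Private coins of $\Pi$ are folded into the simulator's public randomness at no additional communication cost, yielding the public-coin protocol required in the statement.

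The main technical obstacle is upgrading the per-round \emph{expected} communication bound to a \emph{worst-case} one: since $I_t$ is only the expectation of $\KL(P_t\,\|\,Q_t)$ over $(X, Y, \Pi^{t-1}, \Rpub)$, a particular realization may yield a much larger KL-divergence and threaten to blow the budget. The resolution is the per-round Markov-truncation (overshoot-probability $\leq \ep$) combined with the hybrid argument that controls the additive rather than multiplicative accumulation of TV-errors across the $r$ rounds. Both parties can compute the acceptance threshold online, since it depends only on $I_t$, which is determined by the known joint distribution of $\Pi$ under $\nu$, not on the realized inputs.
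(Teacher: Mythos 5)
The paper does not prove Theorem \ref{thm:jain_compress}; it is cited verbatim from Jain, Pereszl\'enyi, and Yao \cite{jain2012direct} (their Lemma 3.4), so there is no in-paper proof to compare your sketch against. That said, your sketch correctly identifies the architecture of the Braverman--Rao/Jain-style round-preserving compression that underlies this lemma: decompose $\ICint_\mu(\Pi)$ via the chain rule into per-round terms $I_t$ (using the Markov structure of protocols, which gives $I(\Pi_t;Y\mid X,\Pi^{t-1},\Rpub)=0$ for odd $t$ and symmetrically for even $t$), simulate each round by a one-shot correlated-sampling subroutine whose cost scales with $\KL(P_t\,\Vert\,Q_t)$, convert from expected to worst-case communication by aborting once a preset budget is exceeded, and accumulate the per-round TV errors via a hybrid argument. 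Two small imprecisions worth flagging. First, when you justify the chain-rule collapse you say ``$\Pi_t$ is a deterministic function of the speaker's view''---it is not, since the protocol has private coins; the correct justification is that, conditioned on the speaker's input, $\Pi^{t-1}$, and $\Rpub$, the private coins used to generate $\Pi_t$ are independent of the listener's input, so the Markov chain $\Pi_t\dd X\Pi^{t-1}\Rpub\dd Y$ (for odd $t$) still holds. Second, your per-round bookkeeping does not quite cohere: you state that the raw correlated-sampling primitive already costs $(\KL(P\Vert Q)+O(1))/\ep+O(\log 1/\ep)$ bits ``in expectation'' and then also apply a Markov truncation at a budget of the same order, but if the expected cost already carries a $1/\ep$ factor, truncating at a comparable budget would fail with constant rather than $O(\ep)$ probability. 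The consistent accounting is that the untruncated expected cost per round is roughly $I_t+O(1)$, the truncation at a $1/\ep$ multiple of this introduces an $O(\ep)$ abort probability (hence the $1/\ep$ in the worst-case bound and the contribution to the $6\ep r$ TV bound), and the additive $O(\log 1/\ep)$ overhead per round comes from the hashing/decoding bookkeeping. These are bookkeeping slips rather than conceptual errors; the overall plan is the one used in \cite{jain2012direct} and in Braverman--Rao \cite{braverman2011information}.
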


  Our first lemma, Lemma \ref{lem:get_useful_info}, uses Theorem \ref{thm:jain_compress} to show that for any protocol $\Pi$ which satisfies $\ICext_\mu(\Pi) \gg \ICint_\mu(\Pi)$ then there exists another protocol $\Pi$ with communication cost not much greater than $\ICint_\mu(\Pi)$ and which satisfies some additional properties:
  \begin{lemma}
  \label{lem:get_useful_info}
  Fix any $r,n,\ell \in \BN$, and let $\mu = \mu_{r,n,\ell}$. Suppose $\rho \in \BN$ and $C, L \in \BR_+$.
  Suppose $\Pi$ is a $\rho$-round protocol with $\ICext_\mu(\Pi) = L$ and $\ICint_\mu(\Pi) = C$ and public randomness $\Rpub$ (and which may use private randomness as well).
  Then for every $\ep > 0$ there is some $\rho$-round protocol $\Pi'$ with inputs $(X, Y) \sim \mu$, public randomness $\Rpub$, with communication at most $\frac{C + 5\rho}{\ep} + O(\rho \log 1/\ep)$ and which outputs keys $ \KA',  \KB'$, such that
  \begin{enumerate}
  \item $\p_{\mu}[ \KA' =  \KB'] = 1$. 
  \item When inputs $(X,Y)$ are drawn from $\mu$, $I( \KA'; B_{\ir_r}) = I( \KA' ; A_{\ir_r}) \geq L - (C +1 + 2 \log n + 36\ep \rho \ell)$. 
  \item When inputs $(X,Y)$ are drawn from $\mu_X \otimes \mu_Y$,
    \begin{equation}
      \label{eq:ka_bs}
      I_{\mu_X \otimes \mu_Y}( \KA', \Rpub, (\Pi')^{\rho}; B_1, \ldots, B_n) \leq \frac{C + 5\rho}{\ep} + O(\rho \log 1/\ep)
    \end{equation}
    and
    \begin{equation}
      \label{eq:kb_as}
      I_{\mu_X \otimes \mu_Y}( \KB', \Rpub, (\Pi')^{\rho}; A_1, \ldots, A_n) \leq \frac{C + 5\rho}{\ep} + O(\rho \log 1/\ep).
      \end{equation}
  \end{enumerate}
\end{lemma}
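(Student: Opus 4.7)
The plan is to invoke the Jain--Radhakrishnan--Sen compression result (Theorem~\ref{thm:jain_compress}) on $\Pi$, obtaining a $\rho$-round public-coin protocol $\Pi'$ with communication at most $(C + 5\rho)/\ep + O(\rho \log(1/\ep))$ and a simulated transcript $\hat{\Pi} = (\hat{\Pi}_1, \ldots, \hat{\Pi}_\rho)$ that both parties agree on at the end, satisfying $\Delta((\Rpub, X, Y, \Pi^\rho), (\Rpub, X, Y, \hat{\Pi})) \leq 6\ep\rho$. I would then simply define $\KA' = \KB' := (\hat{\Pi}, \Rpub)$. Item~1 is then immediate because both parties literally produce the same tuple, and the communication bound is inherited from Theorem~\ref{thm:jain_compress}.

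For item~2, the main step is the single-letter inequality $I(\Phi; A_{\ir_r}) \geq L - C - 2\log n$ under $\mu$, where $\Phi := (\Pi^\rho, \Rpub)$; a continuity bound then transfers this to $\hat\Phi := (\hat{\Pi}, \Rpub)$. From the definitions, $I(\Phi; X, Y) = L$ and $I(\Phi; X \mid Y) + I(\Phi; Y \mid X) = C$, so the triple mutual information satisfies $I(X;Y) - I(X;Y \mid \Phi) = I(\Phi; X) - I(\Phi; X \mid Y) = L - C$, and by Lemma~\ref{lem:mu_inf} we get $I(X;Y \mid \Phi) = \ell + C - L$. I would then run the chain
\[
H(A_{\ir_r} \mid \Phi) \;\leq\; \log n + H(A_{\ir_r} \mid \ir_r, \Phi) \;\leq\; \log n + I(X; Y \mid \ir_r, \Phi) \;\leq\; 2\log n + I(X; Y \mid \Phi),
\]
whose steps are, respectively: the decomposition $H(A_{\ir_r}\mid\Phi) \leq H(\ir_r \mid \Phi) + H(A_{\ir_r} \mid \ir_r, \Phi)$ together with $H(\ir_r) \leq \log n$; the fact that under $\mu$ the random variable $A_{\ir_r}$ is a deterministic function of each of $(X,\ir_r)$ and $(Y,\ir_r)$ (because $A_{\ir_r} = B_{\ir_r}$ a.s.), so data processing gives $I(X;Y \mid \ir_r, \Phi) \geq I(A_{\ir_r}; A_{\ir_r} \mid \ir_r, \Phi) = H(A_{\ir_r} \mid \ir_r, \Phi)$; and the identity $I(X;Y \mid \ir_r, \Phi) = I(X;Y \mid \Phi) + I(X; \ir_r \mid Y, \Phi) - I(X; \ir_r \mid \Phi)$ combined with $I(X; \ir_r \mid Y, \Phi) \leq H(\ir_r) \leq \log n$. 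Substituting $I(X;Y\mid\Phi) = \ell + C - L$ gives $I(\Phi; A_{\ir_r}) \geq L - C - 2\log n$, and the Alicki--Fannes--Winter continuity bound applied to the TV distance $6\ep\rho$ with range $\{0,1\}^\ell$ yields $I(\KA'; A_{\ir_r}) = I(\hat\Phi; A_{\ir_r}) \geq L - C - 2\log n - O(\ep\rho\ell) - O(1)$, establishing item~2.

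For item~3, observe that $\KA' = (\hat{\Pi}, \Rpub)$ is a deterministic function of Alice's view $(X, \Rpub, (\Pi')^\rho)$, since $\hat{\Pi}$ is computable by Alice from $X$, the public coins, and the transcript of $\Pi'$. Hence by data processing,
\[
I_{\mu_X \otimes \mu_Y}(\KA', \Rpub, (\Pi')^\rho;\, B_1, \ldots, B_n) \;\leq\; I_{\mu_X \otimes \mu_Y}(X, \Rpub, (\Pi')^\rho;\, B_1, \ldots, B_n).
\]
Under $\mu_X \otimes \mu_Y$ the pair $(X, \Rpub)$ is independent of $(B_1, \ldots, B_n)$, so the right-hand side equals $I((\Pi')^\rho;\, B_1, \ldots, B_n \mid X, \Rpub) \leq H((\Pi')^\rho) \leq \CC(\Pi')$, which matches the desired bound; the symmetric argument yields~\eqref{eq:kb_as}. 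The main technical obstacle is the continuity step in item~2: since $A_{\ir_r}$ ranges over $\{0,1\}^\ell$, the compression error $6\ep\rho$ costs us $\Theta(\ep\rho\ell)$ bits in the lower bound, which is why the free parameter $\ep$ will need to be chosen much smaller than $(L-C)/\ell$ when this lemma is applied downstream.
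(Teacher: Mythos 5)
Your proposal is correct and follows essentially the same route as the paper: compress $\Pi$ via Theorem~\ref{thm:jain_compress}, take the simulated transcript $\hat\Pi^\rho$ (the paper uses $\hat\Pi^\rho$ alone as the key; you append $\Rpub$, which is harmless since the claimed mutual-information lower bound only improves and the item~(3) bound already lumps $\Rpub$ in), derive $H(A_{\ir_r}\mid\Pi^\rho,\Rpub)\le \ell + C - L + 2\log n$ from the identity $I(X;Y\mid\Pi^\rho,\Rpub)=I(X;Y)+\ICint_\mu(\Pi)-\ICext_\mu(\Pi)$ together with two applications of Lemma~\ref{lem:cond_ineq_1} and data processing, transfer to $\hat\Pi^\rho$ by a continuity estimate, and use the Markov chain $\KA' \dd (X,(\Pi')^\rho,\Rpub)\dd Y$ for item~(3). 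Two small points worth tightening: first, the lemma asserts the precise loss $1 + 36\ep\rho\ell$, which is what Corollary~\ref{cor:rev_cond_pinsk} (the paper's conditional reverse-Pinsker bound) delivers from the TV distance $6\ep\rho$; an Alicki--Fannes--Winter bound gives the same $O(\ep\rho\ell)+O(1)$ asymptotics but different constants, so to match the statement verbatim you should use that corollary. Second, for item~(3) the compressed protocol $\Pi'$ is a priori only defined on $\supp(\mu)$, whereas $\supp(\mu_X\otimes\mu_Y)$ is strictly larger (there is no constraint $A_{\jr_0}=B_{\jr_0}$); before writing $I_{\mu_X\otimes\mu_Y}((\Pi')^\rho;B_1,\ldots,B_n\mid X,\Rpub)\le H((\Pi')^\rho)$ one must extend $\Pi'$ to all inputs by assigning arbitrary behavior (e.g., terminating) wherever the next-message distribution is undefined, as the paper does explicitly; with that extension your argument goes through.
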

\begin{proof}
Let $\Pi'$ be the protocol given by Theorem \ref{thm:jain_compress} for the protocol $\Pi$ and the given $\ep$. Then the communication of $\Pi'$ is at most $\frac{C + 5\rho}{\ep} + O(\rho \log 1/\ep))$. At the  end of $\Pi'$, Alice and Bob each possess a random variable $(\hat \Pi_1, \ldots, \hat \Pi_{\rho})$, such that, when $(X,Y) \sim \mu$,
\begin{equation}
  \label{eq:tvd_simulation}
\Delta((\Rpub, X, Y, \hat \Pi_1, \ldots, \hat \Pi_{\rho}), (\Rpub, X, Y, \Pi_1, \ldots, \Pi_{\rho})) \leq 6 \ep \rho.
\end{equation}
(Notice that $\hat \Pi^\rho = (\hat \Pi_1, \ldots, \hat \Pi_\rho)$ is different from the transcript $(\Pi')^\rho = (\Pi'_1, \ldots, \Pi'_\rho)$ of $\Pi'$.) 
Now set $\KA' = \KB' = (\hat \Pi_1, \ldots, \hat \Pi_{\rho})$, which immediately establishes item (1) of the lemma. 

  To establish point (2), we will first argue that it holds for $\Pi$; in particular we show that when $(X,Y) \sim \mu$,
  \begin{equation}
    \label{eq:piprime_inf}
  H(B_{\ir_r} | \Pi^{\rho}) \leq \ell + C - L + 2\log n.
  \end{equation}
(Since $H(B_{i_r}) = \ell$ it will follow from (\ref{eq:piprime_inf}) that $I_\mu( \Pi^{\rho}; B_{\ir_r}) \geq L - C - 2\log n$, though we will not use this directly.)
  To see this, first notice that\footnote{We remark that the equality of $I(X;Y | \Pi^{\rho})$ to (\ref{eq:int_m_ext}) also played a crucial role in \cite{liu2017secret} which derived a characterization of the achievable rate region in terms of the convex envelope of a functional on source distributions.}
  \begin{eqnarray}
    I(X; Y | \Pi^\rho) &=& I(Y ;X,  \Pi^{\rho}) - I(\Pi^\rho; Y) \nonumber\\
                                &=& I(X;Y) + I(\Pi^\rho; Y | X) + I(\Pi^\rho; X | Y) - I(\Pi^\rho; X, Y)\nonumber\\
    \label{eq:int_m_ext}
                             &=& I(X; Y) + \ICint_\mu(\Pi) - \ICext_\mu(\Pi)\\
    &\leq & \ell + C - L\nonumber.
  \end{eqnarray}
  
  Recalling the notation $\ir_r = \pir_r \circ \cdots \circ \pir_1(\ir_0)$, we observe by Lemma \ref{lem:cond_ineq_1} and the data processing inequality that
  \begin{eqnarray}
    I(X; Y | \Pi^\rho) &\geq & I(X; Y | \Pi^\rho, \ir_r) - \log n\nonumber\\
                             & \geq & I(A_{\ir_r}; B_{\ir_r} | \Pi^\rho, \ir_r) - \log n\nonumber\\
                             & \geq & I(A_{\ir_r} ; B_{\ir_r} | \Pi^\rho) - 2\log n\nonumber\\
    & =& H(A_{\ir_r} | \Pi^\rho) - 2 \log n = H(B_{\ir_r} | \Pi^\rho) - 2\log n\nonumber,
  \end{eqnarray}
  since $H(A_{\ir_r} | B_{\ir_r}, \Pi^\rho) = H(A_{\ir_r} | B_{\ir_r}) = 0$ as $A_{\ir_r} = B_{\ir_r}$ for all inputs in the support of $\mu$. It then follows that $H(B_{\ir_r} | \Pi^\rho, \Rpub) \leq \ell + C- L + 2\log n$, establishing (\ref{eq:piprime_inf}). 

  Next, (\ref{eq:tvd_simulation}) and the data processing inequality give us that $\Delta((\Rpub, B_{\ir_r}, \Pi^\rho), (\Rpub, B_{\ir_r}, \hat \Pi^\rho)) \leq 6 \ep \rho$. Corollary \ref{cor:rev_cond_pinsk} and (\ref{eq:piprime_inf}) then give that
  $$
H(B_{\ir_r} | \hat \Pi^\rho, \Rpub) \leq H(B_{\ir_r} | \hat \Pi^\rho) \leq \ell + C- L + 2 \log n + 36 \ep \rho \ell + 1.
$$
Since $\KA' = \hat \Pi^\rho$, we get that
$$
I(B_{\ir_r} ; \KA') \geq L - (C + 1 + 2 \log n + 36\ep \rho \ell),
$$
which establishes point (2).

Finally, to establish point (3), first notice that some inputs $(X,Y) \sim \mu_X \otimes \mu_Y$ may not be in the support of $\mu$. We may extend the protocol $\Pi'$ to be defined for all pairs of inputs $(X,Y) \in \MX \times \MY$, by choosing an arbitrary behavior (e.g., terminating immediately) whenever there is a partial transcript $(\Pi')^{t-1}$ for which the distribution of the next message $\Pi'_t$ has not been defined. 

Recall that $(\Pi'_1, \ldots, \Pi'_\rho)$ denotes the transcript of communication of $\Pi'$ and $\Rpub$ is the public randomness of $\Pi'$, so that when $(X,Y) \sim \mu_X \otimes \mu_Y$,
$$
I_{\mu_X \otimes \mu_Y}((\Pi')^\rho, X, \Rpub; Y) = I_{\mu_X \otimes \mu_Y}((\Pi')^\rho ; Y | X, \Rpub) \leq H_{\mu_X \otimes \mu_Y}((\Pi')^\rho) \leq \frac{C + 5\rho}{\ep} + O(\rho \log 1/\ep).
$$
Recalling that $\KA' = \hat \Pi^\rho$, by construction of $\Pi'$ (and $\hat \Pi$) from Theorem \ref{thm:jain_compress}, it follows that
$$
(\KA', \Rpub, (\Pi')^\rho) \dd (X, (\Pi')^\rho, \Rpub) \dd Y
$$
is a Markov chain. 
It then follows from the data processing inequality that 
$$
I_{\mu_X \otimes \mu_Y}(\KA', \Rpub, (\Pi')^\rho ; B_1, \ldots, B_n) \leq I_{\mu_X \otimes \mu_Y}(\hat \KA', \Rpub, (\Pi')^\rho; Y) \leq \frac{C + 5\rho}{\ep} + O(\rho \log 1/\ep),
$$
which gives (\ref{eq:ka_bs}); (\ref{eq:kb_as}) follows in a similar manner.
\end{proof}

Roughly speaking, the next lemma, Lemma \ref{lem:test_i}, shows how the protocol $\Pi'$ constructed in Lemma \ref{lem:get_useful_info} can use the properties (2) and (3) of Lemma \ref{lem:get_useful_info} to distinguish between the distributions $\mu$ ($\nu_1$ in the below statement) and $\mu_X \otimes \mu_Y$ ($\nu_2$ in the below statement). This, in combination with the result from Theorem \ref{thm:indist} stating that $\mu$ and $\mu_X \otimes \mu_Y$ are indistinguishable to protocols with little communication, will ultimately complete the proof of Theorem \ref{thm:am_formal}.
\begin{lemma}
  \label{lem:test_i}
  Suppose $\nu_1, \nu_2$ are distributions over tuples of random variables $(Z_1, \ldots, Z_n, I, K,\tilde K)$, where $Z_1, \ldots, Z_n \in \{0,1\}^\ell$, $I \in [n]$, and $K \in \MK$, where $\MK$ is a finite set. Suppose that the marginal distribution of $Z_1, \ldots, Z_n, I$ over each of $\nu_1, \nu_2$ is uniform over $\{0,1\}^{n\ell} \times [n]$. Finally suppose that  $0 < \xi < 1$ and $C$ satisfy $\log n \leq C \leq \frac{(1-\xi)^3 \ell}{1620}$ as well as:
  \begin{enumerate}
  \item $I_{\nu_1}(K; Z_1, \ldots, Z_n) \leq C$.
  \item $I_{\nu_2}(K; Z_I) \geq \ell(1-\xi)$.
  \item $\p_{\nu_2}[K = \tilde K] = 1$, and $\p_{\nu_1}[K = \tilde K] \geq 1 - (1-\xi)^2/36$.
  \end{enumerate}
  Then there is some function $f : \MK \times \{0,1\}^{n\ell} \ra \{0,1\}$ such that
  $$
\left| \E_{\nu_1}[f(\tilde K, Z_1, \ldots, Z_n)] - \E_{\nu_2}[f(\tilde K, Z_1, \ldots, Z_n)] \right| \geq p/2,
$$
where $p = (1-\xi)^2/18$.
\end{lemma}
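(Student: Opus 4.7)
The plan is a \textbf{predict-and-check} test: from the key $k$ we build a short list $S_k \subseteq \{0,1\}^\ell$ of plausible values for $Z_I$ under $\nu_2$, and then check whether any $z_i$ lands in $S_{\tilde k}$. Concretely, setting $\tau := (1+\xi)/2$, define
\[
S_k := \bigl\{z \in \{0,1\}^\ell \;:\; \Pr_{\nu_2}[Z_I = z \mid K = k] \geq 2^{-\tau\ell}\bigr\}
\quad\text{and}\quad
f(k, z_1,\ldots,z_n) := \One\bigl[\exists i\in[n]:\ z_i \in S_k\bigr].
\]
Since the conditional probabilities sum to $1$, $|S_k|\leq 2^{\tau\ell}=2^{(1+\xi)\ell/2}$, so $S_k$ is an exponentially small fraction of $\{0,1\}^\ell$.

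For the lower bound under $\nu_2$, the uniform marginal on $Z_I$ and hypothesis~(2) give $H_{\nu_2}(Z_I\mid K) \leq \xi\ell$. Applying Markov's inequality to $-\log\Pr_{\nu_2}[Z_I\mid K]$, whose $\nu_2$-expectation is exactly $H_{\nu_2}(Z_I\mid K)$, yields $\Pr_{\nu_2}[Z_I \notin S_K] \leq \xi/\tau = 2\xi/(1+\xi)$. Combined with $\tilde K = K$ almost surely under $\nu_2$ (from hypothesis~(3)), this gives
\[
\E_{\nu_2}\bigl[f(\tilde K, Z_{1..n})\bigr] \;\geq\; \Pr_{\nu_2}[Z_I \in S_K] \;\geq\; \frac{1-\xi}{1+\xi} \;\geq\; \frac{1-\xi}{2}.
\]

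For the upper bound under $\nu_1$, first use $\Pr_{\nu_1}[\tilde K \neq K] \leq (1-\xi)^2/36$ to replace $\tilde K$ by $K$, at a cost of at most $(1-\xi)^2/36$. Let $\pi := (\nu_1)_K \otimes \mathrm{Unif}(\{0,1\}^{n\ell})$; by the uniform marginal on $Z_{1..n}$ this agrees with $\nu_1$ on its marginals, so $\KL(\nu_1\|\pi) = I_{\nu_1}(K;Z_{1..n}) \leq C$. Under $\pi$ a union bound gives $\Pr_\pi[\exists i: Z_i\in S_K] \leq n\,\E|S_K|/2^\ell \leq n\cdot 2^{-(1-\xi)\ell/2}$. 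The logarithmic change-of-measure inequality
\[
\Pr_{\mu}[E]\cdot \log\!\tfrac{1}{\Pr_\nu[E]} \;\leq\; \KL(\mu\|\nu) + 2
\]
(derived from $d(p,q) \geq p\log(1/q) - H_2(p) - (1-p)\log(1/(1-q))$ applied to the binary KL divergence $d$) then yields
\[
\Pr_{\nu_1}[\exists i: Z_i \in S_K] \;\leq\; \frac{C+2}{(1-\xi)\ell/2 - \log n}.
\]
The hypothesis $\log n \leq C \leq (1-\xi)^3\ell/1620$ is calibrated so that $(1-\xi)\ell/2 - \log n \geq 809(1-\xi)\ell/1620$, making this quantity at most $(1-\xi)^2/400$ (absorbing the additive $+2$ via the implicit lower bound $\ell \geq 1620/(1-\xi)^3$ coming from $\log n \geq 1$). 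Hence $\E_{\nu_1}[f] \leq (1-\xi)^2/36 + (1-\xi)^2/400$.

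Putting the pieces together, the advantage is at least $(1-\xi)/2 - (1-\xi)^2/36 - (1-\xi)^2/400$, which comfortably exceeds $p/2 = (1-\xi)^2/36$ for every $\xi \in (0,1)$ (one can verify $(1-\xi)/2 \geq 3(1-\xi)^2/36$ since $6 \geq 1-\xi$). The delicate step is the change-of-measure bound: Pinsker's inequality would give only $|\Pr_{\nu_1}[E]-\Pr_\pi[E]| \leq \sqrt{C/2}$, which is useless when $C$ is a constant fraction of $\ell$. The logarithmic form is crucial because it exploits the exponentially small value of $\Pr_\pi[E]$ (which itself relies on $S_K$ being exponentially smaller than $\{0,1\}^\ell$) to absorb a mutual-information budget that grows linearly in $\ell$.
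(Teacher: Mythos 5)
Your proof is correct, and it takes a genuinely different route from the paper's. The paper proves the $\nu_1$ upper bound by introducing a ``good set'' $\MS$ of keys $k$ with low conditional entropy $H(Z_I\mid K=k)$, then defining $\hat I$ as the first index $i$ with $Z_i\in\MT_K$, bounding $I_{\nu_1}(K;Z_{\hat I})$, and finally invoking a small-set/high-min-entropy lemma to control $\Pr[Z_{\hat I}\in\MT_K]$ conditionally on each $k$; it also needs a separate total-variation argument relating the marginal laws of $K$ under $\nu_1$ and $\nu_2$. Your version replaces all of that with a single logarithmic change-of-measure step: comparing $\nu_1$ to the product $\pi = (\nu_1)_K\otimes\mathrm{Unif}$, exploiting that the event $\{\exists i: Z_i\in S_K\}$ is exponentially unlikely under $\pi$, and paying the mutual-information budget $C$ through $\KL(\nu_1\|\pi)=I_{\nu_1}(K;Z_{1..n})$. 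This is cleaner and avoids the auxiliary random variable $\hat I$ and the marginal-of-$K$ comparison entirely. Your $\nu_2$ lower bound (Markov on $-\log\Pr[Z_I\mid K]$ against a single threshold $2^{-\tau\ell}$) is also a streamlined version of the paper's two-parameter $(\eta,\zeta)$ construction. One small slip: the stated identity for the binary divergence should read $d(p\|q)=p\log(1/q)+(1-p)\log(1/(1-q))-H_2(p)$, with a plus sign on the middle term; dropping it (since it is nonnegative) yields $p\log(1/q)\leq\KL+H_2(p)\leq\KL+1$, so the bound you use with ``$+2$'' is slightly loose but valid. Your proof implicitly requires $n\geq 2$ so that $\log n\geq 1$ forces $\ell$ large; this is a harmless hypothesis implicit in the lemma's use of $\log n\leq C$ and in the paper's own argument.
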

We first establish some basic lemmas before proving Lemma \ref{lem:test_i}.
\begin{lemma}
  \label{lem:entropy_set}
  Suppose $W \in \{0,1\}^\ell$ is a random variable, and $H(W) = c$. For any $\delta \in (0,1]$ there is some set $\MS \subset \{0,1\}^\ell$ such that $|\MS| \leq 2^{c/\delta}$ and $\p[W \not \in \MS] \leq \delta$.
\end{lemma}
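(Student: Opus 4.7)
The plan is to pick $\MS$ to be the set of ``heavy'' outcomes, namely those whose probability is not too small, and then control the tail via a Markov-type argument applied to the entropy.

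Concretely, I would define
\[
\MS := \bigl\{ w \in \{0,1\}^\ell : \p[W=w] \geq 2^{-c/\delta} \bigr\}.
\]
The size bound is immediate from summing probabilities: since $1 \geq \sum_{w \in \MS} \p[W=w] \geq |\MS| \cdot 2^{-c/\delta}$, we get $|\MS| \leq 2^{c/\delta}$.

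For the tail bound, I would note that every $w \notin \MS$ satisfies $\log(1/\p[W=w]) > c/\delta$, so
\[
c \;=\; H(W) \;=\; \sum_{w} \p[W=w]\,\log\!\tfrac{1}{\p[W=w]} \;\geq\; \sum_{w \notin \MS} \p[W=w]\,\log\!\tfrac{1}{\p[W=w]} \;\geq\; \tfrac{c}{\delta}\cdot \p[W \notin \MS],
\]
which rearranges to $\p[W \notin \MS] \leq \delta$. (The degenerate case $c = 0$ is trivial: $W$ is a.s.\ constant, so take $\MS$ to be the singleton of that value, noting $2^{0/\delta} = 1$.)

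There is no real obstacle here; the only subtlety is ensuring the computation handles the $c=0$ case and that one thinks of the entropy as Markov's inequality applied to the nonnegative random variable $\log(1/\p[W])$, whose expectation equals $c$.
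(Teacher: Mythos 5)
Your proposal is correct and takes essentially the same approach as the paper: you define $\MS$ to be the set of outcomes with probability at least $2^{-c/\delta}$, bound its size by summing probabilities, and bound the tail via Markov's inequality applied to $\log(1/\p[W=\cdot])$. The explicit handling of $c=0$ is a minor addition not present in the paper but causes no issue.
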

\begin{proof}
  Set
  $$
\MS = \{ w \in \{0,1\}^\ell : \p[W = w] \geq 2^{-c/\delta}\}.
$$
We know that $c = H(W) = \E_{w \sim W}[\log(1/\p[W=w])]$, so the probability that $\p[W=w] < 2^{-c/\delta}$, i.e.~that $\log(1/\p[W=w]) > c/\delta$, over $w \sim W$ is at most $\delta$. Thus $\p[W \not \in \MS] \leq \delta$. Clearly, by the definition of $\MS$, we have that $|\MS| \leq 2^{c/\delta}$.
\end{proof}

\begin{lemma}
  \label{lem:zi_lb}
  Suppose that random variables $I, Z_1, \ldots, Z_n$ are distributed jointly so that the marginal of $Z_1, \ldots, Z_n \in \{0,1\}^\ell$ is uniform on $\{0,1\}^{n\ell}$. Then $H(Z_{I}) \geq \ell - \log n$.
\end{lemma}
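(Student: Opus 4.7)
The plan is to bound $H(Z_I)$ from below by peeling away the conditioning on $I$ and the ``other'' coordinates $Z_{-I} := (Z_1, \ldots, Z_{I-1}, Z_{I+1}, \ldots, Z_n)$ in the right order, using only the assumption $H(Z_1, \ldots, Z_n) = n\ell$ and the trivial bound $H(I) \leq \log n$.

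First, I would observe that since conditioning reduces entropy,
\[
H(Z_I) \;\geq\; H(Z_I \mid I, Z_{-I}),
\]
so it suffices to lower-bound the right-hand side. Next, noting that the tuple $(Z_I, Z_{-I}, I)$ is simply a deterministic rearrangement of $(Z_1, \ldots, Z_n, I)$, the chain rule gives
\[
H(Z_I \mid I, Z_{-I}) \;=\; H(Z_1, \ldots, Z_n \mid I) - H(Z_{-I} \mid I).
\]

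For the first term on the right, the uniform marginal assumption yields $H(Z_1, \ldots, Z_n) = n\ell$, so
\[
H(Z_1, \ldots, Z_n \mid I) \;=\; H(Z_1, \ldots, Z_n) - I(Z_1, \ldots, Z_n ; I) \;\geq\; n\ell - H(I) \;\geq\; n\ell - \log n.
\]
For the second term, conditional on any value $I = i$, the random variable $Z_{-i}$ takes at most $2^{(n-1)\ell}$ values, so $H(Z_{-I} \mid I = i) \leq (n-1)\ell$ and hence $H(Z_{-I} \mid I) \leq (n-1)\ell$. Combining these two bounds gives
\[
H(Z_I \mid I, Z_{-I}) \;\geq\; (n\ell - \log n) - (n-1)\ell \;=\; \ell - \log n,
\]
and together with the first display this yields $H(Z_I) \geq \ell - \log n$, as claimed.

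There is no real obstacle here; the only thing to be careful about is the bookkeeping that $(Z_I, Z_{-I}, I)$ carries the same information as $(Z_1, \ldots, Z_n, I)$ (so the chain rule step is valid as an equality rather than an inequality), and the direction of the inequality ``conditioning reduces entropy'' in the two places it is used.
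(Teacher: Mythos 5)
Your proof is correct and is essentially the same argument as the paper's: both rely on the bound $H(Z_1,\ldots,Z_n \mid I) \geq n\ell - \log n$ and then peel off the remaining $n-1$ coordinates (contributing at most $(n-1)\ell$) via the chain rule. The only cosmetic difference is that the paper packages the ``other'' coordinates as the cyclically shifted tuple $(Z_{I+1},\ldots,Z_{I+n-1})$ (indices mod $n$) so that it need not condition on $I$ in the final display, whereas you condition on $I$ and $Z_{-I}$ directly; both routes are equally valid.
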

\begin{proof}
  Notice that
  \begin{eqnarray}
    H(Z_I, Z_{I+1}, \ldots, Z_{I+n-1}) & \geq & H(Z_I, \ldots, Z_{I+n-1} | I)\nonumber\\
                                       &=& \E_{i \sim I}\left[ H(Z_i, \ldots, Z_{i+n-1} | I=i)\right]\nonumber\\
                                       &=& \E_{i \sim I}\left[ H(Z_1, \ldots, Z_n | I=i) \right]\nonumber\\
                                       &=& H(Z_1, \ldots, Z_n | I)\nonumber\\
                                       & \geq & \ell n - \log n,
  \end{eqnarray}
  where addition of subscripts is taken modulo $n$. 
  Since $(Z_{I+1}, \ldots, Z_{I+n-1}) \in \{0,1\}^{\ell n-\ell}$, we get that
  $$
H(Z_I) \geq H(Z_I | Z_{I+1}, \ldots, Z_{I+n-1}) \geq H(Z_I, \ldots, Z_{I+n-1}) - (\ell n-\ell) \geq \ell-\log n,
$$
as desired.
\end{proof}

\begin{lemma}
  \label{lem:hient_smallset}
  Suppose that $W \in \{0,1\}^\ell$ is a random variable with $H(W) = h \leq \ell$. Let $\MS \subset \{0,1\}^\ell$ be a subset with size $|\MS| \leq 2^c$, for some $c < \ell$. Then $\p[W \in \MS] \leq \frac{\ell+1-h}{\ell-c}$.
\end{lemma}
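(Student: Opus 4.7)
The plan is a short entropy decomposition argument. Let $E = \mathbbm{1}[W \in \MS]$ be the indicator of the event of interest, and write $p = \Pr[W \in \MS] = \Pr[E=1]$.

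First I would use that $H(W) = H(W,E)$ (since $E$ is a deterministic function of $W$) and expand via the chain rule as
$$h \;=\; H(W) \;=\; H(E) + H(W \mid E) \;\leq\; 1 + p \cdot H(W \mid E=1) + (1-p)\cdot H(W \mid E=0).$$
The bound $H(E) \leq 1$ is immediate since $E$ is binary.

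Next I would bound the two conditional entropies by the log of the support sizes. Conditioned on $E=1$, $W$ is supported inside $\MS$, which has $|\MS| \leq 2^c$, so $H(W \mid E=1) \leq c$. Conditioned on $E=0$, $W$ is supported in $\{0,1\}^\ell$, so $H(W \mid E=0) \leq \ell$. Substituting,
$$h \;\leq\; 1 + pc + (1-p)\ell \;=\; 1 + \ell - p(\ell - c).$$

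Rearranging (and using $c < \ell$ so that $\ell - c > 0$) gives $p(\ell-c) \leq \ell + 1 - h$, i.e. $\Pr[W \in \MS] \leq \frac{\ell+1-h}{\ell-c}$, which is the desired bound. There is no real obstacle here — the only slightly nonstandard piece is tolerating the additive $+1$ coming from $H(E) \leq 1$, which is exactly what produces the $\ell+1-h$ in the numerator rather than $\ell - h$.
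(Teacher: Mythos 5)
Your proof is correct and is essentially the same as the paper's: both introduce the indicator $J=\One[W\in\MS]$, use the chain rule to get $H(W\mid J)\geq H(W)-1$, bound $H(W\mid J)\leq pc+(1-p)\ell$ by the support sizes, and rearrange. (The paper momentarily writes $\log(2^\ell-2^c)$ for the $E=0$ branch before relaxing it to $\ell$, which is what you use directly; this makes no difference.)
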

\begin{proof}
  Write $p = \p[W \in \MS]$. Let $J = \One[W \in \MS]$. Then $pc + (1-p)\ell \geq pc + (1-p)\log(2^\ell - 2^c) \geq H(W | J) \geq H(W) - 1 = h-1$. Hence $p(c - \ell) \geq h-1-\ell$, so $p \leq \frac{\ell+1-h}{\ell-c}$. 
\end{proof}

Now we prove  Lemma \ref{lem:test_i}.
\begin{proof}[Proof of Lemma \ref{lem:test_i}]
  We will first define $f$ and determine a lower bound on $\E_{\nu_2}[f(K, Z_1, \ldots, Z_n)]$. By assumption, $H_{\nu_2}(Z_I) = \ell$, so $H_{\nu_2}(Z_I | K) \leq \xi \ell$. For each $k \in \MK$, let $\gamma_k = H(Z_I | K=k)/\ell$, so that $\E_{k \sim K}[\gamma_k] \leq \xi$. Pick some $\eta > 1, \zeta > 1$ to be specified later. By Lemma \ref{lem:entropy_set}, for each $k \in \MK$, there is a set $\MT_k \subset \{0,1\}^\ell$ of size at most $2^{\eta \gamma_k \ell}$ such that $\p_{\nu_2}[Z_I \not \in \MT_k | K=k] \leq 1/\eta$. Next, set $\MS = \{ k \in \MK : \gamma_k \leq \zeta \xi\}$. By Markov's inequality, $\p_{\nu_2}[K \in \MS] \geq 1 - 1/\zeta$. Thus $\p_{\nu_2}[K \in \MS] \cdot \p_{\nu_2}[Z_I \in T_K | K \in \MS] \geq (1-1/\zeta) \cdot (1-1/\eta)$, and for all $k \in \MS$, $|\MT_k| \leq 2^{\eta \zeta\xi \ell} < 2^{\eta \zeta \ell}$.

  We now set
  $$
  f(K, Z_1, \ldots, Z_n) = \begin{cases}
    \bigvee_{i \in [n]} \One[Z_i \in \MT_K] \quad : \quad K \in \MS \\
    0 \quad : \quad \mbox{else}.
    \end{cases}
$$
Since $\p[K \in \MS] \cdot \p[Z_I \in \MT_K | K \in \MS] \leq \E\left[\vee_{i \in[n]} \One[Z_i \in \MT_K]\right]$, $$\E_{\nu_2}[f(K, Z_1, \ldots, Z_n)] \geq (1-1/\eta) \cdot (1-1/\zeta).$$

Next we determine an upper bound on $\E_{\nu_1}[f(K, Z_1, \ldots, Z_n)]$. Define a random variable $\hat I = \hat I(Z_1, \ldots, Z_n, K)$, by $\hat I = \min \{ i \ :\ Z_i \in \MT_K\}$, if the set $\{ i \ : \ Z_i \in \MT_K\}$ is nonempty, else $\hat I = 1$. Thus $H(\hat I) \leq \log n$. Consider the random variable $Z_{\hat I} \in \{0,1\}^\ell$. It follows that $f(K, Z_1, \ldots, Z_n) \leq \One[Z_{\hat I} \in \MT_K]$. By Lemma \ref{lem:cond_ineq_1} and the data processing inequality, we have that
$$
I_{\nu_1}(K; Z_{\hat I}) - \log n \leq I_{\nu_1}(K; Z_{\hat I} | \hat I) \leq I_{\nu_1}(K; Z_1, \ldots, Z_n| \hat I) \leq I_{\nu_1}(K; Z_1, \ldots, Z_n) + \log n \leq C + \log n.
$$
Lemma \ref{lem:zi_lb} gives that $H_{\nu_1}(Z_{\hat I}) \geq \ell - \log n$, so $H_{\nu_1}(Z_{\hat I} | K) \geq \ell - C - 3 \log n$. For each $k \in \MK$, let $h_k = H_{\nu_1}(Z_{\hat I} | K=k)$, so that $\E_{\nu_1}[h_K] \geq \ell - C - 3\log n$. By Lemma \ref{lem:hient_smallset}, for each $k \in \MK$ with $\eta \gamma_k < 1$, $\p[Z_{\hat I} \in \MT_K | K=k] \leq \frac{\ell+1-h_k}{\ell(1 - \eta \gamma_k)}$,
by our upper bound $|\MT_k| \leq 2^{\eta \gamma_k \ell}$.

Recall that $\E_{\nu_2}[\gamma_k] \leq \xi$. For $i \in \{1,2\}$, let $K_{\nu_i}$ be the marginal distribution of $K$ according to $\nu_i$. We must have that $\Delta(K_{\nu_2}, K_{\nu_1}) < p$, else we could choose $f$ to be a function of only $K$ and would get that $|\E_{\nu_1}[f] - \E_{\nu_2}[f]| \geq p$. Thus $1 - 1/\zeta -p \leq \p_{\nu_1}[K \in \MS] \leq 1$. Next notice that $\E_{\nu_1}[\ell-h_K] \leq C + 3\log n$, and that $\ell-h_K \geq 0$ with probability 1. Therefore, $\E_{\nu_1}[\ell-h_K | K \in \MS] \leq \frac{C + 3\log n}{1 - 1/\zeta - p}$. Since $\gamma_k \leq \zeta \xi$ for all $k \in \MS$, it follows that
\begin{eqnarray}
  \E_{\nu_1}[f(K, Z_1, \ldots, Z_n)] & \leq & \E_{\nu_1}[f(K, Z_1, \ldots, Z_n) | K \in \MS]\nonumber\\
                                     & \leq & \p_{\nu_1}[Z_{\hat I} \in \MT_K | K \in \MS]\nonumber\\
                                     & \leq & \frac{ 1 + \frac{C + 3\log n}{1-1/\zeta - p}}{\ell(1 - \eta \zeta \xi)}\nonumber.
\end{eqnarray}
Thus
$$
\E_{\nu_2}[f(K, Z_1, \ldots, Z_n)] - \E_{\nu_1}[f(K, Z_1, \ldots, Z_n)] \geq (1-1/\zeta) \cdot \left( (1-1/\eta) - \frac{1}{1-1/\zeta} \cdot  \frac{ 1 + \frac{C + 3\log n}{1-1/\zeta - p}}{\ell(1 - \eta \zeta \xi)}\right).
$$
Now, choose $\eta = \zeta = \xi^{-1/3}$, and let $\xi' = 1-\xi$, so that $p \leq \xi'/6 \leq \frac{1 - (1-\xi')^{1/3}}{2} = \frac{1-1/\zeta}{2}$. Using the inequality $ax \leq 1 - (1-x)^a \leq x$ for $0 < a < 1$, $x \in [0,1]$ and $C \geq \log n$ gives
\begin{eqnarray}
  \E_{\nu_2}[f(K, Z_1, \ldots, Z_n)] - \E_{\nu_1}[f(K, Z_1, \ldots, Z_n)] & \geq & \xi'/3 \cdot \left( \xi'/3 - \frac{1}{(1-1/\zeta)(1 - 1/\zeta - p)} \cdot \frac{15C}{\xi' \ell}\right)\nonumber\\
                                & \geq & \xi'/3 \cdot \left( \xi'/3 - \frac{270 C}{(\xi')^2 \ell} \right)\nonumber\\
  & \geq & (\xi')^2 / 18 = p\nonumber,
\end{eqnarray}
where the last inequality follows from $C \leq \frac{(\xi')^3 \ell}{1620}$.

Since $K = \tilde K$ over $\nu_2$ and are only nonequal with probability at most $p/2$ over $\nu_1$, it follows that
$$
  \E_{\nu_2}[f(\tilde K, Z_1, \ldots, Z_n)] - \E_{\nu_1}[f(\tilde K, Z_1, \ldots, Z_n)] \geq p/2,
  $$
  as desired.
\end{proof}

\subsection{Proof of Theorem \ref{thm:am_formal}}
\label{sec:am_formal_proof}
Using Lemmas \ref{lem:get_useful_info}, \ref{lem:test_i}, and Theorem \ref{thm:achievability_main}, we now may prove Theorem \ref{thm:am_formal}:
\begin{proof}[Proof of Theorem \ref{thm:am_formal}]\
  The first part of Theorem \ref{thm:am_formal} follows in the same way as the amortized case: given $N$ i.i.d.~samples of $(X,Y) \sim \mu_{r,n,\ell}$, by following the pointers for each sample, Alice and Bob can use $r+2$ rounds of communication (simultaneously over all samples), communicate a total of $(r+2) \lceil \log n \rceil $ bits, and generate $N$ i.i.d.~strings uniformly distributed on $\{0,1\}^\ell$. Their resulting keys (of length $N\ell$) will agree with probability 1 and be independent of the transcript of communication.
  
  To prove the second part of Theorem \ref{thm:am_formal}, first suppose $r$ is odd. We take $\mu = \mu_{r,n,\ell}$ and set $\ep = \gamma/  (54 (r+1))$.

    We argue by contradiction. Suppose the theorem statement is false: namely, that for some $C \leq n/\log^{c_0} n$ and $L > \gamma \ell$, the tuple $(C,L)$ is $\lfloor (r+1 )/ 2\rfloor$-achievable from $\mu$. We can assume without loss of generality that $L < \ell$. 
    By Theorem \ref{thm:achievability_main} (and in particular, Corollary \ref{cor:det_inf}), since $I_\mu(X;Y) = \ell > L$, there is a $\lfloor (r+1)/2 \rfloor$-round protocol $\Pi$ such that $\ICint_\mu(\Pi) \leq C$ and $\ICext_\mu(\Pi) \geq L$.

    By Lemma \ref{lem:get_useful_info}, there is an $\lfloor (r+1)/2 \rfloor$-round public-coin protocol $\Pi'$ with inputs $(X,Y) \sim \mu$ and communication at most $\frac{C + 3 + 5r/2}{\ep} + O(r \log 1/\ep)$ such that at the end of $\Pi'$ with inputs  $(X,Y) \sim \mu$, Alice and Bob output keys $\KA'= \KB'$, respectively, which satisfy $I_{\mu}(\KB'; B_{\ir_r}) \geq L - (C + 1 +2\log n + 18 \ep (r+1) \ell)$. Moreover, when $(X,Y) \sim \mu_X \otimes \mu_Y$,
    \begin{equation}
    \max \{ I_{\mu_X \otimes \mu_Y}(\KA' ; B_1, \ldots, B_n),  I_{\mu_X \otimes \mu_Y}(\KB' ; A_1, \ldots, A_n) \}\leq \frac{C + 3 + 5r/2}{\ep} + O(r \log 1/\ep).\nonumber
    \end{equation}
    Next, let $\Pi''$ be the protocol where the parties run $\Pi'$, and the last party (suppose it is Alice, for concreteness) to speak in $\Pi'$ sends over a random hash $h(\KA')$ of length $O( \log 1/ \gamma )$, so that for any $\KA' \neq \KB'$, $\p_h[h(\KA') = h(\KB')] \leq \gamma^2 / 648$, and the other party, Bob, outputs a final bit equal to $\One[h(\KA') = h(\KB')]$. For sufficiently large $n$, we have that \begin{equation}
            \label{eq:pipp_cc}
            \CC(\Pi'') \leq \frac{C + 3 + 5r/2}{\ep} + O(r \log 1/\ep) + O(\log 1/\gamma) \leq n/\log^{(c_0 - 1)}n.
            \end{equation}
    \begin{claim}
      \label{clm:am_dist}
      $\Pi''$ distinguishes $\mu$ and $\mu_X \otimes \mu_Y$ with advantage at least $\gamma^2 / 324$.
    \end{claim}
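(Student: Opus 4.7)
The plan is to reduce Claim~\ref{clm:am_dist} to showing $q := \p_{\mu_X \otimes \mu_Y}[\KA' = \KB'] \leq 1 - \gamma^2/162$, and then to establish this bound by a contradiction argument using Lemma~\ref{lem:test_i} together with Theorem~\ref{thm:indist_bggs}. By Lemma~\ref{lem:get_useful_info}(1), $\Pi''$ outputs $1$ with probability $1$ under $\mu$, and by the hash collision bound, under $\mu_X \otimes \mu_Y$ it outputs $1$ with probability at most $q + (1-q)\gamma^2/648$. The distinguishing advantage of $\Pi''$ is therefore at least $(1-q)(1 - \gamma^2/648)$; once $1 - q \geq \gamma^2/162$ is established, this is at least $(\gamma^2/162)(647/648) \geq \gamma^2/324$, as required.

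To prove $q \leq 1 - \gamma^2/162$, suppose for contradiction that $q > 1 - \gamma^2/162$, and invoke Lemma~\ref{lem:test_i} with $\nu_2 = \mu$, $\nu_1 = \mu_X \otimes \mu_Y$, $K = \KA'$, $\tilde K = \KB'$, $Z_i = B_i$, $I = \ir_r$, and $\xi = 1 - \gamma/2$. The marginal of $(B_1,\ldots,B_n,\ir_r)$ is uniform on $\{0,1\}^{n\ell} \times [n]$ under both distributions, since in each case $\ir_r$ is the image of the uniform $\ir_0$ under $r$ independent uniform random permutations and is independent of the uniform $B_1,\ldots,B_n$. Hypothesis~(1) of Lemma~\ref{lem:test_i} with bound $C'' := \CC(\Pi')$ is exactly Lemma~\ref{lem:get_useful_info}(3). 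Hypothesis~(2), $I_\mu(\KA'; B_{\ir_r}) \geq (1-\xi)\ell = \gamma\ell/2$, follows from Lemma~\ref{lem:get_useful_info}(2) together with $L > \gamma\ell$, $C \leq n/\log^{c_0}n$, and the choice $\ep = \gamma/(54(r+1))$, which keeps the slack $36\ep\rho\ell$ (with $\rho = \lfloor(r+1)/2\rfloor$) bounded by $\gamma\ell/3$. Hypothesis~(3) follows from Lemma~\ref{lem:get_useful_info}(1) and the contradiction hypothesis: $1 - q < \gamma^2/162 < \gamma^2/144 = (1-\xi)^2/36$. The range condition $\log n \leq C'' \leq (1-\xi)^3\ell/1620 = \gamma^3 n / 12960$ holds for all sufficiently large $n$ once $c_0$ is chosen large enough, since $C''$ is $O(n/(\gamma \log^{c_0}n))$ up to lower-order terms.

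Lemma~\ref{lem:test_i} then produces a function $f : \MK \times \{0,1\}^{n\ell} \to \{0,1\}$ with $|\E_\mu[f(\KB', B_1, \ldots, B_n)] - \E_{\mu_X \otimes \mu_Y}[f(\KB', B_1, \ldots, B_n)]| \geq p/2 = \gamma^2/144$. Let $\Pi'''$ be the protocol obtained by running $\Pi'$ and then having Bob transmit the single bit $f(\KB', B_1, \ldots, B_n)$ as a final message; it has at most $\lfloor(r+3)/2\rfloor$ rounds, communication at most $\CC(\Pi') + 1 \leq n/\log^{c_0-1}n$, and its transcript distinguishes $\mu$ from $\mu_X \otimes \mu_Y$ with advantage at least $\gamma^2/144$. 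For $c_0$ sufficiently large relative to the $\beta$ produced by Theorem~\ref{thm:indist_bggs} with distinguishing parameter $\gamma^2/144$, this contradicts Theorem~\ref{thm:indist_bggs}, forcing $q \leq 1 - \gamma^2/162$ and completing the proof. The main technical obstacle is the constant bookkeeping inside Lemma~\ref{lem:test_i}: the choice $\xi = 1 - \gamma/2$ is exactly what reconciles hypotheses~(2) and~(3) with the assumed bounds $L > \gamma\ell$ and $1 - q < \gamma^2/162$.
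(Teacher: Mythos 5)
Your proof is correct, and it does cover all the pieces the argument needs, but it takes a different route from the paper's, and the difference is worth pointing out. The paper's proof of Claim~\ref{clm:am_dist} is a direct two-case split on the disagreement probability under $\mu_X \otimes \mu_Y$: if disagreement is large ($\geq \gamma^2/324$), the hash check in $\Pi''$ immediately gives a distinguisher; if disagreement is small ($\leq \gamma^2/324$), Lemma~\ref{lem:test_i} (with $\xi = 1 - \gamma/3$) produces a low-communication distinguishing protocol. Crucially, the paper does \emph{not} invoke Theorem~\ref{thm:indist_bggs} inside the proof of the claim; that theorem is applied once, afterward, to turn the claim into the final contradiction with the achievability assumption. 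Your proof instead folds Theorem~\ref{thm:indist_bggs} inside the claim: you prove the sub-claim $1-q \geq \gamma^2/162$ by assuming the opposite, invoking Lemma~\ref{lem:test_i} (with $\xi = 1 - \gamma/2$), building the protocol $\Pi'''$, and deriving a contradiction with Theorem~\ref{thm:indist_bggs}. You then conclude that $\Pi''$ itself distinguishes. The constants work out (you correctly adjust the slack $C + 1 + 2\log n$ to fit the tighter demand $\gamma\ell/2$ from $\xi = 1-\gamma/2$, and the agreement threshold $1 - \gamma^2/162$ dominates $1 - (1-\xi)^2/36 = 1-\gamma^2/144$).

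What each approach buys: your version has the cosmetic advantage of proving the claim literally as stated --- the paper's case~2 actually produces a different low-communication distinguisher, not $\Pi''$, so the claim as worded is slightly imprecise in the original --- and it also gets the advertised $\gamma^2/324$ advantage cleanly, whereas the paper's case~1 really only yields advantage $\gamma^2/648$ (the paper's bound has a stray factor of two, absorbed later by invoking Theorem~\ref{thm:indist_bggs} at $\gamma^2/325$). The downside of your route is that it is somewhat circular in spirit: the ``contradiction'' you derive in the sub-claim is exactly the contradiction the overall proof of Theorem~\ref{thm:am_formal} is headed for, so you end up invoking Theorem~\ref{thm:indist_bggs} twice (once to force $q$ small, and again after the claim to refute $\Pi''$). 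The paper's unadorned two-case split is more economical: either case exhibits a low-communication distinguisher, and the single application of Theorem~\ref{thm:indist_bggs} afterward does the rest. Both are valid; yours repackages the same case analysis inside the sub-claim's contradiction argument.
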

    \begin{proof}
      To prove Claim \ref{clm:am_dist}, we consider two cases.

      The first case is that $\p_{\mu_X \otimes \mu_Y} [\KA' \neq \KB'] \geq \gamma^2 / 324$. In this case, the last bit output by Bob will be 0 with probability at least $\gamma^2 / 648$ when $(X,Y) \sim \mu_X \otimes \mu_Y$. Since $\KA' = \KB'$ with probability 1 when $(X,Y) \sim \mu$, it follows that $\Pi''$ distinguishes between the two distributions with advantage at least $\gamma^2/648$ in this case.

      The second case is that $\p_{\mu_X \otimes \mu_Y} [\KA' \neq \KB'] \leq \gamma^2 / 324$. Here we will use Lemma \ref{lem:test_i}. Since $18 \ep (r+1) \leq \gamma/3$, and since for sufficiently large $n$, $C + 1+ 2\log n \leq \gamma n/3 = \gamma \ell /3$, we see that $I_\mu(\KB'; B_{\ir_r}) \geq \gamma \ell - 2\gamma \ell /3 = \gamma \ell /3$. 

  We apply Lemma \ref{lem:test_i}, with $(Z_1, \ldots, Z_n) = (B_1, \ldots, B_n), \ir = \ir_r, K = \KA', \tilde K = \KB', \nu_1 = \mu_X \otimes \mu_Y, \nu_2 = \mu$, $\xi = 1 - \gamma/3$ and $L = n/\log^{(c_0 - 1)}n$. Here we use that $n/\log^{(c_0 - 1)}n \leq \frac{(\gamma/3)^3 n}{1620}$ for sufficiently large $n$  (depending on $\gamma$), as well as $\p_{\mu_X \otimes \mu_Y} [\KA' \neq \KB'] \leq \gamma^2 / 324 = (1-\xi)^2/  36$. Then Lemma \ref{lem:test_i} gives that Bob can output a bit as a deterministic function of $ \KB', B_1, \ldots, B_n$ (all of which Bob holds at the conclusion of $\Pi'$), that distinguishes $\mu$ and $\mu_X \otimes \mu_Y$ with advantage at least $\gamma^2 /324$.
\end{proof}

By Theorem \ref{thm:indist_bggs} below (which is analogous to Theorem \ref{thm:indist}), with $\ep = \gamma^2 /324$, and as long as $c_0$ is large enough so that the right-hand side of (\ref{eq:pipp_cc}) holds for $n \geq c_0$, and such that $c_0 - 1 \geq \beta$ (where $\beta$ is chosen from Theorem \ref{thm:indist_bggs}, given $\ep = \gamma^2/325$), we arrive at a contradiction.
    \begin{theorem}[\cite{bafna_communication-rounds_2018}, Lemma 4.5]
    \label{thm:indist_bggs}
    For every $\ep > 0$ and odd $r$ there exists $\beta, n_0$ such that for every $n \geq n_0$ and $\ell$, the distributions $\mu = \mu_{r,n,\ell}$ and $\mu_X \otimes \mu_Y$ are $(\ep, (r+3)/2, n/\log^{\beta}n)$-indistinguishable.
  \end{theorem}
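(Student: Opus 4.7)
The plan is to follow the same three-part structure already used for Theorem \ref{thm:indist} in Section \ref{ssec:indist}, but to skip the ``$[n^2]$-pairing'' step, which is precisely what costs the square root. The target is weaker in rounds ($(r+3)/2$ vs.\ $r+1$) but stronger in communication ($n/\log^\beta n$ vs.\ $\sqrt n/\log^\beta n$), and this trade-off falls out naturally: the weaker round bound lets us invoke pointer verification with only $r$ permutations (rather than $2r-1$), and the stronger communication bound forces us to invoke \emph{standard} disjointness (Theorem \ref{thm:disj}) rather than its $\sqrt n$-intersection variant (Corollary \ref{cor:sqrtn_disj}).

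Concretely, I would introduce an intermediate distribution $\mumid_{r,n,\ell}$ that samples $\pir_1,\dots,\pir_r,\ir_0$ and all $A_j,B_j$ uniformly and independently, and then picks a single index $\jr^\star\in [n]$ uniformly at random independent of everything else and overwrites $B_{\jr^\star}:=A_{\jr^\star}$. Under $\mumid$, $\jr^\star\neq \ir_r$ with probability $1-1/n$, whereas under $\mu_{r,n,\ell}$ we have $\jr^\star=\ir_r$ (i.e.\ the coincidence is \emph{tied} to the pointer chain). By the triangle inequality, Theorem \ref{thm:indist_bggs} reduces to two claims: (i) $\mu_{r,n,\ell}$ and $\mumid_{r,n,\ell}$ are $(\ep/2,(r+3)/2,n/\log^\beta n)$-indistinguishable, and (ii) $\mumid_{r,n,\ell}$ and $\mu_X\otimes \mu_Y$ are $(\ep/2,\delta n,\delta n)$-indistinguishable for some $\delta>0$.

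For (i), I would reduce from $\dypv(r,n)$ vs.\ $\dnpv(r,n)$. Given a $\dpv$-instance $(\pir_1,\dots,\pir_r,\ir_0,\jr_0)$, Alice and Bob use public randomness to sample $A_1,\dots,A_n,B_1,\dots,B_n\in\{0,1\}^\ell$ uniformly and a common string $C\in\{0,1\}^\ell$, and then Alice overwrites $A_{\jr_0}:=C$ while Bob overwrites $B_{\jr_0}:=C$ (each party can do this from its side since $\jr_0\in Y$). When $(X,Y)\sim \dypv(r,n)$ we have $\jr_0=\ir_r$, so the resulting tuple is distributed as $\mu_{r,n,\ell}$; when $(X,Y)\sim\dnpv(r,n)$, $\jr_0$ is uniform in $[n]$ and independent of $\pir_1,\dots,\pir_r,\ir_0$, so the resulting tuple is distributed as $\mumid_{r,n,\ell}$. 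Any protocol distinguishing $\mu$ from $\mumid$ with $(r+3)/2$ rounds and $C$ bits of communication would therefore distinguish $\dypv(r,n)$ from $\dnpv(r,n)$ with the same parameters, so Theorem \ref{thm:pv-main} yields (i).

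For (ii), I would reduce from $\disjy_n$ vs.\ $\disjn_n$ exactly as in the proof of Claim \ref{clm:mumid_prod}: Alice and Bob share uniform strings $Z_1,\dots,Z_n\in\{0,1\}^\ell$ via public randomness; Alice sets $A_u:=Z_u$ for $u\in U$ and samples $A_u$ uniformly otherwise; Bob does the analogous thing with $V$ and $B_v$; both sample the $\pir_t$'s and $\ir_0$ independently. If $|U\cap V|=1$ the result is distributed as $\mumid_{r,n,\ell}$ (the unique shared coordinate plays the role of $\jr^\star$, independent of the pointer chain), while if $|U\cap V|=0$ the result is distributed as $\mu_X\otimes\mu_Y$. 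Claim (ii) then follows from Theorem \ref{thm:disj}. The main obstacle in fully writing this out is verifying the distributional equivalences in (i) and (ii) are exact, and unifying constants so that the communication budget $n/\log^\beta n$ is simultaneously below the lower bounds from Theorems \ref{thm:pv-main} and \ref{thm:disj}; choosing $\beta$ to be the maximum of the two constants produced by those theorems (at accuracy $\ep/2$) handles this, and choosing $n_0$ large enough to absorb public-randomness sampling costs finishes the argument.
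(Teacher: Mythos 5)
The paper does not supply its own proof of Theorem~\ref{thm:indist_bggs}; it is imported verbatim from Bafna et al.\ (their Lemma~4.5). So there is no ``paper proof'' to compare against directly. Your two-step decomposition $\mu \to \mumid \to \mu_X\otimes\mu_Y$ (with $\mumid$ having a single uniformly placed shared coordinate) is a natural and correct plan, and it is the appropriate simplification of the three-step chain $\mu \to \hat\mu \to \mumid \to \mu_X\otimes\mu_Y$ that the paper uses for the stronger Theorem~\ref{thm:indist}: you can drop the $\hat\mu$/$\sqrt n$-intersection layer precisely because you only need to match communication $n/\log^\beta n$ against $\dypv(r,n)$ rather than squeeze $\sqrt n$ out of $\dypv(2r-1,\sqrt n)$ after the $[n]\times[n]$ pairing trick. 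Step (ii) is exactly the paper's Claim~\ref{clm:mumid_prod} with $\disjy_n/\disjn_n$ in place of $\disjy_{n,\sqrt n}/\disjn_{n,\sqrt n}$, and it is correct.

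However, there is a genuine bug in your reduction for step (i), in the construction of the instance fed to $\Pi$. You write ``Alice overwrites $A_{\jr_0}:=C$ while Bob overwrites $B_{\jr_0}:=C$ (each party can do this from its side since $\jr_0\in Y$).'' The parenthetical is the opposite of what is needed: $\jr_0\in Y$ means $\jr_0$ is \emph{Bob's} input, so Bob can locate coordinate $\jr_0$ but Alice cannot. As written, Alice has no way to compute her own modified input $(A'_1,\dots,A'_n)$, so $\Pi'$ would require a round of communication just to assemble the inputs, which destroys the round count you need. Contrast this with the paper's Claim~\ref{clm:mumid_mu}, which carefully arranges for the shared coordinates to live at publicly determined positions $\tau_r((i,i))$ so that \emph{both} parties can build their halves from public coins alone. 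The fix for your step (i) is easy: do not have Alice overwrite anything. Sample $A_1,\dots,A_n$ (and $B_1,\dots,B_n$) publicly; Alice feeds $(\pir_1,\pir_3,\dots,A_1,\dots,A_n)$ unchanged to $\Pi$, while Bob (who knows $\jr_0$ and the public $A_{\jr_0}$) sets $B'_{\jr_0}:=A_{\jr_0}$ and $B'_j:=B_j$ for $j\neq\jr_0$ before running $\Pi$. Since only the marginal law of $(X',Y')$ matters for the transcript distribution of $\Pi$, the fact that public sampling makes $A$ and $B$ mutually known is immaterial, and the resulting $(X',Y')$ is distributed exactly as $\mu_{r,n,\ell}$ under $\dypv(r,n)$ and as your $\mumid$ under $\dnpv(r,n)$. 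With that correction, the argument goes through.
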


For even $r$, we use the distribution $\mu = \mu_{r-1,n,\ell}$. Part (1) of the theorem still holds (in fact, we even have $(r+1)$-achievability). For part (2), the argument above applies, except now the lower bound on round complexity is $\lceil ((r-1) + 1)/2 \rceil = \lceil r/2 \rceil  = r/2$.

Finally, to prove part (3) of Theorem \ref{thm:am_formal}, an argument virtually identical to the one for part (2) applies, except that the protocols $\Pi$ and $\Pi'$ have $r$ rounds, the protocol $\Pi''$ has $r+1$ rounds, and the upper bound in (\ref{eq:pipp_cc}) is $\sqrt{n} / \log^{(c_0 - 1)}n$, which needs to be less than $\frac{(\gamma/3)^3 n}{1620} = \frac{(\gamma / 3)^3 \ell}{1620}$ (which it is, for sufficiently large $n$). In the last step fo the proof, we use Theorem \ref{thm:indist} (instead of Theorem \ref{thm:indist_bggs}), which establishes that $\mu$ and $\mu_X \otimes \mu_Y$ are $(\ep, r+1, \sqrt n / \poly \log n)$-indistinguishable for any constant $\ep > 0$.
\end{proof}

\subsection{Separations in MIMK}
\label{sec:mimk}
In this section we use Theorem \ref{thm:am_formal} to derive separations in the MIMK for the pointer chasing source $\mu_{r,n,\ell}$ (recall Definition \ref{def:mimk}). The below Theorem \ref{thm:mimk_sep} generalizes a result of Tyagi \cite{tyagi2013common}, which established a constant-factor separation in the MIMK for 2-round and 1-round protocols for a certain source.
\begin{theorem}
  \label{thm:mimk_sep}
  For each $r \in \BN$, there is a $c_0$ such that for each $n \geq c_0$, the pointer chasing source $\mu_{r,n,n}$ satisfies:
  \begin{enumerate}
  \item $\FI_{r+2}(X; Y) \leq (r+2) \lceil \log n \rceil$.
  \item $\FI_{\lfloor (r+1)/2 \rfloor}(X; Y) > n / \log^{c_0} n $.
  \item $\FI_r(X;Y) >\sqrt n / \log^{c_0} n$.
  \end{enumerate}
\end{theorem}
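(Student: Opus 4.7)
The proof is essentially a direct corollary of Theorem \ref{thm:am_formal}, using Lemma \ref{lem:mu_inf} (which gives $I(X;Y) = \ell = n$ for the source $\mu_{r,n,n}$) and the Ahlswede--Csisz\'ar upper bound $\FCamsk_r(C) \leq I(X;Y)$ for all $C \geq 0$. For part (1), Theorem \ref{thm:am_formal}(1) asserts that $((r+2)\lceil\log n\rceil, n) \in \MS_{r+2}(X,Y)$, hence $\FCamsk_{r+2}((r+2)\lceil\log n\rceil) \geq n$. Combined with the AC upper bound this forces equality, and by Definition \ref{def:mimk} we conclude $\FI_{r+2}(X;Y) \leq (r+2)\lceil\log n\rceil$.

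For parts (2) and (3), I would argue by contradiction. Fix $\gamma = 1/2$ once and for all, and let $c_0$ be the maximum of the two constants $c_0^{\text{am}}$ produced by Theorem \ref{thm:am_formal}(2) and (3) for this $\gamma$. Suppose (for part (2)) that $\FI_\rho(X;Y) \leq n/\log^{c_0} n$, where $\rho = \lfloor(r+1)/2\rfloor$. The point to note is that because $\MS_\rho(X,Y)$ is a closed subset of $\BR^2$ (as recorded immediately after Definition \ref{def:a_skg}) and $\FCamsk_\rho$ is monotone non-decreasing, the infimum in Definition \ref{def:mimk} is attained: we obtain $(n/\log^{c_0} n,\ n)\in \MS_\rho(X,Y)\subset \MT_\rho(X,Y)$, so the tuple $(n/\log^{c_0} n,\ n)$ is $\rho$-achievable for CRG. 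Since $n > \gamma \ell = n/2$ and $n/\log^{c_0} n \leq n/\log^{c_0^{\text{am}}} n$, this directly contradicts Theorem \ref{thm:am_formal}(2). The proof of part (3) is identical, replacing the bound $n/\log^{c_0} n$ by $\sqrt n/\log^{c_0} n$ and invoking Theorem \ref{thm:am_formal}(3) in place of (2).

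The only nontrivial bookkeeping is in justifying that ``$\FI_\rho(X;Y) \leq C$'' really does place the pair $(C, I(X;Y))$ inside $\MS_\rho(X,Y)$; this is where closedness of the rate region is used. If one wishes to avoid appealing to closedness, an equally simple alternative is to pass to an arbitrarily small slack $\varepsilon > 0$ (so that $(C+\varepsilon, n)$ is achievable for every $\varepsilon > 0$) and to choose $c_0$ very slightly larger than $c_0^{\text{am}}$ so that $C+\varepsilon \leq n/\log^{c_0^{\text{am}}} n$ for all small enough $\varepsilon$; no new information-theoretic ingredient beyond Theorem \ref{thm:am_formal} and Lemma \ref{lem:mu_inf} is needed.
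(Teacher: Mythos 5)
Your proof is correct and follows essentially the same route as the paper's: derive part (1) from Theorem~\ref{thm:am_formal}(1) together with Lemma~\ref{lem:mu_inf} and the Ahlswede--Csisz\'ar bound, and derive parts (2) and (3) by contradiction from Theorem~\ref{thm:am_formal}(2),(3). The one place you go beyond the paper's terse write-up is in justifying that $\FI_\rho(X;Y)\le C$ actually puts the pair $(C,\,I(X;Y))$ inside $\MS_\rho(X,Y)$ --- the infimum in Definition~\ref{def:mimk} need not obviously be attained, and you close this gap either via closedness of $\MS_\rho$ (as the paper records after Definition~\ref{def:a_skg}) or via a small slack $\varepsilon$; the paper itself asserts the achievability without comment, so your extra care is a legitimate, if minor, tightening rather than a divergence in approach.
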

\begin{proof}
  Let the constant $c_0$ be that given by Theorem \ref{thm:am_formal} for an arbitrary $\gamma$.
  
  The  first item follows from the definition of $\FI_r(X;Y)$ in Definition \ref{def:mimk}, the fact that $I_{\mu_{r,n,n}}(X;Y) = n$ (Lemma \ref{lem:mu_inf}), and the first item of Theorem \ref{thm:am_formal} stating that the tuple $((r+2) \lceil \log n \rceil, n)$ is $(r+2)$-achievable for SKG from the source $\mu_{r,n,n}$.

  To see the second item, suppose that $\FI_{\lfloor (r+1)/2 \rfloor}(X;Y) \leq n/\log^{c_0} n$. Then the tuple $(n/\log^{c_0} n, \ell)$ is $\lfloor(r+1)/2 \rfloor$-achievable from the source $\mu_{r,n,n}$, contradicting the second item of Theorem \ref{thm:am_formal}.

  Similarly, for the third item, if $\FI_r(X;Y) \leq \sqrt n / \log^{c_0} n$, then the  tuple $(\sqrt n / \log^{c_0} n, \ell)$ would be $r$-achievable from the source $\mu_{r,n,n}$, contradicting the third item of Theorem \ref{thm:am_formal}.
\end{proof}

Notice that the MIMK deals with very large rates of communication; in particular, communication at rates larger than the MIMK is no longer interesting, as, for instance, the entropy rate $L$ for SKG is fixed at $I(X;Y)$. One can ask, on the other hand, whether Theorem \ref{thm:am_formal} allows us to determine a separation in some measure that determines the efficiency of CRG and SKG at very {\it small} rates of communication. Formally, we consider the {\it common random bits per $r$-round interaction bit ($r$-round CBIB)} and the {\it secret key bits per $r$-round interaction bit ($r$-round KBIB)}:
\begin{defn}[\cite{liu2017secret}, Corollary 2\footnote{Typically the $r$-round CBIB and KBIB are introduced in a different way by explicitly considering families of protocols, though this will not be important for our purposes.}]
  \label{def:gamma}
  For a source $(X,Y) \sim \mu$ and $r \in \BN$, define:
  $$
\Gamma_r\crg(X,Y) = \sup \left\{ \frac{L}{C} : (C,L) \in \MT_r(X,Y), C > 0 \right\}
$$
and
$$
\Gamma_r\skg(X,Y) = \sup \left\{ \frac{L}{C} : (C,L) \in \MS_r(X,Y), C > 0 \right\}.
$$
\end{defn}
  Notice that $\Gamma_r\crg(X,Y)$ and $\Gamma_r\skg(X,Y)$ can be infinite, if, for instance, there are functions $f_\A : \MX \ra \{0,1\}$ and $f_\B : \MY \ra \{0,1\}$ such that $\p_\mu[f_\A(X) = f_\B(Y)] = 1$ and $L := H(f_\A(X)) = H(f_\B(Y)) > 0$. In such a case, $(0, L) \in \MT_r(X,Y)$. 
  It is easy to see that whenever $\Gamma_r\skg(X,Y)$ or $\Gamma_r\crg(X,Y)$ is finite, we have $\Gamma_r\crg(X,Y) = 1 + \Gamma_r\skg(X,Y)$.
  
  Intuitively, the $r$-round CBIB (KBIB, respectively) can be roughly interpreted as the maximum number of additional bits of common randomness (secret key, respectively) that Alice and Bob can obtain by communicating an additional bit, where the maximum is over ``all protocols and any communication rate''.

We also remark that it follows from Theorem \ref{thm:achievability_main} and Lemma \ref{lem:fc_concave} that $\Gamma_r\crg(X,Y)$ is the derivative of the function $\FCamcr_r(C)$ at $C = 0$.

Next we would like to derive similar separations for the $r$-round interactive CBIB and KBIB to that in Theorem \ref{thm:mimk_sep} for the $r$-round MIMK. Notice that from the first item of Theorem \ref{thm:am_formal} we have immediately that $\Gamma_{r+2}\crg(X,Y) \geq \frac{n}{(r+2) \lceil \log n \rceil}$. We might hope to use the second and third items of Theorem \ref{thm:am_formal} to derive upper bounds on $\Gamma_{\lfloor (r+1)/2 \rfloor}\crg(X,Y)$ and $\Gamma_r\crg(X,Y)$ that grow as $\log^{c_0} n$ and $\sqrt n \log^{c_0} n$, respectively. However, such upper bounds do not immediately follow from Theorem \ref{thm:am_formal} since Theorem \ref{thm:am_formal} requires a lower bound on $L$ in order to show that certain tuples $(C,L)$ are not achievable. In particular, Theorem \ref{thm:am_formal} leaves open the possibility that tuples such as $(\log n, \sqrt n)$, or even $(2^{-n}, 1)$ are $\lfloor (r+1)/2 \rfloor$-achievable for CRG from $\mu_{r,n,n}$. This limitation of Theorem \ref{thm:am_formal} results from the fact that Lemmas \ref{lem:get_useful_info} and \ref{lem:test_i} give vacuous bounds on the disintuishability of $\mu = \mu_{r,n,n}$ and $\mu_X \otimes \mu_Y$ when the tuple $(C,L)$ is such that $L$ is small compared to $n$. We leave the problem of remedying this issue for future work:
\begin{problem}
  \label{prob:cbib}
  For each $r \in \BN$, show (perhaps using Theorem \ref{thm:am_formal}) that there is a $c_0$, such  that for each $n \geq c_0$, the pointer chasing source $(X,Y) \sim \mu_{r,n,n}$ satisfies:
  \begin{enumerate}
  \item $\Gamma_{\lfloor (r+1)/2 \rfloor}\crg(X,Y) \leq \log^{c_0} n$.
  \item $\Gamma_{r}\crg(X,Y) \leq \sqrt n \log^{c_0} n$.
  \end{enumerate}
  It seems that in fact the even stronger result $\Gamma_{r+1}\crg(X,Y) \leq 1 + o_n(1)$ holds.
\end{problem}
Problem \ref{prob:cbib} seems to be quite difficult; a result that $\Gamma_{r'}\crg(X,Y) < f(n)$, for $(X,Y) \sim \mu_{r,n,n}$, some $r' \in \BN$, and some function $f(n)$ would imply, by concavity of the function $C \mapsto \FCamcr_r(C)$ (Lemma \ref{lem:fc_concave}), that for any $C \geq 1$, the tuple $(C, f(n) \cdot C)$ is not $r'$-achievable for CRG from $\mu_{r,n,n}$. For $r' = \lfloor (r+1)/2 \rfloor$ and $f(n) = \poly \log(n)$, this would imply part (2) of Theorem \ref{thm:am_formal}, and for $r' = r$ and $f(n) = \sqrt{n} \poly \log(n)$, this would imply part (3) of Theorem \ref{thm:am_formal}.

\section{Proof of the converse direction of Theorem \ref{thm:achievability_main}}
  \label{sec:converse_proof}
  Recall our definition of
  $$
          \tilFCamcr_r(C) := 
         \begin{cases}
    \sup_{\Pi = (\Pi_1, \ldots, \Pi_r) : \ICint_\mu(\Pi) \leq C} \{\ICext_\mu(\Pi)\} \quad &: \quad C \leq \FI_r(X;Y) \\ 
    I(X;Y) + C \quad &: \quad C > \FI_r(X;Y),
  \end{cases}
    $$
    Our goal in this section is to establish the following:
    \begin{theorem}[Converse direction of Theorem \ref{thm:achievability_main}]
      \label{thm:achievability_converse}
      $\FCamcr_r(C) \leq \tilFCamcr_r(C)$.
    \end{theorem}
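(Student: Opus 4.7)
My plan is a standard information-theoretic converse via ``single-letterization''. First I will reduce to the case $C \leq \FI_r(X;Y)$: if $C > \FI_r(X;Y)$, then $\tilFCamcr_r(C) = C + I(X;Y)$, and the bound $L \leq C + I(X;Y)$ follows from a padding argument (append $C - \FI_r(X;Y)$ bits of uniformly random communication from Alice to the MIMK-optimal protocol, boosting both $\ICext$ and communication by the same amount). In the main case, take $(C, L) \in \MT_r(X,Y)$ with $L < I(X;Y)$ and extract a sequence of $r$-round private-coin protocols $\Pi(N)$ for $(X^N, Y^N) \sim \mu^{\otimes N}$, with keys $\KA(N), \KB(N) \in \MK_N$ and reference $K_N$ uniform on $\MK_N$ such that $\p[\KA = \KB = K_N] \geq 1 - \ep_N$ with $\ep_N \to 0$, $\CC(\Pi(N)) \leq NC + o(N)$, and $\log|\MK_N| \geq NL - o(N)$.

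Next I will establish the two $N$-shot information-cost bounds. The internal bound $\ICint_{\mu^{\otimes N}}(\Pi(N)) \leq NC + o(N)$ is immediate from $\ICint \leq \CC$. The substantive step is the external bound $\ICext_{\mu^{\otimes N}}(\Pi(N)) \geq NL - o(N)$, and this is where I expect the main technical difficulty. The essential tools will be Fano's inequality, which via the agreement condition gives $H(K_N) \geq NL - o(N)$ and $H(\KA \mid \KB) \leq o(N)$, together with the ``rectangle property'' of interactive protocols, which implies that conditioned on $\Pi$ the private randomness separates appropriately: with $V_A = (X^N, \RA)$ and $V_B = (Y^N, \RB)$ one obtains $I(V_A; V_B \mid \Pi) = I(X^N; Y^N \mid \Pi)$. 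Data-processing from $(\KA, \KB)$ to $(V_A, V_B)$ then yields $I(\KA; \KB \mid \Pi) \leq I(X^N; Y^N \mid \Pi)$, and combining with Fano together with the identity $\ICext(\Pi) = \ICint(\Pi) + I(X^N;Y^N) - I(X^N; Y^N \mid \Pi)$ from Lemma~\ref{lem:get_useful_info} lets us extract the required lower bound on $\ICext$.

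Finally I will single-letterize by introducing a time-sharing variable $J \sim \mathrm{Uniform}([N])$ and auxiliary side information $W = (J, X^{J-1}, Y_{J+1}^N)$. The chain rule together with the i.i.d.\ structure of $\mu^{\otimes N}$, with Csisz\'ar's sum identity used to reconcile the prefix-on-$X$ versus suffix-on-$Y$ conditioning in the two terms of $\ICint$, converts $\ICint_{\mu^{\otimes N}}(\Pi(N))/N$ and $\ICext_{\mu^{\otimes N}}(\Pi(N))/N$ into per-coordinate single-letter expressions in the $J$-th coordinate variables $(X_J, Y_J) \sim \mu$. These in turn correspond to the information costs of an $r$-round single-letter protocol $\bar\Pi$ that simulates $\Pi(N)$ on $(X_J, Y_J)$ while realizing the other coordinates via appropriately-split randomness (with $J$ in the public randomness and the $(X_i, Y_i)_{i \ne J}$ jointly sampled). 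The support lemma \cite[Lemma 15.4]{csiszar_information_1981} bounds the message alphabets, so passing to a convergent subsequence yields an $r$-round protocol $\Pi^*$ with $\ICint_\mu(\Pi^*) \leq C$ and $\ICext_\mu(\Pi^*) \geq L$, giving $L \leq \tilFCamcr_r(C)$. The main obstacle is realizing the aux-variable calculation as an honest $r$-round protocol: the side information $W$ must be ``communicated'' in a way that respects the round structure and correctly splits into Alice's view $X^{J-1}$ and Bob's view $Y_{J+1}^N$ without inflating the round count, and care is needed to ensure that the Fano error terms from Step~2 survive the per-coordinate averaging.
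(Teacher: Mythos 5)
Your plan follows the same high-level route as the paper (reduce to $C \le \FI_r$, extract achieving protocols, bound the $N$-shot information costs, single-letterize), but there is a genuine gap in Step~2 that breaks the argument, and the last step glosses over a real obstruction that the paper handles with two extra lemmas.

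The claim that $\ICext_{\mu^{\otimes N}}(\Pi(N)) \geq NL - o(N)$ for the \emph{unmodified} achieving protocol $\Pi(N)$ is false, and the argument you sketch cannot produce it. Take $\mu$ with $X=Y$: the zero-communication protocol in which both parties output $X$ achieves $(0, H(X)) \in \MT_r$, yet has $\ICext = 0$, far below $L = H(X)$. The external information cost measures what the transcript reveals about $(X^N,Y^N)$, not what Alice and Bob agree upon; a high-entropy agreed key can live entirely in the shared correlation. Moreover, your inequality chain runs in the wrong direction: from $I(\KA;\KB\mid\Pi) \le I(X^N;Y^N\mid\Pi)$ and the identity $\ICext = \ICint + I(X^N;Y^N) - I(X^N;Y^N\mid\Pi)$ you get an \emph{upper} bound on $\ICext$, since a lower bound on $I(X^N;Y^N\mid\Pi)$ subtracts. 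What the paper actually does (Lemma~\ref{lem:augment}) is augment the protocol: the last speaker appends their key to the final message. Since the (now deterministic, on $(X\RA,Y\RB)$) key is a function of the inputs, the augmented $\ICext$ is at least $H(\KA) \ge NL - o(N)$ directly, and by Fano the augmented $\ICint$ increases only by $H(\KA \mid \Pi^r, Y^N) = o(N)$. This augmentation is the essential move your plan is missing.

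The single-letterization step is also underspecified in a way that matters. You cannot have the other coordinates $(X_i,Y_i)_{i \ne J}$ ``jointly sampled'' by a private-coin (or even public-coin) protocol, because they are correlated across the two parties; this is precisely the issue that makes the resulting single-letter protocol take inputs $(X\RA, Y\RB)$ from the augmented source $\nu$ rather than $(X,Y)\sim\mu$, so the object you obtain has $\ICint_\nu$ and $\ICext_\nu$ bounds, not $\ICint_\mu$ and $\ICext_\mu$. The paper deals with this via Lemma~\ref{lem:cr_sim} (the \cite{ghazi2018resource} tensorization, packaged as an honest $r$-round protocol) followed by Lemma~\ref{lem:simulate_pri_rand}, which shows that removing the per-copy randomness from the input reduces $\ICint$ and $\ICext$ by the \emph{same} amount $\alpha$. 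Crucially, this only helps because of Lemma~\ref{lem:fc_concave}: $\tilFCamcr_r$ is concave with slope at least $1$, so $\tilFCamcr_r(C'-\alpha) \ge L' - \alpha$ implies $\tilFCamcr_r(C') \ge L'$. Without this monotonicity-along-the-diagonal argument, shifting both coordinates by $\alpha$ does not give the desired bound. Your proposal omits this entire stage, and ``appropriately-split randomness'' does not substitute for it.
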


    Theorem \ref{thm:achievability_converse} essentially states that any (private-coin) protocol $\Pi$ for CRG can be converted into a (private-coin) protocol whose internal and external information costs are related to the communication and common randomness rates of $\Pi$ in a particular way. We prove Theorem \ref{thm:achievability_converse} by first establishing such a statement for deterministic protocols in Lemma \ref{lem:augment} and Lemma \ref{lem:cr_sim} below. We will then use certain properties of $\tilFCamcr_r(C)$ to ``upgrade'' this statement to apply to randomized protocols.
  \begin{lemma}
    \label{lem:augment}
    Suppose $(X,Y) \sim \mu$ for some source $\mu$, and that the tuple $(C,L)$, for $C, L \in \BR_+$ is achievable by an $r$-round {\it deterministic} protocol (in the sense of Definition \ref{def:a_crg}; that is, all properties of Definition \ref{def:a_crg} hold verbatim, except $\Pi$ is not allowed to use private random coins). Then for any $L' < L, C' > C$, there is some $N_0$ such that for all $N \geq N_0$, there is an $r$-round deterministic protocol $\Pi'$ with inputs $(X^N, Y^N) \sim \mu^{\otimes N}$ such that
    \begin{enumerate}
    \item[(1)] $\ICext_{\mu^{\otimes N}}(\Pi') \geq L'N$.
    \item[(2)] $\ICint_{\mu^{\otimes N}}(\Pi') \leq C'N$.
    \end{enumerate}
  \end{lemma}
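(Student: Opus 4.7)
The plan is to augment $\Pi$ by having the last party to speak append its output key to the final message. Let $\Pi = \Pi(N)$ be the given deterministic $r$-round protocol, with keys $\KA(X^N, \Pi^r), \KB(Y^N, \Pi^r) \in \MK_N$ satisfying the conditions of Definition \ref{def:a_crg}. Assume without loss of generality that $r$ is odd, so Alice speaks in round $r$ (the even case is symmetric, with Bob appending $\KB$). Since $\Pi$ is deterministic, $\Pi_r$ is itself a function of $(X^N, \Pi^{r-1})$, so Alice already knows $\KA$ before transmitting her final message. I define $\Pi'$ to follow $\Pi$ in rounds $1, \ldots, r-1$, and in round $r$ have Alice send $\Pi_r' := (\Pi_r, \KA)$; this is an $r$-round deterministic protocol whose transcript is $(\Pi^r, \KA)$.

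For the external information cost, determinism of $\Pi'$ gives $\ICext_{\mu^{\otimes N}}(\Pi') = H(\Pi^r, \KA) \geq H(\KA)$. The coupling guaranteed by Definition \ref{def:a_crg} yields $\Delta(\KA, K_N) \leq \ep_N$ for $K_N$ uniform on $\MK_N$, so the standard continuity of entropy gives $H(\KA) \geq (1-\ep_N) \log |\MK_N| - H(\ep_N) \geq (1-\ep_N) LN - H(\ep_N)$, which exceeds $L' N$ for all sufficiently large $N$.

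For the internal information cost, expanding via the chain rule and using $I(\KA; Y^N \mid X^N, \Pi^r) = 0$ (since $\KA$ is a deterministic function of $(X^N, \Pi^r)$), I obtain
\[
\ICint_{\mu^{\otimes N}}(\Pi') = \ICint_{\mu^{\otimes N}}(\Pi) + H(\KA \mid Y^N, \Pi^r).
\]
The first term is bounded by $\CC(\Pi) \leq CN + o(N)$ via the standard fact that internal information cost is at most communication cost. For the second term, since $\KB$ is a function of $(Y^N, \Pi^r)$ and $\p[\KA \neq \KB] \leq \ep_N$, Fano's inequality gives $H(\KA \mid Y^N, \Pi^r) \leq H(\KA \mid \KB) \leq H(\ep_N) + \ep_N \log|\MK_N|$. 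The step I expect to require the most care, and really the only subtlety of the argument, is verifying $\ep_N \log|\MK_N| = o(N)$: the definition of $r$-achievability only lower bounds $\log|\MK_N|$, leaving open the possibility that it is pathologically large relative to $1/\ep_N$. To rule this out I apply continuity of entropy in the reverse direction to get $\log|\MK_N| \leq (H(\KA) + H(\ep_N))/(1-\ep_N)$, and use $H(\KA) \leq H(X^N, \Pi^r) \leq N \log|\MX| + \CC(\Pi) = O(N)$ to conclude $\log|\MK_N| = O(N)$ and hence $\ep_N \log|\MK_N| = o(N)$.

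Combining the two estimates gives $\ICint_{\mu^{\otimes N}}(\Pi') \leq CN + o(N) \leq C' N$ for large $N$. Choosing $N_0$ so that all the $o(N)$ terms above are absorbed into the gaps $C' - C > 0$ and $L - L' > 0$ completes the proof. The core conceptual content is that appending the key to the transcript costs negligible internal information, because the other party almost already knows the key by the agreement guarantee, while simultaneously contributing essentially the full $\log|\MK_N| \geq LN$ bits of entropy to the external information.
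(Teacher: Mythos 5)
Your proof is correct and follows the same overall approach as the paper's: augment $\Pi$ by having the final speaker append their key to the last message, then lower-bound $\ICext_{\mu^{\otimes N}}(\Pi')$ by $H(\KA)$ (using determinism so that $H(\KA,\Pi^r \mid X^N,Y^N)=0$) and upper-bound $\ICint_{\mu^{\otimes N}}(\Pi')$ by $\CC(\Pi) + H(\KA \mid Y^N,\Pi^r)$ via the chain rule and Fano. The one place you diverge is in handling the possibility that $\log|\MK_N|$ is huge relative to $1/\ep_N$, which would wreck the Fano term $\ep_N\log|\MK_N|$. The paper disposes of this up front by truncating the key set so that $|\MK_N| \leq 2^{\lceil L'N\rceil}$ before the main argument begins; you instead leave $\MK_N$ alone and show directly that $\log|\MK_N| = O(N)$, using that $\KA$ is a deterministic function of $(X^N,\Pi^r)$ so $H(\KA) \leq N\log|\MX| + \CC(\Pi)$, combined with the reverse continuity-of-entropy bound $\log|\MK_N| \leq (H(\KA)+h(\ep_N))/(1-\ep_N)$. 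Both routes are valid; yours avoids having to re-verify that truncation preserves the closeness-to-uniform guarantee, at the cost of one extra entropy estimate. This is a minor technical variation rather than a different proof.
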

  \begin{proof}
    Choose $C''$ with $C' > C'' > C$ and $L''$ with $L' < L'' < L$. By Definition \ref{def:a_crg}, there is some $N_0$ so that for each $N \geq N_0$, there is an $r$-round protocol $\Pi$ taking inputs from $\mu^{\otimes N}$ and producing keys $\KA, \KB$ in some set $\MK_N$ with $|\MK_N| \geq L'N$ so that $\CC(\Pi) = \sum_{t=1}^r |\Pi_t| \leq C''N$ and $\Delta(\KA\KB, KK) \leq \ep_N$ for some
    $$
    \ep_N < \min \left\{\frac{C'  - C'' - 2/N}{L'}, \frac{L'' - L' - 1/N}{L''}\right\}.
    $$
    (Here $K \in \MK_N$ denotes the random variable uniformly distributed on $\MK_N$.) By truncating the keys we may assume without loss of generality that $|\MK_N| \leq 2^{\lceil L' N \rceil}$. It follows from $\Delta(\KA\KB, KK) \leq \ep_N$ that $\p[\KA \neq \KB] \leq \ep_N$. Moreover, using Lemma \ref{lem:reverse_pinsker}, we obtain
    \begin{equation}
      \label{eq:ent_lb}
\min \{ H(\KA), H(\KB) \} \geq \log|\MK_N| - (h(\ep_N) + \ep_N \cdot \log |\MK_N|) \geq (1-\ep_N)  L'' N - 1 \geq L' N,
\end{equation}
where we have used $\ep_N \leq \frac{L'' - L' - 1/N}{L''}$.

Now let $\Pi'$ be the following protocol:
\begin{enumerate}
  \item Alice and Bob first simulate $\Pi$, i.e., they exchange the messages $\Pi_1, \ldots, \Pi_r$.
  \item Then the last person to speak in $\Pi$ outputs their key (i.e., if it is Alice, then she outputs $\KA$ and if it is Bob then he outputs $\KB$).
\end{enumerate}
Suppose for simplicity that $r$ is odd, so that Alice is the last person to speak in $\Pi$ (the case $r$ even is nearly identical). 
Then since $\Pi$ is deterministic, $\KA, \Pi^r$ is a deterministic function of $X, Y$, so $H(\KA, \Pi^r | X, Y) = 0$. Noting the transcript of $\Pi'$ is given by $(\Pi_1, \ldots, \Pi_{r-1}, (\Pi_r, \KA))$, it follows that
  $$
\ICext(\Pi') = I(\KA, \Pi^{r}; X, Y) = H(\KA, \Pi^{r}) - H(\KA, \Pi^{r} | X,Y) \geq H(\KA, \Pi^{r}) \geq H(\KA) \geq L'N,
$$
where the last inequality uses (\ref{eq:ent_lb}).

To upper bound $\ICint(\Pi')$, notice that
\begin{eqnarray}
  \ICint_{\mu^{\otimes N}}(\Pi') &=& I(\Pi^{r}, \KA; X^N | Y^N) + I(\Pi^r, \KA ; Y^N | X^N) \nonumber\\
                                 &=& I(\Pi^{r}; X^N | Y^N) + I(\KA; X^N | \Pi^{r}, Y^N) + I(\Pi^r ; Y^N | X^N) \nonumber\\
                                 &=& \ICint_{\mu^{\otimes N}}(\Pi) + I(\KA; X^N | \Pi^{r}, Y^N) \nonumber\\
                &\leq & \CC(\Pi) + H(\KA | \Pi^{r}, Y^N)\nonumber\\
                                 & \leq & C'' N +  \ep_N \lceil L' N \rceil + 1\nonumber\\
                                 & \leq & C' N \nonumber,
\end{eqnarray}
where we have used Fano's inequality, the fact that $\p[\KA \neq \KB] \leq \ep_N$, and that $\KB$ is a deterministic function of $\Pi^R, Y^N$. Moreover, the last inequality uses $\ep_N < \frac{C' - C'' - 2/N}{L'}$.

  \end{proof}

  The next lemma, which states that the internal and external information complexities {\it tensorize} (i.e., they satisfy a direct sum property), was proved in \cite{ghazi2018resource}.
   \begin{lemma}[\cite{ghazi2018resource}, Lemma 14]
    \label{lem:cr_sim}
Suppose that $\Pi$ is an $r$-round private-coin protocol with inputs $(X^N, Y^N) \sim \nu^{\otimes N}$. Then there is an $r$-round private-coin protocol $\Pi'$ with only private randomness, inputs $(X,Y) \sim \nu$, such that:
    \begin{enumerate}
    \item[(1)] $\ICint_{\nu^{\otimes N}}(\Pi) = N \cdot \ICint_{\nu}(\Pi')$. 
    \item[(2)] $\ICext_{\nu^{\otimes N}}(\Pi) \leq N \cdot \ICext_{\nu}(\Pi')$. 
    \end{enumerate}
  \end{lemma}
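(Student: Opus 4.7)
The plan is to adapt the standard direct-sum construction for information complexity, embedding the single-copy input $(X,Y) \sim \nu$ as a uniformly chosen coordinate of a simulated multi-copy input fed into $\Pi$. Concretely, I would construct $\Pi'$ as follows. Using shared randomness (which I will later absorb into Alice's first message to satisfy the private-coin requirement), sample an index $J \in [N]$ uniformly, a prefix $X^{J-1} \sim \nu_X^{\otimes(J-1)}$, and a suffix $Y_{J+1}^N \sim \nu_Y^{\otimes(N-J)}$. Alice then uses her own private randomness to fill in $X_{J+1}^N$ by drawing each $X_i \sim \nu_{X|Y}(Y_i)$ coordinate-wise, and Bob symmetrically uses his private randomness to draw $Y^{J-1}$ coordinate-wise from $\nu_{Y|X}$ applied to $X^{J-1}$. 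Setting $(X_J, Y_J) = (X,Y)$ produces a sample $(X^N, Y^N) \sim \nu^{\otimes N}$ in which Alice knows $X^N$ and Bob knows $Y^N$, and the two parties then run $\Pi$ on $(X^N, Y^N)$ for $r$ rounds.

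For the internal information, write $\Rpub = (J, X^{J-1}, Y_{J+1}^N)$. Since $\Rpub \indep (X,Y)$ we have $\ICint_\nu(\Pi') = I(\Pi^r; X \mid Y, \Rpub) + I(\Pi^r; Y \mid X, \Rpub)$, which by conditioning on $J=j$ and substituting $(X,Y)=(X_j, Y_j)$ reduces to $\frac{1}{N}\sum_j \bigl[I(\Pi^r; X_j \mid Y_j, X^{j-1}, Y_{j+1}^N) + I(\Pi^r; Y_j \mid X_j, X^{j-1}, Y_{j+1}^N)\bigr]$ under $\nu^{\otimes N}$. The main analytical step is then a chain-rule manipulation exploiting coordinate-wise independence (namely $X_j \indep Y^{j-1} \mid Y_j$ and $Y_j \indep X_{j+1}^N \mid X_j$): each per-coordinate term is upper bounded by $I(\Pi^r; X_j \mid X^{j-1}, Y^N)$ and $I(\Pi^r; Y_j \mid X^N, Y_{j+1}^N)$ respectively, and summing telescopes via the forward chain rule for $I(\Pi^r; X^N \mid Y^N)$ and the backward chain rule for $I(\Pi^r; Y^N \mid X^N)$ to give $N \cdot \ICint_\nu(\Pi') \leq \ICint_{\nu^{\otimes N}}(\Pi)$.

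For the external information the same decomposition gives $\ICext_\nu(\Pi') = \frac{1}{N}\sum_j I(\Pi^r; X_j, Y_j \mid X^{j-1}, Y_{j+1}^N)$, and coordinate-wise independence now acts in the opposite direction: since $(Y^{j-1}, X_{j+1}^N) \indep (X_j, Y_j) \mid (X^{j-1}, Y_{j+1}^N)$, enlarging the conditioning set can only decrease the term, so $I(\Pi^r; X_j, Y_j \mid X^{j-1}, Y_{j+1}^N) \geq I(\Pi^r; X_j, Y_j \mid X^{-j}, Y^{-j})$. Combining this with the subadditivity inequality $H(X^N, Y^N \mid \Pi^r) \geq \sum_j H(X_j, Y_j \mid \Pi^r, X^{-j}, Y^{-j})$ and the independence identity $\sum_j H(X_j, Y_j) = H(X^N, Y^N)$ gives $N \cdot \ICext_\nu(\Pi') \geq \ICext_{\nu^{\otimes N}}(\Pi)$, as required. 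To conclude, I would have Alice generate $\Rpub$ with her own private coins and prepend it to her first message of the simulation; since $\Rpub \indep (X,Y)$ this leaves both $\ICint$ and $\ICext$ unchanged, and it preserves the round count because Alice already speaks first in $\Pi$.

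The main technical subtlety is the asymmetric choice of shared randomness: sharing the $X$-prefix and $Y$-suffix while simulating the $Y$-prefix on Bob's private side and the $X$-suffix on Alice's private side is exactly what aligns the per-coordinate terms with the forward chain-rule expansion of $I(\Pi^r; X^N \mid Y^N)$ and the backward chain-rule expansion of $I(\Pi^r; Y^N \mid X^N)$ respectively. A more symmetric setup, for instance making all of $X^{-J}$ and $Y^{-J}$ shared or all of it private, would break one of these two telescoping identities and cost a factor of $N$ in the wrong direction for the corresponding bound.
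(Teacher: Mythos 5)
Your embedding --- shared $(J, X^{J-1}, Y_{J+1}^N)$, with Alice privately filling in $X_{J+1}^N$ coordinate-wise from $\nu_{X\mid Y}$ and Bob filling in $Y^{J-1}$ from $\nu_{Y\mid X}$, then running $\Pi$ --- is exactly the standard Braverman--Rao-style direct-sum simulation, and the reduction of $\ICint_\nu(\Pi')$ and $\ICext_\nu(\Pi')$ to averaged per-coordinate conditional mutual informations is correct. However, both bounding steps have gaps, and both trace to the missing rectangle (cut-and-paste) property of communication protocols. For the internal cost you establish only $N\cdot\ICint_\nu(\Pi') \leq \ICint_{\nu^{\otimes N}}(\Pi)$, whereas the lemma asserts equality. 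The gap per coordinate is $I(\Pi^r; X_j \mid X^{j-1}, Y^N) - I(\Pi^r; X_j \mid X^{j-1}, Y_j^N) = I(Y^{j-1}; X_j \mid X^{j-1}, Y_j^N, \Pi^r)$, and while you bound this as $\geq 0$, showing that it vanishes requires the rectangle lemma: conditioned on $(X^{j-1}, Y_j^N)$ the remaining inputs $(X_j^N, \RA)$ and $(Y^{j-1}, \RB)$ are independent under $\nu^{\otimes N}$, and they remain independent after further conditioning on the transcript because any transcript restricts the (now product-distributed) remaining inputs to a rectangle. Your one-sided bound is enough for the paper's downstream application, but not for the lemma as stated.

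For the external cost the error is more serious: the claim ``enlarging the conditioning set can only decrease the term'' is backwards. If $(X_j, Y_j) \indep W \mid Z$ then the chain rule gives
\[
I(\Pi^r; X_j, Y_j \mid Z, W) = I(\Pi^r; X_j, Y_j \mid Z) + I(W; X_j, Y_j \mid Z, \Pi^r) \geq I(\Pi^r; X_j, Y_j \mid Z),
\]
so conditioning on a conditionally independent variable can only \emph{increase} conditional mutual information (the monotonicity you had in mind holds for conditional entropy, not for $I(\Pi^r;\cdot\mid\cdot)$). Taking $Z = (X^{j-1}, Y_{j+1}^N)$ and $W = (Y^{j-1}, X_{j+1}^N)$ shows your claimed inequality $I(\Pi^r; X_j, Y_j \mid X^{j-1}, Y_{j+1}^N) \geq I(\Pi^r; X_j, Y_j \mid X^{-j}, Y^{-j})$ is the reverse of what actually holds, so your chain breaks at the second step and the entropy subadditivity bound cannot repair it. A working argument decomposes $\ICext_{\nu^{\otimes N}}(\Pi) = \sum_j I(\Pi^r; Y_j \mid Y_{j+1}^N) + \sum_j I(\Pi^r; X_j \mid X^{j-1}, Y^N)$, bounds the first sum above by adding the conditionally independent conditioning $X^{j-1}$ (using the inequality just displayed), replaces $Y^N$ by $Y_j^N$ in the second sum via cut-and-paste, and then regroups term by term to obtain $\sum_j I(\Pi^r; X_j, Y_j \mid X^{j-1}, Y_{j+1}^N) = N\cdot\ICext_\nu(\Pi')$. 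Here again the rectangle property is the missing ingredient, and without it the external bound is not established.
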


  Lemmas \ref{lem:augment} and \ref{lem:cr_sim} are sufficient to prove the converse direction of a version of Theorem \ref{thm:achievability_main} for deterministic protocols. 
  To prove the converse direction for randomized protocols (i.e., Theorem \ref{thm:achievability_converse}), we first need to establish some properties of $\tilFCamcr_r(C)$ in Lemma \ref{lem:fc_concave} below.
  \begin{lemma}
  \label{lem:fc_concave}
  For each fixed $r \in \BN$, $\tilFCamcr_r(\cdot)$ is a nondecreasing concave function on $\BR_{\geq 0}$. In particular, it is continuous, and $\frac{d\tilFCamcr_r(C)}{dC} \geq 1$ for all $C \geq 0$.
\end{lemma}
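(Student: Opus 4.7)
The plan is to deduce all three claims (monotonicity, concavity, and derivative at least $1$) from one structural fact: the set
\[
\mathcal{R} := \{(\ICint_\mu(\Pi), \ICext_\mu(\Pi)) : \Pi \text{ is an } r\text{-round private-coin protocol on } (X,Y) \sim \mu\}
\]
is convex. The standard time-sharing argument will do this: given two protocols $\Pi^{(1)}, \Pi^{(2)}$ and a mixing parameter $\lambda \in [0,1]$, I would have Alice sample $T \sim \mathrm{Bernoulli}(\lambda)$ from her private coins, prepend $T$ to her first message, and then have both parties execute $\Pi^{(T)}$. Since $T \indep (X,Y)$, a direct chain-rule computation gives $\ICint_\mu(\Pi') = \lambda \ICint_\mu(\Pi^{(1)}) + (1-\lambda)\ICint_\mu(\Pi^{(2)})$ and $\ICext_\mu(\Pi') = \lambda \ICext_\mu(\Pi^{(1)}) + (1-\lambda)\ICext_\mu(\Pi^{(2)})$, because the $T$ coordinate contributes zero to each of the relevant mutual informations.

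Setting $g(C) := \sup_{\Pi : \ICint_\mu(\Pi) \leq C} \ICext_\mu(\Pi)$ for the first piece of $\tilFCamcr_r$, convexity of $\mathcal{R}$ (together with the fact that the supremum is taken over a nondecreasing family of constraint sets) implies that $g$ is concave and nondecreasing on $\BR_{\geq 0}$. I will next record the identity $\ICext_\mu(\Pi) - \ICint_\mu(\Pi) = I(X;Y) - I(X;Y \mid \Pi^r, \Rpub)$, which follows from the chain rule applied to $I(\Pi^r, \Rpub; X, Y)$; this yields the uniform upper bound $g(C) \leq C + I(X;Y)$. By Theorem \ref{thm:tyagi_sl}, this bound is attained at $C = \FI_r(X;Y)$, so $g(\FI_r(X;Y)) = \FI_r(X;Y) + I(X;Y)$ and the two pieces in the definition of $\tilFCamcr_r$ agree at the transition, giving continuity there and monotonicity everywhere.

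For global concavity, I would consider $C_1 \leq C_3 \leq C_2$ with $C_3 = \lambda C_1 + (1-\lambda) C_2$ and split into cases. When both $C_1, C_2$ lie in the same piece, the inequality is immediate from concavity of $g$ or from linearity of $C \mapsto C + I(X;Y)$. In the crossover case $C_1 \leq \FI_r(X;Y) < C_2$, the claim reduces to a short algebraic check: I time-share between a near-optimal protocol at $C_1$ (achieving $\ICext \approx g(C_1)$) and the extremal protocol at $\FI_r(X;Y)$ from Theorem \ref{thm:tyagi_sl}, and invoke the slack estimate $g(C_1) \leq C_1 + I(X;Y)$ in the final comparison. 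After the dust settles, the only nonnegativity one actually uses is $I(X;Y) + C_1 - g(C_1) \geq 0$.

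Given global concavity on $\BR_{\geq 0}$, continuity on $(0, \infty)$ is automatic, and continuity at $C = 0$ follows from monotonicity together with the finite bound $g(0) \leq I(X;Y)$. The derivative estimate $\tfrac{d\tilFCamcr_r}{dC}(C) \geq 1$ is then immediate: the second piece has slope exactly $1$, and concavity forces the (left) slope of the first piece to be at least $1$ at every point. I expect the main (mild) obstacle to be the crossover-case algebra, since one must check that time-sharing \emph{across} the boundary $\FI_r(X;Y)$ does not lose anything in either information cost; but this is exactly what the identity $\ICext_\mu(\Pi) - \ICint_\mu(\Pi) = I(X;Y) - I(X;Y \mid \Pi^r, \Rpub)$ guarantees.
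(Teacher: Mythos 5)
Your approach is essentially the same as the paper's: establish concavity of the first piece by time-sharing, control the behavior at the transition $C = \FI_r(X;Y)$, and then read off the slope bound. Your framing via convexity of the region $\mathcal{R} = \{(\ICint_\mu(\Pi), \ICext_\mu(\Pi))\}$ is clean, and your observation that the crossover case ultimately reduces to the nonnegativity of $C_1 + I(X;Y) - g(C_1)$ is correct (I checked the algebra; it works). However, there is one genuine gap. You write ``By Theorem \ref{thm:tyagi_sl}, this bound is attained at $C = \FI_r(X;Y)$'' and later speak of ``the extremal protocol at $\FI_r(X;Y)$.'' Theorem \ref{thm:tyagi_sl} only identifies $\FI_r(X;Y)$ as an \emph{infimum} of communication values $C$ for which suitable protocols exist; it does not assert the infimum is a minimum, so no ``extremal protocol'' is guaranteed a priori. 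Without this, the equality $g(\FI_r(X;Y)) = \FI_r(X;Y) + I(X;Y)$ — which is exactly what you need for continuity and monotonicity across the transition, and for the crossover time-share to be well-posed — is unproved. The paper fills this hole with a compactness argument (Lemma \ref{lem:trd_closed}): the support lemma bounds the alphabets of the $\Pi_t$, the space of $r$-round private-coin protocols becomes compact, and the map $\Pi \mapsto (\ICint_\mu(\Pi), \ICext_\mu(\Pi))$ is continuous, so the set of achievable information-cost pairs is closed and the infimum is attained. You would need to either invoke this compactness step, or replace your appeal to a single extremal protocol by a limiting time-share over a sequence of near-extremal protocols (both work, but neither is in your proposal as written).

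One stylistic difference worth noting: the paper obtains the left-derivative bound at $\FI_r(X;Y)$ directly by time-sharing with the concrete protocol ``Alice sends $X$'' (which has $\ICext - \ICint = I(X;Y)$ exactly), whereas you derive the derivative bound everywhere as a corollary of global concavity plus the slope-$1$ linear right piece. Both routes are valid once the boundary equality is established; yours is marginally more economical once the crossover case is verified, while the paper's direct calculation is self-contained and avoids any dependence on having already established concavity across the boundary.
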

\begin{proof}
  First we suppose $C' < C \leq \FI_r(X;Y)$. 
  That $\tilFCamcr_r(C)$ is non-decreasing for $C$ in this range is immediate from the definition. To show concavity, we use a simple time-sharing argument. In particular, pick any $L < \tilFCamcr_r(C)$ and $L' < \tilFCamcr_r(C')$, and suppose some $r$-round protocol $\Pi = (\Pi_1, \ldots, \Pi_r)$ has $\ICext_\mu(\Pi) \geq L$, $\ICint_\mu(\Pi) \leq C$, and that some $r$-round protocol $\Pi' = (\Pi_1', \ldots, \Pi_r')$ has $\ICext_\mu(\Pi') \geq L'$ and $\ICint_\mu(\Pi') \leq C'$. For any $0 < \delta < 1$, construct a protocol $\Pi''$ in which Alice, using her private randomness, generates a bit $B$ which is 1 with probability $\delta$, and sends it to Bob as part of the first message. If $B = 0$, Alice and Bob run the protocol $\Pi'$, and if $B = 1$, then Alice and Bob run the protocol $\Pi$. Formally, we write:
  $$
  \Pi_i'' := \begin{cases}
    (B, \Pi_i) \quad : \quad i = 1, B = 1\\
    (B,\Pi_i') \quad : \quad i = 1, B = 0\\
    \Pi_i \quad : \quad i > 1, B = 1\\
    \Pi_i' \quad : \quad i > 1, B = 0.
  \end{cases}
  $$
  Then by linearity of expectation,
  $$
I(\Pi_1'' ; X | Y) = I(B ; X | Y) + I(\Pi_1'' ; X | YB) = I(\Pi_1'' ; X | YB) = \delta \cdot I(\Pi_1 ; X | Y) +  (1-\delta) \cdot I(\Pi_1' ; X |Y).
$$
and
$$
I(\Pi_1'' ; XY) = I(B ; XY) + I(\Pi_1'' ; XY | B) = \delta \cdot I(\Pi_1 ; XY) + (1-\delta) \cdot I(\Pi_1' ; XY).
$$
It follows in an even simpler manner that for all $i$,
$$I(\Pi_i'' ; XY | (\Pi'')^{i-1}) = \delta \cdot I(\Pi_i ; XY | \Pi^{i-1}) + (1-\delta) \cdot I(\Pi_i' ; XY | (\Pi')^{i-1}),$$
that for odd $i \in [r]$,
$$I(\Pi_i'' ; X | Y (\Pi'')^{i-1}) = \delta \cdot I(\Pi_i ; X | Y \Pi^{i-1}) + (1-\delta) \cdot I(\Pi_i' ; X | Y (\Pi')^{i-1}),$$
and for even $i \in [r]$,
$$
I(\Pi_i'' ; Y | X (\Pi'')^{i-1}) = \delta \cdot I(\Pi_i ; Y | X \Pi^{i-1}) + (1-\delta) \cdot I(\Pi_i' ; Y | X (\Pi')^{i-1}).
$$
Thus $\ICint_\mu(\Pi'') = \delta \cdot \ICint_\mu(\Pi) + (1-\delta) \cdot \ICint_\mu(\Pi')$ and $\ICext_\mu(\Pi'') = \delta \cdot \ICext_\mu(\Pi) + (1-\delta) \cdot \ICext_\mu(\Pi')$. In particular,
$$
\tilFCamcr_r(\delta  C + (1-\delta)  C') \geq \tilFCamcr_r(\delta  \ICint_\mu(\Pi) + (1-\delta)  \ICint_\mu(\Pi')) \geq \delta  \ICext_\mu(\Pi) + (1-\delta)  \ICext_\mu(\Pi') \geq \delta L + (1-\delta)L',
$$
and taking $L \ra \tilFCamcr_r(C)$ and $L' \ra \tilFCamcr_r(C')$ gives $\tilFCamcr_r(\delta C + (1-\delta)C') \geq \delta \cdot \tilFCamcr_r(C) + (1-\delta) \cdot \tilFCamcr_r(C')$, establishing that $\tilFCamcr_r(\cdot)$ is convex on $[0, \FI_r(X;Y)]$.

To complete the proof of the lemma it suffices to show that:
\begin{enumerate}[label=(\arabic*)]
\item \label{it:closed} $\tilFCamcr_r(\FI_r(X;Y)) = \FI_r(X;Y) + I(X;Y)$, and
\item \label{it:derivative} the left-sided derivative of $\tilFCamcr_r(C)$ at $C = \FI_r(X;Y)$ with respect to $C$ is at least 1.
\end{enumerate}
For statement \ref{it:closed} above, we need the following lemma:
\begin{lemma}
  \label{lem:trd_closed}
  Fix a source $(X,Y) \sim \mu$. Then the set $\tilde \MT$ of all pairs $(C,L)$ with $C, L \geq 0$, such that there is some $r$-round private-coin protocol $\Pi$ such that $\ICint_\mu(\Pi) \leq C$ and $\ICext_\mu(\Pi) \geq L$ is a closed subset of $\BR^2$.
\end{lemma}
\begin{proof}
  By the support lemma \cite[Lemma 15.4]{csiszar_information_1981}, we can restrict our attention to protocols $\Pi = (\Pi_1, \ldots, \Pi_r)$ such that $\Pi_t$, $1 \leq t \leq r$, falls in a finite set of size $\MU_t$ at most $|\MX| |\MY| \prod_{t'=1}^{t-1} |\MU_{t'}| + 1$. For each odd $t$, the space of all possible $\Pi_t$ is the $|\MX| \cdot \prod_{t'=1}^{t-1} |\MU_{t'}|$-fold product of all probability distributions on $\MU_t$ (as $\Pi_t$ specifies a probability distribution on $\MU_t$ for each possible value of $X\Pi^{t-1}$), which is compact, and in fact homeomorphic to a closed ball in some $\BR^K$. We have an analogous statement for even $t$, and therefore the space of all possible $\Pi$ is compact. Since the functions $\Pi \mapsto \ICint_\mu(\Pi)$ and $\Pi \mapsto \ICext_\mu(\Pi)$ are continuous, it follows that the set of all possible $(\ICint_\mu(\Pi), \ICext_\mu(\Pi)) \in \BR_{\geq 0}^2$, over all $r$-round protocols $\Pi$, is compact (and in particular closed). Thus $\tilde \MT$ is closed as well.
\end{proof}
By Theorem \ref{thm:tyagi_sl}, there is a sequence $(C_i, L_i)$ with $\lim_{i \ra \infty} C_i = \FI_r(X;Y)$ and $\lim\inf_{i \ra \infty} \geq \FI_r(X;Y) + I(X;Y)$ such that $\tilFCamcr_r(C_i) \geq L_i$. It follows by Lemma \ref{lem:trd_closed} that 
$\tilFCamcr_r(\FI_r(X;Y)) \geq \FI_r(X;Y) + I(X;Y)$. To see that $\tilFCamcr_r(\FI_r(X;Y)) \leq \FI_r(X;Y) + I(X;Y)$, we note that for any protocol $\Pi$, $\ICext_\mu(\Pi) \leq \ICint_\mu(\Pi) + I(X;Y)$ by the data processing inequality. This establishes \ref{it:closed}.

To show that \ref{it:derivative} holds, consider any $C < \FI_r(X;Y)$, which implies that $\tilFCamcr_r(C) < C + I(X;Y)$. Let $\Pi = (\Pi_1, \ldots, \Pi_r)$  be any protocol with $\ICint_\mu(\Pi) = C$ and $L := \ICext_\mu(\Pi)$ arbitrarily close to $\tilFCamcr_r(C)$. For any $0 \leq \delta \leq 1$, consider the protocol $\Pi'$ in which Alice uses private randomness to generate a bit $B \in \{0,1\}$ that is 1 with probability $\delta$ and otherwise 0 and sends it to Bob. Then, if $B = 1$, Alice sends Bob $X$ and the protocol terminates (for a total of $1 \leq r$ rounds), and if $B = 0$, Alice and Bob simulate $\Pi$. In a similar manner as above, it is easy to see that
\begin{eqnarray}
  \ICint_\mu(\Pi') &=& (1-\delta)C + \delta \cdot H(X|Y)\nonumber\\
  \ICext_\mu(\Pi') &=& (1-\delta)L + \delta \cdot H(X)\nonumber.
\end{eqnarray}
Since $C < \FI_r(X;Y) \leq H(X|Y)$, there is some $\delta \in (0,1]$, which we denote by $\delta'$, such that $(1-\delta)C + \delta \cdot H(X|Y) = \FI_r(X;Y)$. Then $(1-\delta')L + \delta' \cdot H(X) \leq \tilFCamcr_r(\FI_r(X;Y))$. Then the secant line of the graph of $\tilFCamcr_r(\cdot)$ between the points $C$ and $\FI_r(X;Y)$ has slope at least
$$
\frac{(1-\delta')L + \delta' \cdot H(X) - L}{(1-\delta')C + \delta'\cdot H(X|Y) - C} = \frac{H(X) - L}{H(X|Y) - C} > 1,
$$
where the last inequality follows since $I(X;Y) > L-C$ by assumption that $C < \FI_r(X;Y)$. 
\end{proof}

The case $r=1$ of the next lemma, Lemma \ref{lem:simulate_pri_rand} was proven as part of the proof of Theorem 4.1 in \cite{ahlswede1998common}. It is also stated without proof in \cite{sudan_communication_2019}. 
  The proof will use the following elementary fact: the fact that $\Pi = (\Pi_1, \ldots, \Pi_r)$ is an $r$-round private-coin protocol, is equivalent to the fact that the following Markov conditions hold:
  $$
\Pi_t \dd X \Pi^{t-1} \dd Y, \ \ \mbox{$t$ odd}, \quad\quad X \dd Y \Pi^{t-1} \dd Y, \ \ \mbox{$t$ even},
$$
where a source $(X,Y) \sim \mu$ is fixed and the messages $\Pi_1, \ldots, \Pi_r$ are random variables.
\begin{lemma}
  \label{lem:simulate_pri_rand}
  Suppose that $\nu$ is a distribution with samples $(X\QA, Y\QB) \sim \nu$, where $\QA,\QB$ are uniform and independent infinite strings of bits that are independent of $(X,Y)$. Denote the marginal distribution of $(X,Y)$ by $\mu$. 
  Suppose that $\Pi$ is an $r$-round private-coin protocol with inputs $(X\QA,Y\QB) \sim \nu$, and write $\Iint = \ICint_\nu(\Pi), \Iext = \ICext_\nu(\Pi)$. Then there is a non-negative real number $\alpha$ and a protocol $\Pi'$ with inputs $(X,Y) \sim \mu$ such that
  \begin{eqnarray}
    \ICext_\mu(\Pi') & = & \Iext - \alpha \\
    \ICint_\mu(\Pi') & = & \Iint - \alpha.
  \end{eqnarray}
\end{lemma}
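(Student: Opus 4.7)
The natural candidate is to obtain $\Pi'$ from $\Pi$ by replacing $\QA$ with a fresh private-coin string $\RA$ on Alice's side and $\QB$ with $\RB$ on Bob's side, running the protocol otherwise verbatim (any additional private coins of $\Pi$ can first be folded into $\QA, \QB$). Because $(\QA, \QB) \indep (X, Y)$ under $\nu$, the joint distribution of $(X, Y, T')$ under $\Pi'$ is identical to that of $(X, Y, T)$ under $\Pi$, where $T, T'$ denote the full transcripts. Consequently $I(T'; XY) = I(T; XY)$, $I(T'; X \mid Y) = I(T; X \mid Y)$, and $I(T'; Y \mid X) = I(T; Y \mid X)$, so the candidate
\[
\alpha \;:=\; \Iext - \ICext_\mu(\Pi') \;=\; I(T; \QA\QB \mid XY)
\]
is manifestly non-negative and the external-cost identity is immediate.

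To obtain the matching internal-cost identity, the key is the three Markov relations
\[
X \indep \QB \,\big|\, (Y, T), \qquad Y \indep \QA \,\big|\, (X, T), \qquad \QA \indep \QB \,\big|\, (X, Y, T).
\]
All three follow from a single factorization argument: after absorbing any extra private randomness into $\QA, \QB$, each message $\Pi_t$ is a deterministic function of $(X, \QA, \Pi^{t-1})$ for odd $t$ and of $(Y, \QB, \Pi^{t-1})$ for even $t$, so the joint density factors as
\[
p(x, y, q_A, q_B, t) \;=\; p(x, y)\, p(q_A)\, p(q_B)\cdot g(x, q_A, t) \cdot h(y, q_B, t),
\]
where $g$ and $h$ are the indicators that the odd-indexed (resp.\ even-indexed) messages of $t$ are consistent with $(x, q_A)$ (resp.\ $(y, q_B)$). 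Conditioning on $(X, Y, T) = (x, y, t)$ leaves $(\QA, \QB)$ with a product conditional distribution whose factors depend only on $(x, t)$ and $(y, t)$, respectively, and all three Markov statements read off.

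Granting the Markov relations, the internal cost simplifies via the chain rule:
\begin{align*}
I(T; X\QA \mid Y\QB) &= I(T; X \mid Y\QB) + I(T; \QA \mid XY\QB) \\
 &= I(T; X \mid Y) + I(T; \QA \mid XY),
\end{align*}
where the first term on the right-hand side uses $X \indep \QB \mid (Y, T)$ and the second uses $\QA \indep \QB \mid (X, Y, T)$ together with $\QA \indep \QB$ unconditionally (which gives $H(\QA \mid XY\QB) = H(\QA \mid XY)$). A mirror computation gives $I(T; Y\QB \mid X\QA) = I(T; Y \mid X) + I(T; \QB \mid XY)$; summing and invoking $\QA \indep \QB \mid (X, Y, T)$ once more to rewrite $I(T; \QA \mid XY) + I(T; \QB \mid XY) = I(T; \QA\QB \mid XY) = \alpha$ yields
\[
\Iint \;=\; I(T; X \mid Y) + I(T; Y \mid X) + \alpha \;=\; \ICint_\mu(\Pi') + \alpha,
\]
as required. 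The one genuine obstacle is the conditional-independence step $\QA \indep \QB \mid (X, Y, T)$; without it the ``private-coin tax'' on internal and external costs would differ, and no single $\alpha$ could simultaneously witness both identities.
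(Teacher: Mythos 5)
Your proof is correct, and it takes a genuinely different route from the paper's. The paper expands $\ICext$ and $\ICint$ round by round via the chain rule, uses the per-round Markov conditions of a private-coin protocol to simplify each summand, and proves the key cancellation (that $I(\Pi_i;\QB\mid Y\Pi^{i-1})=0$ for odd $i$ and symmetrically for even $i$) by a separate chain of identities invoking the ``monotonicity of correlation'' property of protocols; the value of $\alpha$ that falls out is a sum of per-round terms $\sum_{i\in\MO^r} I(\Pi_i;\QA\mid X\Pi^{i-1}) + \sum_{i\in\ME^r} I(\Pi_i;\QB\mid Y\Pi^{i-1})$. You instead work directly with the full transcript $T$, extract the three conditional independences $X\indep\QB\mid(Y,T)$, $Y\indep\QA\mid(X,T)$, $\QA\indep\QB\mid(X,Y,T)$ in one shot from the product factorization $p(x,y,q_A,q_B,t)=p(x,y)p(q_A)p(q_B)g(x,q_A,t)h(y,q_B,t)$ (the standard rectangle/cut-and-paste structure of two-party protocols), and obtain $\alpha=I(T;\QA\QB\mid XY)$ in closed form. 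Your version is arguably more conceptual: the single factorization is the underlying reason all the per-round identities in the paper hold, and packaging the argument around it makes the role of $\alpha$ (the information the transcript leaks about the simulated private coins beyond what it leaks about $(X,Y)$) transparent. The two expressions for $\alpha$ are easily seen to agree by a telescoping chain-rule computation. One small point worth making explicit: in the step $I(T;X\mid Y\QB)=I(T;X\mid Y)$ you also need the unconditional $X\indep\QB\mid Y$ (which follows from $\QB\indep(X,Y)$) in addition to $X\indep\QB\mid(Y,T)$; you invoke the analogous unconditional fact for the $\QA$ term but not this one. Also note the ``fold extra private coins into $\QA,\QB$'' step is convenient but not necessary: even without it, $p(t\mid x,q_A,y,q_B)$ already factors as $\prod_{i\ \mathrm{odd}}p(t_i\mid x,q_A,t^{i-1})\cdot\prod_{i\ \mathrm{even}}p(t_i\mid y,q_B,t^{i-1})$, so $g,h$ can simply be taken to be these conditional probabilities rather than indicators, and the same Markov relations follow.
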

\begin{proof}
  The protocol $\Pi'$ proceeds as follows: given inputs $(X,Y) \sim \mu$, Alice uses her private randomness to generate a uniform infinite string $\QA$ independent of $X$ and Bob does the same to generate a uniform infinite string $\QB$. Then certainly the resulting pair $(X\QA, Y\QB)$ are distributed according to $\nu$. Then Alice and Bob simply run the protocol $\Pi$. Notice that the joint distribution of $((\Pi')^r, (\QA X, \QB, Y))$ is identical to the joint distribution of $(\Pi^r, (\QA X, \QB Y))$. 

  Let $\MO^r \subset [r]$ denote the odd integers from 1 to $r$, and $\ME^r \subset [r]$ denote the even integers from 1 to $r$. That $\Pi$ is a randomized (private-coin) protocol with inputs $(X\QA, Y\QB)$ means that the following Markov conditions hold:
  \begin{eqnarray}
    \label{eq:mark_ra}
    \Pi_i \dd \QA X \Pi^{i-1} \dd \QB Y & \quad & \forall i \in \MO^r \\
    \label{eq:mark_rb}
    \QA X \dd \QB Y \Pi^{i-1} \dd \Pi_i & \quad & \forall i \in \ME^r.
  \end{eqnarray}

  It follows immediately from (\ref{eq:mark_ra}) and (\ref{eq:mark_rb}) and the fact that $\QA$, $\QB$, and $(X,Y)$ are all independent that the following Markov conditions also hold: 
  \begin{eqnarray}
    \label{eq:mark_a}
    \Pi_i \dd X \Pi^{i-1} \dd  Y & \quad & \forall i \in \MO^r \\
    \label{eq:mark_b}
     X \dd  Y \Pi^{i-1} \dd \Pi_i & \quad & \forall i \in \ME^r.
\end{eqnarray}
  It follows from (\ref{eq:mark_ra}) and (\ref{eq:mark_rb}) and the chain rule that
  \begin{eqnarray}
    \ICext_\nu(\Pi) &=& \sum_{i \in \MO^r} I(\Pi_i; \QA X \QB Y | \Pi^{i-1}) + \sum_{i \in \ME^r} I(\Pi_i ; \QA X \QB Y | \Pi^{i-1})\nonumber\\
    \label{eq:ext_nu}
    &=& \sum_{i \in \MO^r} I(\Pi_i ; \QA X | \Pi^{i-1}) + \sum_{i \in \ME^r} I(\Pi_i ; \QB Y | \Pi^{i-1}).
  \end{eqnarray}
  In a similar manner, it follows from (\ref{eq:mark_a}) and (\ref{eq:mark_b}) that
  \begin{equation}
    \label{eq:ext_mu}
    \ICext_\mu(\Pi) = \sum_{i \in \MO^r} I(\Pi_i; X | \Pi^{i-1}) + \sum_{i \in \ME^r} I(\Pi_i ; Y | \Pi^{i-1}).
  \end{equation}
  Thus, from (\ref{eq:ext_nu}) and (\ref{eq:ext_mu}),
  \begin{equation}
    \label{eq:ext_numu}
    \ICext_\nu(\Pi) - \ICext_\mu(\Pi) = \sum_{i \in \MO^r} I(\Pi_i ; \QA | \Pi^{i-1} X) + \sum_{i \in \ME^r} I(\Pi_i ; \QB | \Pi^{i-1}Y).
  \end{equation}
  As for internal information cost, from (\ref{eq:mark_ra}) and (\ref{eq:mark_rb}) we have
  \begin{equation}
    \label{eq:int_nu}
    \ICint_\nu(\Pi) = \sum_{i \in \MO^r} I(\Pi_i ; \QA X | \Pi^{i-1}) - I(\Pi_i ; \QB Y | \Pi^{i-1}) + \sum_{i \in \ME^r} I(\Pi_i ; \QB Y | \Pi^{i-1}) - I(\Pi_i ; \QA X | \Pi^{i-1}),
  \end{equation}
  and from (\ref{eq:mark_a}) and (\ref{eq:mark_b}), we have
  \begin{equation}
    \label{eq:int_mu}
    \ICint_\mu(\Pi) = \sum_{i \in \MO^r} I(\Pi_i ;  X | \Pi^{i-1}) - I(\Pi_i ;  Y | \Pi^{i-1}) + \sum_{i \in \ME^r} I(\Pi_i ;  Y | \Pi^{i-1}) - I(\Pi_i ;  X | \Pi^{i-1}),
  \end{equation}
  Thus, from (\ref{eq:int_nu}) and (\ref{eq:int_mu}),
  \begin{eqnarray}
    \label{eq:int_numu}
    \ICint_\nu(\Pi) - \ICint_\mu(\Pi) &=& \sum_{i \in \MO^r} I(\Pi_i; \QA | \Pi^{i-1} X) - I(\Pi_i ; \QB | \Pi^{i-1} Y) \nonumber\\
    && + \sum_{i \in \ME^r} I(\Pi_i ; \QB | \Pi^{i-1} Y) - I(\Pi_i ; \QA | \Pi^{i-1} X)\nonumber.
    \end{eqnarray}

    Next we claim that for all $i \in \MO^r$, $I(\Pi_i ; \QB | Y \Pi^{i-1}) = 0$ and for all $i \in \ME^r$, $I(\Pi_i ; \QA | X \Pi^{i-1}) = 0$. For $i \in \MO^r$, we have
    \begin{eqnarray}
      && I(\Pi_i ; \QB | Y, \Pi^{i-1}) - I(\Pi_i ; \QB | Y, \Pi^{i-1}, X, \QA)\nonumber\\
      &=& H(\QB | Y, \Pi^{i-1}) - H(\QB | Y, \Pi^i) - H(\QB | Y, \Pi^{i-1}, X, \QA) + H(\QB | Y, \Pi^i, X, \QA)\nonumber\\
      &=& I(\QB; X, \QA | Y, \Pi^{i-1}) - I(\QB ; X, \QA | Y, \Pi^i)\nonumber.
    \end{eqnarray}
    Thus
    \begin{eqnarray}
      \label{eq:3_zero}
      I(\Pi_i ; \QB | Y, \Pi^{i-1}) &=& I(\Pi_i ; \QB | Y, \Pi^{i-1}, X, \QA) + I(\QB; X, \QA | Y, \Pi^{i-1}) - I(\QB ; X, \QA | Y, \Pi^i)\\
      & = & 0,
    \end{eqnarray}
    where the first term of (\ref{eq:3_zero}) is 0 by (\ref{eq:mark_ra}), and the second and third terms are 0 since $\QB \perp (X, \QA)$ and by the monotinicity of correlation property of communication protocols (this is the standard fact that if $(X,Y) \sim \mu$ for some distribution $\mu$ on $\MX \times \MY$, then for any $t > 0$, $I_{\mu}(X; Y | \Pi^t) \leq I_\mu(X;Y)$).

    \if 0
   \begin{eqnarray}
    I(\Pi_i ; \QB | Y \Pi^{i-1}) & \leq & I(\Pi_i ; \QB | \Pi^{i-1}) + I(\Pi_i ; Y | \QB \Pi^{i-1})\nonumber\\
                                 & \leq & I(\Pi_i ; \QB | \Pi^{i-1} X \QA) + I(X\QA; \QB | \Pi^{i-1}) \nonumber\\
    \label{eq:4_zero}
                                 && + I(\Pi_i ; Y | \QB \Pi^{i-1} \QA X) + I(\QA X; Y | \QB \Pi^{i-1})\\
    & = & 0\nonumber,
  \end{eqnarray}
  where the first two inequalities follows from Lemma \ref{lem:cond_ineq_1}, the first and third terms in (\ref{eq:4_zero}) are 0 by (\ref{eq:mark_ra}), and the second and fourth terms in (\ref{eq:4_zero}) are 0 by the monotonicity of correlation property of communication protocols. 
  \fi
  It follows in a similar manner that for $i \in \ME^r$, $I(\Pi_i : \QA | Y \Pi^{i-1}) = 0$. 
  Therefore, we obtain from (\ref{eq:ext_numu}) and (\ref{eq:int_numu}) that 
  $$
\alpha = \ICext_\nu(\Pi) - \ICext_\mu(\Pi) = \ICint_\nu(\Pi) - \ICint_\mu(\Pi) = \sum_{i \in \MO^r} I(\Pi_i ; \QA | \Pi^{i-1} X) + \sum_{i \in \ME^r} I(\Pi_i ; \QB | \Pi^{i-1} Y).
  $$
\end{proof}
Now we may prove the converse direction of Theorem \ref{thm:achievability_main}, i.e., Theorem \ref{thm:achievability_converse}.
\begin{proof}[Proof of Theorem \ref{thm:achievability_converse}]
  Fix a source $(X,Y) \sim \mu$ and any $C \geq 0$. By definition of $\FCamcr_r(\cdot)$, for any $L < \FCamcr_r(C)$, we have that $(C,L) \in \MT_r(X,Y)$, i.e., there is a private-coin $r$-round protocol $\Pi$ that achieves the rate $(C,L)$. As in Lemma \ref{lem:simulate_pri_rand}, we interpret $\Pi$ as a deterministic protocol with respect to the tuple $(X\RA, Y\RB)$ (and denote the corresponding joint distribution by $\nu$).

  Then by Lemma \ref{lem:augment}, for any $C' > C$ and $L' < \FCamcr_r(C)$, there is some $N$ such that there is an $r$-round protocol $\Pi$ with inputs $(X^N\RA^N, Y^N\RB^N) \sim \nu^{\otimes N}$ such that $\ICext_{\mu^{\otimes N}}(\Pi) \geq L' N$ and $\ICint_{\mu^{\otimes N}}(\Pi) \leq C' N$. Then by Lemma \ref{lem:cr_sim}, there is an $r$-round private-coin protocol $\Pi'$ for the inputs $(X\RA, Y \RB) \sim \nu$ such that $\ICint_\nu(\Pi') \leq C'$ and $\ICext_\nu(\Pi') \geq L'$. It follows from Lemma \ref{lem:simulate_pri_rand} with $\QA = \RA, \QB = \RB$ that there is an $r$-round private-coin protocol $\Pi''$ for the inputs $(X, Y) \sim \mu$ such that $\ICint_\mu(\Pi'') \leq C' - \alpha$ and $\ICext_\mu(\Pi'') \geq L' - \alpha$, for some $\alpha \geq 0$.

  By definition of $\tilFCamcr_r(\cdot)$, it follows that $\tilFCamcr_r(C' - \alpha) \geq L' - \alpha$. By Lemma \ref{lem:fc_concave}, it follows that $\tilFCamcr_r(C') \geq L'$. 
  By taking $C' \ra C, L' \ra \FCamcr_r(C)$, it follows by continuity of $\tilFCamcr_r(C)$ (Lemma \ref{lem:fc_concave}) that $\tilFCamcr_r(C) \geq L$. Since $L < \FCamcr_r(C)$ is arbitrary, we get $\tilFCamcr_r(C) \geq \FCamcr_r(C)$, as desired.
\end{proof}

\section{Information Theoretic Lemmas}
\label{sec:it_lemmas}
In this section we collect several information theoretic lemmas which are used throughout the paper. 

  \begin{proposition}[Data processing inequality]
    If $X \dd Y \dd Z$ is a Markov chain, then $I(X;Z) \leq I(X;Y)$.
    \end{proposition}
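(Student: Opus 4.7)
The plan is to derive the inequality from the chain rule for mutual information together with the defining conditional independence of a Markov chain. This is the textbook proof, and there is essentially no obstacle — the only thing to justify is that $I(X; Z \mid Y) = 0$, which follows directly from the Markov property $X \dd Y \dd Z$ (i.e., $X$ and $Z$ are conditionally independent given $Y$, so their conditional mutual information vanishes).

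The two key steps are as follows. First, I would apply the chain rule for mutual information to the pair $(Y, Z)$ in two different orders:
\begin{align*}
I(X; Y, Z) &= I(X; Y) + I(X; Z \mid Y), \\
I(X; Y, Z) &= I(X; Z) + I(X; Y \mid Z).
\end{align*}
Second, I would invoke the Markov hypothesis to conclude $I(X; Z \mid Y) = 0$, so that equating the two right-hand sides yields
\[
I(X; Y) = I(X; Z) + I(X; Y \mid Z).
\]
Since $I(X; Y \mid Z) \geq 0$ (conditional mutual information is always nonnegative, being an expectation of KL divergences), we immediately obtain $I(X; Y) \geq I(X; Z)$, as required.

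No further ingredients are needed; the argument uses only the chain rule, the definition of a Markov chain, and nonnegativity of conditional mutual information, all of which are standard and already implicit in the information-theoretic notation introduced earlier in the paper.
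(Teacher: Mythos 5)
Your proof is correct and is the standard textbook derivation via the chain rule and the vanishing of $I(X;Z\mid Y)$ under the Markov hypothesis. The paper itself states the data processing inequality as a standard proposition without proof, so there is no paper proof to compare against; your argument is exactly the one a reader would be expected to know or reconstruct.
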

    
  \begin{lemma}[\cite{haitner_communication_2018}, Lemma 2.9]
    \label{lem:cond_ineq_1}
    For random variables $X, Y, Z, W$, we have that
    $$
    I(X;W | Y,Z) \geq I(X; Y | W, Z) - I(X; Y | Z) \geq -I(X;W | Z).
    $$
    In particular,
    $$
    H(W) \geq I(X; Y | W, Z) - I(X; Y | Z) \geq -H(W).
    $$
  \end{lemma}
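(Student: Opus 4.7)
The plan is to apply the chain rule for conditional mutual information in two different orderings to $I(X; YW \mid Z)$. First I would expand
\[ I(X; YW \mid Z) = I(X; Y \mid Z) + I(X; W \mid Y, Z), \]
and, swapping the roles of $Y$ and $W$, also
\[ I(X; YW \mid Z) = I(X; W \mid Z) + I(X; Y \mid W, Z). \]
Equating the two right-hand sides yields the central identity
\[ I(X; Y \mid W, Z) - I(X; Y \mid Z) = I(X; W \mid Y, Z) - I(X; W \mid Z). \]

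From this single identity, both inequalities in the lemma fall out immediately from the non-negativity of conditional mutual information. For the upper bound, since $I(X; W \mid Z) \geq 0$, the right-hand side of the identity is at most $I(X; W \mid Y, Z)$, giving $I(X; W \mid Y, Z) \geq I(X; Y \mid W, Z) - I(X; Y \mid Z)$. For the lower bound, since $I(X; W \mid Y, Z) \geq 0$, the right-hand side of the identity is at least $-I(X; W \mid Z)$, giving $I(X; Y \mid W, Z) - I(X; Y \mid Z) \geq -I(X; W \mid Z)$.

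The ``in particular'' clause is then an immediate consequence of the standard bound $I(A; B \mid C) \leq H(A \mid C) \leq H(A)$ applied with $A = W$, which shows that both $I(X; W \mid Y, Z)$ and $I(X; W \mid Z)$ are at most $H(W)$. There is no real obstacle in this proof: the whole lemma reduces to one application of the chain rule and two invocations of non-negativity, with the ``in particular'' clause being just a further weakening via the trivial bound $I \leq H$.
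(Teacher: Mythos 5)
Your proof is correct and essentially the same as the paper's: both establish the identity $I(X;W\mid Y,Z)-I(X;W\mid Z)=I(X;Y\mid W,Z)-I(X;Y\mid Z)$ (you via two applications of the chain rule to $I(X;YW\mid Z)$, the paper by expanding both sides into conditional entropies and regrouping, which amounts to the same computation) and then read off both inequalities from non-negativity of conditional mutual information, with the ``in particular'' clause following from $I(X;W\mid\cdot)\leq H(W)$.
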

  \begin{proof}
    Using the definition of mutual information, we observe
    \begin{eqnarray}
      && I(X;W | Y,Z) - I(X ; W | Z)\nonumber\\
      &=& H(X | Y,Z) - H(X | W,Y, Z) - H(X | Z) + H(X | W,Z)\nonumber\\
      &=& I(X; Y | W,Z) - I(X; Y |Z) \nonumber.
    \end{eqnarray}
    The claimed equalities hold by non-negativity of the mutual information.
  \end{proof}
\if 0
  \begin{proof}
    For the lower bound, notice that
    $$
H(X | Z) - H(X | Z, W) = I(W; X | Z) \leq H(W | Z).
$$
Then
$$
I(X; Y | Z, W) = H(X | Z, W) - H(X | Y, Z, W) \geq H(X | Z) - H(W |  Z) - H(X | Z, Y) = I(X; Y | Z) - H(W | Z).
$$
For the upper bound, we have that
$$
H(X|Y,Z) - H(X|Y,Z,W) = I(X; W | Y,Z) \leq H(W | Y,Z),
$$
so that
$$
I(X; Y | Z, W) = H(X | Z,W) - H(X | Y, Z, W) \leq H(X|Z) - (H(X | Y, Z) - H(W | Y,Z)) = I(X; Y | Z) + H(W | Y,Z).
$$
A symmetrical argument gives that $I(X;Y | Z, W) \leq I(X;Y | Z) + H(W | X,Z)$, establishing the upper bound.
\end{proof}
\fi

Pinsker's inequality gives an upper bound on total variation distance in terms of the KL divergence between two distributions.
\begin{proposition}[Pinsker's inequality]
  Let $\mu, \nu$ be two distributions supported on a set $\MX$. Then
  $$
\Delta(\mu, \nu) \leq \sqrt{\frac{\KL(\mu || \nu)}{2}}.
  $$
\end{proposition}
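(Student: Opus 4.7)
The plan is to prove Pinsker's inequality by first reducing to the binary case and then verifying the inequality for two-point distributions by elementary calculus.

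First, I would set up the reduction. Given distributions $\mu, \nu$ on a finite set $\MX$, let $A := \{x \in \MX : \mu(x) \geq \nu(x)\}$, so that $\Delta(\mu,\nu) = \mu(A) - \nu(A)$ (this is a standard characterization of total variation distance). Let $p = \mu(A)$ and $q = \nu(A)$, and let $\bar\mu, \bar\nu$ denote the Bernoulli distributions on $\{0,1\}$ with parameters $p$ and $q$ respectively, obtained as the pushforwards of $\mu, \nu$ under the indicator function $\mathbf{1}[x \in A]$. By the data processing inequality for KL divergence (which holds since KL divergence is monotone under pushforward by any deterministic or stochastic map), $\KL(\mu \| \nu) \geq \KL(\bar\mu \| \bar\nu)$. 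Meanwhile $\Delta(\mu,\nu) = p-q = \Delta(\bar\mu,\bar\nu)$. So it suffices to prove the binary version: for all $0 \leq q \leq p \leq 1$,
\[
2(p-q)^2 \;\leq\; p \log\frac{p}{q} + (1-p)\log\frac{1-p}{1-q}.
\]

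Next, I would establish the binary inequality by calculus. Fix $p \in [0,1]$ and set
\[
g(q) \;:=\; p \log\frac{p}{q} + (1-p)\log\frac{1-p}{1-q} - 2(p-q)^2.
\]
A direct computation gives $g(p) = 0$ and $g'(q) = -\frac{p}{q} + \frac{1-p}{1-q} + 4(p-q) = \frac{q-p}{q(1-q)} + 4(p-q)$, so $g'(p) = 0$. Differentiating once more yields
\[
g''(q) \;=\; \frac{p}{q^2} + \frac{1-p}{(1-q)^2} - 4,
\]
and one checks by an AM-GM-type argument that $\frac{p}{q^2} + \frac{1-p}{(1-q)^2} \geq \frac{1}{q(1-q)} \geq 4$ for all $q \in (0,1)$, using that $q(1-q) \leq 1/4$ and that the function $t \mapsto 1/t$ applied to $q^2/p$ and $(1-q)^2/(1-p)$ gives the first bound via convexity (equivalently, Cauchy--Schwarz: $(p + (1-p))^2 \leq \bigl(\tfrac{p}{q^2} + \tfrac{1-p}{(1-q)^2}\bigr)\bigl(q^2 + (1-q)^2\bigr)$ combined with $q^2 + (1-q)^2 \leq 1$, but a cleaner route is the first one). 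Since $g''(q) \geq 0$ on $(0,1)$, the function $g$ is convex, so from $g(p) = g'(p) = 0$ we conclude $g(q) \geq 0$ for all $q \in [0,1]$, which is the desired binary Pinsker inequality. (Boundary cases $q \in \{0,1\}$ or $p \in \{0,1\}$ are handled by continuity, with the usual conventions for $0 \log 0$.)

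Finally, combining the two steps gives $2\Delta(\mu,\nu)^2 = 2(p-q)^2 \leq \KL(\bar\mu\|\bar\nu) \leq \KL(\mu\|\nu)$, i.e., $\Delta(\mu,\nu) \leq \sqrt{\KL(\mu\|\nu)/2}$, as claimed. The main obstacle, such as it is, is the calculus step verifying $g''(q) \geq 4$ cleanly on the full interval; one can alternatively avoid the second-derivative computation by invoking the integral identity $\KL(\bar\mu\|\bar\nu) = \int_q^p \frac{t-q}{t(1-t)}\, dt$ and bounding $1/(t(1-t)) \geq 4$ directly, which is perhaps the slickest route and the one I would actually write up.
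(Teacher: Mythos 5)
The paper states Pinsker's inequality without proof as a standard fact, so there is no paper argument to compare against; I will just check your proof directly. Your reduction to the binary case via the event $A = \{x : \mu(x) \geq \nu(x)\}$ and the data-processing inequality for KL divergence is correct and standard. The problem is in the binary step.

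You fix $p$ and differentiate with respect to $q$ (the second argument of the divergence), arriving at $g''(q) = \frac{p}{q^2} + \frac{1-p}{(1-q)^2} - 4$, and you claim this is nonnegative because $\frac{p}{q^2} + \frac{1-p}{(1-q)^2} \geq \frac{1}{q(1-q)}$. That intermediate inequality is false. Clearing denominators, it is equivalent to $p(1-q)^2 + (1-p)q^2 \geq q(1-q)$, i.e.\ $(1-2q)(p-q) \geq 0$, which fails whenever $p > q > 1/2$; concretely, at $p=1,\ q=0.9$ you get $g''(q) = 1/0.81 - 4 < 0$, so $g$ is not convex and the argument collapses. (The Cauchy--Schwarz manipulation you sketch as an alternative justification suffers the same fate: with $\sum a_ib_i = 1$ it yields the lower bound $\frac{1}{q^2 + (1-q)^2}$, which never exceeds $4$.) Your ``slick'' integral identity has the analogous error: the correct Taylor-with-integral-remainder formula for $D(p) := \KL(\mathrm{Bern}(p)\,\|\,\mathrm{Bern}(q))$ with $q$ fixed is $D(p) = \int_q^p \frac{p-t}{t(1-t)}\,dt$, not $\int_q^p \frac{t-q}{t(1-t)}\,dt$ (these two integrals are genuinely different; you can check they disagree at $p=1/2,\ q=1/4$).

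The fix is simply to differentiate in the other variable. Hold $q$ fixed and set $h(p) := p\log\frac{p}{q} + (1-p)\log\frac{1-p}{1-q} - 2(p-q)^2$. Then $h(q)=h'(q)=0$ and $h''(p) = \frac{1}{p} + \frac{1}{1-p} - 4 = \frac{1}{p(1-p)} - 4 \geq 0$ for all $p \in (0,1)$, with no caveat; so $h$ is convex and hence nonnegative. Equivalently, the corrected integral route gives $\KL(\bar\mu\|\bar\nu) = \int_q^p \frac{p-t}{t(1-t)}\,dt \geq 4\int_q^p (p-t)\,dt = 2(p-q)^2$. The underlying point is that the binary KL is convex with second derivative $\frac{1}{p(1-p)}$ in its \emph{first} argument $p$, whereas as a function of the \emph{second} argument $q$ its Hessian is $\frac{p}{q^2} + \frac{1-p}{(1-q)^2}$, which can dip below $4$; your proof picked the wrong one. (A minor further remark: the paper measures KL in bits, for which the stated constant is not tight; your argument, once corrected, proves the natural-log version and the base-$2$ version follows a fortiori since $\ln 2 < 1$.)
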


The following lemma implies that the entropy functional $H(\cdot)$ is continuous on the set of distributions on a finite $\MX$ set with respect to the topology induced by total variation distance. 
  \begin{lemma}[\cite{ho_interplay_2010}, Theorem 6]
    \label{lem:reverse_pinsker}
    Suppose $X_1, X_2$ are random variables whose distributions are supported on a set $\MX$, and let $\delta = \Delta(X_1, X_2)$. If $0 \leq \delta \leq \frac{|\MX| - 1}{|\MX|}$, then
    $$
|H(X_1) - H(X_2)| \leq h(\delta) + \delta \log(|\MX| - 1).
    $$
  \end{lemma}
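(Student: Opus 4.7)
The plan is to reduce this to Fano's inequality via a maximal coupling. Recall the standard fact that for any two distributions $\mu_1,\mu_2$ supported on $\MX$ with total variation distance $\delta$, there exists a joint distribution (a \emph{maximal coupling}) of random variables $(X_1',X_2')$ with marginals matching those of $X_1,X_2$ respectively, and with $\p[X_1' \neq X_2'] = \delta$. Since both sides of the claimed inequality depend only on the marginal distributions of $X_1$ and $X_2$, we are free to replace $(X_1,X_2)$ by this coupling, and so from now on we assume they are jointly distributed this way.

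Now apply Fano's inequality, viewing $X_2$ as an estimator of $X_1$ (both random variables take values in the same alphabet $\MX$). Since $\p[X_1 \neq X_2] = \delta \leq (|\MX|-1)/|\MX|$, Fano gives
\[
H(X_1 \mid X_2) \leq h(\delta) + \delta \log(|\MX|-1).
\]
By the symmetric form of Fano's inequality (viewing $X_1$ as an estimator of $X_2$), we also have $H(X_2 \mid X_1) \leq h(\delta) + \delta \log(|\MX|-1)$.

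Combining with the elementary identity $H(X_1) = H(X_2) + H(X_1 \mid X_2) - H(X_2 \mid X_1)$ (which follows from the chain rule $H(X_1,X_2) = H(X_1) + H(X_2\mid X_1) = H(X_2) + H(X_1\mid X_2)$), we conclude
\[
|H(X_1) - H(X_2)| = |H(X_1\mid X_2) - H(X_2\mid X_1)| \leq \max\{H(X_1\mid X_2),\, H(X_2\mid X_1)\} \leq h(\delta) + \delta \log(|\MX|-1),
\]
which is exactly the desired bound. There is no real obstacle here: once we invoke the maximal coupling, the proof is essentially a one-line application of Fano's inequality in both directions. The only minor care needed is checking that the hypothesis $\delta \leq (|\MX|-1)/|\MX|$ is the natural range in which Fano's bound $h(\delta) + \delta\log(|\MX|-1)$ is tight/applicable.
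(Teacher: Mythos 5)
Your proof is correct. Note first that the paper does not reproduce a proof of this lemma at all: it simply cites it as Theorem~6 of Ho and Yeung (2010), so there is no in-paper argument to compare against. Your argument is the classical coupling route: replace $(X_1,X_2)$ by a maximal coupling so that $\Pr[X_1\neq X_2]=\delta$ (legitimate because both sides of the inequality depend only on marginals), apply Fano's inequality in both directions to get $H(X_1\mid X_2)\le h(\delta)+\delta\log(|\MX|-1)$ and $H(X_2\mid X_1)\le h(\delta)+\delta\log(|\MX|-1)$, and then use the chain-rule identity $H(X_1)-H(X_2)=H(X_1\mid X_2)-H(X_2\mid X_1)$ together with $|a-b|\le\max\{a,b\}$ for $a,b\ge 0$. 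Each step is sound, and in fact Fano's inequality holds for every $P_e\in[0,1]$, so the hypothesis $\delta\le(|\MX|-1)/|\MX|$ is not actually needed by your argument; it is simply the range on which the right-hand side is monotone and the bound is sharp (attained, e.g., by a point mass versus the uniform distribution on the complement). This coupling-plus-Fano route is a cleaner and more elementary derivation than the extremal-distribution optimization in Ho--Yeung's original paper, at the mild cost of not pinning down the exact extremizers; for the purposes of this lemma (which is only used to get Corollary~\ref{cor:rev_cond_pinsk} with generous constants) that trade-off is entirely favorable.
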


  Corollary \ref{cor:rev_cond_pinsk} derives a conditional version of Lemma \ref{lem:reverse_pinsker}.
  \begin{corollary}
    \label{cor:rev_cond_pinsk}
    Suppose that $X_1, X_2$ are random variables whose distributions are supported on a set $\MX$, $Y_1, Y_2$ are random variables whose distributions are supported on a set $\MY$, and that each of the pairs $(X_1,Y_1)$ and $(X_2,Y_2)$ are jointly distributed according to some distributions. Let $\delta = \Delta(X_1Y_1, X_2Y_2)$. Then
    $$
| H(X_1 | Y_1) - H(X_2 | Y_2)| \leq 1 + 6 \delta \log |\MX|.
    $$
  \end{corollary}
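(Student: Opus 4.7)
The plan is to prove the corollary by a maximal-coupling argument that reduces the conditional-entropy comparison to a bound on how much the two distributions can differ on the rare event of disagreement. The key advantage of coupling (over naively splitting into $H(XY) - H(Y)$) is that the resulting bound depends only on $\log|\MX|$, not on $\log|\MY|$; the joint-minus-marginal route gives an unwanted $\log|\MY|$ factor.

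Concretely, first I would invoke the maximal coupling lemma to construct a joint distribution of four random variables $(X_1', Y_1', X_2', Y_2')$ on a common probability space whose marginals match the laws of $(X_1, Y_1)$ and $(X_2, Y_2)$ respectively, and such that $\Pr[(X_1', Y_1') \neq (X_2', Y_2')] = \delta$. Let $E := \One[(X_1', Y_1') \neq (X_2', Y_2')]$, so $\Pr[E=1] = \delta$ and $H(E) = h(\delta) \leq 1$. Since $H(X_i|Y_i) = H(X_i'|Y_i')$ depends only on the marginal law, I may work entirely inside this coupled space.

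Next I would apply the standard identity $H(X|Y) = H(X|Y,E) + I(X;E|Y)$, giving the two-sided bound
\[
H(X_i'|Y_i',E) \;\leq\; H(X_i'|Y_i') \;\leq\; H(X_i'|Y_i',E) + H(E) \;\leq\; H(X_i'|Y_i',E) + 1.
\]
Therefore
\[
\bigl|H(X_1|Y_1) - H(X_2|Y_2)\bigr| \;\leq\; \bigl|H(X_1'|Y_1',E) - H(X_2'|Y_2',E)\bigr| + 1.
\]
Now comes the crucial observation: conditioned on $E=0$, we have $(X_1', Y_1') = (X_2', Y_2')$ pointwise, so the conditional laws of $X_1'|Y_1'$ and $X_2'|Y_2'$ agree, yielding $H(X_1'|Y_1',E=0) = H(X_2'|Y_2',E=0)$. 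Expanding $H(X_i'|Y_i',E) = (1-\delta)\,H(X_i'|Y_i',E=0) + \delta\,H(X_i'|Y_i',E=1)$, the $E=0$ terms cancel in the difference, leaving only
\[
\bigl|H(X_1'|Y_1',E) - H(X_2'|Y_2',E)\bigr| \;=\; \delta\,\bigl|H(X_1'|Y_1',E=1) - H(X_2'|Y_2',E=1)\bigr| \;\leq\; \delta \log|\MX|,
\]
since conditional entropy of a variable in $\MX$ is at most $\log|\MX|$. Combining the two displays gives $|H(X_1|Y_1) - H(X_2|Y_2)| \leq 1 + \delta\log|\MX|$, which is in fact stronger than the stated bound $1 + 6\delta\log|\MX|$.

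There is no real obstacle: the maximal coupling is a standard lemma, and every other step is a one-line manipulation of mutual information and entropy. If one preferred to avoid coupling, an alternative would be to apply Lemma~\ref{lem:reverse_pinsker} to the joint distributions on $\MX\times\MY$ and the marginals on $\MY$ and combine via $H(X|Y) = H(X,Y) - H(Y)$; however, this introduces terms like $\log(|\MX||\MY|)$ and $\log|\MY|$ that cannot be removed, illustrating why the coupling step is essential to get a bound depending only on $\log|\MX|$.
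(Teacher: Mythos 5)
Your proof is correct, and it takes a genuinely different route from the paper's. The paper never mentions coupling: it constructs an auxiliary variable $\tilde X_2$ jointly distributed with $Y_1$ whose conditional law $p_{\tilde X_2\mid Y_1}(\cdot\mid y)$ matches $p_{X_2\mid Y_2}(\cdot\mid y)$, shows via a triangle-inequality computation that the per-fiber total variation distances $\delta_y = \Delta\bigl(p_{X_1\mid Y_1}(\cdot\mid y),\, p_{\tilde X_2\mid Y_1}(\cdot\mid y)\bigr)$ satisfy $\E_{y\sim Y_1}[\delta_y]\leq 2\delta$, and then applies the reverse-Pinsker-type Lemma~\ref{lem:reverse_pinsker} conditionally on each $y$, absorbing the resulting $h(\delta_y)$ terms into the additive constant. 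That route is heavier on bookkeeping (the factor $6$ accumulates from the two $2\delta$ passes and the Markov-inequality truncation needed to keep $\delta_y \leq (|\MX|-1)/|\MX|$).

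Your approach replaces all of this with a maximal coupling of $(X_1,Y_1)$ and $(X_2,Y_2)$ and a single indicator $E$ of disagreement. The identity $H(X_i'\mid Y_i') = H(X_i'\mid Y_i',E) + I(X_i';E\mid Y_i')$ together with $I(X_i';E\mid Y_i')\leq H(E)\leq 1$ gives the additive $1$; expanding $H(X_i'\mid Y_i',E)$ over the two values of $E$ and using that the conditional joint laws of $(X_1',Y_1')$ and $(X_2',Y_2')$ given $E=0$ coincide makes the $E=0$ terms cancel; and the $E=1$ terms differ by at most $\log|\MX|$ each. The result, $|H(X_1\mid Y_1)-H(X_2\mid Y_2)|\leq 1+\delta\log|\MX|$, is in fact strictly stronger than the stated $1+6\delta\log|\MX|$, and the argument never invokes Lemma~\ref{lem:reverse_pinsker} or Pinsker's inequality at all. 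Both proofs are careful to produce a bound that depends only on $\log|\MX|$ rather than $\log|\MX||\MY|$ (the naive $H(X\mid Y)=H(XY)-H(Y)$ decomposition fails this), and your closing remark correctly identifies this as the real constraint; the coupling handles it by making the $E=0$ fiber exactly cancel, whereas the paper handles it by working conditional-distribution-by-conditional-distribution. Your version is the cleaner of the two and could replace the paper's proof with a sharper constant.
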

  \begin{proof}
    For $x \in \MX, y \in \MY$, write $p_{X_1Y_1}(x,y)$ for the probability of the event $\{ X_1 = x, Y_1 = y\}$, and similarly $p_{X_2Y_2}(x,y), p_{Y_1}(y), p_{Y_2}(y), p_{X_1|Y_1}(x | y), p_{X_2 | Y_2}(x|y)$, and so on. 
For any $y$ not in the support of $Y_2$, and any $x \in \MX$, we will write $p_{X_2 | Y_2}(x | y) = H(X_2 | Y_2 = y) = 0$ as a notational convention (and similarly for $p_{X_1 | Y_1}(x | y), H(X_1 | Y_1 = y)$ for $y$ not in the support of $Y_1$). Choose an arbitrary element $x^* \in \MX$, and define a random variable $\tilde X_2$ with support in $\MX$ that is jointly distributed with $Y_1$ as follows. For $y$ in the support of $Y_2$, let $p_{\tilde X_2 | Y_1}(x | y) = p_{X_2 | Y_2}(x | y)$, for $x \in \MX$. For $y$ not in the support of $Y_2$, let $p_{\tilde X_2 | Y_1}(\cdot | y)$ have all its mass on $x^* \in \MX$.

    By the data processing inequality, $\Delta(Y_1, Y_2) \leq \delta$, so
    \begin{eqnarray}
      \Delta(X_1Y_1, X_2Y_2) &=& \frac 12 \sum_{x \in \MX, y \in \MY} | p_{X_1Y_1}(x,y) - p_{X_2Y_2}(x,y) | \nonumber\\
                             &=& \frac 12 \sum_{x \in \MX, y \in \MY} | p_{X_1|Y_1}(x | y) p_{Y_1}(y) - p_{X_2 | Y_2}(x | y) p_{Y_2}(y) |\nonumber\\
                             &=& \frac 12 \sum_{x \in \MX, y \in \MY} | p_{X_1|Y_1}(x | y) p_{Y_1}(y) - p_{\tilde X_2 | Y_1}(x | y) p_{Y_2}(y) |\nonumber\\
      & \geq & \frac 12 \sum_{x \in \MX, y \in \MY}  p_{Y_1}(y)| p_{X_1|Y_1}(x | y)  - p_{\tilde X_2 | Y_1}(x | y)  | - \frac 12 \sum_{x \in \MX, y \in \MY} |p_{Y_1}(y) - p_{Y_2}(y)| \cdot p_{\tilde X_2|Y_1}(x|y)\nonumber\\
                             &\geq & -\delta + \frac 12 \sum_{x \in \MX,y \in \MY} p_{Y_1}(y) \cdot | p_{X_1 | Y_1}(x | y) - p_{\tilde X_2 | Y_1}(x | y)| \nonumber,
    \end{eqnarray}
    For $y \in \MY$, write $\delta_y = \frac 12 \sum_{x \in \MX} | p_{X_1 | Y_1}(x | y) - p_{\tilde X_2 | Y_1}(x | y)|$, so that the above gives $\E_{y \sim Y_1} [\delta_y] \leq 2\delta$. 
    
    Next, notice that by H\"{o}lder's inequality,
    \begin{eqnarray}
      |H(X_1 | Y_1) - H(X_2 | Y_2)| &=& \left| \E_{y_1 \sim Y_1}[H(X_1 | Y_1 = y_1)] - \E_{y_2 \sim Y_2}[H(X_2 | Y_2 = y_2)]\right|\nonumber\\
                                    &\leq & \left| \sum_{y_1 \in \MY} p_{Y_1}(y_1) \left( H(X_1 | Y_1 = y_1) - H(X_2 | Y_2 = y_1) \right) \right| + 2\delta \cdot \log |\MX|\nonumber\\
                                    &=& \left| \sum_{y_1 \in \MY} p_{Y_1}(y_1) (H(X_1 | Y_1 = y_1) - H(\tilde X_2 | Y_1 = y_1)) \right| + 2\delta \cdot \log |\MX|\nonumber\\
                                    &\leq & \E_{y_1 \sim Y_1} \left[ |H(X_1 | Y_1 = y_1) - H(\tilde X_2 | Y_1 = y_1)|\right] + 2\delta \log |\MX|\nonumber.
    \end{eqnarray}
    For each $y \in \supp(Y_1)$, we have from Lemma \ref{lem:reverse_pinsker} that $|H(X_1 | Y_1 = y) - H(\tilde X_2 | Y_1 = y)| \leq h(\delta_y) + \delta_y \log |\MX|$ as long as $\delta_y \leq \frac{|\MX| - 1}{|\MX|}$, which happens with probability at least $1-4\delta$ by Markov's inequality. Thus,
    \begin{eqnarray}
      |H(X_1 | Y_1) - H(X_2 | Y_2)| & \leq & \E_{y \sim Y_1} \left[ h(\delta_y) + \delta_y \log |\MX| \right] + 4\delta \log |\MX|\nonumber\\
      & \leq & 1 + 6\delta \log | \MX|\nonumber.
    \end{eqnarray}
  \end{proof}

  \section{Acknowledgements}
  We are grateful to Badih Ghazi for useful discussions and to Salil Vadhan for helpful comments on an earlier version of this work. N.G.~would like to thank Venkat Anantharam for an insightful conversation.

\appendix
\section{Non-amortized CRG}
\label{apx:alt_defns}
As opposed to Definition \ref{def:na_crg}, much of the literature on the non-amortized CRG problem \cite{bogdanov2011extracting,canonne2017communication,guruswami2016tight,ghazi2018resource} has used the following definition, which only guarantees that the agreed-upon key is ``close to uniform over a set of size $2^L$'', in the sense that it has min-entropy at least $L$:
  \begin{defn}[Non-amortized CRG (alternate definition to Definition \ref{def:na_crg})]
    \label{def:na_crg_alt}
    For $r, C \in \BN$, and $L, \ep \in \BR_{\geq 0}$, we say that {\it the tuple $(C, L, \ep)$ is $r$-quasi-achievable from the source $\nu$ (for CRG)} if there is some $N \in \BN$ and an $r$-round protocol $\Pi$ with private randomness that takes as input $(X^N,Y^N) \sim \nu^{\otimes N}$, such that at the end of $\Pi$, Alice and Bob output keys $\KA, \KB \in \MK$, given by deterministic functions $\KA = \KA(X^N, \RA, \Pi^r)$, $\KB = \KB(Y^N, \RB, \Pi^r)$, such that:
    \begin{enumerate}
    \item $\CC(\Pi) \leq C$.
      \item $\min\{ H_\infty(\KA), H_\infty(\KB) \} \geq L$.
    \item $\p_{\nu}[\KA = \KB] \geq 1-\ep$.
    \end{enumerate}
  \end{defn}

  It is instructive to consider what would result if we were to change the second item in Definition \ref{def:na_crg_alt} to the requirement that $\min\{ H(\KA), H(\KB) \} \geq L$: for any $L, \ep > 0$ and any source $\mu$, the tuple $(1, L, \ep)$ would be $1$-achievable from the source $\mu$. In other words, under this alternative definition, Alice and Bob would be able to generate arbitrarily large amounts of common randomness with only 1 bit of communication. To see this claim, consider the protocol where Alice uses private randomness to generate a random bit $B \in \{0,1\}$ that is 1 with probability $\ep$, and 0 otherwise. Alice then sends $B$ to Bob. Then the keys, which are elements of $\MK := \{0,1\}^{\lceil L/\ep \rceil}$, are given as follows: if $B = 0$, then Alice and Bob both output the string of all 0s as the key. If $B = 1$, then Alice and Bob each use private randomness to choose a random element of $\MK$, and output their respective elements as $\KA, \KB$, respectively. The probability of agreement is at least $1-\ep$ (as Alice and Bob agree whenever $B = 0$), and the entropy of each of $\KA, \KB$ is at least $\ep \cdot \lceil L/\ep \rceil \geq L$.

    Next we verify the simple fact that Definitions \ref{def:na_crg} and \ref{def:na_crg_alt} are essentially equivalent:
    \begin{proposition}
      The following two statements hold:
      \begin{itemize}
      \item Suppose that $\Pi$ is an $r$-round protocol that achieves the tuple $(C, L, \ep)$ according to Definition \ref{def:na_crg}, for some $r,C,L,\ep$. Then there is an $r$-round protocol $\Pi'$ that quasi-achieves the tuple $(C, L, 3\ep)$ in the sense of Definition \ref{def:na_crg_alt}.
        \item Suppose that $\Pi$ is an $r$-round protocol that quasi-achieves the tuple $(C, L, \ep)$ according to Definition \ref{def:na_crg_alt}, for some $r,C,L,\ep$. Then for any $\delta > 0$, there is an $r$-round protocol $\Pi'$ that achieves the tuple $(C, \lfloor L - 2 \log 1/\delta \rfloor, \ep + \delta)$ in the sen of Definition \ref{def:na_crg}. 
      \end{itemize}
    \end{proposition}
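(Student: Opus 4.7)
The proposition establishes a loose equivalence between the two notions of non-amortized CRG: Definition \ref{def:na_crg} requires a uniform reference key $K$ that agrees with both outputs with high probability, whereas Definition \ref{def:na_crg_alt} only requires the outputs themselves to have high min-entropy and to agree. I plan to treat the two directions separately; the first will use a direct pass-through of $\Pi$ together with a smoothing step, and the second will use the leftover hash lemma.

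For the first direction (Def.~\ref{def:na_crg} $\Rightarrow$ Def.~\ref{def:na_crg_alt}), I would take $\Pi' = \Pi$ and initially have Alice and Bob output $\KA, \KB$ directly. The agreement bound $\p[\KA \neq \KB] \leq 2\epsilon$ is immediate from the triangle inequality applied to the coupling with the uniform $K$ guaranteed by Definition \ref{def:na_crg}. The delicate point is the min-entropy condition, since the marginal of $\KA$ being $\epsilon$-close in total variation to uniform on $\MK$ yields only $\p[\KA = k] \leq 2^{-L} + \epsilon$ rather than $\p[\KA = k] \leq 2^{-L}$. I would close this gap by a rejection-and-resampling step: whenever $\KA$ lies in the heavy set $\MK_{\mathrm{bad}} = \{k \in \MK : \p[\KA = k] > 2^{-L}\}$, Alice instead outputs a fresh sample drawn uniformly from a fixed subset $\MK_0 \subseteq \MK$ of size $\lceil 2^L \rceil$ using her private randomness (and symmetrically for Bob on the analogous set for $\KB$). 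The total mass of the heavy set on each side is bounded by $\epsilon$ by the total-variation closeness to the uniform $K$, so the disagreement probability grows by at most $2\epsilon$, giving the total $3\epsilon$ in the statement; by construction the resulting outputs satisfy $H_\infty \geq L$.

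For the second direction (Def.~\ref{def:na_crg_alt} $\Rightarrow$ Def.~\ref{def:na_crg}), the natural tool is the leftover hash lemma. Starting from $\Pi$ with outputs $\KA, \KB$ satisfying $\min\{H_\infty(\KA), H_\infty(\KB)\} \geq L$ and $\p[\KA \neq \KB] \leq \epsilon$, I would extend $\Pi$ by having Alice sample a seed $S$ for a $2$-universal hash family $\mathcal{H} = \{h_s : \MK \to \{0,1\}^{L'}\}$ with $L' = \lfloor L - 2\log(1/\delta)\rfloor$ using her private randomness, and then send $S$ to Bob. Both parties output $K_A' := h_S(\KA)$ and $K_B' := h_S(\KB)$ in the key space $\{0,1\}^{L'}$. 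The leftover hash lemma guarantees that $(S, h_S(\KA))$ is $\delta$-close in total variation to the uniform distribution on $\mathcal{H} \times \{0,1\}^{L'}$; marginalizing over $S$ gives that $h_S(\KA)$ is $\delta$-close to uniform on $\{0,1\}^{L'}$, and whenever $\KA = \KB$ we also have $h_S(\KA) = h_S(\KB)$. Composing these two facts yields a coupling with a uniform random variable $K$ on $\{0,1\}^{L'}$ such that $\p[K_A' = K_B' = K] \geq 1 - \epsilon - \delta$, as required.

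The principal obstacle is controlling the communication cost in the second direction: transmitting $S$ requires $\log|\mathcal{H}| = \Theta(L)$ additional bits, which may exceed the original budget $C$. I plan to handle this by noting (a) that the reduction operates in the non-amortized setting where the protocol is free to choose the number of source samples $N$, so the effective budget per sample can be scaled appropriately, and (b) that the seed cost can be absorbed into implicit slack in $\Pi$'s communication or read as a small additive overhead that does not affect the asymptotic parameters. This bookkeeping, together with the smoothing construction of the first direction, constitute the main technical content of the proof; the information-theoretic core in both directions is essentially routine once the correct objects are defined.
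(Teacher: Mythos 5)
Both directions of your proposal contain genuine gaps, and in each case the paper's argument sidesteps the difficulty with a cleaner construction.

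For the first direction, the claim that ``the total mass of the heavy set on each side is bounded by $\epsilon$'' is not correct. Total variation distance $\Delta(\KA,K)\le\epsilon$ only bounds $\sum_{k\in\MK_{\mathrm{bad}}}\bigl(\p[\KA=k]-1/|\MK|\bigr)\le\epsilon$, i.e.\ the \emph{excess} mass over uniform; the mass $\p[\KA\in\MK_{\mathrm{bad}}]$ itself can be close to $1$ (e.g.\ if $|\MK|=2^L$ and the distribution is a tiny uniform inflation on most of $\MK$). Moreover, even if the heavy set had small mass, the resampling onto $\MK_0$ increases the probability of every key in $\MK_0\setminus\MK_{\mathrm{bad}}$ by $\p[\KA\in\MK_{\mathrm{bad}}]/|\MK_0|$, so any such key that was already near the threshold $2^{-L}$ would be pushed over it, destroying the min-entropy guarantee. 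The paper avoids both issues by using the optimal coupling directly: since $\Delta(\KA,K)\le\epsilon$, there exists a randomized map $g_\A$ with $g_\A(\KA)$ \emph{exactly} uniform on $\MK$ and $\p[g_\A(\KA)\ne\KA]\le\epsilon$, and similarly $g_\B$; the union bound then gives $\p[g_\A(\KA)\ne g_\B(\KB)]\le 3\epsilon$ with $H_\infty = L$ for free.

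For the second direction, the leftover-hash-lemma route breaks on exactly the obstacle you identify, and neither proposed workaround is available. The bound $\CC(\Pi)\le C$ in Definition \ref{def:na_crg_alt} is an exact, per-protocol constraint, not a rate: increasing the number of samples $N$ does not create slack, and the seed of length $\Theta(L)$ is not a lower-order term. What makes the paper's argument work is that you do not need an extractor for arbitrary sources of min-entropy $L$ --- you need one for the two \emph{specific, known} distributions of $\KA$ and $\KB$. Lemma \ref{lem:compress_minent} constructs a seedless deterministic $f:\MK\to\MK'$ (by greedily bucketing keys into the smaller alphabet $\MK'$ of size $\lfloor\delta 2^L\rfloor$) so that $H_\infty(f(K))\ge\log|\MK'|-\delta$; applying the \emph{same} $f$ to both $\KA$ and $\KB$ preserves agreement with zero extra communication, and Pinsker then yields $\Delta(f(\KA),\text{Unif}(\MK'))\le\sqrt{\delta/2}$, which is where the $\delta\mapsto\delta^2$ reparameterization and the $2\log(1/\delta)$ loss come from.
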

    \begin{proof}
      First suppose that $\Pi$ is an $r$-round protocol achieving the tuple $(C, L, \ep)$ in the sense of Definition \ref{def:na_crg}. Definition \ref{def:na_crg} gives that if $\KA, \KB$ denote the parties' keys from the protocol $\Pi$, and if $K$ denotes a uniformly distributed key on $\MK$, a set of size at least $2^L$, then $\Delta(\KA,K) \leq \ep$ and $\Delta(\KB,K) \leq \ep$. Therefore, there are randomized functions $g_\A : \MK \ra \MK$ and $g_\B : \MK \ra \MK$ such that $g_\A(\KA)$ and $g_\B(\KB)$ are distributed uniformly on $\MK$, and such that $\p[\KA \neq g_\A(\KA)] \leq \ep$ and $\p[\KB \neq g_\B(\KB)] \leq \ep$. By the union bound, it follows that $\p[g_\A(\KA) \neq g_\B(\KB)] \leq 3\ep$. Certainly $H_\infty(g_\A(\KA)) = H_\infty(g_\B(\KB)) = L$. Therefore, the protocol $\Pi'$ in which Alice and Bob run $\Pi$ but then output $g_\A(\KA), g_\B(\KB)$ as their keys, respectively, quasi-achieves the tuple $(C,L,3\ep)$ in the sense of Definition \ref{def:na_crg_alt}. 

      Next suppose that $\Pi$ is an $r$-round protocol that quasi-achieves the tuple $(C,L,\ep)$ in the sense of Definition \ref{def:na_crg_alt}. Letting $\KA, \KB$ be Alice's and Bob's keys at the conclusion of $\Pi$, we have that $\min \{ H_\infty(\KA), H_\infty(\KB) \} \geq L$. We need the below lemma before continuing:
      \begin{lemma}
        \label{lem:compress_minent}
      Suppose $L > 0$ and $0 < \delta < 1$. Suppose a random variable $K$ is distributed on a set $\MK$ so that $H_\infty(K) \geq L$. Let $\MK'$ be a set of size $\lfloor 2^{L - \log 1/\delta} \rfloor = \lfloor \delta 2^L \rfloor$. Then there is a deterministic function $f : \MK \ra \MK'$ such that $H(f(K)) \geq H_\infty(f(K)) \geq (\log |\MK'|) - \delta$.
    \end{lemma}
    \begin{proof}
      Pick some ordering on $\MK$, and for each $k \in \MK$ according to this ordering, set $f(k)$ to be the element in $\MK'$ which has minimal probability mass assigned to it already under the distribution of $f(K)$. After this procedure, let $k'_* \in \MK'$ have maximum probability under the distribution of $f(K)$, and suppose the last $k \in \MK$ for which we set $f(k) = k'$ is denoted $k_*$. It must be the case that $\p[K \in \{ k \in \MK : k \neq k_*, f(k) = k'_*\}] \leq 1/|\MK'|$ since before setting $f(k) = k'_*$ we had that $k'_*$ had minimal probability mass under all $k' \in \MK'$. Since $\p[K = k_*] \leq 2^{-L} \leq \delta/|\MK'|$, it follows that $\p[f(K) = k_*'] \leq (1+\delta)/|\MK'|$, and so $H_\infty(f(K)) \geq (\log |\MK'|) - \log(1+\delta) \geq (\log |\MK'|) - \delta$.
    \end{proof}
    Let $\MK'$ be a set of size $\lfloor 2^{L - \log 1/\delta} \rfloor$, as in Lemma \ref{lem:compress_minent}. Notice that $|\MK'| \geq 2^{\lfloor L - \log 1/\delta \rfloor}$. By Lemma \ref{lem:compress_minent}, there is a deterministic function, $f_\A : \MK\ra \MK'$ such that $H(f_\A(\KA)) \geq | \MK'| - \delta$. By Pinsker's inequality, it follows that if $K'$ denotes the random variable that is uniformly distributed on $\MK'$, then $\Delta(K', f_\A(\KA)) \leq \sqrt{\delta/2}$. In particular, there is a coupling of $K', f_\A(\KA)$ such that $\p[K' \neq f_\A(\KA)] \leq \sqrt{\delta/2}$. Now, the protocol $\Pi'$ proceeds as follows: Alice and Bob first simulate $\Pi$, and then output $f_\A(\KA)$ and $f_\A(\KB)$ as their keys, respectively. Since $\p[\KA \neq \KB] \leq \ep$, we have $\p[f_\A(\KA) \neq f_\A(\KB)] \leq \ep$ and $\p[K' \neq f_\A(\KA)] \leq \sqrt{\delta/2}$, it follows by the union bound that $\p[f_\A(\KA) = f_\A(\KB) = K'] \geq 1 - \ep - \sqrt{\delta/2}$. It follows  that $\Pi'$ achieves the tuple $(C, \lfloor L - \log 1/\delta \rfloor, \sqrt{\delta})$ in the sense of Definition \ref{def:na_crg}; the statement of the proposition then follows by replacing $\delta$ with $\delta^2$.
    \end{proof}

    \if 0
    \section{Junk}
  Theorem \ref{thm:nam_formal_bggs} leaves two immediate open problems, solutions to which would present an affirmative answer to parts (1) and (3) of Question \ref{ques:main}:
  \begin{problem}
    \label{prob:bggs}
    In the context of Theorem \ref{thm:nam_formal_bggs}:
  \begin{enumerate}
  \item[(1)] Can the gap between $r+2$ and $\lfloor (r+1)/2\rfloor$ rounds of communication be improved to, say, $r+2$ and $r+1$ rounds?
  \item[(2)] Can the source $\mu_{r,n,\ell}$ be used to obtain an analogous separation of the rate regions for $(r+2)$-round and $\lfloor (r+1)/2\rfloor$-round (or even $(r+2)$-round and $ (r+1)$-round) protocols in the {\it amortized} setting?
  \end{enumerate}
  \end{problem}
  In this thesis, we solve both of these problems and in fact obtain a close-to-optimal answer to nearly all parts of Question \ref{ques:main}. Below we present informally our main results:

      Curiously, for many of the distributions under which CRG and SKG has been studied, including the binary symmetric source (BSS) and the binary Gaussian source (BGS), the ``optimal'' protocols turn out to have only a single round of communication. We stress that optimality, with respect to a certain measure of efficiency of communication (such as CBIB), holds over {\it all} protocols, i.e., those with arbitrarily many rounds. For instance, in the amortized setting, in \cite{liu2017secret}, it was shown that the $r$-round CBIB and the $r$-round KBIB (Definitions \ref{def:cbib} and \ref{def:kbib}) are equal to the 1-round CBIB and 1-round KBIB, respectively, when $(X,Y)$ are distributed according to the binary symmetric source $\BSS_p$ with any parameter $p \in (0,1)$, or the binary gaussian source $\BGS_\rho$ for any correlation $\rho \in [-1,1]$. 

  Moreover, Tyagi \cite{tyagi2013common} showed that for any binary symmetric source $\BSS_p$, the $r$-round MIMK does not depend on $r$, the number of rounds. 
  In other words, there is a 1-round protocol that achieves the minimum communication cost for generating a key of rate $I_{\BSS_p}(X;Y) = (1-2p)^2$, {\it where the minimum is taken over protocols with arbitrarily many rounds}. Notice how Theorem \ref{thm:gamma} and Definition \ref{def:mimk} present the $r$-round CBIB, KBIB, and MIMK as certain geometric properties of the rate regions $\MS_r(X,Y)$ and $\MT_r(X,Y)$. 
  The following stronger result regarding $\BSS_p$ has been conjectured \cite[Conjecture 1]{liu2017secret}: for any $r \geq 1$, $p \in [0,1]$, when $(X,Y) \sim \BSS_p$ for any $p$, $\MS_1(X,Y) = \MS_r(X,Y)$. That is, increasing the number of rounds of interaction does not increase the size of the rate region at all for the binary symmetric source.  

  The story for non-amortized CRG is similar. (We remark that work in the non-amortized setting has mostly focused on CRG as opposed to SKG). \cite{guruswami2016tight} showed that for any $p \in [0,1]$, and $\mu = \BSS_p$ or $\mu = \BES_p$, for a given disagreement probability $1 - \ep$ and communication $C$, the maximum $L$ such that $(C, L, \ep)$ is $r$-achievable does not depend on the number of rounds $r$ (up to lower order terms).\footnote{We remark that this result only holds for a somewhat restricted class of protocols, namely those in which Alice's key depends only on her input, while Bob's key can depend on an $r$-round transcript between Alice and Bob.} This result builds on earlier work of \cite{canonne2017communication}, which proved similar, but looser bounds.

  The results mentioned above naturally point to the following question, which is the main focus of this thesis:
  \begin{question}[Informal]
    \label{ques:main}
    Are there some distributions $\mu$ for which additional interaction (i.e., rounds) {\it does} help? More precisely:
    \begin{enumerate}
    \item[(1)] For a given communication rate $C$ (and error rate $\ep$, in the non-amortized setting), can the maximum achievable rate $L$ (i.e., the entropy) of a common random string or secret key increase if we allow Alice and Bob to use additional rounds of communication?
    \item[(2)] In particular, in the amortized setting, can having additional rounds of communication lead to a strictly larger CBIB or KBIB, or a strictly smaller MIMK? (Notice that MIMK is measured as a minimum amount of {\it communication} of a protocol achieving the maximum key rate, hence it will only decrease if we increase the number of allowed rounds.)
    \item[(3)] Moreover, if any of the above questions have answers in the affirmative, then by how much can the relevant quantity increase or decrease as we increase the number of rounds?
    \end{enumerate}
  \end{question}
  Very little was known about Question \ref{ques:main} prior to our work. Tyagi \cite{tyagi2013common} constructed a source for which the 2-round MIMK is smaller than the 1-round MIMK by a (small) constant factor.

    \begin{remark}
    Theorem \ref{thm:am_formal} shows the existence of separations between $\MT_{r+2}(X,Y)$ and $\MT_{r}(X,Y)$ (i.e., the existence of tuples $(C,L) \in \MT_{r+2}(X,Y)$ but that are not in $\MT_{r}(X,Y)$). In Theorem \ref{thm:mimk_sep} we show how these separations imply corresponding separations between $\FI_{r+2}(X;Y)$ and $\FI_{r}(X;Y)$, thus giving a partial answer to the second part of Question \ref{ques:main}. We are not quite able to use Theorem \ref{thm:am_formal} to derive analogous separations between $\Gamma_{r+2}\crg(X,Y)$ and $\Gamma_{r}\crg(X,Y)$ (or even between $\Gamma_{r+2}\crg(X,Y)$ and $\Gamma_{\lfloor (r+1)/2 \rfloor}\crg(X,Y)$), and leave this problem for future woek (Problem \ref{prob:cbib}).
  \end{remark}

    \fi

      \bibliographystyle{alpha}
\bibliography{references,thesis_cc_it}

\newcommand{\etalchar}[1]{$^{#1}$}
\begin{thebibliography}{HMO{\etalchar{+}}18}

\bibitem[AC93]{ahlswede1993common}
Rudolf Ahlswede and Imre Csisz{\'a}r.
\newblock Common randomness in information theory and cryptography. part i:
  secret sharing.
\newblock {\em IEEE Transactions on Information Theory}, 39(4), 1993.

\bibitem[AC98]{ahlswede1998common}
Rudolf Ahlswede and Imre Csisz{\'a}r.
\newblock Common randomness in information theory and cryptography. ii. cr
  capacity.
\newblock {\em Information Theory, IEEE Transactions on}, 44(1):225--240, 1998.

\bibitem[AD89a]{ahlswede_identification_1989}
R.~Ahlswede and G.~Dueck.
\newblock Identification in the presence of feedback-a discovery of new
  capacity formulas.
\newblock {\em IEEE Transactions on Information Theory}, 35(1):30--36, January
  1989.

\bibitem[AD89b]{ahlswede_identification_1989-1}
R.~Ahlswede and G.~Dueck.
\newblock Identification via channels.
\newblock {\em IEEE Transactions on Information Theory}, 35(1):15--29, January
  1989.

\bibitem[BBB{\etalchar{+}}92]{bennett_experimental_1992}
Charles~H Bennett, Franqois Bessette, Gilles Brassard, Louis Salvail, and John
  Smolin.
\newblock Experimental quantum cryptography.
\newblock {\em Journal of Cryptology}, 5:26, 1992.

\bibitem[BBCR13]{barak2013compress}
Boaz Barak, Mark Braverman, Xi~Chen, and Anup Rao.
\newblock How to compress interactive communication.
\newblock {\em SIAM Journal on Computing}, 42(3):1327--1363, 2013.

\bibitem[BBT60]{blackwell_capacities_1960}
David Blackwell, Leo Breiman, and A.~J. Thomasian.
\newblock The {Capacities} of {Certain} {Channel} {Classes} {Under} {Random}
  {Coding}.
\newblock {\em Ann. Math. Statist.}, 31(3):558--567, September 1960.

\bibitem[BGGS18]{bafna_communication-rounds_2018}
Mitali Bafna, Badih Ghazi, Noah Golowich, and Madhu Sudan.
\newblock Communication-{Rounds} {Tradeoffs} for {Common} {Randomness} and
  {Secret} {Key} {Generation}.
\newblock {\em arXiv:1808.08907 [cs, math]}, August 2018.
\newblock arXiv: 1808.08907. An extended abstract appeared in {\em Proc.
  ACM-SIAM SODA 2019}, pages 1861--1871.

\bibitem[BGI14]{bavarian2014role}
Mohammad Bavarian, Dmitry Gavinsky, and Tsuyoshi Ito.
\newblock On the role of shared randomness in simultaneous communication.
\newblock In {\em Automata, Languages, and Programming}, pages 150--162.
  Springer, 2014.

\bibitem[BGPW13]{braverman_information_2013}
Mark Braverman, Ankit Garg, Denis Pankratov, and Omri Weinstein.
\newblock From {Information} to {Exact} {Communication}.
\newblock In {\em Proceedings of the {Forty}-fifth {Annual} {ACM} {Symposium}
  on {Theory} of {Computing}}, {STOC} '13, pages 151--160, New York, NY, USA,
  2013. ACM.

\bibitem[BM11]{bogdanov2011extracting}
Andrej Bogdanov and Elchanan Mossel.
\newblock On extracting common random bits from correlated sources.
\newblock {\em Information Theory, IEEE Transactions on}, 57(10):6351--6355,
  2011.

\bibitem[BR11]{braverman2011information}
Mark Braverman and Anup Rao.
\newblock Information equals amortized communication.
\newblock In {\em Foundations of Computer Science (FOCS), 2011 IEEE 52nd Annual
  Symposium on}, pages 748--757. IEEE, 2011.

\bibitem[Bra05]{braverman_complexity_2005}
Mark Braverman.
\newblock On the complexity of real functions.
\newblock In {\em 46th IEEE Symposium on Foundations of Computer Science},
  2005.

\bibitem[Bra12]{braverman_interactive_2012}
Mark Braverman.
\newblock Interactive information complexity.
\newblock In {\em In {Proceedings} of the 44th annual {ACM} {Symposium} on
  {Theory} of {Computing}, {STOC} {\textquoteright}12}, pages 505--524, 2012.

\bibitem[BRWY13]{braverman2013direct}
Mark Braverman, Anup Rao, Omri Weinstein, and Amir Yehudayoff.
\newblock Direct product via round-preserving compression.
\newblock {\em Automata, Languages, and Programming}, 7965:232--243, 2013.

\bibitem[CGMS17]{canonne2017communication}
Cl{\'e}ment~L Canonne, Venkatesan Guruswami, Raghu Meka, and Madhu Sudan.
\newblock Communication with imperfectly shared randomness.
\newblock {\em IEEE Transactions on Information Theory}, 63(10):6799--6818,
  2017.

\bibitem[CK81]{csiszar_information_1981}
Imre Csisz{\'a}r and J{\'a}nos K{\"o}rner.
\newblock {\em Information theory: coding theorems for discrete memoryless
  systems}.
\newblock Academic Press, 1981.

\bibitem[CMN14]{chan2014extracting}
Siu~On Chan, Elchanan Mossel, and Joe Neeman.
\newblock On extracting common random bits from correlated sources on large
  alphabets.
\newblock {\em Information Theory, IEEE Transactions on}, 60(3):1630--1637,
  2014.

\bibitem[CN91]{csiszar_capacity_1991}
I.~Csiszar and P.~Narayan.
\newblock Capacity of the {Gaussian} arbitrarily varying channel.
\newblock {\em IEEE Transactions on Information Theory}, 37(1):18--26, January
  1991.

\bibitem[CN00]{csiszar2000common}
Imre Csisz{\'a}r and Prakash Narayan.
\newblock Common randomness and secret key generation with a helper.
\newblock {\em Information Theory, IEEE Transactions on}, 46(2):344--366, 2000.

\bibitem[CN04]{csiszar2004secrecy}
Imre Csisz{\'a}r and Prakash Narayan.
\newblock Secrecy capacities for multiple terminals.
\newblock {\em IEEE Transactions on Information Theory}, 50(12):3047--3061,
  2004.

\bibitem[CSWY01]{chakrabarti2001informational}
Amit Chakrabarti, Yaoyun Shi, Anthony Wirth, and Andrew Yao.
\newblock Informational complexity and the direct sum problem for simultaneous
  message complexity.
\newblock In {\em 42nd IEEE Symposium on Foundations of Computer Science,
  2001}, pages 270--278. IEEE, 2001.

\bibitem[CT12]{cover2012elements}
Thomas~M Cover and Joy~A Thomas.
\newblock {\em Elements of information theory}.
\newblock John Wiley \& Sons, 2012.

\bibitem[DGS84]{duris_lower_1984}
Pavol Duris, Zvi Galil, and Georg Schnitger.
\newblock Lower {Bounds} on {Communication} {Complexity}.
\newblock In {\em Proceedings of the {Sixteenth} {Annual} {ACM} {Symposium} on
  {Theory} of {Computing}}, {STOC} '84, pages 81--91, New York, NY, USA, 1984.
  ACM.

\bibitem[DH76]{diffie_new_1976}
W.~Diffie and M.~Hellman.
\newblock New directions in cryptography.
\newblock {\em IEEE Transactions on Information Theory}, 22(6):644--654,
  November 1976.

\bibitem[GA10a]{gohari_information-theoretic_2010}
A.~A. Gohari and V.~Anantharam.
\newblock Information-{Theoretic} {Key} {Agreement} of {Multiple}
  {Terminals}{\textemdash}{Part} {I}.
\newblock {\em IEEE Transactions on Information Theory}, 56(8):3973--3996,
  August 2010.

\bibitem[GA10b]{gohari_information-theoretic_2010-1}
Amin~Aminzadeh Gohari and Venkat Anantharam.
\newblock Information-theoretic {Key} {Agreement} of {Multiple} {Terminal}:
  {Part} {II}: {Channel} {Model}.
\newblock {\em IEEE Trans. Inf. Theor.}, 56(8):3997--4010, August 2010.

\bibitem[GJ18]{ghazi2018resource}
Badih Ghazi and TS~Jayram.
\newblock Resource-efficient common randomness and secret-key schemes.
\newblock In {\em Proceedings of the Twenty-Ninth Annual ACM-SIAM Symposium on
  Discrete Algorithms}, pages 1834--1853. Society for Industrial and Applied
  Mathematics, 2018.

\bibitem[GK73]{gacs1973common}
Peter G{\'a}cs and J{\'a}nos K{\"o}rner.
\newblock Common information is far less than mutual information.
\newblock {\em Problems of Control and Information Theory}, 2(2):149--162,
  1973.

\bibitem[GKS15]{ghazi_communication_2015}
Badih Ghazi, Pritish Kamath, and Madhu Sudan.
\newblock Communication {Complexity} of {Permutation}-{Invariant} {Functions}.
\newblock {\em arXiv:1506.00273 [cs, math]}, May 2015.
\newblock arXiv: 1506.00273.

\bibitem[GR16]{guruswami2016tight}
Venkatesan Guruswami and Jaikumar Radhakrishnan.
\newblock Tight bounds for communication-assisted agreement distillation.
\newblock In {\em 31st Conference on Computational Complexity, {CCC} 2016, May
  29 to June 1, 2016, Tokyo, Japan}, pages 6:1--6:17, 2016.

\bibitem[GS17]{ghazi_power_2017}
Badih Ghazi and Madhu Sudan.
\newblock The {Power} of {Shared} {Randomness} in {Uncertain} {Communication}.
\newblock {\em arXiv:1705.01082}, May 2017.

\bibitem[HAD{\etalchar{+}}95]{hughes_quantum_1995}
Richard~J. Hughes, D.~M. Alde, P.~Dyer, G.~G. Luther, G.~L. Morgan, and
  M.~Schauer.
\newblock Quantum {Cryptography}.
\newblock {\em Contemporary Physics}, 36(3), April 1995.

\bibitem[Han03]{han_information-spectrum_2003}
Te~Sun Han.
\newblock {\em Information-{Spectrum} {Methods} in {Information} {Theory}}.
\newblock Stochastic {Modelling} and {Applied} {Probability}. Springer-Verlag,
  Berlin Heidelberg, 2003.

\bibitem[HJMR07]{harsha_communication_2007}
Prahladh Harsha, Rahul Jain, David McAllester, and Jaikumar Radhakrishnan.
\newblock The {Communication} {Complexity} of {Correlation}.
\newblock In {\em Twenty-{Second} {Annual} {IEEE} {Conference} on
  {Computational} {Complexity} ({CCC}'07)}, pages 10--23, San Diego, CA, June
  2007. IEEE.

\bibitem[HMO{\etalchar{+}}18]{haitner_communication_2018}
Iftach Haitner, Noam Mazor, Rotem Oshman, Omer Reingold, and Amir Yehudayoff.
\newblock On the {Communication} {Complexity} of {Key}-{Agreement} {Protocols}.
\newblock In {\em Innovations in {Theoretical} {Computer} {Science}}, 2018.

\bibitem[HY10]{ho_interplay_2010}
S.~W. Ho and R.~W. Yeung.
\newblock The {Interplay} {Between} {Entropy} and {Variational} {Distance}.
\newblock {\em IEEE Transactions on Information Theory}, 56(12):5906--5929,
  December 2010.

\bibitem[JPY12]{jain2012direct}
Rahul Jain, Attile Pereszlenyi, and Penghui Yao.
\newblock A direct product theorem for bounded-round public-coin randomized
  communication complexity.
\newblock In {\em 2012 IEE Symposium on Foundations of Computer Science}. IEEE,
  2012.

\bibitem[JRS03]{jain_direct_2003}
Rahul Jain, Jaikumar Radhakrishnan, and Pranab Sen.
\newblock A {Direct} {Sum} {Theorem} in {Communication} {Complexity} via
  {Message} {Compression}.
\newblock In Jos C.~M. Baeten, Jan~Karel Lenstra, Joachim Parrow, and
  Gerhard~J. Woeginger, editors, {\em Automata, {Languages} and {Programming}},
  Lecture {Notes} in {Computer} {Science}, pages 300--315. Springer Berlin
  Heidelberg, 2003.

\bibitem[LCV15]{liu2015secret}
Jingbo Liu, Paul Cuff, and Sergio Verd{\'u}.
\newblock Secret key generation with one communicator and a one-shot converse
  via hypercontractivity.
\newblock In {\em 2015 IEEE International Symposium on Information Theory
  (ISIT)}, pages 710--714. IEEE, 2015.

\bibitem[LCV17]{liu2017secret}
Jingbo Liu, Paul~W. Cuff, and Sergio Verd{\'{u}}.
\newblock Secret key generation with limited interaction.
\newblock {\em IEEE Transactions on Information Theory}, 63, 2017.

\bibitem[Liu16]{liurate}
Jingbo Liu.
\newblock Rate region for interactive key generation and common randomness
  generation.
\newblock {\em Manuscript}, 2016.

\bibitem[Mau91]{maurer_perfect_1991}
Ueli~M. Maurer.
\newblock Perfect cryptographic security from partially independent channels.
\newblock In {\em Proceedings of the twenty-third annual {ACM} symposium on
  {Theory} of computing - {STOC} '91}, pages 561--571, New Orleans, Louisiana,
  United States, 1991. ACM Press.

\bibitem[Mau92]{maurer_conditionally-perfect_1992}
UeliM. Maurer.
\newblock Conditionally-perfect secrecy and a provably-secure randomized
  cipher.
\newblock {\em Journal of Cryptology}, 5(1), 1992.

\bibitem[Mau93]{maurer1993secret}
Ueli~M Maurer.
\newblock Secret key agreement by public discussion from common information.
\newblock {\em Information Theory, IEEE Transactions on}, 39(3):733--742, 1993.

\bibitem[MI08]{ma_distributed_2008}
Nan Ma and Prakash Ishwar.
\newblock Distributed {Source} {Coding} for {Interactive} {Function}
  {Computation}.
\newblock {\em arXiv:0801.0756 [cs, math]}, January 2008.
\newblock arXiv: 0801.0756.

\bibitem[MO05]{mossel2005coin}
Elchanan Mossel and Ryan O'Donnell.
\newblock Coin flipping from a cosmic source: On error correction of truly
  random bits.
\newblock {\em Random Structures \& Algorithms}, 26(4):418--436, 2005.

\bibitem[MOR{\etalchar{+}}06]{mossel2006non}
Elchanan Mossel, Ryan O'Donnell, Oded Regev, Jeffrey~E Steif, and Benny
  Sudakov.
\newblock Non-interactive correlation distillation, inhomogeneous markov
  chains, and the reverse bonami-beckner inequality.
\newblock {\em Israel Journal of Mathematics}, 154(1):299--336, 2006.

\bibitem[NW93]{NW}
Noam Nisan and Avi Wigderson.
\newblock Rounds in communication complexity revisited.
\newblock {\em SIAM Journal on Computing}, 22(1):211--219, 1993.

\bibitem[Orl90]{orlitsky_worst-case_1990}
A.~Orlitsky.
\newblock Worst-case interactive communication. {I}. {Two} messages are almost
  optimal.
\newblock {\em IEEE Transactions on Information Theory}, 36(5):1111--1126,
  September 1990.

\bibitem[Orl91]{orlitsky_worst-case_1991}
A.~Orlitsky.
\newblock Worst-case interactive communication. {II}. {Two} messages are not
  optimal.
\newblock {\em IEEE Transactions on Information Theory}, 37(4):995--1005, July
  1991.

\bibitem[PS82]{papadimitriou_communication_1982}
Christos~H. Papadimitriou and Michael Sipser.
\newblock Communication {Complexity}.
\newblock In {\em Proceedings of the {Fourteenth} {Annual} {ACM} {Symposium} on
  {Theory} of {Computing}}, {STOC} '82, pages 196--200, New York, NY, USA,
  1982. ACM.
\newblock event-place: San Francisco, California, USA.

\bibitem[Raz92]{Razborov}
Alexander~A. Razborov.
\newblock On the distributional complexity of disjointness.
\newblock {\em Theoretical Computer Science}, 106(2):385--390, 1992.

\bibitem[RSA78]{rivest_method_1978}
R.~L. Rivest, A.~Shamir, and L.~Adleman.
\newblock A {Method} for {Obtaining} {Digital} {Signatures} and {Public}-key
  {Cryptosystems}.
\newblock {\em Commun. ACM}, 21(2):120--126, February 1978.

\bibitem[STW19]{sudan_communication_2019}
Madhu Sudan, Himanshu Tyagi, and Shun Watanabe.
\newblock Communication for {Generating} {Correlation}.
\newblock {\em CoRR}, abs/1904.09563, 2019.

\bibitem[Tya13]{tyagi2013common}
Himanshu Tyagi.
\newblock Common information and secret key capacity.
\newblock {\em IEEE Transactions on Information Theory}, 59(9):5627--5640,
  2013.

\bibitem[Wit75]{witsenhausen1975sequences}
Hans~S Witsenhausen.
\newblock On sequences of pairs of dependent random variables.
\newblock {\em SIAM Journal on Applied Mathematics}, 28(1):100--113, 1975.

\bibitem[Yan07]{yang_impossibility_2007}
Ke~Yang.
\newblock On the (im)possibility of non-interactive correlation distillation.
\newblock {\em Theoretical Computer Science}, 382(2):157--166, August 2007.

\bibitem[Yao79]{yao_complexity_1979}
Andrew Chi-Chih Yao.
\newblock Some {Complexity} {Questions} {Related} to {Distributive}
  {Computing}({Preliminary} {Report}).
\newblock In {\em Proceedings of the {Eleventh} {Annual} {ACM} {Symposium} on
  {Theory} of {Computing}}, {STOC} '79, pages 209--213, New York, NY, USA,
  1979. ACM.

\bibitem[Ye05]{ye_information_2005}
Chunxuan Ye.
\newblock {\em Information {Theoretic} {Generation} of {Multiple} {Secret}
  {Keys}}.
\newblock PhD thesis, University of Maryland, 2005.

\bibitem[ZC11]{zhao2011efficiency}
Lei Zhao and Yeow-Kiang Chia.
\newblock The efficiency of common randomness generation.
\newblock In {\em 2011 49th Annual Allerton Conference on Communication,
  Control, and Computing (Allerton)}, 2011.

\end{thebibliography}

\end{document}